\tikzstyle{line}=[draw]
\tikzstyle{arrow}=[draw, -latex]
\algrenewcommand\alglinenumber[1]{{\sffamily\footnotesize#1}}
\algnewcommand{\Initialize}[1]{%
  \State \textbf{Initialize:}
  \Statex \hspace*{\algorithmicindent}\parbox[t]{.8\linewidth}{\raggedright #1}
}
\xpatchcmd{\algorithmic}{\itemsep\z@}{\itemsep=0.15ex plus1pt}{}{}
\DeclarePairedDelimiter{\ceil}{\lceil}{\rceil}
\DeclareMathOperator*{\argmax}{arg\,max}
\newcommand*{\Scale}[2][4]{\scalebox{#1}{$#2$}}%
\newenvironment{brsm}{
  \bigl[ \begin{smallmatrix} }{%
  \end{smallmatrix} \bigr]}
\newtheorem{theorem}{Theorem}
\newtheorem{corollary}{Corollary}
\newtheorem{lemma}{Lemma}
\newtheorem{definition}{Definition}
\newtheorem{proposition}{Proposition}
\newtheorem{remark}{Remark}
\theoremstyle{definition}
\theoremstyle{remark}
\newcommand\Ccancel[2][black]{\renewcommand\CancelColor{\color{#1}}\xcancel{#2}}
\begin{document}

\title{Polar Coding for the Wiretap Broadcast Channel with multiple messages}

\author{Jaume~del~Olmo~Alos, and
  Javier~R.~Fonollosa,~\IEEEmembership{Senior Member,~IEEE}
  \thanks{This work is supported by ``Ministerio de Ciencia, Innovación y Universidades'' of the Spanish Government, TEC2015-69648-REDC and TEC2016-75067-C4-2-R AEI/FEDER, UE, and the Catalan Government, 2017
SGR 578 AGAUR.
  
  The authors are with the Department of Teoria del Senyal i Comunicacions, Universitat Politècnica de Catalunya, 08034, Barcelona, Spain
  (e-mail: jaume.del.olmo@upc.edu; javier.fonollosa@upc.edu)
 }}

\maketitle

\begin{abstract}
A polar coding scheme is proposed for the Wiretap Broadcast Channel with two legitimate receivers and one eavesdropper. We consider a model in which the transmitter wishes to send different confidential (and non-confidential) information to each legitimate receiver. First, we compare two inner-bounds on the achievable region of this model from the literature, and then we design a polar coding scheme that achieves the stronger one. In the proposed scheme, the encoding uses polar-based Marton's coding, where one inner-layer must be reliably decoded by both legitimate receivers, and each receiver must decode its own corresponding outer-layer. Due to the non-degradedness condition of the broadcast channel, the encoder builds a chaining construction that induces bidirectional dependencies between adjacent blocks. Indeed, these dependencies can occur between different encoding layers of different blocks. The use of two secret-keys, whose length are negligible in terms of rate, are required to prove that the polar code satisfies the strong secrecy condition.
\end{abstract}

\begin{IEEEkeywords} 
Polar codes, information-theoretic security, wiretap channel, broadcast channel, strong secrecy.
\end{IEEEkeywords}

\IEEEpeerreviewmaketitle

\section{Introduction}\label{sec:introduction}
This chapter focuses on a channel model over the \gls*{wtbc} where transmitter wants to reliably send different confidential (and non-confidential) messages to different legitimate receivers with the presence of an eavesdropper. This model generalizes the one described in \cite{alos2019polar}, where only common information is intended for both receivers.

There are two different inner-bounds on the achievable region of this model in the literature. On the one hand, the inner-bound found in \cite{7174526} considers confidential information only, while the one in \cite{6420946} consider confidential and non-confidential messages. The random coding techniques used in \cite{7174526} and \cite{6420946} are Marton's coding and rate splitting in conjunction with superposition coding and binning. The only difference between them is that the first uses joint decoding, while the second uses successive decoding. Indeed, if we consider confidential information transmission only, the inner-bound in \cite{7174526} includes the one in \cite{6420946}, but is not straightforward to show that whether the first inner-bound is strictly larger or not: for a given input distribution, the inner-bound in \cite{7174526} is strictly larger; nevertheless, we do not know if the rate points that are only included in this inner-bound for a particular distribution may be in the inner-bound found in \cite{6420946} under another distribution.



We provide a \gls*{pcs} that achieves the inner-bound in \cite{7174526} and, additionally, allows transmitting different non-confidential messages to both legitimate receivers. Our \gls*{pcs} is based in part on the one described in \cite{6949646} that achieves Marton's region of broadcast channels without secrecy constraints, and the one in \cite{alos2019polar} for the \gls*{ciwtbc}. In Marton's coding we have three different layers: one inner-layer that must be reliably decoded by both legitimate receivers, and two outer-layers such that each one conveys information intended only for one receiver. Due to the non-degradedness condition of the channels, the \gls*{pcs} requires the use of a chaining construction which induces bidirectional dependencies between adjacent blocks. Moreover, we show that joint and successive decoding have their counterpart in polar coding, and jointly decoding allows to enlarge the achievable region for a particular input distribution. Indeed, due to the polar-based jointly decoding, our \gls*{pcs} needs to build a chaining construction that introduces dependencies between different encoding layers of adjacent blocks.

The remaining of this chapter is organized as follows. Section~\ref{sec:SMx} introduces the channel model formally as well as an enlarged version of the inner-bound found in \cite{7174526} that considers the transmission of private messages. In Section~\ref{sec:PCSx1} we describe a \gls*{pcs} that achieves this inner-bound. Then, Section~\ref{sec:performancex} evaluates the performance of this \gls*{pcs}. Finally, the concluding remarks are presented in Section~\ref{sec:conclusionx}.

\section{Channel model and achievable region}\label{sec:SMx}
A \gls*{wtbc} $(\mathcal{X}, p_{Y_{(1)}Y_{(2)} Z|X}, \mathcal{Y}_{(1)} \times \mathcal{Y}_{(2)} \times \mathcal{Z})$ with 2 legitimate receivers and an external eavesdropper, as we have seen in the previous chapter, is characterized by the probability transition function $p_{Y_{(1)}Y_{(2)}Z|X}$, where $X \in \mathcal{X}$ denotes the channel input, $Y_{(k)} \in \mathcal{Y}_{(k)}$ denotes the channel output corresponding to the legitimate receiver $k \in [1,2]$, and $Z \in \mathcal{Z}$ denotes the channel output corresponding to the eavesdropper. We consider a model, namely \gls*{miwtbc}, in which the transmitter wishes to send two private messages $W_1$ and $W_2$, and two confidential messages $S_1$ and $S_2$, where $W_1$ and $S_1$ are intended to legitimate Receiver~1, and $W_2$ and $S_2$ are intended to legitimate Receiver~2. A code $\big(\ceil{2^{nR_{W_{(1)}}}},\ceil{2^{nR_{S_{(1)}}}},\ceil{2^{nR_{W_{(2)}}}}, \ceil{2^{nR_{S_{(2)}}}}, n \big)$ for the \gls*{miwtbc} consists of two private message sets $\mathcal{W}_{(1)}$ and $\mathcal{W}_{(2)}$ where $\mathcal{W}_{(k)} \triangleq  \big[1, \ceil{2^{nR_{W_{(k)}}}} \big]$ for $k \in [1,2]$, two confidential message sets $\mathcal{S}_{(1)}$ and $\mathcal{S}_{(2)}$ where $\mathcal{S}_{(k)} \triangleq  \big[1, \ceil{2^{nR_{S_{(k)}}}} \big]$ for $k \in [1,2]$, an encoding function $f: \mathcal{W}_{(1)} \times \mathcal{S}_{(1)} \times \mathcal{W}_{(2)} \times \mathcal{S}_{(2)} \rightarrow \mathcal{X}^n$ that maps $(w_{(1)},w_{(2)},s_{(1)},s_{(2)})$ to a codeword $x^n$, and two decoding functions $g_{(1)}$ and $g_{(2)}$ such that $g_{(k)}:\mathcal{Y}_{(k)}^n \rightarrow \mathcal{W}_{(k)} \times \mathcal{S}_{(k)}$ ($k \in [1,2]$) maps the $k$-th legitimate receiver observations $y_{(k)}^n$ to the estimates $(\hat{w}_{(k)},\hat{s}_{(k)})$. The reliability condition to be satisfied by this code is given by
\begin{IEEEeqnarray}{c}
\lim_{n \rightarrow \infty} \mathbb{P}\left[  (W_{(k)},S_{(k)}) \neq (\hat{W}_{(k)},\hat{S}_{(k)}) \right] = 0,  \quad k \in [1,2]. \IEEEeqnarraynumspace \label{eq:reliabilitycondx}
\end{IEEEeqnarray}
The \emph{strong} secrecy condition is measured in terms of the information leakage and is given by
\begin{IEEEeqnarray}{c}
\lim_{n \rightarrow \infty} I \left( S_{(1)} S_{(2)} ; Z^n \right) = 0.  \label{eq:secrecycondx}
\end{IEEEeqnarray}
This model is plotted in Figure~\ref{fig:chmodelx}. A tuple of rates $(R_{W_{(1)}}, R_{S_{(1)}}, R_{W_{(2)}}, R_{S_{(2)}}) \in \mathbb{R}_{+}^4$ is achievable for the \gls*{miwtbc} if a sequence of $\big(\ceil{2^{nR_{W_{(1)}}}},\ceil{2^{nR_{S_{(1)}}}},\ceil{2^{nR_{W_{(2)}}}}, \ceil{2^{nR_{S_{(2)}}}}, n \big)$ codes that satisfy the reliability and secrecy conditions \eqref{eq:reliabilitycondx} and \eqref{eq:secrecycondx} respectively exists. 
\begin{figure}[h!]
\hspace{1cm}
\centering
\begin{overpic}[width=0.6\linewidth]{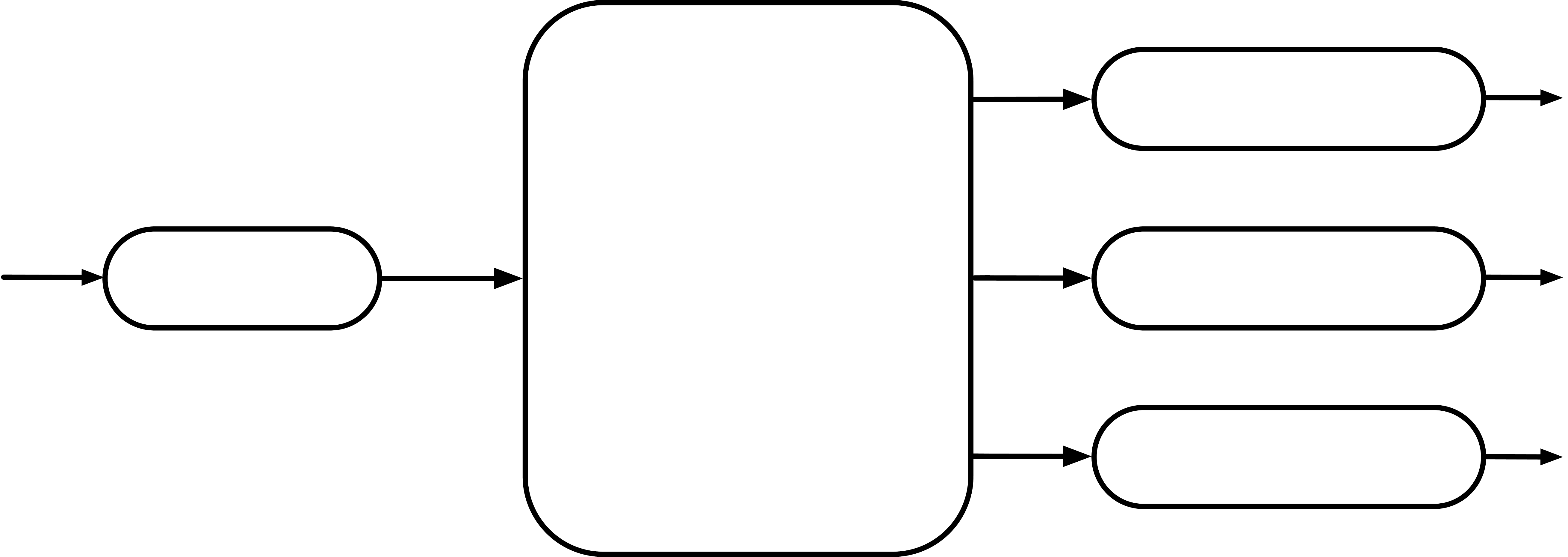}
\put (-34,19.8) {\small $({W}_{(1)},{S}_{(1)},{W}_{(2)},{S}_{(2)})$}
\put (9,16.2) {\small Encoder}
\put (25.5,18.8) {\small $X^n$}
\put (74,27.8) {\small Receiver 1}
\put (74,16.2) {\small Receiver 2}
\put (71,4.8) {\small $\Scale[0.98]{\text{Eavesdropper}}$}
\put (41,18) {\small \gls*{wtbc}}
\put (38,14) {\small $p_{Y_{(1)}Y_{(2)}Z|X}$}
\put (96,31) {\small $(\hat{W}_{(1)},\hat{S}_{(1)})$}
\put (96,19.4) {\small $(\hat{W}_{(2)},\hat{S}_{(2)})$}
\put (96,8.5) {\small $\Ccancel[red]{(S_{(1)},S_{(2)})}$}
\end{overpic}
\caption{Channel model: \gls*{miwtbc}.}\label{fig:chmodelx}
\end{figure}

References \cite{7174526} and \cite{6420946} define two different inner-bounds on the capacity region of this model. Indeed, the inner-bound in \cite{7174526} only consider the case where $R_{W_{(1)}}=R_{W_{(2)}} \triangleq 0$, that is, when only confidential messages are transmitted over the broadcast channel. In this situation, \cite{7174526} (Remark~3) points out that this inner-bound includes the one defined in~\cite{6420946}, but it does not specify whether this bound is strictly larger or not.  


\begin{definition}[Adapted from \cite{7174526}]
\label{prop:MIB2}
The region $\mathfrak{R}_{\text{\emph{MI-WTBC}}}^{(\text{\emph{S}})}$ defined by the union over all the pairs of rates $(R_{S_{(1)}},R_{S_{(2)}}) \in \mathbb{R}_{+}^2$ satisfying
\begin{align}
R_{S_{(1)}} & \leq I (V U_{(1)}; Y_{(1)}) - I (V U_{(1)}; Z) \nonumber \\
R_{S_{(2)}} & \leq I (V U_{(2)}; Y_{(2)}) - I (V U_{(2)}; Z) \nonumber \\
R_{S_{(1)}} + R_{S_{(2)}} & \leq I (V U_{(1)}; Y_{(1)}) + I (V U_{(2)}; Y_{(2)}) - I (U_{(1)} ; U_{(2)} | V) \nonumber \\
& \quad - I (V U_{(1)} U_{(2)}; Z) - \max \{ I(V;Y_{(1)}), I(V;Y_{(2)}), I(V;Z) \} \nonumber 
\end{align}
for some distribution $p_{VU_{(1)}U_{(2)}XY_{(1)}Y_{(2)}Z}$ such that $(VU_{(1)}U_{(2)}) - X - (Y_{(1)},Y_{(2)},Z)$ forms a Markov chain and $I(U_{(1)};Y_{(1)}|V) + I(U_{(2)};Y_{(2)}|V) \geq I(U_{(1)};U_{(2)}|V)$, defines an inner-bound on the achievable region of the \gls*{miwtbc} when $R_{W_{(1)}}=R_{W_{(2)}} \triangleq 0$.
\end{definition}

\begin{remark}\label{remark:funcX}
Since the previous inner-bound on the achievable region of the \gls*{miwtbc} cannot be enlarged by considering general distributions $p_{X|VU_{(1)}U_{(2)}}$, the channel input $X$ can be restricted to be any deterministic function of random variables $(VU_{(1)}U_{(2)})$.
\end{remark}

\begin{remark}
If $Z \triangleq \varnothing$ then region $\mathfrak{R}_{\text{\emph{MI-WTBC}}}^{(\text{\emph{S}})}$ reduces to well-known Marton's region for the broadcast channel without secrecy constraints \cite{1056046}.
\end{remark}

\begin{proposition}
When $R_{W_{(1)}}=R_{W_{(2)}} \triangleq 0$, for a particular distribution $p_{VU_{(1)}U_{(2)}XY_{(1)}Y_{(2)}Z}$, $\mathfrak{R}_{\text{\emph{MI-WTBC}}}^{(\text{\emph{S}})}$ in Proposition~\ref{prop:MIB2} is strictly larger than the inner-bound in \cite{6420946} (Theorem~1).
\end{proposition}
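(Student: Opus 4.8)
The plan is to reduce the statement to producing a single explicit example. By the discussion in the introduction and \cite[Remark~3]{7174526}, for every distribution $p$ that is admissible in Definition~\ref{prop:MIB2} the inner-bound of \cite[Theorem~1]{6420946}, evaluated at $p$ and denoted $\widetilde{\mathfrak{R}}(p)$, is contained in the section $\mathfrak{R}_{\text{\emph{MI-WTBC}}}^{(\text{\emph{S}})}(p)$ of $\mathfrak{R}_{\text{\emph{MI-WTBC}}}^{(\text{\emph{S}})}$ obtained by fixing the same $p$; hence it suffices to exhibit one WTBC, one admissible distribution $p^{\star}$, and one pair $(R_{S_{(1)}}^{\star},R_{S_{(2)}}^{\star})\in\mathfrak{R}_{\text{\emph{MI-WTBC}}}^{(\text{\emph{S}})}(p^{\star})\setminus\widetilde{\mathfrak{R}}(p^{\star})$. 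The structural reason such a pair exists is the difference between the two decoders: in \cite{6420946} each legitimate receiver recovers the cloud centre $V$ by successive decoding, treating its own satellite codeword as noise, so the common layer is throttled by $\min\{I(V;Y_{(1)}),I(V;Y_{(2)})\}$; in $\mathfrak{R}_{\text{\emph{MI-WTBC}}}^{(\text{\emph{S}})}$ each receiver jointly decodes $(V,U_{(k)})$, so a receiver with slack in $I(U_{(k)};Y_{(k)}|V)$ can absorb part of the cloud-centre rate, and after eliminating the rate-split variables this manifests as the single penalty $\max\{I(V;Y_{(1)}),I(V;Y_{(2)}),I(V;Z)\}$ in the sum-rate bound. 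I will choose $p^{\star}$ so that the two legitimate receivers are strongly asymmetric in their ability to decode $V$ while the eavesdropper learns little about it, making this gap strictly positive.

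Concretely, I take a WTBC with product input $X=(X_{0},X_{1},X_{2})$ on a binary alphabet, with component channels chosen so that Receiver~1 obtains a clean copy of $(X_{0},X_{1})$, Receiver~2 obtains a clean copy of $X_{2}$ together with a heavily corrupted copy of $X_{0}$, and the eavesdropper obtains only an even more corrupted (or essentially useless) copy of $X_{0}$. I set $V=X_{0}$, $U_{(1)}=X_{1}$, $U_{(2)}=X_{2}$ with $X_{0},X_{1},X_{2}$ independent and uniform, and let $X$ be the identity map, which is admissible by Remark~\ref{remark:funcX}. For this $p^{\star}$ the Markov chain $(VU_{(1)}U_{(2)})-X-(Y_{(1)},Y_{(2)},Z)$ holds, $I(U_{(1)};U_{(2)}|V)=0$ so that $I(U_{(1)};Y_{(1)}|V)+I(U_{(2)};Y_{(2)}|V)\ge I(U_{(1)};U_{(2)}|V)$ is trivially satisfied, and $I(V;Y_{(1)})$ is large while $I(V;Y_{(2)})$ and $I(V;Z)$ are small --- exactly the regime in which joint decoding at Receiver~2 strictly helps.

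It then remains to evaluate both regions at $p^{\star}$ and to locate the witness point. Since $I(U_{(1)};U_{(2)}|V)=0$ and every relevant mutual information is a difference of binary entropies, the sum-rate bound of $\mathfrak{R}_{\text{\emph{MI-WTBC}}}^{(\text{\emph{S}})}(p^{\star})$ and that of $\widetilde{\mathfrak{R}}(p^{\star})$ --- the latter read off from \cite[Theorem~1]{6420946} after carrying its rate-split variables through Fourier--Motzkin elimination --- both reduce to explicit numbers, and one checks that the former is strictly larger, whereas the individual bounds $R_{S_{(k)}}\le I(VU_{(k)};Y_{(k)})-I(VU_{(k)};Z)$ are identical in the two regions. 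Choosing the corruption levels so that the sum-rate constraint is the binding one in $\mathfrak{R}_{\text{\emph{MI-WTBC}}}^{(\text{\emph{S}})}(p^{\star})$, i.e.\ so that its sum-rate bound lies strictly below the sum of the two individual bounds, I take $(R_{S_{(1)}}^{\star},R_{S_{(2)}}^{\star})$ to be any point on the sum-rate face of $\mathfrak{R}_{\text{\emph{MI-WTBC}}}^{(\text{\emph{S}})}(p^{\star})$ whose coordinates are strictly inside the corresponding individual bounds; such a point satisfies every constraint of $\widetilde{\mathfrak{R}}(p^{\star})$ except its smaller sum-rate bound, hence lies in the difference, which is the assertion.

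The main obstacle is the calibration of the example rather than any single inequality: the component noise parameters must be tuned so that, simultaneously, $p^{\star}$ stays admissible, the penalty $\max\{I(V;Y_{(1)}),I(V;Y_{(2)}),I(V;Z)\}$ is attained at $I(V;Y_{(1)})$ (so $I(V;Z)$ must be kept small), the sum-rate constraint remains active in $\mathfrak{R}_{\text{\emph{MI-WTBC}}}^{(\text{\emph{S}})}(p^{\star})$ so that there is a genuine face to exploit, and, crucially, the scheme of \cite{6420946} cannot recover the witness point through a different rate split --- which is why one must track its rate-split variables explicitly rather than rely on the informal picture in which the cloud-centre rate equals $\min\{I(V;Y_{(1)}),I(V;Y_{(2)})\}$. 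Once a closed-form choice meeting these requirements is fixed, everything else is substitution into the two rate regions.
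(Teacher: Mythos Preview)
Your overall plan—fix a distribution with $I(V;Y_{(1)})\neq I(V;Y_{(2)})$ and exhibit a rate pair in one region but not the other—is sound, but you locate the separation in the wrong place, and this leads to an incorrect intermediate claim.

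You assert that ``the individual bounds $R_{S_{(k)}}\le I(VU_{(k)};Y_{(k)})-I(VU_{(k)};Z)$ are identical in the two regions.'' They are not. Under successive decoding, the cloud-centre layer is throttled by $\min\{I(V;Y_{(1)}),I(V;Y_{(2)})\}$ \emph{for both receivers simultaneously}, so when $I(V;Y_{(1)})<I(V;Y_{(2)})$ the region of \cite{6420946} forces
\[
R_{S_{(2)}}\;\le\; I(V;Y_{(1)})+I(U_{(2)};Y_{(2)}|V)-I(VU_{(2)};Z),
\]
which is strictly smaller than the Definition~\ref{prop:MIB2} bound $R_{S_{(2)}}\le I(V;Y_{(2)})+I(U_{(2)};Y_{(2)}|V)-I(VU_{(2)};Z)$. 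The gap is thus visible already in a single-rate constraint; no comparison of sum-rate faces is needed. Your subsequent claim that the witness point ``satisfies every constraint of $\widetilde{\mathfrak{R}}(p^{\star})$ except its smaller sum-rate bound'' is therefore unsupported (and in your own example, with $I(V;Y_{(1)})$ large and $I(V;Y_{(2)})$ small, the analogous individual bound on $R_{S_{(1)}}$ is what shrinks).

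The paper's proof exploits exactly this: it simply observes that for any distribution with $I(V;Y_{(1)})<I(V;Y_{(2)})$ the individual bound on $R_{S_{(2)}}$ in \cite{6420946} is strictly smaller than in Definition~\ref{prop:MIB2}, so any rate pair with $R_{S_{(2)}}$ between the two values lies in the difference. No explicit channel construction, no calibration of noise parameters, and no sum-rate analysis is required. Your elaborate example and its tuning constraints are unnecessary once you recognise that the separation already occurs at the level of the individual rate bounds.
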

\begin{proof}
Consider $\mathfrak{R}_{\text{{MI-WTBC}}}^{(\text{{S}})}$ when $p_{VU_{(1)}U_{(2)}XY_{(1)}Y_{(2)}Z}$ is such that $I(V;Y_{(1)}) < I(V;Y_{(2)})$. In this case, it is easy to check that the inner-bound in~\cite{6420946} (Theorem~1) imposes that $R_{S_{(2)}} \leq I(V;Y_{(1)})+I(U_{(2)};Y_{(2)}|V) - I(VU_{(2)}; Z)$, which is strictly less than the bound on $R_{S_{(2)}}$ in Proposition~\ref{prop:MIB2}. Similarly, the upper-bound on $R_{S_{(1)}}$ in Proposition~\ref{prop:MIB2} is also strictly larger than the one in~\cite{6420946} (Theorem~1) when $I(V;Y_{(2)}) < I(V;Y_{(1)})$. 
\end{proof}

\begin{remark}
In general, we cannot affirm that the inner-bound in Proposition~\ref{prop:MIB2} is strictly larger than the one in \cite{6420946}: the rate tuples that are included only in Proposition~\ref{prop:MIB2} for a particular distribution may be in the inner-bound of \cite{6420946} under another distribution.
\end{remark}

\begin{remark}
The region in~\cite{6420946} imposes that $I(V;Z) \leq I(V;Y_{(k)})$ for any $k \in [1,2]$. Nevertheless, Proposition~\ref{prop:MIB2} does not restrict $p_{VU_{(1)}U_{(2)}XY_{(1)}Y_{(2)}Z}$ to satisfy this condition.
\end{remark}

\begin{remark}
As mentioned in \cite{7174526} (Remark~3), the coding techniques used in \cite{7174526} and  \cite{6420946} to obtain the inner-bounds are almost the same and the only difference are the decoding strategies: joint decoding in \cite{7174526} and successive decoding for \cite{6420946}. Indeed, we will see that these strategies have a connection on the \gls*{pcs} that is described in Section~\ref{sec:PCSx1} and joint decoding enlarges the inner-bound on the achievable region for a particular distribution. 
\end{remark}

For compactness of notation, let $k \in [1,2]$, and $\bar{k} \triangleq [1,2] \setminus k$. 
The following proposition defines an inner-bound on the achievable region for the \gls*{miwtbc} when considering also the transmission of the private messages $W_{(1)}$ and $W_{(2)}$.
\begin{proposition}
\label{prop:MIB2x} 
The region $\mathfrak{R}_{\text{\emph{MI-WTBC}}} \triangleq Conv \big( \mathfrak{R}_{\text{\emph{MI-WTBC}}}^{(1)} \cup \mathfrak{R}_{\text{\emph{MI-WTBC}}}^{(2)} \big)$
defines an inner-bound on the achievable region of the \gls*{miwtbc}, where $Conv(\cdot)$ denotes the convex hull of a set of rate-tuples, for $k \in [1,2]$ we have 
\begin{align*}
\arraycolsep=1.4pt \def\arraystretch{1.1}
& \mathfrak{R}_{\text{\emph{MI-WTBC}}}^{(k)} \\
& \quad \triangleq \bigcup_{\mathcal{P}}
\left\{
\begin{array}{rl}
R_{S_{(k)}} \! \! \! \! \! \! & \leq I(VU_{(k)};Y_{(k)}) - I(VU_{(k)};Z) \\
R_{S_{(\bar{k})}} \! \! \! \! \! \! & \leq  I(VU_{(\bar{k})};Y_{(\bar{k})}) - I(U_{(\bar{k})};U_{(k)}|V) - I(U_{(\bar{k})};Z|VU_{(k)})  \\
\! \! \! \! \! \! & \quad - \max \{ I(V;Y_{(1)}), I(V;Y_{(2)}), I(V;Z) \} \\
R_{S_{(k)}} + R_{W_{(k)}} \! \! \! \! \! \! & \leq I(VU_{(k)};Y_{(k)}) \\
R_{S_{(\bar{k})}} + R_{W_{(\bar{k})}} \! \! \! \! \! \! & \leq  I(VU_{(\bar{k})};Y_{(\bar{k})}) - I(U_{(\bar{k})};U_{(k)}|V) \\
\! \! \! \! \! \! & \quad - \max \{ I(V;Y_{(1)}), I(V;Y_{(2)}), I(V;Z) \} + I(V;Z),
\end{array}
\right\},
\end{align*}
and $\mathcal{P}$ contains all distributions $p_{VU_{(1)}U_{(2)}XY_{(1)}Y_{(2)}Z}$ such that $(VU_{(1)}U_{(2)}) - X - (Y_{(1)},Y_{(2)},Z)$ forms a Markov chain and $I(U_{(1)};Y_{(1)}|V) + I(U_{(2)};Y_{(2)}|V) \geq I(U_{(1)};U_{(2)}|V)$.
\end{proposition}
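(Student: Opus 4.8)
The plan is to establish the achievability of each of the two constituent regions $\mathfrak{R}_{\text{\emph{MI-WTBC}}}^{(1)}$ and $\mathfrak{R}_{\text{\emph{MI-WTBC}}}^{(2)}$ separately; the convex hull $\mathfrak{R}_{\text{\emph{MI-WTBC}}}$ is then achievable by the usual time-sharing argument (concatenating blocks of codes designed for points of $\mathfrak{R}_{\text{\emph{MI-WTBC}}}^{(1)}$ and of $\mathfrak{R}_{\text{\emph{MI-WTBC}}}^{(2)}$ in the appropriate proportion and checking that both \eqref{eq:reliabilitycondx} and \eqref{eq:secrecycondx} survive the concatenation up to vanishing overhead). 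Fix $k\in[1,2]$ and a distribution in $\mathcal{P}$. I would start from the random-coding scheme underlying the inner-bound of \cite{7174526} recalled in Definition~\ref{prop:MIB2}: an inner layer carrying $V^n$ that both legitimate receivers must decode; two superposed outer layers carrying $U_{(1)}^n$ and $U_{(2)}^n$, coupled through Marton's mutual-covering lemma, whose covering cost $I(U_{(1)};U_{(2)}|V)$ is charged in full to the outer layer of receiver $\bar k$ (this is the origin of the asymmetry between $\mathfrak{R}_{\text{\emph{MI-WTBC}}}^{(k)}$ and $\mathfrak{R}_{\text{\emph{MI-WTBC}}}^{(\bar k)}$); wiretap-style stochastic encoding in which each confidential message $S_{(j)}$ selects a bin and an auxiliary randomization index selects a codeword within it, the randomization rates taken large enough that a channel-resolvability (soft-covering) argument forces the eavesdropper's output to be asymptotically independent of $(S_{(1)},S_{(2)})$, i.e.\ \eqref{eq:secrecycondx}; and a prefix $p_{X|VU_{(1)}U_{(2)}}$ generating the channel input from $(V^n,U_{(1)}^n,U_{(2)}^n)$ (cf.\ Remark~\ref{remark:funcX}).

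The only ingredient not already present in the $R_{W_{(1)}}=R_{W_{(2)}}=0$ case is a rate-split of the binning randomization: inside the sub-codebook indexed by $S_{(j)}$, reserve a uniformly distributed component of rate $R_{W_{(j)}}$ for the private message $W_{(j)}$, leaving the remainder as genuine dummy randomness. Reliability is unaffected because each legitimate receiver already decodes its \emph{entire} codeword (the inner layer and its own outer layer), hence recovers $(\hat S_{(j)},\hat W_{(j)})$; and strong secrecy is unaffected because the eavesdropper's induced output distribution depends only on the overall codeword distribution, so relabelling a uniform portion of the randomization as $W_{(j)}$ leaves that distribution---and thus $I(S_{(1)}S_{(2)};Z^n)$---unchanged, provided the rate still ``hidden'' from the eavesdropper meets the soft-covering thresholds. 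Collecting the constraints of the scheme---reliability for each receiver; the Marton condition $I(U_{(1)};Y_{(1)}|V)+I(U_{(2)};Y_{(2)}|V)\geq I(U_{(1)};U_{(2)}|V)$ (precisely the side condition defining $\mathcal{P}$); the soft-covering conditions securing $S_{(1)}$, $S_{(2)}$ and the pair at the eavesdropper; and the conditions on the inner layer $V^n$, which must be decodable by both legitimate receivers and protected against the eavesdropper, and which contribute the $\max\{I(V;Y_{(1)}),I(V;Y_{(2)}),I(V;Z)\}$ term together with the $+I(V;Z)$ credit in the bound on $R_{S_{(\bar k)}}+R_{W_{(\bar k)}}$---and then eliminating all auxiliary codebook and randomization rates by Fourier--Motzkin elimination should return exactly the four inequalities defining $\mathfrak{R}_{\text{\emph{MI-WTBC}}}^{(k)}$. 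As a consistency check, setting $R_{W_{(1)}}=R_{W_{(2)}}=0$ turns $\mathfrak{R}_{\text{\emph{MI-WTBC}}}^{(k)}$ into the ``decode-$k$-first'' corner region of $\mathfrak{R}_{\text{\emph{MI-WTBC}}}^{(\text{\emph{S}})}$, and convexifying over $k$ recovers $\mathfrak{R}_{\text{\emph{MI-WTBC}}}^{(\text{\emph{S}})}$ of Definition~\ref{prop:MIB2}.

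The step I expect to be the main obstacle is the \emph{joint} strong-secrecy analysis: applying a soft-covering/resolvability bound to the product of the inner layer and the two superposed outer layers to obtain $I(S_{(1)}S_{(2)};Z^n)\to 0$, while simultaneously verifying that the dummy rates remaining after carving out $W_{(1)}$ and $W_{(2)}$ still lie above every required threshold, so that the Fourier--Motzkin elimination closes up to precisely the stated region and no spurious inequality survives. A secondary delicate point is the treatment of the inner layer $V^n$: because the broadcast channel is non-degraded, $V^n$ must be reliably decoded by \emph{both} receivers, which in a single-block scheme forces the bookkeeping that produces the $\max\{\cdot\}$ term; in the constructive polar-coding counterpart of this argument (Section~\ref{sec:PCSx1}) this is realized by a chaining construction with bidirectional inter-block dependencies and two negligible-rate secret keys, which is where the real technical weight sits.
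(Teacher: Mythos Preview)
Your approach is genuinely different from the paper's. The paper does not supply a standalone random-coding proof of Proposition~\ref{prop:MIB2x}; instead, the proposition is established \emph{constructively} as a consequence of Theorem~\ref{th:th1x} and Corollary~\ref{coro:th1x}: the explicit polar coding scheme of Section~\ref{sec:PCSx1} is shown in Section~\ref{sec:performancex} to achieve each corner point $(R_{S_{(1)}}^{\star k}, R_{S_{(2)}}^{\star k}, R_{W_{(1)}}^{\star k}, R_{W_{(2)}}^{\star k})$ of $\mathfrak{R}_{\text{MI-WTBC}}^{(k)}$ with vanishing error probability and vanishing information leakage, and the full region then follows by time-sharing. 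So in the paper the proposition is really the \emph{target} of the construction rather than a result proved by an independent argument.

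Your plan---adapt the random code underlying \cite{7174526} by relabelling a uniform portion of the wiretap randomization as the private message $W_{(j)}$, then eliminate the auxiliary rates by Fourier--Motzkin---is the natural information-theoretic route and is plausible; you correctly locate the two delicate points (the joint soft-covering bound across the three superposed layers for strong secrecy, and the inner-layer bookkeeping that produces the $\max\{\cdot\}$ term together with the $+I(V;Z)$ credit). What your approach would buy is brevity and a direct reduction to Definition~\ref{prop:MIB2} when $R_{W_{(1)}}=R_{W_{(2)}}=0$; what the paper's approach buys is an explicit, low-complexity code and a concrete realization of the ``joint vs.\ successive decoding'' distinction (via the chaining that links inner and outer layers across blocks), which is the paper's actual objective. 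You already note this in your final paragraph, so you are aware that the paper's proof lives in the polar construction rather than in a random-coding argument.
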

In Section~\ref{sec:PCSx1} we describe a \gls*{pcs} that achieves the corner point of $\mathfrak{R}_{\text{{MI-WTBC}}}^{(k)}$, where $k \in [1,2]$, and then we discuss how to achieve any rate tuple of the entire region $\mathfrak{R}_{\text{{MI-WTBC}}}$.

\section{Review of Polar Codes}\label{sec:rev}
Let $(\mathcal{X} \times \mathcal{Y}, p_{XY})$ be a Discrete Memoryless Source (DMS), where\footnote{Throughout this paper, we assume binary polarization. Nevertheless, an extension to $q$-ary alphabets is possible \cite{5513555,csasouglu2009polarization}.} ${X} \in \{0,1\}$ and $Y \in \mathcal{Y}$. The polar transform over the $n$-sequence $X^n$, $n$ being any power of $2$, is defined as $U^n \triangleq X^n G_n$, where $G_n \triangleq \begin{brsm}1 & 1\\1 & 0\end{brsm}^{\otimes n}$ is the source polarization matrix \cite{arikan2010source}. Since $G_n = G_n^{-1}$, then $X^n = U^n G_n$.

The polarization theorem for source coding with side information \cite[Theorem~1]{arikan2010source} states that the polar transform extracts the randomness of $X^n$ in the sense that, as $n \rightarrow \infty$, the set of indices $j \in [1,n]$ can be divided practically into two disjoint sets, namely $\smash{\mathcal{H}_{X|Y}^{(n)}}$ and $\smash{\mathcal{L}_{X|Y}^{(n)}}$, such that $U(j)$ for $j \in \mathcal{H}_{X|Y}^{(n)}$ is practically independent of $(U^{1:j-1},Y^n)$ and uniformly distributed, that is, $H ({U(j) | U^{1:j-1}, Y^n} ) \rightarrow 1$, and $U(j)$ for $j \in \smash{\mathcal{L}_{X|Y}^{(n)}}$ is almost determined by $(U^{1:j-1}, Y^n)$, which means that $H ( U(j) | U^{1:j-1}, Y^n ) \rightarrow 0$. Formally, let $\delta_n \triangleq 2^{-n^{\beta}}$, where $\beta \in (0, \frac{1}{2})$, and
\begin{IEEEeqnarray*}{rCl}
\mathcal{H}_{X|Y}^{(n)} & \triangleq & \left\{ j \in [1,n] \! : H \! \left( U(j) \left| U^{1:j-1},Y^n  \right. \! \right) \geq 1-\delta_n \right\}, \\
\mathcal{L}_{X|Y}^{(n)} & \triangleq & \left\{ j \in [1,n] \! : H \! \left( U(j) \left| U^{1:j-1},Y^n  \right. \! \right) \leq \delta_n \right\}. 
\end{IEEEeqnarray*}
Then, by \cite[Theorem~1]{arikan2010source} we have $\smash{\lim_{n \rightarrow \infty} \frac{1}{n} | \mathcal{H}_{X|Y}^{(n)} | }  = H(X|Y)$ and $\smash{\lim_{n \rightarrow \infty} \frac{1}{n} | \mathcal{L}_{X|Y}^{(n)} | } = 1 - H(X|Y)$. Consequently, the number of elements $U(j)$ that \emph{have not polarized} is asymptotically negligible in terms of rate, that is, $\smash{\lim_{n \rightarrow \infty} \frac{1}{n} | ( \mathcal{H}_{X|Y}^{(n)} )^{\text{C}} \setminus  \mathcal{L}_{X|Y}^{(n)}  | } = 0$. 

Furthermore, \cite[Theorem~2]{arikan2010source} states that given the part $U[(\mathcal{L}_{X|Y}^{(n)} )^{\text{C}}]$ and the channel output observations $Y^n$, the remaining part $\smash{U[\mathcal{L}_{X|Y}^{(n)}]}$ can be reconstructed by using SC decoding with error probability in $O(n \delta_n)$.

Similarly to $\mathcal{H}_{X|Y}^{(n)}$ and $\mathcal{L}_{X|Y}^{(n)}$, the sets $\mathcal{H}_{X}^{(n)}$ and $\mathcal{L}_{X}^{(n)}$ can be defined by considering that the observations $Y^n$ are absent. Since conditioning does not increase the entropy, we have $\mathcal{H}_{X}^{(n)} \supseteq \mathcal{H}_{X|Y}^{(n)}$ and $\mathcal{L}_{X}^{(n)} \subseteq \mathcal{L}_{X|Y}^{(n)}$.  A discrete memoryless channel $(\mathcal{X}, p_{Y|X}, \mathcal{Y})$ with some arbitrary $p_X$ can be seen as a DMS $(\mathcal{X} \times \mathcal{Y}, p_{X}p_{Y|X})$. In channel polar coding, first we define the sets of indices $\mathcal{H}_{X|Y}^{(n)}$, $\mathcal{L}_{X|Y}^{(n)}$, $\mathcal{H}_{X}^{(n)}$ and $\mathcal{L}_{X}^{(n)}$ from the target distribution $p_{X}p_{Y|X}$. Then, based on the previous sets, the encoder somehow constructs\footnote{Since the polar-based encoder will construct random variables that must approach the target distribution of the DMS, throughout this paper we use \emph{tilde} above the random variables to emphazise this purpose.} $\tilde{U}^n$ and applies the inverse polar transform $\tilde{X}^n = \tilde{U}^n G_n$. Afterwards, the transmitter sends $\tilde{X}^n$ over the channel, which induces $\tilde{Y}^n$. Let $(\tilde{X}^n,\tilde{Y}^n) \sim \tilde{q}_{X^n}\tilde{q}_{Y^n|X^n}$, if $\mathbb{V} (\tilde{q}_{X^nY^n}, p_{X^nY^n}) \rightarrow 0$ then the receiver can reliably reconstruct $\tilde{U}[\mathcal{L}_{X|Y}^{(n)}]$ from $\tilde{Y}^n$ and $\smash{\tilde{U}[(\mathcal{L}_{X|Y}^{(n)} )^{\text{C}}]}$ by performing SC decoding \cite{korada2010polar}.

\section{Polar coding scheme}\label{sec:PCSx1} 
Notice that $\mathfrak{R}_{\text{MI-WTBC}}$ of Proposition~\ref{prop:MIB2x} is not affected by switching subindices~1~and~2.
Thus, we can assume without loss of generality that $I(V;Y_{(1)}) \leq I(V;Y_{(2)})$. Otherwise, the coding scheme that is described later will also be suitable by simply exchanging the roles of Receiver~1 and Receiver~2. Consequently, the \gls*{pcs} must contemplate three different situations: 
\begin{enumerate}
\setlength\itemsep{-0.10em}
\item[] \emph{Situation~1}: when $I(V;Z) \leq I(V;Y_{(1)}) \leq I(V;Y_{(2)})$,
\item[] \emph{Situation~2}: when $ I(V;Y_{(1)}) < I(V;Z) \leq I(V;Y_{(2)})$,
\item[] \emph{Situation~3}: when $ I(V;Y_{(1)}) \leq I(V;Y_{(2)}) < I(V;Z)$.
\end{enumerate}

Under the previous assumption, in order to proof that $\mathfrak{R}_{\text{{MI-WTBC}}}$ is entirely achievable by using polar codes for any of the situations mentioned before, it suffices to provide a \gls*{pcs} that achieves the corner points $(R_{S_{(1)}}^{\star 1}, R_{S_{(2)}}^{\star 1}, R_{W_{(1)}}^{\star 1}, R_{W_{(2)}}^{\star 1}) \subset \mathfrak{R}_{\text{{MI-WTBC}}}^{(1)}$ and $(R_{S_{(1)}}^{\star 2}, R_{S_{(2)}}^{\star 2}, R_{W_{(1)}}^{\star 2}, R_{W_{(2)}}^{\star 2}) \subset \mathfrak{R}_{\text{{MI-WTBC}}}^{(2)}$, where
\begin{align}
R_{S_{(k)}}^{\star k} & \triangleq H(V U_{(k)} | Z) - H(V U_{(k)} | Y_{(k)}), \label{eq:rsk} \\
R_{S_{(\bar{k})}}^{\star k} & \triangleq H(U_{(\bar{k})} | V U_{(k)} Z) - H(U_{(\bar{k})} | V Y_{(\bar{k})}) \nonumber \\
& \quad - \big( H(V|Y_{(\bar{k})}) -  \min \{ H(V|Y_{(1)}) , H(V|Z) \} \big), \label{eq:rsbk} \\
R_{W_{(k)}}^{\star k} & \triangleq H(VU_{(k)}) - H(VU_{(k)}|Z), \label{eq:rwk} \\
R_{W_{(\bar{k})}}^{\star k} & \triangleq H(U_{(\bar{k})} | VU_{(k)}) - H(U_{(\bar{k})} | VU_{(k)} Z). \label{eq:rwbk}
\end{align}
for any $k \in [1,2]$, and recall that $\bar{k} = [1,2]\setminus k$. We have expressed $(R_{S_{(1)}}^{\star k },R_{S_{(2)}}^{\star k},R_{W_{(1)}}^{\star k },R_{W_{(2)}}^{\star k})$ in terms of entropies for convenience. Indeed, $(R_{S_{(1)}}^{\star k}, R_{S_{(2)}}^{\star k}, R_{W_{(1)}}^{\star k}, R_{W_{(2)}}^{\star k})$ corresponds to the case where, for a given distribution $p_{VU_{(1)}U_{(2)}XY_{(1)}Y_{(2)}Z}$, we set the maximum rate for the confidential and private messages intended for Receiver~$k$, and then we set the maximum possible rates of the remaining messages associated to Receiver~$\bar{k}$.

\begin{remark}
For distributions such that $I(V;Y_{(1)}) < I(V;Y_{(2)})$, the inner-bound in \cite{6420946} does not include the corner point $(R_{S_{(1)}}^{\star 2}, R_{S_{(2)}}^{\star 2}, R_{W_{(1)}}^{\star 2}, R_{W_{(2)}}^{\star 2})$. In this section we will see that polar-based joint decoding is necessary for the \gls*{pcs} to approach this rate tuple.

Moreover, distributions $p_{VU_{(1)}U_{(2)}XY_{(1)}Y_{(2)}Z}$ such that satisfy Situations~$2$ and $3$, that is when $I(V;Y_{(k)}) < I(V;Z)$ for some $k \in [1,2]$, are not considered in the definition of the inner-bound \cite{6420946}. In order to approach $(R_{S_{(1)}}^{\star k}, R_{S_{(2)}}^{\star k}, R_{W_{(1)}}^{\star k}, R_{W_{(2)}}^{\star k})$, for any $k \in [1,2]$, we will see that polar-based joint decoding is also needed in these situations.
\end{remark}



Let $(\mathcal{V} \times \mathcal{U}_{(1)} \times \mathcal{U}_{(2)} \times \mathcal{X} \times \mathcal{Y}_{(1)} \times \mathcal{Y}_{(2)} \times \mathcal{Z} , p_{VU_{(1)}U_{(2)}XY_{(1)}Y_{(2)}Z})$ denote the \gls*{dms} that represents the input $(V,U_{(1)},U_{(2)},X)$ and output $(Y_{(1)},Y_{(2)},Z)$ random variables of the \gls*{miwtbc}, where $|\mathcal{V}|=|\mathcal{U}_{(1)}|=|\mathcal{U}_{(2)}|= |\mathcal{X}| \triangleq 2$. For the input random variable $V$, we define the polar transform $A^n \triangleq V^n G_n$ and the associated sets
\begin{align}
\mathcal{H}_V^{(n)} & \triangleq  \big\{j \in [1,n]  : H  \big( A(j) \big| A^{1:j-1} \big) \geq 1 - \delta_n \big\},   \label{eq:HUx} \\
\mathcal{L}_V^{(n)} & \triangleq  \big\{j \in [1,n]  : H  \big( A(j) \big| A^{1:j-1} \big) \leq  \delta_n \big\},   \label{eq:LUx} \\
\mathcal{H}_{V|Z}^{(n)} &  \triangleq  \big\{j \in [1,n]   :H \big( A(j) \big| A^{1:j-1} Z^n \big) \geq 1 - \delta_n \big\},  \label{eq:HUZx} \\
\mathcal{L}_{V|Y_{(k)}}^{(n)} & \triangleq  \big\{j \in [1,n] : H  \big( A(j) \big| A^{1:j-1} Y_{(k)}^n  \big) \leq  \delta_n \big\}, \quad k = 1, 2. \label{eq:LUY_kx}  
\end{align}
For the random variable $U_{(k)}$, where $k \in [1,2]$, we define the polar transform $T_{(k)}^n \triangleq U_{(k)}^n G_n$. In this model, due to the polar-based Marton's coding strategy similar to the one in \cite{6949646}, the polar code constructions corresponding to $T_{(1)}^n$ and $T_{(2)}^n$ are different depending on the corner point that the \gls*{pcs} must approach. To achieve $(R_{S_{(1)}}^{\star 1}, R_{S_{(2)}}^{\star 1}, R_{W_{(1)}}^{\star 1}, R_{W_{(2)}}^{\star 1})$, the construction of $T_{(1)}^n$ only depends on $V^n$, while $T_{(2)}^n$ depends on $V^n$ and $T_{(1)}^n$. Otherwise, to achieve $(R_{S_{(1)}}^{\star 2}, R_{S_{(2)}}^{\star 2}, R_{W_{(1)}}^{\star 2}, R_{W_{(2)}}^{\star 2})$, the construction of $T_{(2)}^n$ only depends on $V^n$, while the one of $T_{(1)}^n$ depends on $V^n$ and $T_{(2)}^n$. 
Therefore, consider that the \gls*{pcs} must achieve $(R_{S_{(1)}}^{\star k}, R_{S_{(2)}}^{\star k}, R_{W_{(1)}}^{\star k}, R_{W_{(2)}}^{\star k}) \subseteq \mathfrak{R}_{\text{MI-WTBC}}^{(k)}$, where $k \in [1,2]$. Associated to $U_{(k)}$, define 
\begin{align}
\mathcal{H}_{U_{(k)}|V}^{(n)} & \triangleq  \big\{j \in [1,n]  : H  \big( T_{(k)}(j) \big| T_{(k)}^{1:j-1} V^n \big) \geq 1 - \delta_n \big\},   \label{eq:HUVx} \\
\mathcal{L}_{U_{(k)}|V}^{(n)} & \triangleq  \big\{j \in [1,n]  : H  \big( T_{(k)}(j) \big| T_{(k)}^{1:j-1} V^n \big) \leq  \delta_n \big\},   \label{eq:LUVx} \\
\mathcal{H}_{U_{(k)}|VZ}^{(n)} &  \triangleq  \big\{j \in [1,n]   :H \big( T_{(k)}(j) \big| T_{(k)}^{1:j-1} V^n Z^n \big) \geq 1 - \delta_n \big\},  \label{eq:HUVZx} \\
\mathcal{L}_{U_{(k)}|VY_{(k)}}^{(n)} & \triangleq  \big\{j \in [1,n] : H  \big( T_{(k)}(j) \big| T_{(k)}^{1:j-1} V^n Y_{(k)}^n  \big) \leq  \delta_n \big\}. \label{eq:LUVY_kx}  
\end{align}
Also, define the following sets associated to the polar transform $U_{(\bar{k})}$:
\begin{align}
\mathcal{H}_{U_{(\bar{k})}|VU_{(k)}}^{(n)} & \triangleq  \big\{j \in [1,n]  : H  \big( T_{(\bar{k})}(j) \big| T_{(\bar{k})}^{1:j-1} V^n U_{(k)}^n \big) \geq 1 - \delta_n \big\},   \label{eq:HUVUx} \\
\mathcal{L}_{U_{(\bar{k})}|VU_{(k)}}^{(n)} & \triangleq  \big\{j \in [1,n]  : H  \big( T_{(\bar{k})}(j) \big| T_{(\bar{k})}^{1:j-1} V^n U_{(k)}^n \big)  \leq  \delta_n \big\},   \label{eq:LUVUx} \\
\mathcal{H}_{U_{(\bar{k})}|V U_{(k)} Z}^{(n)} &  \triangleq  \big\{j \in [1,n] : H  \big( T_{(\bar{k})}(j) \big| T_{(\bar{k})}^{1:j-1} V^n U_{(k)}^n Z^n \big) \geq 1 - \delta_n \big\},  \label{eq:HUVUZx} \\
\mathcal{L}_{U_{(\bar{k})}|V Y_{(\bar{k})}}^{(n)} &  \triangleq  \big\{j \in [1,n] : H  \big( T_{(\bar{k})}(j) \big| T_{(\bar{k})}^{1:j-1} V^n Y_{(\bar{k})}^n \big) \leq \delta_n \big\}. \label{eq:LUVUY_kx}   
\end{align}
\indent Consider that the encoding takes place over $L$ blocks indexed by $i \in [1,L]$. In order to approach the corner point $(R_{S_{(1)}}^{\star k}, R_{S_{(2)}}^{\star k}, R_{W_{(1)}}^{\star k}, R_{W_{(2)}}^{\star k})$, where $k \in [1,2]$, at Block~$i \in [1,L]$ the encoder will construct $\tilde{A}_i^n$, which will carry part of the confidential and private message that is intended for legitimate Receiver~$k$. Additionally, the encoder will store into $\tilde{A}_i^n$ some elements from $\tilde{A}_{i-1}^n$ (if $i \in [2,L]$) and $\tilde{A}_{i+1}^n$ (if $i \in [1,L-1]$) so that both legitimate receivers are able to reliably reconstruct $\tilde{A}_{1:L}^n$ (chaining construction). Then, the encoder first constructs $\tilde{T}_{(k),i}^n$, which will depend on $\tilde{V}_{i}^n = \tilde{A}_i^n G_n$ and will carry the remaining parts of the confidential and private messages intended for Receiver~$k$. Indeed, it could also depend on $\tilde{V}_{i-1}^n$ and/or $\tilde{V}_{i+1}^n$ if polar-based jointly decoding is considered because some elements of $\tilde{V}_{i-1}^n$ and/or $\tilde{V}_{i+1}^n$ may be stored in $\tilde{T}_{(k),i}^n$. Then, the encoder forms $\tilde{T}_{(\bar{k}),i}^n$, which depends on $\big(\tilde{V}_{i}^n ,\tilde{T}_{(k),i}^n \big)$. In fact, as before, if polar-based jointly decoding is considered then the chaining construction may store some elements of $\tilde{V}_{i-1}^n$ and/or $\tilde{V}_{i+1}^n$ in $\tilde{T}_{(\bar{k}),i}^n$. Finally, it will obtain $\tilde{U}_{(k),i}^n = \tilde{T}_{(k),i}^n G_n$ for $k \in [1,2]$ and deterministically form $\tilde{X}_i^n$ (see Remark~\ref{remark:funcX}). The codeword $\tilde{X}^n$ then is transmitted over the \gls*{wtbc} inducing the channel outputs $(\tilde{Y}_{(1),i}^n,\tilde{Y}_{(2),i}^n,\tilde{Z}_i^n)$.  

For better readability and understanding, the methods of constructing the \emph{inner-layer} $\tilde{A}_{1:L}^n$ and the \emph{outer-layers} $\tilde{T}_{(1),1:L}^n$ and $\tilde{T}_{(2),1:L}^n$ are described independently in the following subsections. Nevertheless, if we consider the encoding moving from Block~1 to Block~$L$, and since dependencies only occur between adjacent blocks, the encoder is able to form $\tilde{T}_{(1),i}^n$ and $\tilde{T}_{(2),i}^n$ once $\tilde{A}_{i+1}^n$ ($i \in [1,L-1]$) is constructed.

\subsection{Construction of the inner-layer}\label{sec:PCSx1il} 
The method of forming $\tilde{A}_{1:L}^n$ is very similar to the one described in \cite{alos2019polar}. Besides the sets in~\eqref{eq:HUx}--\eqref{eq:LUY_kx}, we define
$\mathcal{G}^{(n)} \triangleq \mathcal{H}^{(n)}_{V|Z}$ and $\mathcal{C}^{(n)} \triangleq \mathcal{H}^{(n)}_V \cap \big( \mathcal{H}^{(n)}_{V|Z} \big)^{\text{C}}$, which form a partition of $\mathcal{H}_{V}^{(n)}$.
Moreover, we also define the following partition of the set $\mathcal{G}^{(n)}$:
\begin{IEEEeqnarray}{rCl}
\mathcal{G}^{(n)}_{0} & \triangleq & \mathcal{G}^{(n)} \cap \mathcal{L}^{(n)}_{V|Y_{(1)}}\cap   \mathcal{L}^{(n)}_{V|Y_{(2)}}, \label{eq:sG0x} \\
\mathcal{G}^{(n)}_{1} & \triangleq  & \mathcal{G}^{(n)} \cap \big( \mathcal{L}^{(n)}_{V|Y_{(1)}} \big)^{\text{C}} \cap \mathcal{L}^{(n)}_{V|Y_{(2)}} , \label{eq:sG1x} \\
\mathcal{G}^{(n)}_{2} & \triangleq & \mathcal{G}^{(n)} \cap \mathcal{L}^{(n)}_{V|Y_{(1)}} \cap  \big( \mathcal{L}^{(n)}_{V|Y_{(2)}} \big)^{\text{C}} , \label{eq:sG2x}  \\ 
\mathcal{G}^{(n)}_{1,2} & \triangleq &  \mathcal{G}^{(n)} \cap \big( \mathcal{L}^{(n)}_{V|Y_{(1)}} \big)^{\text{C}}  \cap   \big( \mathcal{L}^{(n)}_{V|Y_{(2)}} \big)^{\text{C}}, \label{eq:sG12x} 
\end{IEEEeqnarray}
and the following partition of the set $\mathcal{C}^{(n)}$:
\begin{IEEEeqnarray}{rCl}
\mathcal{C}^{(n)}_{0} & \triangleq & \mathcal{C}^{(n)} \cap \mathcal{L}^{(n)}_{V|Y_{(1)}} \cap   \mathcal{L}^{(n)}_{V|Y_{(2)}}, \label{eq:sC0x} \\
\mathcal{C}^{(n)}_{1} & \triangleq  & \mathcal{C}^{(n)} \cap  \big( \mathcal{L}^{(n)}_{V|Y_{(1)}} \big)^{\text{C}} \cap \mathcal{L}^{(n)}_{V|Y_{(2)}}, \label{eq:sC1x} \\
\mathcal{C}^{(n)}_{2} & \triangleq & \mathcal{C}^{(n)} \cap \mathcal{L}^{(n)}_{V|Y_{(1)}} \cap  \big( \mathcal{L}^{(n)}_{V|Y_{(2)}} \big)^{\text{C}} , \label{eq:sC2x} \\
\mathcal{C}^{(n)}_{1,2} & \triangleq & \mathcal{C}^{(n)}  \cap  \big( \mathcal{L}^{(n)}_{V|Y_{(1)}} \big)^{\text{C}}  \cap   \big( \mathcal{L}^{(n)}_{V|Y_{(2)}} \big)^{\text{C}}. \label{eq:sC12x} 
\end{IEEEeqnarray}
These sets are graphically represented in \cite{alos2019polar} (Figure~2). 

Recall that $\tilde{A}_i[\mathcal{H}^{(n)}_V ]$, $i \in [1,L]$, is suitable for storing uniformly distributed random sequences, and $\tilde{A}_i [\mathcal{G}^{(n)} ]$ is suitable for storing information to be secured from the eavesdropper. Moreover, sets in \eqref{eq:sG0x}--\eqref{eq:sC12x} with subscript~$k \in [1,2]$ form $\mathcal{H}^{(n)}_V \cap \big( \mathcal{L}^{(n)}_{V|Y_{(k)}} \big)^{\text{C}}$, and the elements of $\tilde{A}_i^n$ corresponding to this set of indices are required by Receiver~$k$ to reliably reconstruct $\tilde{A}_i^n$ entirely by performing \gls*{sc} decoding. 
%

As mentioned before, the \gls*{pcs} must consider three different situations. In Situation~1 we have the condition $I(V;Z) \leq I(V;Y_{1}) \leq I(V;Y_{2})$. As seen in \cite{alos2019polar}, for $n$ sufficiently large this condition imposes the following restriction on the size of previous sets:  
\begin{IEEEeqnarray}{c}
\big| \mathcal{G}^{(n)}_1 \big|  - \big| \mathcal{C}^{(n)}_2 \big| \geq \big| \mathcal{G}^{(n)}_2 \big|  - \big| \mathcal{C}^{(n)}_1 \big| \geq \big| \mathcal{C}^{(n)}_{1,2} \big| - \big| \mathcal{G}^{(n)}_0 \big|; \label{eq:assumpRate1Impl2x1}
\end{IEEEeqnarray}
Similarly, Situation~2, where $I(V;Y_{(1)}) < I(V;Z) \leq I(V;Y_{2})$, imposes that
\begin{IEEEeqnarray}{c}
\big| \mathcal{G}^{(n)}_1 \big|  - \big| \mathcal{C}^{(n)}_2 \big| \geq \big| \mathcal{C}^{(n)}_{1,2} \big|  - \big| \mathcal{G}^{(n)}_0 \big| > \big| \mathcal{G}^{(n)}_{2} \big| - \big| \mathcal{C}^{(n)}_1 \big|; \label{eq:assumpRate1Impl2x2}
\end{IEEEeqnarray}
and Situation~3, where $I(V;Y_{(1)})  \leq I(V;Y_{2}) < I(V;Z)$, imposes that
\begin{IEEEeqnarray}{c}
\big| \mathcal{C}^{(n)}_{1,2} \big|  - \big| \mathcal{G}^{(n)}_0 \big| > \big| \mathcal{G}^{(n)}_{1} \big|  - \big| \mathcal{C}^{(n)}_2 \big| \geq \big| \mathcal{G}^{(n)}_{2} \big| - \big| \mathcal{C}^{(n)}_1 \big|. \label{eq:assumpRate1Impl2x3}
\end{IEEEeqnarray}
Thus, according to \eqref{eq:assumpRate1Impl2x1}--\eqref{eq:assumpRate1Impl2x3}, we must consider six different cases:
\begin{itemize}
\setlength\itemsep{-0.10em}
\item[A.] $| \mathcal{G}^{(n)}_1 | > | \mathcal{C}^{(n)}_2 | $, ${| \mathcal{G}^{(n)}_2 |  > | \mathcal{C}^{(n)}_1 |}$ and ${| \mathcal{G}^{(n)}_0 |  \geq | \mathcal{C}^{(n)}_{1,2} |} \quad$ (only for Situation~1),
\item[B.] $| \mathcal{G}^{(n)}_1 | > | \mathcal{C}^{(n)}_2 | $, ${| \mathcal{G}^{(n)}_2 |  > | \mathcal{C}^{(n)}_1 |}$ and ${| \mathcal{G}^{(n)}_0 |  < | \mathcal{C}^{(n)}_{1,2} |}\quad$ (for all situations),
\item[C.] $| \mathcal{G}^{(n)}_1 | \geq | \mathcal{C}^{(n)}_2 |$, $| \mathcal{G}^{(n)}_2 |  \leq | \mathcal{C}^{(n)}_1 |$ and $| \mathcal{G}^{(n)}_0 |  > | \mathcal{C}^{(n)}_{1,2} |\quad$ (only for Situations~1~and~2),
\item[D.] $| \mathcal{G}^{(n)}_1 | < | \mathcal{C}^{(n)}_2 | $, $| \mathcal{G}^{(n)}_2 |  < | \mathcal{C}^{(n)}_1 |$ and $| \mathcal{G}^{(n)}_0 |  > | \mathcal{C}^{(n)}_{1,2} |\quad$ (for all situations),
\item[E.] $| \mathcal{G}^{(n)}_1 | > | \mathcal{C}^{(n)}_2 | $, $| \mathcal{G}^{(n)}_2 |  < | \mathcal{C}^{(n)}_1 |$ and $| \mathcal{G}^{(n)}_0 |  < | \mathcal{C}^{(n)}_{1,2} |\quad$ (only for Situations~2~and~3),
\item[F.] $| \mathcal{G}^{(n)}_1 | < | \mathcal{C}^{(n)}_2 | $, $| \mathcal{G}^{(n)}_2 |  < | \mathcal{C}^{(n)}_1 |$ and $| \mathcal{G}^{(n)}_0 |  < | \mathcal{C}^{(n)}_{1,2} |\quad$ (only for Situation~3).
\end{itemize}

The inner-layer encoding process to achieve $\smash{(R_{S_{(1)}}^{\star k}, R_{S_{(2)}}^{\star k}, R_{W_{(1)}}^{\star k}, R_{W_{(2)}}^{\star k})}$, for any $k \in [1,2]$, in all cases is summarized in Algorithm~\ref{alg:genericencx}. For $i \in [1,L]$, let $W^{(V)}_{(k),i}$ be a uniformly distributed vector of length $| \mathcal{C}^{(n)} |$ that represents part of the private message intended to legitimate Receiver~$k$. The encoder forms $\tilde{A}_{i}\big[\mathcal{C}^{(n)}\big]$ by simply storing $W^{(V)}_{(k),i}$. Then, from $\tilde{A}_{i}\big[ \mathcal{C}^{(n)}\big]$, we define $\Psi_{i}^{(V)} \triangleq \tilde{A}_{i}\big[ \mathcal{C}_2^{(n)} \big]$, $\Gamma_{i}^{(V)} \triangleq  \tilde{A}_{i}\big[\mathcal{C}_{1,2}^{(n)} \big]$ and $\Theta_{i}^{(V)} \triangleq  \tilde{A}_{i}\big[\mathcal{C}_{1}^{(n)} \big]$.

\begin{algorithm}[t!]
\caption{Inner-layer encoding to achieve $(R_{S_{(1)}}^{\star k}, R_{S_{(2)}}^{\star k}, R_{W_{(1)}}^{\star k}, R_{W_{(2)}}^{\star k}) \subseteq \mathfrak{R}_{\text{MI-WTBC}}^{(k)}$}\label{alg:genericencx}
\begin{algorithmic}[1]
\Require Parts $W^{(V)}_{(k),1:L}$ and $S^{(V)}_{(k),1:L}$ of messages; and secret-key $\{\kappa_{{\Upsilon \Phi}_{(k)}}^{(V)}\}_{k=1}^2$ 
\State $\Psi^{(V)}_{0}$, $\Gamma^{(V)}_{0}$, $\smash{\bar{\Psi}^{(V)}_{0}}$, $\smash{\Pi^{(V)}_{(2),0}}$, $\Lambda^{(V)}_{0}$, $\bar{\Theta}^{(V)}_{L+1}$, $\bar{\Gamma}^{(V)}_{L+1} \leftarrow \varnothing$ \Comment For notation purposes 
\State $\tilde{A}_{1}[\mathcal{C}^{(n)}] \leftarrow W^{(V)}_{(k),1}$
\State $\Psi^{(V)}_{1} \leftarrow \tilde{A}_{1}[\mathcal{C}_{2}^{(n)}]$ \textbf{and} $\Gamma^{(V)}_{1} \leftarrow \tilde{A}_{1}[\mathcal{C}_{1,2}^{(n)}]$
\For{$i = 1$ \textbf{to} $L$}
\If{$i \neq L$}  
\State $\tilde{A}_{i+1}[\mathcal{C}^{(n)}] \leftarrow W^{(V)}_{(k),i+1}$ 
\State $\Psi^{(V)}_{i+1} \leftarrow \tilde{A}_{i+1}[\mathcal{C}_{2}^{(n)}]$ \textbf{and} $\Theta^{(V)}_{i+1} \leftarrow \tilde{A}_{i+1}[\mathcal{C}_{1}^{(n)}]$ \textbf{and} $\Gamma^{(V)}_{i+1} \leftarrow \tilde{A}_{i+1}[\mathcal{C}_{1,2}^{(n)}]$
\EndIf
\State $\tilde{A}_{i} \big[ \mathcal{G}^{(n)} \big]$, $\Pi^{(V)}_{(1),i}$, $\Pi^{(V)}_{(2),i}$, $\Lambda^{(V)}_{i}, \dots $
\State \hspace{0.5cm} $\dots \Delta_{(1),i+1}^{(V)}, \Delta_{(2),i-1}^{(V)} \leftarrow$ \texttt{form2\_A$_{\text{\texttt{G}}}$}$\big(i, S_{(k),i}^{(V)},\Theta^{(V)}_{i+1}, \Gamma^{(V)}_{i+1}, \Psi^{(V)}_{i-1}, \Gamma^{(V)}_{i-1}, \Pi^{(V)}_{(2),i-1}, \Lambda^{(V)}_{i-1}\big)$ 
\For{$j \in \big( \mathcal{H}^{(n)}_V \big)^{\text{C}}$}
\If{$j \in \big( \mathcal{H}^{(n)}_V \big)^{\text{C}}  \setminus \mathcal{L}^{(n)}_V$}
\State $\tilde{A}_i(j) \leftarrow p_{A(j)|A^{1:j-1}}  \big( \tilde{A}_i(j) \big| \tilde{A}_i^{1:j-1}  \big)$
\ElsIf{$j \in \mathcal{L}^{(n)}_V$}
\State $\tilde{A}_i(j) \leftarrow \argmax_{a \in \mathcal{V}} p_{A(j)|A^{1:j-1}} \big( a\big| \tilde{A}_i^{1:j-1}  \big)$
\EndIf
\EndFor
\State $\tilde{V}_i^n = \tilde{A}_i^n G_n$
\State $\Phi_{(k),i}^{(V)} \leftarrow \tilde{A}_i \big[ \big( \mathcal{H}_V^{(n)}  \big)^{\text{C}} \cap \big( \mathcal{L}_{V|Y_{(k)}}^{(n)} \big)^{\text{C}} \big]$ for $k \in [1,2]$
\State \textbf{if} $i=1$ \textbf{then} $\Upsilon_{(1)}^{(V)} \leftarrow \tilde{A}_1 \big[ \mathcal{H}_V^{(n)} \cap \big( \mathcal{L}_{V|Y_{(1)}}^{(n)} \big)^{\text{C}}\big]$
\State \textbf{if} $i=L$ \textbf{then} $\Upsilon_{(2)}^{(V)} \leftarrow \tilde{A}_L \big[ \mathcal{H}_V^{(n)} \cap \big( \mathcal{L}_{V|Y_{(2)}}^{(n)} \big)^{\text{C}}\big]$
\EndFor
\State Send $\Pi_{(1),1:L}$ and $\Delta_{(1),1:L}^{(V)}$ to the encoding responsible for the construction of $\tilde{T}_{(1)}^n$
\State Send $\Delta_{(2),1:L}^{(V)}$ to the encoding responsible for the construction of $\tilde{T}_{(2)}^n$
\State Send $\big( \Phi_{(k),i}^{(V)},\Upsilon_{(k)}^{(V)} \big) \oplus \kappa_{{\Upsilon \Phi}_{(k)}}^{(V)}$ to Receiver~$k \in [1,2]$
\State \textbf{return} $\tilde{V}_{1:L}^n$
\end{algorithmic}
\end{algorithm}

The function \texttt{form2\_A$_{\text{\texttt{G}}}$} in Algorithm~\ref{alg:genericencx} constructs $\tilde{A}_{1:L}\big[\mathcal{G}^{(n)}\big]$ and is explained in detail below. Then, given $\tilde{A}_i \big[ \mathcal{C}^{(n)} \cup \mathcal{G}^{(n)} \big]$, $i \in [1,L]$, the encoder forms the remaining entries of $\tilde{A}_i^n$ by using \gls*{sc} encoding: deterministic for $\tilde{A}_i \big[ \mathcal{L}_{V}^{(n)} \big]$, and random for $\tilde{A}_i \big[ ( \mathcal{H}^{(n)}_V )^{\text{C}} \setminus \mathcal{L}^{(n)}_V \big]$.

Also, from $\tilde{A}_i^n$, where $i \in [1,L]$, the encoder obtains $\Phi_{(k),i}^{(V)} \triangleq \tilde{A}_i \big[ \big( \mathcal{H}_V^{(n)}  \big)^{\text{C}} \cap \big( \mathcal{L}_{V|Y_{(k)}}^{(n)} \big)^{\text{C}} \big]$ for any $k \in [1,2]$. Moreover, from $\tilde{A}_1^n$ and $\tilde{A}_L^n$, it obtains $\Upsilon_{(1)}^{(V)} \triangleq  \tilde{A}_1 \big[ \mathcal{H}_V^{(n)} \cap ( \mathcal{L}_{V|Y_{(1)}}^{(n)} )^{\text{C}} \big]$ and $\Upsilon_{(2)}^{(V)} \triangleq \tilde{A}_L \big[ \mathcal{H}_V^{(n)} \cap ( \mathcal{L}_{V|Y_{(2)}}^{(n)} )^{\text{C}} \big]$, respectively. Recall that $\big( \Upsilon_{(k)}^{(V)}, \Phi_{(k),i}^{(V)} \big)$ is required by legitimate Receiver~$k$ to reliably estimate $\tilde{A}_{1:L}^n$. Hence, the transmitter additionally sends $\big( \Upsilon_{(k)}^{(V)}, \Phi_{(k),i}^{(V)} \big) \oplus \kappa_{{\Upsilon \Phi}_{(k)}}^{(V)}$ to legitimate Receiver~$k$, where $\kappa_{{\Upsilon \Phi}_{(k)}}^{(V)}$ is a uniformly distributed key with size $L  \big| \big( \mathcal{H}_V^{(n)}  \big)^{\text{C}} \cap \big(  \mathcal{L}_{V|Y_{(k)}}^{(n)} \big)^{\text{C}}  \big| +  \big| \mathcal{H}_V^{(n)} \cap \big( \mathcal{L}_{V|Y_{(k)}}^{(n)} \big)^{\text{C}} \big|$.

The function \texttt{form2\_A$_{\text{\texttt{G}}}$} is summarized in Algorithm~\ref{alg:formAx}. For any $i \in [1,L]$, this function stores part of the confidential message intended for Receiver~$k \in [1,2]$, namely $S^{(V)}_{(k),i}$, into $\tilde{A}_i \big[ \mathcal{G}^{(n)} \big]$, as well as different elements of $\tilde{A}_{i-1}^n$ ($i \in [2,L]$) and $\tilde{A}_{i+1}^n$ ($i \in [1,L-1]$) due to the chaining construction: recall that $\big[\Psi_i^{(V)},\Gamma_i^{(V)}\big] = \tilde{A}_i \big[\mathcal{C}^{(n)}_2 \cup \mathcal{C}^{(n)}_{1,2}\big]$ is required by Receiver~2 to reliably estimate $\tilde{A}_i^n$, while $\big[ \Theta_i^{(V)},\Gamma_i^{(V)}\big]= \tilde{A}_i \big[\mathcal{C}^{(n)}_1 \cup \mathcal{C}^{(n)}_{1,2} \big]$ is required by Receiver~1.

Notice in Algorithm~\ref{alg:formAx} that if the \gls*{pcs} operates to achieve the corner point of $\mathfrak{R}_{\text{MI-WTBC}}^{(1)}$ then, 
for $i \in [1,L-1]$, sequences $\Theta_{i+1}^{(V)}$ and $\Gamma_{i+1}^{(V)}$ are not repeated directly in $\tilde{A}_i\big[\mathcal{G}^{(n)} \big]$. Instead, the encoder repeats $\bar{\Theta}_{i+1}^{(V)} \triangleq \Theta_{i+1}^{(V)} \oplus \kappa_{\Theta}^{(V)}$ and $\bar{\Gamma}_{i+1}^{(V)}\triangleq \Gamma_{i+1}^{(V)} \oplus \kappa_{\Gamma}^{(V)}$, where $\kappa_{\Theta}^{(V)}$ and $\kappa_{\Gamma}^{(V)}$ are uniformly distributed keys with length $|\mathcal{C}_{1}^{(n)}|$ and $|\mathcal{C}_{1,2}^{(n)}|$ respectively that are privately shared between transmitter and both receivers. Otherwise, to achieve the corner point of $\mathfrak{R}_{\text{MI-WTBC}}^{(2)}$, instead of $\Psi_{i-1}^{(V)}$ and $\Gamma_{i-1}^{(V)}$, $i \in [2,L]$, the encoder repeats $\bar{\Psi}_{i-1}^{(V)} \triangleq \Psi_{i-1}^{(V)} \oplus \kappa_{\Psi}^{(V)}$ and $\bar{\Gamma}_{i-1}^{(V)} \triangleq \Gamma_{i-1}^{(V)} \oplus \kappa_{\Gamma}^{(V)}$, where $\kappa_{\Psi}^{(V)}$ is a distributed key with length $|\mathcal{C}_{2}^{(n)}|$. Since these keys are reused in all blocks, clearly their size become negligible in terms of rate for $L$ large enough.


As in \cite{alos2019polar}, based on the sets in \eqref{eq:sG0x}--\eqref{eq:sC12x}, let $\mathcal{R}^{(n)}_{1}$, $\mathcal{R}^{\prime (n)}_{1}$,  $\mathcal{R}^{(n)}_{2}$, $\mathcal{R}^{\prime (n)}_{2}$, $\mathcal{R}^{(n)}_{1,2}$, $\mathcal{R}^{\prime (n)}_{1,2}$, $\mathcal{I}^{(n)}$, $\mathcal{R}^{(n)}_{\text{S}}$ and $\mathcal{R}^{(n)}_{\Lambda}$ form an additional partition of $\mathcal{G}^{(n)}$. The definition of $\mathcal{R}^{(n)}_{1}$, $\mathcal{R}^{\prime (n)}_{1}$, $\mathcal{R}^{(n)}_{2}$, $\mathcal{R}^{\prime (n)}_{2}$, $\mathcal{R}^{(n)}_{1,2}$ and $\smash{\mathcal{R}^{\prime (n)}_{1,2}}$ will depend on the particular case (among A to F) and situation (among 1 to 3), as well as on the corner point the \gls*{pcs} must approach. Then, we define
\vspace*{-0.1cm}
\begin{align}
\mathcal{R}^{(n)}_{\text{S}} & \triangleq  \text{any subset of } \mathcal{G}^{(n)}_{1} \setminus \big( \mathcal{R}^{(n)}_{2} \cup  \mathcal{R}^{\prime (n)}_{2} \big) 
\text{ with size } \big| \mathcal{G}^{(n)}_{2} \setminus \big( \mathcal{R}^{(n)}_{1} \cup \mathcal{R}^{\prime (n)}_{1} \big) \big| .  \label{eq:ASetRsx}
\end{align}
Finally, the definition of  $\mathcal{I}^{(n)}$ and $\mathcal{R}_{\Lambda}^{(n)}$ depend on the corner point the \gls*{pcs} must achieve: if the corner point is $(R_{S_{(1)}}^{\star 1}, R_{S_{(2)}}^{\star 1}, R_{W_{(1)}}^{\star 1}, R_{W_{(2)}}^{\star 1}) \subset \mathfrak{R}_{\text{{MI-WTBC}}}^{(1)}$ then
\vspace*{-0.1cm}
\begin{align}
\mathcal{I}^{(n)} & \triangleq \big( \mathcal{G}^{(n)}_{0} \cup \mathcal{G}^{(n)}_{2} \big) \setminus   \big(  \mathcal{R}^{(n)}_{\text{1}} \cup  \mathcal{R}^{\prime (n)}_{1} \cup \mathcal{R}^{(n)}_{1,2} \cup \mathcal{R}^{\prime (n)}_{1,2} \big),   \label{eq:ASetIxk} \\
\mathcal{R}_{\Lambda}^{(n)} & \triangleq  \mathcal{G}_{1,2}^{(n)} \cup \big( \mathcal{G}^{(n)}_{1} \setminus \big( \mathcal{R}^{(n)}_{2} \cup \mathcal{R}^{\prime (n)}_{2} \cup \mathcal{R}^{(n)}_{\text{S}} \big) \big);  \label{eq:ASetLambdaxk}
\end{align}
and if it is $(R_{S_{(1)}}^{\star 2}, R_{S_{(2)}}^{\star 2}, R_{W_{(1)}}^{\star 2}, R_{W_{(2)}}^{\star 2}) \subset \mathfrak{R}_{\text{{MI-WTBC}}}^{(2)}$ then we define
\vspace*{-0.1cm}
\begin{align}
\mathcal{I}^{(n)} & \triangleq \big( \mathcal{G}^{(n)}_{0} \cup \mathcal{G}^{(n)}_{1} \cup \mathcal{G}^{(n)}_{2} \big) \setminus \big(  \mathcal{R}^{(n)}_{\text{1}} \cup  \mathcal{R}^{\prime (n)}_{1} \cup \mathcal{R}^{(n)}_{1,2} \cup \mathcal{R}^{\prime (n)}_{1,2} \cup \mathcal{R}^{(n)}_{2} \cup \mathcal{R}^{\prime (n)}_{2} \cup \mathcal{R}^{(n)}_{\text{S}} \big) \big),   \label{eq:ASetIxbk} \\
\mathcal{R}_{\Lambda}^{(n)} & \triangleq  \mathcal{G}_{1,2}^{(n)}. \label{eq:ASetLambdbx}
\end{align}
Note that $\mathcal{G}^{(n)}_{1} \setminus \big( \mathcal{R}^{(n)}_{2} \cup \mathcal{R}^{\prime (n)}_{2} \cup \mathcal{R}^{(n)}_{\text{S}} \big)$ will belong to $\mathcal{I}^{(n)}$ or $\mathcal{R}_{\Lambda}^{(n)}$ depending on whether the \gls*{pcs} approaches the first or the second corner point, respectively. Based on these sets, define
\vspace*{-0.1cm}
\begin{align}
\Lambda_{i}^{(V)} & \triangleq \tilde{A}_i \big[ \mathcal{R}_{\Lambda}^{(n)} \big], \label{eq:lambdax} \\
\Pi_{(2),i}^{(V)} & \triangleq \tilde{A}_i \big[ \mathcal{G}_{2}^{(n)} \cap \mathcal{I}^{(n)} \big], \label{eq:pi2x} \\
\Pi_{(1),i}^{(V)} & \triangleq \tilde{A}_i \big[ \mathcal{G}_{1}^{(n)} \cap \mathcal{I}^{(n)} \big]. \label{eq:pi1x}
\end{align}
Indeed, note that $\Pi_{(1),i}^{(V)} \neq \varnothing$ only when the \gls*{pcs} must approach the corner point of $\mathfrak{R}_{\text{{MI-WTBC}}}^{(2)}$. 
%

Also, let $S_{(k),i}^{(V)}$ be a uniform random sequence representing the part of the confidential message intended for Receiver~$k \in [1,2]$ that is carried in the inner-layer. Then, $S_{(k),1}^{(V)}$ has size $\big| \mathcal{I}^{(n)} \cup \mathcal{G}_{1,2}^{(n)} \cup \mathcal{G}_{1}^{(n)} \big|$; for $i \in [2,L-1]$, $S_{(k),i}^{(V)}$ has size $\big| \mathcal{I}^{(n)}\big|$; and $S_{(k),L}^{(V)}$ has size $\big| \mathcal{I}^{(n)} \cup \mathcal{G}_{2}^{(n)} \big|$.
%

Moreover, for $i \in [1,L]$ we write $\smash{\Psi_i^{(V)} \triangleq \big[ \Psi_{1,i}^{(V)}, \Psi_{2,i}^{(V)} , \Psi_{3,i}^{(V)} \big]}$ and $\smash{\bar{\Psi}_i^{(V)} \triangleq \big[ \bar{\Psi}_{1,i}^{(V)}, \bar{\Psi}_{2,i}^{(V)} , \bar{\Psi}_{3,i}^{(V)} \big]}$, we write $\smash{\Theta_i^{(V)}  \triangleq  \big[\Theta_{1,i}^{(V)}, \Theta_{2,i}^{(V)}, \Theta_{3,i}^{(V)}  \big]}$ and $\smash{\bar{\Theta}_i^{(V)}  \triangleq  \big[ \bar{\Theta}_{1,i}^{(V)}, \bar{\Theta}_{2,i}^{(V)},\bar{\Theta}_{3,i}^{(V)}  \big]}$, and we write $\smash{\Gamma_i^{(V)} \triangleq \big[ \Gamma_{1,i}^{(V)}, \Gamma_{2,i}^{(V)} , \Gamma_{3,i}^{(V)} \big]}$ and $\smash{\bar{\Gamma}_i^{(V)}  \triangleq  \big[ \bar{\Gamma}_{1,i}^{(V)}, \bar{\Gamma}_{2,i}^{(V)} ,\bar{\Gamma}_{3,i}^{(V)} \big]}$, where we will define $\smash{\Psi_{p,i}^{(V)}}$, $\smash{\bar{\Psi}^{(V)}_{p,i}}$, $\smash{\Theta^{(V)}_{p,i}}$, $\smash{\bar{\Theta}^{(V)}_{p,i}}$, $\smash{\Gamma^{(V)}_{p,i}}$ and $\smash{\bar{\Gamma}^{(V)}_{p,i}}$, for any $p \in [1,2,3]$, accordingly in each case. For notation purposes, let 
\begin{align}
\Delta_{(1),i}^{(V)} & \triangleq \big[ \bar{\Theta}^{(V)}_{3,i}, \bar{\Gamma}^{(V)}_{3,i} \big] \quad \text{and} \quad \Delta_{(2),i}^{(V)} \triangleq \big[ \Psi^{(V)}_{3,i}, \Gamma^{(V)}_{3,i} \big]  \quad \text{(if }k=1\text{)}, \label{eq:delta1x} \\
\Delta_{(1),i}^{(V)} & \triangleq \big[ \Theta^{(V)}_{3,i}, \Gamma^{(V)}_{3,i} \big] \quad \text{and} \quad \Delta_{(2),i}^{(V)} \triangleq \big[ \bar{\Psi}^{(V)}_{3,i}, \bar{\Gamma}^{(V)}_{3,i} \big] \quad \text{(if }k=2\text{)}. \label{eq:delta2x}
\end{align}
According to Algorithm~\ref{alg:genericencx}, recall that $\smash{\Pi_{(1),1:L}^{(V)}}$ and $\smash{\Delta_{(1),1:L}^{(V)}}$ are sent to the outer-layer associated to Receiver~1, while $\smash{\Delta_{(2),1:L}^{(V)}}$ is sent to the outer-layer associated to Receiver~2.

\begin{algorithm}[h!]
\caption{Function \texttt{form2\_A$_{\text{\texttt{G}}}$} to achieve $(R_{S_{(1)}}^{\star k}, R_{S_{(2)}}^{\star k}, R_{W_{(1)}}^{\star k}, R_{W_{(2)}}^{\star k}) \subseteq \mathfrak{R}_{\text{MI-WTBC}}^{(k)}$}\label{alg:formAx}
\begin{algorithmic}[1]
\Require $i$, $S_{(k),i}^{(V)}$, $\Theta^{(V)}_{i+1}$, $\Gamma^{(V)}_{i+1}$, $\Psi^{(V)}_{i-1}$, $\Gamma^{(V)}_{i-1}$, $\Pi^{(V)}_{(2),i-1}$, $\Lambda^{(V)}_{i-1}$; secret-keys $\kappa_{\Theta}^{(V)}$, $\kappa_{\Psi}^{(V)}$ and $\kappa_{\Gamma}^{(V)}$
\State Define $\mathcal{R}^{(n)}_{1}$, $\mathcal{R}^{\prime (n)}_{1}$,  $\mathcal{R}^{(n)}_{2}$, $\mathcal{R}^{\prime (n)}_{2}$, $\mathcal{R}^{(n)}_{1,2}$, $\mathcal{R}^{\prime (n)}_{1,2}$, $\mathcal{I}^{(n)}$, $\mathcal{R}^{(n)}_{\text{S}}$, $\mathcal{R}^{(n)}_{\Lambda}$ 
\State \textbf{if} $i=1$ \textbf{then} $\tilde{A}_{1}[\mathcal{I}^{(n)} \cup \mathcal{G}^{(n)}_{1} \cup \mathcal{G}^{(n)}_{1,2}] \leftarrow S_{(k),1}^{(V)}$
\State \textbf{if} $i \in [2,L-1]$ \textbf{then} $\tilde{A}_{i}[\mathcal{I}^{(n)}] \leftarrow S_{(k),i}^{(V)}$
\State \textbf{if} $i = L$ \textbf{then} $\tilde{A}_{L}[\mathcal{I}^{(n)} \cup \mathcal{G}^{(n)}_{2}] \leftarrow S_{(k),L}^{(V)}$
\State $\bar{\Psi}^{(V)}_{i-1} \leftarrow \Psi^{(V)}_{i-1} \oplus \kappa^{(V)}_{\Psi}$ \textbf{and} $\bar{\Gamma}^{(V)}_{i-1} \leftarrow \Gamma^{(V)}_{i-1} \oplus \kappa^{(V)}_{\Gamma}$ 
\State $\bar{\Theta}^{(V)}_{i+1} \leftarrow \Theta^{(V)}_{i+1} \oplus \kappa^{(V)}_{\Theta}$ \textbf{and} $\bar{\Gamma}^{(V)}_{i+1} \leftarrow \Gamma^{(V)}_{i+1} \oplus \kappa^{(V)}_{\Gamma}$
\State \textbf{if} $k=1$ \textbf{then} $\tilde{A}_{i}[\mathcal{R}_{1,2}^{(n)}] \leftarrow \Gamma^{(V)}_{1,i-1} \oplus \bar{\Gamma}^{(V)}_{1,i+1}$ \textbf{else} $\tilde{A}_{i}[\mathcal{R}_{1,2}^{(n)}] \leftarrow \bar{\Gamma}^{(V)}_{1,i-1} \oplus \Gamma^{(V)}_{1,i+1}$ 
\State \textbf{if} $k=1$ \textbf{then} $\tilde{A}_{i}[\mathcal{R}_{1,2}^{\prime (n)}] \leftarrow \Psi^{(V)}_{2,i-1} \oplus \bar{\Theta}^{(V)}_{2,i+1}$ \textbf{else}  $\tilde{A}_{i}[\mathcal{R}_{1,2}^{\prime (n)}] \leftarrow \bar{\Psi}^{(V)}_{2,i-1} \oplus \Theta^{(V)}_{2,i+1}$ 
\If{$i \in [1,L-1]$}
\State \textbf{if} $k=1$ \textbf{then} $\tilde{A}_{i}[\mathcal{R}_{1}^{(n)}] \leftarrow \bar{\Theta}^{(V)}_{1,i+1}$ \textbf{else} $\tilde{A}_{i}[\mathcal{R}_{1}^{(n)}] \leftarrow \Theta^{(V)}_{1,i+1}$
\State \textbf{if} $k=1$ \textbf{then} $\tilde{A}_{i}[\mathcal{R}_{1}^{\prime (n)}] \leftarrow \bar{\Gamma}^{(V)}_{2,i+1}$ \textbf{else} $\tilde{A}_{i}[\mathcal{R}_{1}^{\prime (n)}] \leftarrow \Gamma^{(V)}_{2,i+1}$
\EndIf
\If{$i \in [2,L]$}
\State \textbf{if} $k=1$ \textbf{then} $\tilde{A}_{i}[\mathcal{R}_{2}^{(n)}] \leftarrow \Psi^{(V)}_{1,i-1}$ \textbf{else} $\tilde{A}_{i}[\mathcal{R}_{2}^{(n)}] \leftarrow \bar{\Psi}^{(V)}_{1,i-1}$
\State \textbf{if} $k=1$ \textbf{then} $\tilde{A}_{i}[\mathcal{R}_{2}^{\prime (n)}] \leftarrow \Gamma^{(V)}_{2,i-1}$ \textbf{else} $\tilde{A}_{i}[\mathcal{R}_{2}^{\prime (n)}] \leftarrow \bar{\Gamma}^{(V)}_{2,i-1}$
\State $\tilde{A}_{i}[\mathcal{R}_{\text{S}}^{(n)}] \leftarrow \Pi^{(V)}_{(2),i-1}$
\State $\tilde{A}_{i}[\mathcal{R}_{\Lambda}^{(n)}] \leftarrow \Lambda^{(V)}_{i-1}$
\EndIf
\State $\Pi_{(1),i}^{(V)} \leftarrow \tilde{A}_i [\mathcal{I}^{(n)} \cap \mathcal{G}^{(n)}_{1}]$ \textbf{and} $\Pi_{(2),i}^{(V)} \leftarrow \tilde{A}_i [\mathcal{I}^{(n)} \cap \mathcal{G}^{(n)}_{2}]$
\State $\Lambda_i^{(V)} \leftarrow \tilde{A}_i [\mathcal{R}^{(n)}_{\Lambda}]$
\State \textbf{if} $k=1$ \textbf{then} $\Delta^{(V)}_{(1),i} \leftarrow \big(\bar{\Theta}^{(V)}_{3,i}, \bar{\Gamma}^{(V)}_{3,i} \big)$ \textbf{else} $\Delta^{(V)}_{(1),i} \leftarrow \big(\Theta^{(V)}_{3,i}, \Gamma^{(V)}_{3,i} \big)$
\State \textbf{if} $k=1$ \textbf{then} $\Delta^{(V)}_{(2),i} \leftarrow \big( \Psi^{(V)}_{3,i}, \Gamma^{(V)}_{3,i} \big)$ \textbf{else} $\Delta^{(V)}_{(2),i} \leftarrow \big( \bar{\Psi}^{(V)}_{3,i}, \bar{\Gamma}^{(V)}_{3,i} \big)$
\State \textbf{return} $\tilde{A}_{i} \big[ \mathcal{G}^{(n)} \big]$, $\Pi_{(1),i}^{(V)}$, $\Pi_{(2),i}^{(V)}$, $\Lambda_i^{(V)}$, $\Delta^{(V)}_{(1),i}$ and $\Delta^{(V)}_{(2),i}$
\end{algorithmic}
\end{algorithm}

\vspace{0.15cm}
\noindent \textbf{Case A when} $\bm{I(V;Z)\leq I(V;Y_{(1)}) \leq I(V;Y_{(2)})}$

In Case A, we have $| \mathcal{G}^{(n)}_1 | > | \mathcal{C}^{(n)}_2 |$, ${| \mathcal{G}^{(n)}_2 |  > | \mathcal{C}^{(n)}_1 |}$ and ${| \mathcal{G}^{(n)}_0 |  \geq | \mathcal{C}^{(n)}_{1,2} |}$. 
\begin{enumerate}
\item \emph{Achievability of} $(R_{S_{(1)}}^{\star 1}, R_{S_{(2)}}^{\star 1}, R_{W_{(1)}}^{\star 1}, R_{W_{(2)}}^{\star 1}) \subset \mathfrak{R}_{\text{MI-WTBC}}^{(1)}$.
In this case, the construction of $\tilde{A}_{1:L}\big[ \mathcal{G}^{(n)} \big]$ is the same as the one in \cite{alos2019polar} (Section~IV.B.1, Case A). Hence, define
\begin{align}
\mathcal{R}^{(n)}_{1} & \triangleq  \text{any subset of } \mathcal{G}^{(n)}_2 \text{ with size } \big| \mathcal{C}^{(n)}_1 \big|, \nonumber \\ 
\mathcal{R}^{(n)}_{2} & \triangleq  \text{any subset of } \mathcal{G}^{(n)}_{1} \text{ with size } \big| \mathcal{C}^{(n)}_2 \big|, \nonumber \\ 
\mathcal{R}^{(n)}_{1,2} & \triangleq \text{any subset of } \mathcal{G}^{(n)}_0 \text{ with size } \big| \mathcal{C}^{(n)}_{1,2} \big|, \nonumber 
\end{align}
and $\mathcal{R}^{\prime(n)}_{1} = \mathcal{R}^{\prime(n)}_{2} = \mathcal{R}^{\prime(n)}_{1,2} \triangleq \emptyset$. Therefore, according to \eqref{eq:ASetRsx}--\eqref{eq:ASetLambdaxk}, we have
\begin{align}
\mathcal{R}^{(n)}_{\text{S}} & =  \text{any subset of } \mathcal{G}^{(n)}_{1} \setminus  \mathcal{R}^{(n)}_{2} \text{ with size } \big| \mathcal{G}^{(n)}_{2} \big| - \big| \mathcal{C}^{(n)}_{1}  \big|, \nonumber  \\
\mathcal{I}^{(n)} & = \big( \mathcal{G}^{(n)}_{0} \cup \mathcal{G}^{(n)}_{2} \big) \setminus   \big(  \mathcal{R}^{(n)}_{\text{1}} \cup \mathcal{R}^{(n)}_{1,2}  \big),  \nonumber \\
\mathcal{R}_{\Lambda}^{(n)} & =  \mathcal{G}_{1,2}^{(n)} \cup \big( \mathcal{G}^{(n)}_{1} \setminus \big( \mathcal{R}^{(n)}_{2} \cup \mathcal{R}^{(n)}_{\text{S}} \big) \big).\nonumber 
\end{align}
From condition \eqref{eq:assumpRate1Impl2x1}, all previous sets exist. Also, for $i \in [1,L]$, define $\Psi_{1,i}^{(V)} \triangleq \Psi_{i}^{(V)}$, $\Gamma_{1,i}^{(V)} \triangleq \Gamma_{i}^{(V)}$, $\bar{\Theta}_{1,i}^{(V)} \triangleq \bar{\Theta}_{i}^{(V)}$, $\bar{\Gamma}_{1,i}^{(V)} \triangleq \bar{\Gamma}_{i}^{(V)}$ and $\Psi_{p,i}^{(V)} = \Gamma_{p,i}^{(V)} = \bar{\Theta}_{p,i}^{(V)} = \bar{\Gamma}_{p,i}^{(V)} \triangleq \varnothing$, where $p \in [2,3]$. Then, according to \eqref{eq:pi2x}--\eqref{eq:delta2x}, for $i \in [1,L]$ we have $\Pi_{(2),i}^{(V)} = \tilde{A}_{i}\big[ \mathcal{I}^{(n)} \cap \mathcal{G}_2^{(n)} \big]$ with size $\big| \mathcal{G}^{(n)}_{2} \big| - \big| \mathcal{C}^{(n)}_{1}  \big|$, $\Pi_{(1),i}^{(V)} = \tilde{A}_{i}\big[ \mathcal{I}^{(n)} \cap \mathcal{G}_1^{(n)} \big] = \varnothing$, and $\Delta_{(1),i}^{(V)} = \Delta_{(2),i}^{(V)} = \varnothing$.

According to Algorithm~\ref{alg:formAx}, recall that, for $i \in [2,L]$, the chaining construction repeats $\Psi_{i-1}^{(V)}$ and $\Gamma_{i-1}^{(V)}$ entirely in $\tilde{A}_{i}\big[ \mathcal{R}_{2}^{(n)} \big] \subseteq \tilde{A}_{i}\big[ \mathcal{G}_{1}^{(n)} \big]$ and $\tilde{A}_{i}\big[ \mathcal{R}_{1,2}^{(n)} \big] \subseteq \tilde{A}_{i}\big[ \mathcal{G}_{0}^{(n)} \big]$ respectively. Also, for $i \in [1,L-1]$, it repeats the sequences $\bar{\Theta}_{i+1}^{(V)}$ and $\bar{\Gamma}_{i+1}^{(V)}$ in $\tilde{A}_{i}\big[ \mathcal{R}_{1}^{(n)} \big] \subseteq \tilde{A}_{i}\big[ \mathcal{G}_{2}^{(n)} \big]$ and $\tilde{A}_{i}\big[ \mathcal{G}_{0}^{(n)} \big]$ respectively. Indeed, for $i \in [2,L-1]$, recall that the encoder stores $\Gamma_{i-1}^{(V)} \oplus \bar{\Gamma}_{i+1}^{(V)}$ in $\tilde{A}_i\big[\mathcal{R}_{1,2}^{(n)}\big]$. The inner-layer carries confidential information $S_{(1),1:L}^{(V)}$ intended for Receiver~1, and for $i \in [2,L]$ the encoder repeats $\Pi_{(2),i-1}^{(V)}$ in $\tilde{A}_{i}[\mathcal{R}_{\text{S}}^{(n)}]$. Finally, for $i \in [2,L]$, it repeats $\Lambda_{i-1}^{(V)}$ in $\tilde{A}_{i}[\mathcal{R}_{\Lambda}^{(n)}]$ and, therefore, notice that $\Lambda_{1}^{(V)}$ is replicated in all blocks. This particular encoding procedure is graphically represented in \cite{alos2019polar}~(Figure~3). 

\item \emph{Achievability of} $(R_{S_{(1)}}^{\star 2}, R_{S_{(2)}}^{\star 2}, R_{W_{(1)}}^{\star 2}, R_{W_{(2)}}^{\star 2}) \subset \mathfrak{R}_{\text{MI-WTBC}}^{(2)}$.
Define $\mathcal{R}^{(n)}_{1}$, $\mathcal{R}^{(n)}_{2}$, $\mathcal{R}^{(n)}_{1,2}$, $\mathcal{R}^{\prime (n)}_{1}$, $\mathcal{R}^{\prime (n)}_{2}$ and $\mathcal{R}^{\prime (n)}_{1,2}$ as for the previous corner point. According to \eqref{eq:ASetRsx}, \eqref{eq:ASetIxbk} and \eqref{eq:ASetLambdbx}: 
\begin{align}
\mathcal{R}^{(n)}_{\text{S}} & =  \text{any subset of } \mathcal{G}^{(n)}_{1} \setminus  \mathcal{R}^{(n)}_{2} \text{ with size } \big| \mathcal{G}^{(n)}_{2} \big| - \big| \mathcal{C}^{(n)}_{1}  \big|, \nonumber \\
\mathcal{I}^{(n)} & = \big( \mathcal{G}^{(n)}_{0} \cup \mathcal{G}^{(n)}_{1} \cup \mathcal{G}^{(n)}_{2} \big) \setminus   \big(  \mathcal{R}^{(n)}_{\text{1}} \cup \mathcal{R}^{(n)}_{\text{2}} \cup \mathcal{R}^{(n)}_{1,2}  \cup \mathcal{R}^{(n)}_{\text{S}} \big),  \nonumber \\
\mathcal{R}_{\Lambda}^{(n)} & =  \mathcal{G}_{1,2}^{(n)}. \nonumber
\end{align}
Also, for $i \in [1,L]$, we define $\bar{\Psi}_{1,i}^{(V)} \triangleq \bar{\Psi}_{i}^{(V)}$, $\bar{\Gamma}_{1,i}^{(V)} \triangleq \bar{\Gamma}_{i}^{(V)}$, $\Theta_{1,i}^{(V)} \triangleq \Theta_{i}^{(V)}$, $\Gamma_{1,i}^{(V)} \triangleq \Gamma_{i}^{(V)}$ and, therefore, $\bar{\Psi}_{p,i}^{(V)} = \bar{\Gamma}_{p,i}^{(V)} = \Theta_{p,i}^{(V)} = \Gamma_{p,i}^{(V)} \triangleq \varnothing$, where $p \in [2,3]$. Then, according to \eqref{eq:pi2x}--\eqref{eq:delta2x}, for $i \in [1,L]$ we have $\Pi_{(2),i}^{(V)} = \tilde{A}_{i}\big[ \mathcal{I}^{(n)} \cap \mathcal{G}_2^{(n)} \big]$ with size $\big| \mathcal{G}^{(n)}_{2} \big| - \big| \mathcal{C}^{(n)}_{1}  \big|$, $\Pi_{(1),i}^{(V)} = \tilde{A}_{i}\big[ \mathcal{I}^{(n)} \cap \mathcal{G}_1^{(n)} \big]$ with size $\big| \mathcal{G}^{(n)}_{1} \big| - \big| \mathcal{C}^{(n)}_{2} \big| - \big( \big| \mathcal{G}^{(n)}_{2} \big| - \big| \mathcal{C}^{(n)}_{1} \big| \big)$, and $\Delta_{(1),i}^{(V)} = \Delta_{(2),i}^{(V)} = \varnothing$.

According to Algorithm~\ref{alg:formAx}, now the inner-layer carries confidential information $S_{(2),1:L}^{(V)}$ intended for Receiver~2. Indeed, for $i \in [1,L]$, $\tilde{A}_i\big[ \mathcal{G}_{1}^{(n)} \setminus \big( \mathcal{R}^{(n)}_{2} \cup \mathcal{R}^{(n)}_{\text{S}} \big)\big]$, which previously contained part of $\Lambda^{(V)}_i$, now store part of $S_{(2),i}^{(V)}$. As before, for $i \in [2,L]$, $\Pi_{(2),i-1}^{(V)}$ is repeated in $\tilde{A}_{i}[\mathcal{R}_{\text{S}}^{(n)}]$ and $\Lambda_{i-1}^{(V)}$ in $\tilde{A}_{i}[\mathcal{R}_{\Lambda}^{(n)}]$. Now, as will be seen in Section~\ref{sec:PCSx1ol}, for $i \in [1,L-1]$ the sequence $\Pi_{(1),i+1}^{(V)}$ will be repeated in the outer-layer $\tilde{T}_{(1),i}$ associated to Receiver~1. This particular encoding procedure is graphically represented in Figure~\ref{fig:EncCasAx}. 
\end{enumerate}

\begin{figure}[h!]
\centering
\begin{overpic}[width=0.9\linewidth]{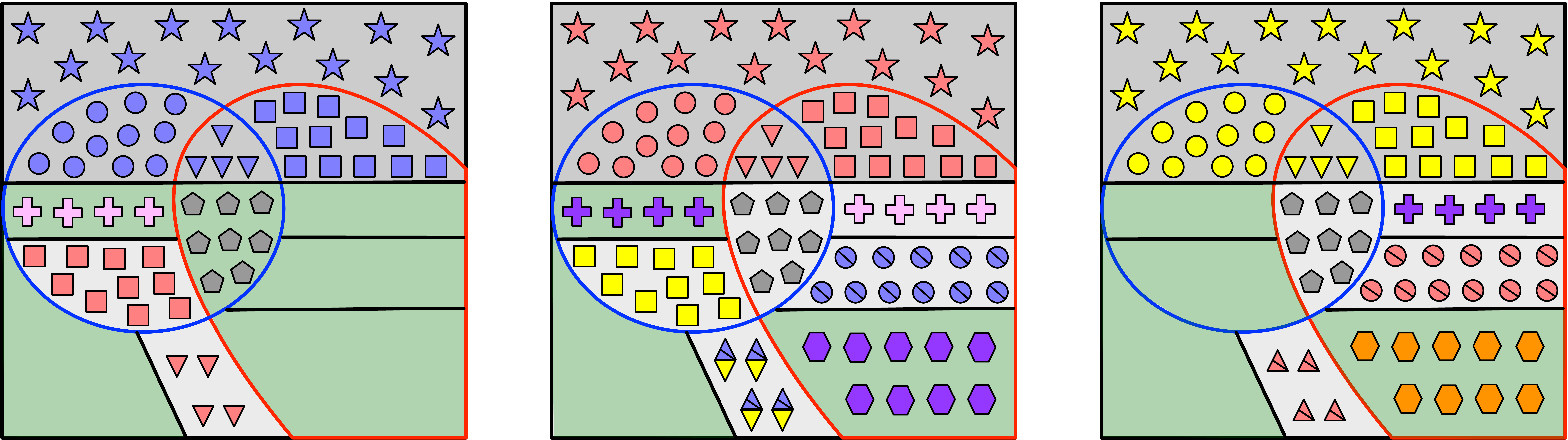}
\put (10.5,-3) {\small Block 1}
\put (45.75,-3) {\small Block 2}
\put (80.75,-3) {\small Block 3}
\end{overpic}
\vspace{0.5cm}
\caption{\setstretch{1.35} Case A when $\smash{I(V;Z) \leq I(V;Y_{(1)}) \leq I(V;Y_{(2)})}$: inner-layer encoding to achieve the corner point of $\smash{\mathfrak{R}_{\text{MI-WTBC}}^{(2)}}$ that leads to the construction of $\smash{\tilde{A}_{1:L}[ \mathcal{H}_V^{(n)} ]}$ when $L=3$. Consider Block~2: the sets $\smash{\mathcal{R}^{(n)}_{1}}$, $\smash{\mathcal{R}^{(n)}_{2}}$, $\smash{\mathcal{R}^{(n)}_{1,2}}$, $\smash{\mathcal{R}^{(n)}_{\text{S}}}$ and $\smash{\mathcal{R}^{(n)}_{\Lambda}}$ are those areas filled with yellow squares, blue circles, blue and yellow diamonds, pink crosses, and gray pentagons, respectively; and $\mathcal{I}^{(n)}$ is the green filled area. At Block~$i \in [1,L]$, $\smash{W_{(2),i}^{(V)}}$ is represented by symbols of the same color (e.g., red symbols at Block~2), and $\smash{\Theta_i^{(V)}}$, $\smash{\Psi_i^{(V)}}$ and $\smash{\Gamma_i^{(V)}}$ are represented by squares, circles and triangles respectively. Also, $\smash{\bar{\Psi}_i^{(V)}}$ and $\smash{\bar{\Gamma}_i^{(V)}}$ are denoted by circles and triangles, respectively, with a line through them. At Block~$i \in [2,L-1]$, the diamonds denote $\smash{\bar{\Gamma}_{i-1}^{(V)} \oplus \Gamma_{i+1}^{(V)}}$. In Block~$i \in [1,L]$, $\smash{S_{(2),i}^{(V)}}$ is stored into those entries whose indices belong to the green filled area. For $i \in [2,L]$, $\smash{\Pi_{(2),i-1}^{(V)}}$ is denoted by crosses (e.g., purple crosses at Block~2), and is repeated in $\tilde{A}_{i}[ \mathcal{R}^{(n)}_{\text{S}} ]$. For $i \in [1,L-1]$, the sequence $\smash{\Pi_{(1),i+1}^{(V)}}$ is denoted by hexagons, and it will be send to the outer-layer $\smash{\tilde{T}_{(1),i}}$ associated to Receiver~1. At Block~1, $\smash{\Lambda_1^{(V)}}$ is denoted by gray pentagons, and is repeated in all blocks. Finally, $\smash{\Upsilon_{(1)}^{(V)}}$ and $\smash{\Upsilon_{(2)}^{(V)}}$ are those entries inside the red curve at Block~1 and the blue curve at Block~$L$,~respectively.}\label{fig:EncCasAx} 
\end{figure}

\vspace{0.15cm}
\noindent \textbf{Case B when} $\bm{I(V;Z)\leq I(V;Y_{(1)}) \leq I(V;Y_{(2)})}$ 

In Case~B, we have $| \mathcal{G}^{(n)}_1 | > | \mathcal{C}^{(n)}_2 |$, ${| \mathcal{G}^{(n)}_2 |  > | \mathcal{C}^{(n)}_1 |}$ and ${| \mathcal{G}^{(n)}_0 |  < | \mathcal{C}^{(n)}_{1,2} |}$.

\begin{enumerate}
\item \emph{Achievability of} $(R_{S_{(1)}}^{\star 1}, R_{S_{(2)}}^{\star 1}, R_{W_{(1)}}^{\star 1}, R_{W_{(2)}}^{\star 1}) \subset \mathfrak{R}_{\text{MI-WTBC}}^{(1)}$.
In this case, the construction of $\tilde{A}_{1:L}\big[ \mathcal{G}^{(n)} \big]$ is the same as the one described in \cite{alos2019polar}~(Section~IV.B.2, Case B). Now, since $\big| \mathcal{G}^{(n)}_0 \big|  < \big| \mathcal{C}^{(n)}_{1,2} \big|$, only a part of $\Gamma_{i-1}^{(V)}$ ($i \in [2,L]$) and $\bar{\Gamma}_{i+1}^{(V)}$ ($i \in [1,L-1]$) can be repeated in $\tilde{A}_i \big[ \mathcal{G}_0^{(n)} \big]$. Thus, define
\begin{align}
\mathcal{R}^{(n)}_{1} & \triangleq  \text{any subset of } \mathcal{G}^{(n)}_2 \text{ with size } \big| \mathcal{C}^{(n)}_1 \big|,  \nonumber \\
\mathcal{R}^{(n)}_{2} & \triangleq  \text{any subset of } \mathcal{G}^{(n)}_{1} \text{ with size } \big| \mathcal{C}^{(n)}_2 \big|, \nonumber \\
\mathcal{R}^{(n)}_{1,2} & \triangleq \mathcal{G}^{(n)}_0, \nonumber \\
\mathcal{R}^{\prime (n)}_{1} & \triangleq \text{any subset of } \mathcal{G}^{(n)}_2 \setminus \mathcal{R}^{(n)}_{1} \text{ with size } \big| \mathcal{C}^{(n)}_{1,2} \big| - \big| \mathcal{G}^{(n)}_{0} \big|,  \nonumber \\ 
\mathcal{R}^{\prime (n)}_{2} & \triangleq \text{any subset of } \mathcal{G}^{(n)}_1 \setminus \mathcal{R}^{(n)}_{2} \text{ with size } \big| \mathcal{C}^{(n)}_{1,2} \big| - \big| \mathcal{G}^{(n)}_{0} \big|, \nonumber 
\end{align}
and $\mathcal{R}^{\prime(n)}_{1,2} \triangleq \emptyset$. Therefore, according to \eqref{eq:ASetRsx}--\eqref{eq:ASetLambdaxk}, we have
\begin{align}
\mathcal{R}^{(n)}_{\text{S}} & =  \text{any subset of } \mathcal{G}^{(n)}_{1} \setminus 	\big( \mathcal{R}^{(n)}_{2} \cup \mathcal{R}^{\prime (n)}_{2} \big) \nonumber  \\
& \quad \text{ with size } \big| \mathcal{G}^{(n)}_{2} \big| - \big| \mathcal{C}^{(n)}_{1}  \big| - \big( \big| \mathcal{C}^{(n)}_{1,2} \big| - \big| \mathcal{G}^{(n)}_{0}  \big| \big) , \nonumber \\
\mathcal{I}^{(n)} & = \mathcal{G}^{(n)}_{2}  \setminus   \big(  \mathcal{R}^{(n)}_{\text{1}} \cup \mathcal{R}^{\prime (n)}_{\text{1}}  \big),  \nonumber \\
\mathcal{R}_{\Lambda}^{(n)} & \triangleq  \mathcal{G}_{1,2}^{(n)} \cup \big( \mathcal{G}^{(n)}_{1} \setminus \big( \mathcal{R}^{(n)}_{2} \cup \mathcal{R}^{\prime (n)}_{2} \cup \mathcal{R}^{(n)}_{\text{S}} \big) \big). \nonumber
\end{align}
From \eqref{eq:assumpRate1Impl2x1}, all previous sets exist. For $i \in [1,L]$, we define $\Psi_{1,i}^{(V)} \triangleq \Psi_{i}^{(V)}$, $\bar{\Theta}_{1,i}^{(V)} \triangleq \bar{\Theta}_{i}^{(V)}$ and $\Psi_{p,i}^{(V)} = \bar{\Theta}_{p,i}^{(V)} \triangleq  \varnothing$ for $p \in [2,3]$; and $\Gamma_{1,i}^{(V)}$ and $\bar{\Gamma}_{1,i}^{(V)}$ as any part of $\Gamma_{i}^{(V)}$ and $\bar{\Gamma}_{i}^{(V)}$, respectively, with size $\big| \mathcal{G}^{(n)}_{0} \big|$, $\Gamma_{2,i}^{(V)}$ and $\bar{\Gamma}_{2,i}^{(V)}$ as the remaining parts with size $\big| \mathcal{C}^{(n)}_{1,2} \big| - \big| \mathcal{G}^{(n)}_{0} \big|$, and $\Gamma_{3,i}^{(V)} = \bar{\Gamma}_{3,i}^{(V)} \triangleq \varnothing$. From \eqref{eq:pi2x}--\eqref{eq:delta2x}, for $i \in [1,L]$ we have $\Pi_{(2),i}^{(V)} = \tilde{A}_{i}\big[ \mathcal{I}^{(n)} \cap \mathcal{G}_2^{(n)} \big]$ with size $\big| \mathcal{G}^{(n)}_{2} \big| - \big| \mathcal{C}^{(n)}_{1}  \big| - \big( \big| \mathcal{C}^{(n)}_{1,2} \big| - \big| \mathcal{G}^{(n)}_{0}  \big| \big)$, $\Pi_{(1),i}^{(V)} = \tilde{A}_{i}\big[ \mathcal{I}^{(n)} \cap \mathcal{G}_1^{(n)} \big] = \varnothing$, and we have $\Delta_{(1),i}^{(V)} = \Delta_{(2),i}^{(V)} = \varnothing$.

According to Algorithm~\ref{alg:formAx}, for $i \in [1,L]$, $\Gamma_{1,i-1}^{(V)} \oplus \bar{\Gamma}_{1,i+1}^{(V)}$ is repeated in $\tilde{A}_i \big[\mathcal{R}_{1,2}^{(n)} \big]$, where $\Gamma_{1,0}^{(V)} = \bar{\Gamma}_{1,L}^{(V)} = \varnothing$. On the other hand, now $\Gamma_{2,i-1}^{(V)}$ ($i \in [2,L]$) and $\bar{\Gamma}_{2,i+1}^{(V)}$ ($i \in [1,L-1]$) are repeated in $\tilde{A}_i \big[\mathcal{R}_{2}^{\prime (n)} \big]$ and $\tilde{A}_i \big[\mathcal{R}_{1}^{\prime (n)} \big]$ respectively. The inner-layer carries confidential information $S_{(1),1:L}^{(V)}$ intended for Receiver~1, and for $i \in [2,L]$ the encoder repeats $\Pi_{(2),i-1}^{(V)}$ in $\tilde{A}_{i}[\mathcal{R}_{\text{S}}^{(n)}]$. Indeed, since $\mathcal{I}^{(n)} \subseteq \mathcal{G}_2^{(n)}$, we have $\Pi_{(2),i}^{(V)} = S_{(1),i}^{(V)}$ for $i \in [1,L]$. Finally, for $i \in [2,L]$, the encoder repeats $\Lambda_{i-1}^{(V)}$ in $\tilde{A}_{i}[\mathcal{R}_{\Lambda}^{(n)}]$ and, hence, $\Lambda_{1}^{(V)}$ is replicated in all blocks. This particular encoding procedure is graphically represented in \cite{alos2019polar}~(Figure~4).

\item \emph{Achievability of} $(R_{S_{(1)}}^{\star 2}, R_{S_{(2)}}^{\star 2}, R_{W_{(1)}}^{\star 2}, R_{W_{(2)}}^{\star 2}) \subset \mathfrak{R}_{\text{MI-WTBC}}^{(2)}$.
Define $\mathcal{R}^{(n)}_{1}$, $\mathcal{R}^{(n)}_{2}$, $\mathcal{R}^{(n)}_{1,2}$, $\mathcal{R}^{\prime (n)}_{2}$ and $\mathcal{R}^{\prime (n)}_{1,2}$ as for the previous corner point, and $\mathcal{R}^{\prime (n)}_{1} \triangleq \emptyset$. From \eqref{eq:ASetRsx}, \eqref{eq:ASetIxbk} and \eqref{eq:ASetLambdbx}: 
\begin{align}
\mathcal{R}^{(n)}_{\text{S}} & =  \text{any subset of } \mathcal{G}^{(n)}_{1} \setminus 	\big( \mathcal{R}^{(n)}_{2} \cup \mathcal{R}^{\prime (n)}_{2} \big) \text{ with size } \big| \mathcal{G}^{(n)}_{2} \big| - \big| \mathcal{C}^{(n)}_{1}  \big|, \nonumber \\
\mathcal{I}^{(n)} & = \big( \mathcal{G}^{(n)}_{1} \cup \mathcal{G}^{(n)}_{2} \big) \setminus   \big(  \mathcal{R}^{(n)}_{\text{1}} \cup \mathcal{R}^{(n)}_{\text{2}} \cup \mathcal{R}^{\prime (n)}_{\text{2}} \cup \mathcal{R}^{(n)}_{\text{S}} \big),  \nonumber \\
\mathcal{R}_{\Lambda}^{(n)} & =  \mathcal{G}_{1,2}^{(n)}. \nonumber
\end{align}
For $i \in [1,L]$, define $\bar{\Psi}_{1,i}^{(V)} \triangleq \bar{\Psi}_{i}^{(V)}$, $\Theta_{1,i}^{(V)} \triangleq \Theta_{i}^{(V)}$ and $\bar{\Psi}_{p,i}^{(V)} = \Theta_{p,i}^{(V)} \triangleq  \varnothing$ for $p \in [2,3]$. Also, we define $\bar{\Gamma}_{1,i}^{(V)}$ as any part of $\bar{\Gamma}_{i}^{(V)}$ with size $\big| \mathcal{G}^{(n)}_{0} \big|$, $\bar{\Gamma}_{2,i}^{(V)}$ as the remaining part with size $\big| \mathcal{C}^{(n)}_{1,2} \big| - \big| \mathcal{G}^{(n)}_{0} \big|$, and $\Gamma_{3,i}^{(V)} \triangleq \varnothing$. On the other hand, now we define $\Gamma_{1,i}^{(V)}$ as any part of $\Gamma_{i}^{(V)}$ with size $\big| \mathcal{G}^{(n)}_{0} \big|$, $\Gamma_{2,i}^{(V)} \triangleq \varnothing$ and $\Gamma_{3,i}^{(V)}$ as the remaining part with size $\big| \mathcal{C}^{(n)}_{1,2} \big| - \big| \mathcal{G}^{(n)}_{0} \big|$. According to \eqref{eq:pi2x}--\eqref{eq:delta2x}, for $i \in [1,L]$ we have $\Pi_{(2),i}^{(V)} = \tilde{A}_{i}\big[ \mathcal{I}^{(n)} \cap \mathcal{G}_2^{(n)} \big]$ with size $\big| \mathcal{G}^{(n)}_{2} \big| - \big| \mathcal{C}^{(n)}_{1}  \big| - \big( \big| \mathcal{C}^{(n)}_{1,2} \big| - \big| \mathcal{G}^{(n)}_{0}  \big| \big)$, sequence $\Pi_{(1),i}^{(V)} = \tilde{A}_{i}\big[ \mathcal{I}^{(n)} \cap \mathcal{G}_1^{(n)} \big]$ with size $\big| \mathcal{G}^{(n)}_{1} \big| - \big| \mathcal{C}^{(n)}_{2}  \big| - \big( \big| \mathcal{G}^{(n)}_{2} \big| - \big| \mathcal{C}^{(n)}_{1}  \big| \big)$, $\Delta_{(1),i}^{(V)} = \Gamma_{3,i}^{(V)}$ with size $\big| \mathcal{C}^{(n)}_{1,2} \big| - \big| \mathcal{G}^{(n)}_{0} \big|$ and $\Delta_{(2),i}^{(V)} = \varnothing$.

According to Algorithm~\ref{alg:formAx}, the inner-layer carries confidential information $S_{(2),1:L}^{(V)}$ intended for Receiver~2. For $i \in [2,L]$, $\Pi_{(2),i-1}^{(V)}$ is repeated in $\tilde{A}_{i}[\mathcal{R}_{\text{S}}^{(n)}]$ and $\Lambda_{i-1}^{(V)}$ in $\tilde{A}_{i}[\mathcal{R}_{\Lambda}^{(n)}]$. Now, for $i \in [1,L-1]$ both $\Pi_{(1),i+1}^{(V)}$ and $\Delta_{(1),i+1}^{(V)}$ will be repeated in outer-layer $\tilde{T}_{(1),i}$ associated to Receiver~1. This particular encoding is represented in Figure~\ref{fig:EncCasB1x}.
\end{enumerate}

\begin{figure}[h!]
\centering
\vspace*{-0cm}
\begin{overpic}[width=0.9\linewidth]{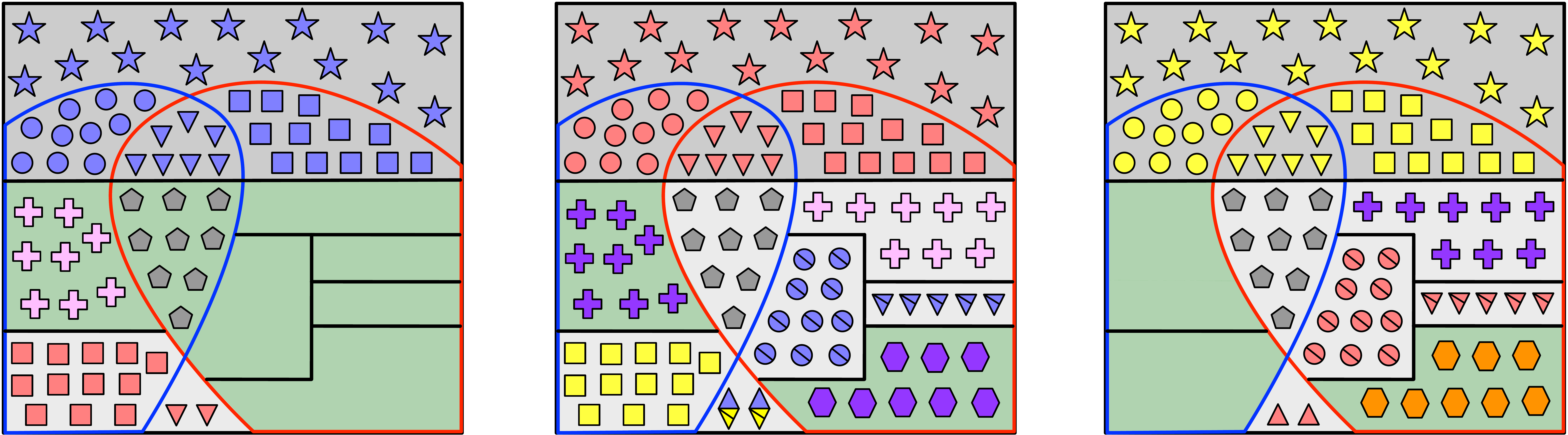}
\put (10.5,-3) {\small Block 1}
\put (45.75,-3) {\small Block 2}
\put (80.75,-3) {\small Block 3}
\end{overpic}
\vspace{0.5cm}
\caption{\setstretch{1.35}Case B when $\smash{ I(V;Z) \leq I(V;Y_{(1)}) \leq I(V;Y_{(2)})}$: inner-layer encoding to achieve the corner point of $\smash{\mathfrak{R}_{\text{MI-WTBC}}^{(2)}}$ that leads to the construction of $\smash{\tilde{A}_{1:L}[ \mathcal{H}_V^{(n)}]}$ when $L=3$. Consider Block~2: the sets $\smash{\mathcal{R}^{(n)}_{1}}$, $\smash{\mathcal{R}^{(n)}_{2}}$, $\smash{\mathcal{R}^{\prime (n)}_{2}}$, $\smash{\mathcal{R}^{(n)}_{1,2}}$, $\smash{\mathcal{R}^{(n)}_{\text{S}}}$ and $\smash{\mathcal{R}^{(n)}_{\Lambda}}$ are those areas filled with yellow squares, blue circles, blue triangles, blue and yellow diamonds, pink crosses, and gray pentagons, respectively; and $\mathcal{I}^{(n)}$ is the green filled area. At Block~$i \in [1,L]$, $\smash{W_{(2),i}^{(V)}}$ is represented by symbols of the same color (e.g., red symbols at Block~2), and $\smash{\Theta_i^{(V)}}$, $\smash{\Psi_i^{(V)}}$ and $\smash{\Gamma_i^{(V)}}$ are represented by squares, circles and triangles respectively. Also, $\smash{\bar{\Psi}_i^{(V)}}$ and $\smash{\bar{\Gamma}_i^{(V)}}$ are denoted by circles and triangles, respectively, with a line through them. At Block~$i \in [2,L-1]$, the diamonds denote $\smash{\bar{\Gamma}_{1,i-1}^{(V)} \oplus \Gamma_{1,i+1}^{(V)}}$. For $i \in [1,L-1]$, the elements of $\smash{\Gamma_{i+1}^{(V)}}$ that do not belong to $\smash{\Gamma_{1,i+1}^{(V)}}$ are not repeated in $\smash{\tilde{A}_i[\mathcal{G}^{(n)}]}$, but $\smash{\Delta_{(1),i+1}^{(V)} = \Gamma_{3,i+1}^{(V)}}$ will be sent to the outer-layer $\smash{\tilde{T}_{(1),i}}$. In Block~$i \in [1,L]$, $\smash{S_{(2),i}^{(V)}}$ is stored into those entries whose indices belong to the green filled area. For $i \in [2,L]$, $\smash{\Pi_{(2),i-1}^{(V)}}$ is denoted by crosses and is repeated in $\tilde{A}_{i}[ \mathcal{R}^{(n)}_{\text{S}} ]$. For $i \in [1,L-1]$, the sequence $\smash{\Pi_{(1),i+1}^{(V)}}$ is denoted by hexagons, and it will be send also to $\smash{\tilde{T}_{(1),i}}$. At Block~1, $\smash{\Lambda_1^{(V)}}$ is denoted by gray pentagons and is repeated in all blocks. Finally, $\smash{\Upsilon_{(1)}^{(V)}}$ and $\smash{\Upsilon_{(2)}^{(V)}}$ are those entries inside the red curve at Block~1 and the blue curve at Block~$L$,~respectively.
}\label{fig:EncCasB1x} 
\end{figure}

\newpage
\vspace{0.15cm}
\noindent \textbf{Case B when} $\bm{I(V;Y_{(1)}) < I(V;Z)\leq  I(V;Y_{(2)})}$

\begin{enumerate}
\item \emph{Achievability of} $(R_{S_{(1)}}^{\star 1}, R_{S_{(2)}}^{\star 1}, R_{W_{(1)}}^{\star 1}, R_{W_{(2)}}^{\star 1}) \subset \mathfrak{R}_{\text{MI-WTBC}}^{(1)}$. In this situation, according to condition~\eqref{eq:assumpRate1Impl2x2}, $\big| \mathcal{G}^{(n)}_{2} \big| - \big| \mathcal{C}^{(n)}_{1} \big| <  \big| \mathcal{C}^{(n)}_{1,2} \big| - \big| \mathcal{G}^{(n)}_{0} \big|$. Therefore, for $i \in [1,L-1]$, sequence $\bar{\Gamma}_{i+1}^{(V)}$ cannot be repeated entirely in $\tilde{A}_i\big[ \mathcal{G}^{(n)}_{0} \cup \big( \mathcal{G}^{(n)}_{2} \setminus \mathcal{R}^{(n)}_{1} \big) \big]$. Thus, we define 
\begin{align}
\mathcal{R}^{(n)}_{1} & \triangleq  \text{any subset of } \mathcal{G}^{(n)}_2 \text{ with size } \big| \mathcal{C}^{(n)}_1 \big|, \nonumber  \\
\mathcal{R}^{(n)}_{2} & \triangleq  \text{any subset of } \mathcal{G}^{(n)}_{1} \text{ with size } \big| \mathcal{C}^{(n)}_2 \big|, \nonumber \\
\mathcal{R}^{(n)}_{1,2} & \triangleq \mathcal{G}^{(n)}_0, \nonumber \\
\mathcal{R}^{\prime (n)}_{1} & \triangleq  \mathcal{G}^{(n)}_2 \setminus \mathcal{R}^{(n)}_{1}, \nonumber  \\
\mathcal{R}^{\prime (n)}_{2} & \triangleq \text{any subset of } \mathcal{G}^{(n)}_1 \setminus \mathcal{R}^{(n)}_{2} \text{ with size } \big| \mathcal{C}^{(n)}_{1,2} \big| - \big| \mathcal{G}^{(n)}_{0} \big|, \nonumber
\end{align}
and $\mathcal{R}^{\prime(n)}_{1,2} \triangleq \emptyset$. Therefore, according to \eqref{eq:ASetRsx}--\eqref{eq:ASetLambdaxk}, we have $\mathcal{R}^{(n)}_{\text{S}} = \mathcal{I}^{(n)} = \emptyset$ and
\begin{align}
\mathcal{R}_{\Lambda}^{(n)} & \triangleq  \mathcal{G}_{1,2}^{(n)} \cup \big( \mathcal{G}^{(n)}_{1} \setminus \big( \mathcal{R}^{(n)}_{2} \cup \mathcal{R}^{\prime (n)}_{2} \big) \big). \nonumber
\end{align}
For $i \in [1,L]$, we define $\smash{\Psi_{1,i}^{(V)} \triangleq \Psi_{i}^{(V)}}$, $\smash{\bar{\Theta}_{1,i}^{(V)} \triangleq \bar{\Theta}_{i}^{(V)}}$ and $\smash{\Psi_{p,i}^{(V)} = \bar{\Theta}_{p,i}^{(V)} \triangleq  \varnothing}$ for $p \in [2,3]$. Also, we define $\Gamma_{1,i}^{(V)}$ as any part of $\Gamma_{i}^{(V)}$ with size $\big| \mathcal{G}^{(n)}_{0} \big|$, $\Gamma_{2,i}^{(V)}$ as the remaining part with size $\big| \mathcal{C}^{(n)}_{1,2} \big| - \big| \mathcal{G}^{(n)}_{0} \big|$ and $\Gamma_{3,i}^{(V)} \triangleq \varnothing$; and $\bar{\Gamma}_{1,i}^{(V)}$ as any part of $\bar{\Gamma}_{i}^{(V)}$ with size $\big| \mathcal{G}^{(n)}_{0} \big|$, $\bar{\Gamma}_{2,i}^{(V)}$ as any part of $\bar{\Gamma}_{i}^{(V)}$ that is not included in $\bar{\Gamma}_{1,i}^{(V)}$ with size $\big| \mathcal{G}^{(n)}_{2} \big| - \big| \mathcal{C}^{(n)}_{1} \big|$, and $\bar{\Gamma}_{3,i}^{(V)}$ as the remaining part of $\bar{\Gamma}_{i}^{(V)}$ with size $\big| \mathcal{C}^{(n)}_{1,2} \big| - \big| \mathcal{G}^{(n)}_{0} \big| - \big( \big| \mathcal{G}^{(n)}_{2} \big| - \big| \mathcal{C}^{(n)}_{1} \big| \big)$. Thus, according to \mbox{\eqref{eq:pi2x}--\eqref{eq:delta2x}}, for $i \in [1,L]$ we have $\Pi_{(2),i}^{(V)} = \tilde{A}_{i}\big[ \mathcal{I}^{(n)} \cap \mathcal{G}_2^{(n)} \big] = \varnothing$, $\Pi_{(1),i}^{(V)} = \tilde{A}_{i}\big[ \mathcal{I}^{(n)} \cap \mathcal{G}_1^{(n)} \big] = \varnothing$, $\Delta_{(1),i}^{(V)}= \bar{\Gamma}_{3,i}^{(V)}$ and $\Delta_{(2),i}^{(V)}= \varnothing$. According to Algorithm~\ref{alg:formAx}, since $\mathcal{I}^{(n)} = \emptyset$ then $\tilde{A}_{2:L-1}^n \big[ \mathcal{G}^{(n)} \big]$ does not carry confidential information $S_{(1),2:L-1}^{(V)}$. For $i \in [1,L-1]$, $\Delta_{(1),i+1}^{(V)}$ will be repeated in outer-layer $\tilde{T}_{(1),i}^n$. This particular encoding is graphically represented in Figure~\ref{fig:EncCasB2x}.

\item \emph{Achievability of} $(R_{S_{(1)}}^{\star 2}, R_{S_{(2)}}^{\star 2}, R_{W_{(1)}}^{\star 2}, R_{W_{(2)}}^{\star 2}) \subset \mathfrak{R}_{\text{MI-WTBC}}^{(2)}$.
Construction of $\tilde{A}_{1:L}\big[\mathcal{G}^{(n)}\big]$ is the same as the one to achieve this rate tuple when $I(V;Z)\leq I(V;Y_{(1)})\leq I(V;Y_{(2)})$.
\end{enumerate}

\begin{figure}[h!]
\centering
\begin{overpic}[width=0.9\linewidth]{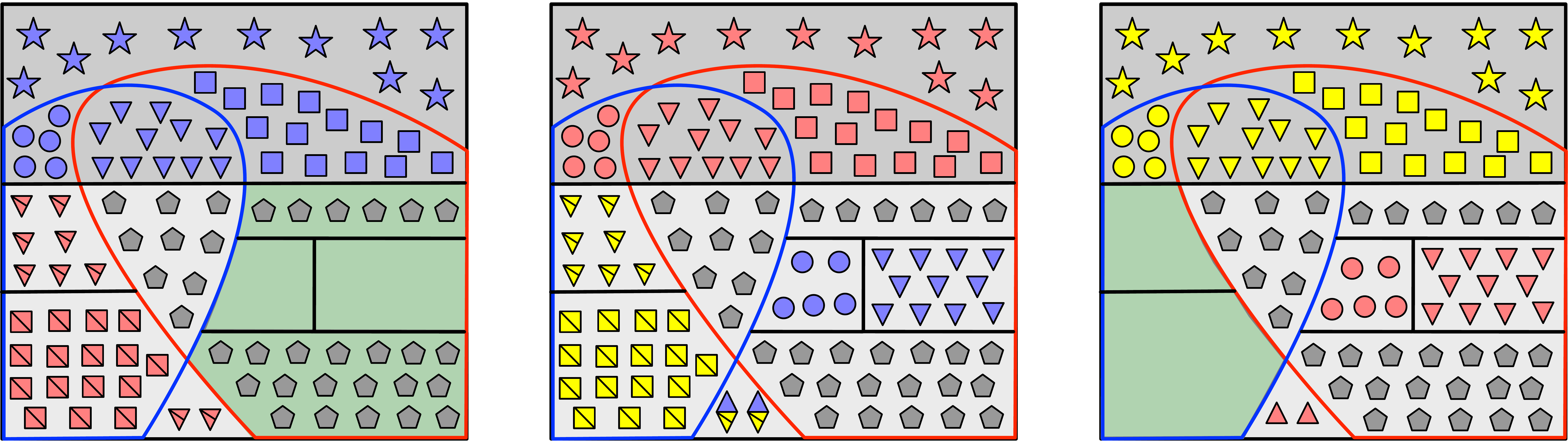}
\put (10.5,-3) {\small Block 1}
\put (45.75,-3) {\small Block 2}
\put (80.75,-3) {\small Block 3}
\end{overpic}
\vspace{0.5cm}
\caption{\setstretch{1.35} Case B when $\smash{I(V;Y_{(1)}) < I(V;Z) \leq I(V;Y_{(2)})}$: inner-layer encoding to achieve the corner point of $\smash{\mathfrak{R}_{\text{MI-WTBC}}^{(1)}}$ that leads to the construction of $\smash{\tilde{A}_{1:L}[ \mathcal{H}_V^{(n)} ]}$ when $L=3$. Consider Block~2: the sets $\smash{\mathcal{R}^{(n)}_{1}}$, $\smash{\mathcal{R}^{\prime (n)}_{1}}$, $\smash{\mathcal{R}^{(n)}_{2}}$, $\smash{\mathcal{R}^{\prime (n)}_{2}}$, $\smash{\mathcal{R}^{(n)}_{1,2}}$ and $\smash{\mathcal{R}^{(n)}_{\Lambda}}$ are those areas filled with yellow squares, yellow triangles, blue circles, blue triangles, blue and yellow diamonds, and gray pentagons, respectively. At Block~$i \in [1,L]$, $\smash{W_{(1),i}^{(V)}}$ is represented by symbols of the same color (e.g., red symbols at Block~2), and $\smash{\Theta_i^{(V)}}$, $\smash{\Psi_i^{(V)}}$ and $\smash{\Gamma_i^{(V)}}$ are represented by squares, circles and triangles respectively. Also, $\smash{\bar{\Theta}_i^{(V)}}$ and $\smash{\bar{\Gamma}_i^{(V)}}$ are denoted by squares and triangles, respectively, with a line through them. At Block~$i \in [2,L-1]$, the diamonds denote $\smash{\Gamma_{i-1}^{(V)} \oplus \bar{\Gamma}_{i+1}^{(V)}}$. For $i \in [2,L-1]$, $\smash{\tilde{A}_i [ \mathcal{G}^{(n)} ]}$ does not carry confidential information $\smash{S_{(1),i}^{(V)}}$, but only $\smash{\tilde{A}_1 [ \mathcal{G}^{(n)} ]}$ and $\smash{\tilde{A}_L [ \mathcal{G}^{(n)} ]}$ does (into the green area). The elements of $\smash{\bar{\Gamma}_{2,i+1}^{(V)}}$ that do not fit in $\smash{\tilde{A}_{i}[\mathcal{R}^{\prime (n)}_1]}$ will be sent to outer-layer $\smash{\tilde{T}_{(1),i}^n}$. At Block~1, $\smash{\Lambda_1^{(V)}}$ is denoted by gray pentagons, and is repeated in all blocks. Finally, sequences $\smash{\Upsilon_{(1)}^{(V)}}$ and $\smash{\Upsilon_{(2)}^{(V)}}$ are those entries inside the red curve at Block~1 and the blue curve at Block~$L$,~respectively.
}\label{fig:EncCasB2x} 
\end{figure}

\vspace{0.15cm}
\noindent \textbf{Case B when} \bm{$I(V;Y_{(1)}) \leq  I(V;Y_{(2)}) < I(V;Z)$}

\begin{enumerate}
\item \emph{Achievability of} $(R_{S_{(1)}}^{\star 1}, R_{S_{(2)}}^{\star 1}, R_{W_{(1)}}^{\star 1}, R_{W_{(2)}}^{\star 1}) \subset \mathfrak{R}_{\text{MI-WTBC}}^{(1)}$.
In this situation, define
\begin{align}
\mathcal{R}^{(n)}_{1} & \triangleq  \text{any subset of } \mathcal{G}^{(n)}_2 \text{ with size } \big| \mathcal{C}^{(n)}_1 \big|, \nonumber  \\
\mathcal{R}^{(n)}_{2} & \triangleq  \text{any subset of } \mathcal{G}^{(n)}_{1} \text{ with size } \big| \mathcal{C}^{(n)}_2 \big|, \nonumber \\
\mathcal{R}^{(n)}_{1,2} & \triangleq \mathcal{G}^{(n)}_0, \nonumber \\
\mathcal{R}^{\prime (n)}_{1} & \triangleq  \mathcal{G}^{(n)}_2 \setminus \mathcal{R}^{(n)}_{1},  \nonumber \\
\mathcal{R}^{\prime (n)}_{2} & \triangleq \mathcal{G}^{(n)}_1 \setminus \mathcal{R}^{(n)}_{2},  \nonumber
\end{align}
and $\mathcal{R}^{\prime(n)}_{1,2} \triangleq \emptyset$. Thus, according to \eqref{eq:ASetRsx}--\eqref{eq:ASetLambdaxk}, $\mathcal{R}^{(n)}_{\text{S}} = \mathcal{I}^{(n)} = \emptyset$ and $\mathcal{R}_{\Lambda}^{(n)}=\mathcal{G}_{1,2}^{(n)}$. In this situation we have defined $\mathcal{R}^{\prime (n)}_{1}$ and $\mathcal{R}^{\prime (n)}_{2}$ as above because, according to condition~\eqref{eq:assumpRate1Impl2x3}, $\big| \mathcal{G}^{(n)}_{2} \big| - \big| \mathcal{C}^{(n)}_{1} \big| <  \big| \mathcal{C}^{(n)}_{1,2} \big| - \big| \mathcal{G}^{(n)}_{0} \big|$ and $\big| \mathcal{G}^{(n)}_{1} \big| - \big| \mathcal{C}^{(n)}_{2} \big| <  \big| \mathcal{C}^{(n)}_{1,2} \big| - \big| \mathcal{G}^{(n)}_{0} \big|$. Therefore, neither $\Gamma_{i-1}^{(V)}$ (for $i \in [2,L]$) nor $\bar{\Gamma}_{i+1}^{(V)}$ (for $i \in [1,L-1]$) can be repeated entirely in $\tilde{A}_i\big[ \mathcal{G}^{(n)}_{0} \cup \big( \mathcal{G}^{(n)}_{1} \setminus \mathcal{R}^{(n)}_{2} \big) \big]$ or $\tilde{A}_i\big[ \big( \mathcal{G}^{(n)}_{0} \cup \big( \mathcal{G}^{(n)}_{2} \setminus \mathcal{R}^{(n)}_{1} \big) \big]$, respectively.

For $i \in [1,L]$, we define $\Psi_{1,i}^{(V)} \triangleq \Psi_{i}^{(V)}$, $\bar{\Theta}_{1,i}^{(V)} \triangleq \bar{\Theta}_{i}^{(V)}$ and $\Psi_{p,i}^{(V)} = \bar{\Theta}_{p,i}^{(V)} \triangleq  \varnothing$ for $p \in [2,3]$; we define $\Gamma_{1,i}^{(V)}$ as any part of $\Gamma_{i}^{(V)}$ with size $\big| \mathcal{G}^{(n)}_{0} \big|$, $\Gamma_{2,i}^{(V)}$ as any part of $\Gamma_{i}^{(V)}$ that is not included in $\Gamma_{1,i}^{(V)}$ with size $\big| \mathcal{G}^{(n)}_{1} \big| - \big| \mathcal{C}^{(n)}_{2} \big|$, and $\Gamma_{3,i}^{(V)}$ as the remaining part of $\Gamma_{i}^{(V)}$ with size $\big| \mathcal{C}^{(n)}_{1,2} \big| - \big| \mathcal{G}^{(n)}_{0} \big| - \big( \big| \mathcal{G}^{(n)}_{1} \big| - \big| \mathcal{C}^{(n)}_{2} \big| \big)$; and we define $\bar{\Gamma}_{1,i}^{(V)}$ as any part of $\bar{\Gamma}_{i}^{(V)}$ with size $\big| \mathcal{G}^{(n)}_{0} \big|$, $\bar{\Gamma}_{2,i}^{(V)}$ as any part of $\bar{\Gamma}_{i}^{(V)}$ that is not included in $\bar{\Gamma}_{1,i}^{(V)}$ with size $\big| \mathcal{G}^{(n)}_{2} \big| - \big| \mathcal{C}^{(n)}_{1} \big|$, and $\bar{\Gamma}_{3,i}^{(V)}$ as the remaining part with size $\big| \mathcal{C}^{(n)}_{1,2} \big| - \big| \mathcal{G}^{(n)}_{0} \big| - \big( \big| \mathcal{G}^{(n)}_{2} \big| - \big| \mathcal{C}^{(n)}_{1} \big| \big)$. Hence, according to \mbox{\eqref{eq:pi2x}--\eqref{eq:delta2x}}, for $i \in [1,L]$ we have $\Pi_{(2),i}^{(V)} = \tilde{A}_{i}\big[ \mathcal{I}^{(n)} \cap \mathcal{G}_2^{(n)} \big] = \varnothing$, $\Pi_{(1),i}^{(V)} = \tilde{A}_{i}\big[ \mathcal{I}^{(n)} \cap \mathcal{G}_1^{(n)} \big] = \varnothing$, and we have $\Delta_{(1),i}^{(V)}= \bar{\Gamma}_{3,i}^{(V)}$ and $\Delta_{(2),i}^{(V)}= \Gamma_{3,i}^{(V)}$. According to Algorithm~\ref{alg:formAx}, since $\mathcal{I}^{(n)} = \emptyset$ then $\tilde{A}_{2:L-1}^n \big[ \mathcal{G}^{(n)} \big]$ does not carry confidential information $S_{(1),2:L-1}^{(V)}$. For $i \in [1,L-1]$, $\Delta_{(1),i+1}^{(V)}$ will be repeated in $\tilde{T}_{(1),i}^n$, while $\Delta_{(2),i-1}^{(V)}$, for $i \in [2,L]$, will be repeated in $\tilde{T}_{(2),i}^n$. This particular encoding is represented in Figure~\ref{fig:EncCasB3x}.

\vspace{-0.1cm}
\item \emph{Achievability of} $(R_{S_{(1)}}^{\star 2}, R_{S_{(2)}}^{\star 2}, R_{W_{(1)}}^{\star 2}, R_{W_{(2)}}^{\star 2}) \subset \mathfrak{R}_{\text{MI-WTBC}}^{(2)}$.
Construction of $\tilde{A}_{1:L}\big[\mathcal{G}^{(n)}\big]$ is almost the same as that to achieve the previous corner point of region $\mathfrak{R}_{\text{MI-WTBC}}^{(1)}$: to approach this rate tuple the encoder repeats $\big[\bar{\Psi}_{i-1}^{(V)},\bar{\Gamma}_{i-1}^{(V)} \big]$ and $\big[\Theta_{i+1}^{(V)},\Gamma_{i+1}^{(V)} \big]$ in Block~$i$. 
\end{enumerate}

\begin{figure}[h!]
\centering
\vspace*{-0cm}
\begin{overpic}[width=0.9\linewidth]{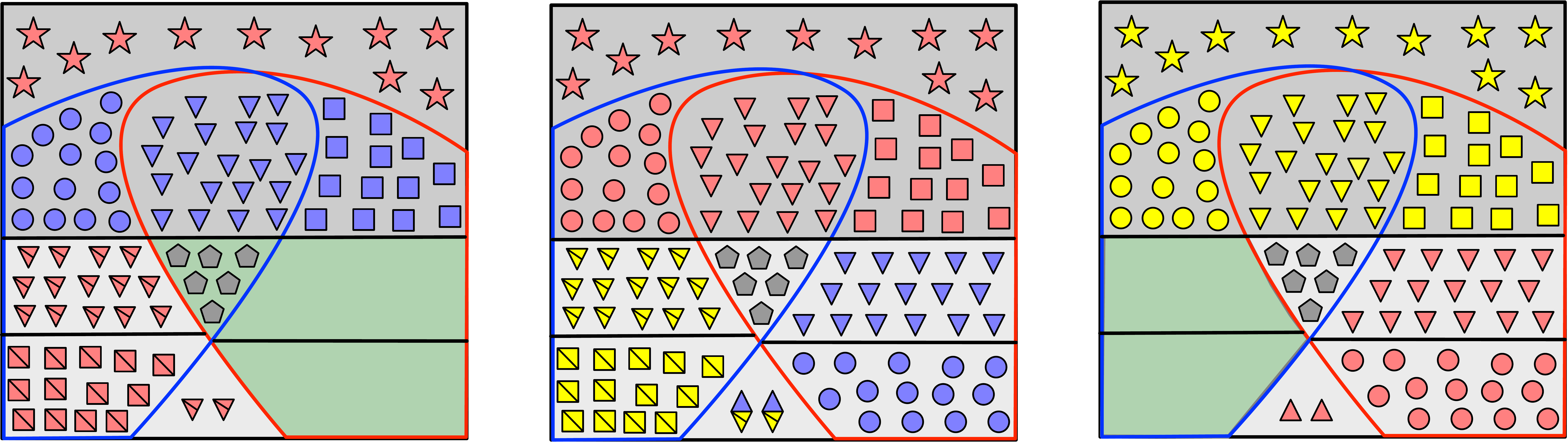}
\put (10.5,-3) {\small Block 1}
\put (45.75,-3) {\small Block 2}
\put (80.75,-3) {\small Block 3}
\end{overpic}
\vspace{0.5cm}
\caption{\setstretch{1.35}Case B when $\smash{ I(V;Y_{(1)}) \leq I(V;Y_{(2)}) < I(V;Z)}$: inner-layer encoding to achieve the corner point of $\smash{\mathfrak{R}_{\text{MI-WTBC}}^{(1)}}$ that leads to the construction of $\smash{\tilde{A}_{1:L}[ \mathcal{H}_V^{(n)}]}$ when $L=3$. Consider Block~2: the sets $\smash{\mathcal{R}^{(n)}_{1}}$, $\smash{\mathcal{R}^{\prime (n)}_{1}}$, $\smash{\mathcal{R}^{(n)}_{2}}$, $\smash{\mathcal{R}^{\prime (n)}_{2}}$, $\smash{\mathcal{R}^{(n)}_{1,2}}$ and $\smash{\mathcal{R}^{(n)}_{\Lambda}}$ are those areas filled with yellow squares, yellow triangles, blue circles, blue triangles, blue and yellow diamonds, and gray pentagons, respectively. At Block~$i \in [1,L]$, $\smash{W_{(1),i}^{(V)}}$ is represented by symbols of the same color (e.g., red symbols at Block~2), and $\smash{\Theta_i^{(V)}}$, $\smash{\Psi_i^{(V)}}$ and $\smash{\Gamma_i^{(V)}}$ are represented by squares, circles and triangles respectively. Also, $\smash{\bar{\Theta}_i^{(V)}}$ and $\smash{\bar{\Gamma}_i^{(V)}}$ are denoted by squares and triangles, respectively, with a line through them. The diamonds at Block~$i \in [2,L-1]$ denote $\smash{\Gamma_{1,i-1}^{(V)} \oplus \bar{\Gamma}_{1,i+1}^{(V)}}$. For $i \in [2,L-1]$, $\smash{\tilde{A}_i [ \mathcal{G}^{(n)} ]}$ does not carry confidential information $\smash{S_{(1),i}^{(V)}}$, but only $\smash{\tilde{A}_1 [ \mathcal{G}^{(n)} ]}$ and $\smash{\tilde{A}_L [ \mathcal{G}^{(n)} ]}$ does (into the green area). The elements of $\smash{\Gamma_{i-1}^{(V)}}$, or $\smash{\bar{\Gamma}_{i+1}^{(V)}}$, which do not fit in $\smash{\tilde{A}_{i}[\mathcal{R}^{(n)}_{1,2} \cup \mathcal{R}^{\prime (n)}_2]}$, or $\smash{\tilde{A}_{i}[\mathcal{R}^{(n)}_{1,2} \cup \mathcal{R}^{\prime (n)}_1]}$, will be sent to the outer-layer $\smash{\tilde{T}_{(2),i}^n}$, or $\smash{\tilde{T}_{(1),i}^n}$, respectively. At Block~1, $\smash{\Lambda_1^{(V)}}$ is denoted by gray pentagons and is replicated in all blocks. Finally, sequences $\smash{\Upsilon_{(1)}^{(V)}}$ and $\smash{\Upsilon_{(2)}^{(V)}}$ are those entries inside the red curve at Block~1 and the blue curve at Block~$L$,~respectively.
}\label{fig:EncCasB3x} 
\end{figure}

\vspace{0.15cm}
\noindent \textbf{Case C when} $\bm{I(V;Z)\leq I(V;Y_{(1)}) \leq I(V;Y_{(2)})}$

In Case~C, we have $| \mathcal{G}^{(n)}_1 | \geq | \mathcal{C}^{(n)}_2 |$, $| \mathcal{G}^{(n)}_2 |  \leq | \mathcal{C}^{(n)}_1 |$ and $| \mathcal{G}^{(n)}_0 |  > | \mathcal{C}^{(n)}_{1,2} |$.

\begin{enumerate}
\item \emph{Achievability of} $(R_{S_{(1)}}^{\star 1}, R_{S_{(2)}}^{\star 1}, R_{W_{(1)}}^{\star 1}, R_{W_{(2)}}^{\star 1}) \subset \mathfrak{R}_{\text{MI-WTBC}}^{(1)}$. 
The construction of $\tilde{A}_{1:L}\big[ \mathcal{G}^{(n)} \big]$ in this case is the same as that in \cite{alos2019polar} (Section~IV.3, Case C). Since $\big| \mathcal{G}^{(n)}_2 \big|  \leq \big| \mathcal{C}^{(n)}_1 \big|$, now for $i \in [1,L-1]$ we have that $\bar{\Theta}_{i+1}^{(V)}$ fills all the elements of $\tilde{A}_i \big[\mathcal{G}_2^{(n)} \big]$. Therefore:
\begin{align}
\mathcal{R}^{(n)}_{1,2} & \triangleq \text{any subset of } \mathcal{G}^{(n)}_{0} \text{ with size } \big| \mathcal{C}^{(n)}_{1,2} \big|,  \nonumber \\
\mathcal{R}^{(n)}_{1} & \triangleq \Scale[0.92]{\text{the union of }} \mathcal{G}^{(n)}_{2} \Scale[0.92]{\text{ with any subset of }} \mathcal{G}^{(n)}_{0} \setminus  \mathcal{R}_{1,2}^{(n)} \Scale[0.92]{\text{ with size }} \big| \mathcal{C}^{(n)}_{1} \big| - \big| \mathcal{G}^{(n)}_{2} \big|, \nonumber \\ 
\mathcal{R}^{(n)}_{2} & \triangleq  \text{any subset of } \mathcal{G}^{(n)}_{1} \text{ with size } \big| \mathcal{C}^{(n)}_2 \big|, \nonumber 
\end{align}
and $\mathcal{R}^{\prime(n)}_{1} = \mathcal{R}^{\prime(n)}_{2} = \mathcal{R}^{\prime(n)}_{1,2} \triangleq \emptyset$. Hence, according to \eqref{eq:ASetRsx}--\eqref{eq:ASetLambdaxk}, we have $\mathcal{R}^{(n)}_{\text{S}}= \emptyset$ and
\begin{align}
\mathcal{I}^{(n)} & = \mathcal{G}^{(n)}_{0}  \setminus  \mathcal{R}^{(n)}_{1,2} ,  \nonumber \\
\mathcal{R}_{\Lambda}^{(n)} & =  \mathcal{G}_{1,2}^{(n)} \cup \big( \mathcal{G}^{(n)}_{1} \setminus  \mathcal{R}^{(n)}_{2}  \big). \nonumber
\end{align}
 From condition \eqref{eq:assumpRate1Impl2x1}, all previous sets exist. For $i \in [1,L]$, we define $\Psi_{1,i}^{(V)} \triangleq \Psi_{i}^{(V)}$, $\Gamma_{1,i}^{(V)} \triangleq \Gamma_{i}^{(V)}$, $\bar{\Theta}_{1,i}^{(V)} \triangleq \bar{\Theta}_{i}^{(V)}$, $\bar{\Gamma}_{1,i}^{(V)} \triangleq \bar{\Gamma}_{i}^{(V)}$, and $\Psi_{p,i}^{(V)} = \Gamma_{p,i}^{(V)} = \bar{\Theta}_{p,i}^{(V)} = \bar{\Gamma}_{p,i}^{(V)} \triangleq \varnothing$ for $p \in [2,3]$. According to \eqref{eq:pi2x}--\eqref{eq:delta2x}, for $i \in [1,L]$ we have $\Pi_{(2),i}^{(V)} = \tilde{A}_{i}\big[ \mathcal{I}^{(n)} \cap \mathcal{G}_2^{(n)} \big] = \varnothing$, $\Pi_{(1),i}^{(V)} = \tilde{A}_{i}\big[ \mathcal{I}^{(n)} \cap \mathcal{G}_1^{(n)} \big] = \varnothing$, and $\Delta_{(1),i}^{(V)} = \Delta_{(2),i}^{(V)} = \varnothing$. According to Algorithm~\ref{alg:formAx}, the inner-layer carries confidential information $S_{(1),1:L}^{(V)}$ intended for Receiver~1. For $i \in [2,L]$, the encoder repeats $\Lambda_{i-1}^{(V)}$ in $\tilde{A}_{i}[\mathcal{R}_{\Lambda}^{(n)}]$ and, therefore, $\Lambda_{1}^{(V)}$ is replicated in all blocks. This particular encoding procedure is represented in \cite{alos2019polar}~(Figure~5).

\item \emph{Achievability of} $(R_{S_{(1)}}^{\star 2}, R_{S_{(2)}}^{\star 2}, R_{W_{(1)}}^{\star 2}, R_{W_{(2)}}^{\star 2}) \subset \mathfrak{R}_{\text{MI-WTBC}}^{(2)}$. 
Define $\mathcal{R}^{(n)}_{2}$, $\mathcal{R}^{(n)}_{1,2}$, $\mathcal{R}^{\prime (n)}_{1}$, $\mathcal{R}^{\prime (n)}_{2}$ and $\mathcal{R}^{\prime (n)}_{1,2}$ as for the previous corner point, and define $\mathcal{R}^{(n)}_{1} \triangleq \mathcal{G}^{(n)}_{2}$. Thus, according to \eqref{eq:ASetRsx}, \eqref{eq:ASetIxbk} and \eqref{eq:ASetLambdbx}, we have $\mathcal{R}^{(n)}_{\text{S}} = \emptyset$ and
\begin{align}
\mathcal{I}^{(n)} & = \big( \mathcal{G}^{(n)}_{0} \cup \mathcal{G}^{(n)}_{1}  \big) \setminus   \big(   \mathcal{R}^{(n)}_{\text{2}} \cup \mathcal{R}^{(n)}_{1,2} \big),  \nonumber \\
\mathcal{R}_{\Lambda}^{(n)} & =  \mathcal{G}_{1,2}^{(n)}. \nonumber
\end{align}
For $i \in [1,L]$, let $\bar{\Psi}_{1,i}^{(V)} \triangleq \bar{\Psi}_{i}^{(V)}$, $\Gamma_{1,i}^{(V)} \triangleq \Gamma_{i}^{(V)}$, $\bar{\Gamma}_{1,i}^{(V)} \triangleq \bar{\Gamma}_{i}^{(V)}$, and $\bar{\Psi}_{p,i}^{(V)} = \Gamma_{p,i}^{(V)} = \bar{\Gamma}_{p,i}^{(V)} \triangleq \varnothing$ for $p \in [2,3]$. Now, we define $\Theta_{1,i}^{(V)}$ as any part of $\Theta_{i}^{(V)}$ with size $\big| \mathcal{G}_2^{(n)}\big|$, $\Theta_{2,i}^{(V)} \triangleq \varnothing$, and $\Theta_{3,i}^{(V)}$ as the remaining part of $\Theta_{i}^{(V)}$ with size $\big| \mathcal{C}^{(n)}_{1} \big| - \big| \mathcal{G}^{(n)}_{2} \big|$. According to \eqref{eq:pi2x}--\eqref{eq:delta2x}, for $i \in [1,L]$ we have $\Pi_{(2),i}^{(V)} = \tilde{A}_{i}\big[ \mathcal{I}^{(n)} \cap \mathcal{G}_2^{(n)} \big] = \varnothing$, $\Pi_{(1),i}^{(V)} = \tilde{A}_{i}\big[ \mathcal{I}^{(n)} \cap \mathcal{G}_1^{(n)} \big]$ with size $\big| \mathcal{G}_1^{(n)} \big| - \big| \mathcal{C}_2^{(n)} \big|$, $\Delta_{(1),i}^{(V)} = \Theta_{3,i}^{(V)}$, and $\Delta_{(2),i}^{(V)} = \varnothing$. Now, the inner-layer carries confidential information $S_{(2),1:L}^{(V)}$ intended for Receiver~2. For $i \in [1,L-1]$, both $\Delta_{(1),i+1}^{(V)}$ and $\Pi_{(1),i+1}^{(V)}$ will be repeated in outer-layer $\tilde{T}_{(1),i}^{n}$. This particular encoding is represented in Figure~\ref{fig:EncCasC1x}.
\end{enumerate}

\vspace{0.15cm}
\noindent  \textbf{Case C when} $\bm{I(V;Y_{(1)})  < I(V;Z)\leq I(V;Y_{(2)})}$

\begin{enumerate}
\item \emph{Achievability of} $(R_{S_{(1)}}^{\star 1}, R_{S_{(2)}}^{\star 1}, R_{W_{(1)}}^{\star 1}, R_{W_{(2)}}^{\star 1}) \subset \mathfrak{R}_{\text{MI-WTBC}}^{(1)}$.
In this situation, according to~\eqref{eq:assumpRate1Impl2x2}, $\big| \mathcal{G}^{(n)}_{2} \big| - \big| \mathcal{C}^{(n)}_{1} \big| <  \big| \mathcal{C}^{(n)}_{1,2} \big| - \big| \mathcal{G}^{(n)}_{0} \big|$. Hence, for $i \in [1,L-1]$, $\bar{\Theta}_{i+1}^{(V)}$ cannot be repeated entirely in $\tilde{A}_i\big[ \mathcal{G}^{(n)}_{2} \cup \big( \mathcal{G}^{(n)}_{0} \setminus  \mathcal{R}_{1,2}^{(n)} \big)  \big]$. Therefore, define
\begin{align} 
\mathcal{R}^{(n)}_{1,2} & \triangleq \text{any subset of } \mathcal{G}^{(n)}_{0} \text{ with size } \big| \mathcal{C}^{(n)}_{1,2} \big|,  \nonumber \\
\mathcal{R}^{(n)}_{1} & \triangleq \mathcal{G}^{(n)}_{2} \cup \big( \mathcal{G}^{(n)}_{0} \setminus  \mathcal{R}_{1,2}^{(n)} \big), \nonumber \\
\mathcal{R}^{(n)}_{2} & \triangleq  \text{any subset of } \mathcal{G}^{(n)}_{1} \text{ with size } \big| \mathcal{C}^{(n)}_2 \big|, \nonumber
\end{align}
and $\mathcal{R}^{\prime(n)}_{1}= \mathcal{R}^{\prime(n)}_{2}= \mathcal{R}^{\prime(n)}_{1,2} \triangleq \emptyset$. Then, according to \eqref{eq:ASetRsx}--\eqref{eq:ASetLambdaxk}, $\mathcal{R}^{(n)}_{\text{S}} = \mathcal{I}^{(n)} = \emptyset$ and
\begin{align}
\mathcal{R}_{\Lambda}^{(n)} & =  \mathcal{G}_{1,2}^{(n)} \cup \big( \mathcal{G}^{(n)}_{1} \setminus  \mathcal{R}^{(n)}_{2}  \big). \nonumber
\end{align}
For $i \in [1,L]$, let $\Psi_{1,i}^{(V)} \triangleq \Psi_{i}^{(V)}$, $\Gamma_{1,i}^{(V)} \triangleq \Gamma_{i}^{(V)}$, $\bar{\Gamma}_{1,i}^{(V)} \triangleq \bar{\Gamma}_{i}^{(V)}$, and $\Psi_{p,i}^{(V)} = \Gamma_{p,i}^{(V)} = \bar{\Gamma}_{p,i}^{(V)} \triangleq \varnothing$ for $p \in [2,3]$. Also, we define $\bar{\Theta}_{1,i}^{(V)}$ as any part of $\bar{\Theta}_{i}^{(V)}$ with size $\big| \mathcal{G}^{(n)}_{2} \big| + \big(\big| \mathcal{G}^{(n)}_{0} \big| - \big| \mathcal{C}^{(n)}_{1,2} \big|  \big)$, $\bar{\Theta}_{2,i}^{(V)} = \varnothing$ and $\bar{\Theta}_{3,i}^{(V)}$ as the remaining part with size $\big| \mathcal{C}^{(n)}_{1} \big| - \big| \mathcal{G}^{(n)}_{2} \big| - \big(\big| \mathcal{G}^{(n)}_{0} \big| - \big| \mathcal{C}^{(n)}_{1,2} \big|  \big)$. According to \eqref{eq:pi2x}--\eqref{eq:delta2x}, for $i \in [1,L]$ we have $\Pi_{(2),i}^{(V)} = \tilde{A}_{i}\big[ \mathcal{I}^{(n)} \cap \mathcal{G}_2^{(n)} \big] = \varnothing$, $\Pi_{(1),i}^{(V)} = \tilde{A}_{i}\big[ \mathcal{I}^{(n)} \cap \mathcal{G}_1^{(n)} \big] = \varnothing$, $\Delta_{(1),i}^{(V)} = \bar{\Theta}_{3,i}^{(V)}$, and $\Delta_{(2),i}^{(V)} = \varnothing$. Since $\mathcal{I}^{(n)} = \emptyset$ then $\tilde{A}_{2:L-1}^n \big[ \mathcal{G}^{(n)} \big]$ does not carry confidential information $S_{(1),2:L-1}^{(V)}$. For $i \in [1,L-1]$, sequence $\Delta_{(1),i+1}^{(V)}$ will be repeated in outer-layer $\tilde{T}_{(1),i}^{n}$ associated to Receiver~1. 

\item \emph{Achievability of} $(R_{S_{(1)}}^{\star 2}, R_{S_{(2)}}^{\star 2}, R_{W_{(1)}}^{\star 2}, R_{W_{(2)}}^{\star 2}) \subset \mathfrak{R}_{\text{MI-WTBC}}^{(2)}$. 
Construction of $\tilde{A}_{1:L}\big[\mathcal{G}^{(n)}\big]$ is the same as that to achieve this rate tuple when $I(V;Z) \leq I(V;Y_{(1)})\leq I(V;Y_{(2)})$.
\end{enumerate}

\begin{figure}[h!]
\centering
\vspace*{-0cm}
\begin{overpic}[width=0.9\linewidth]{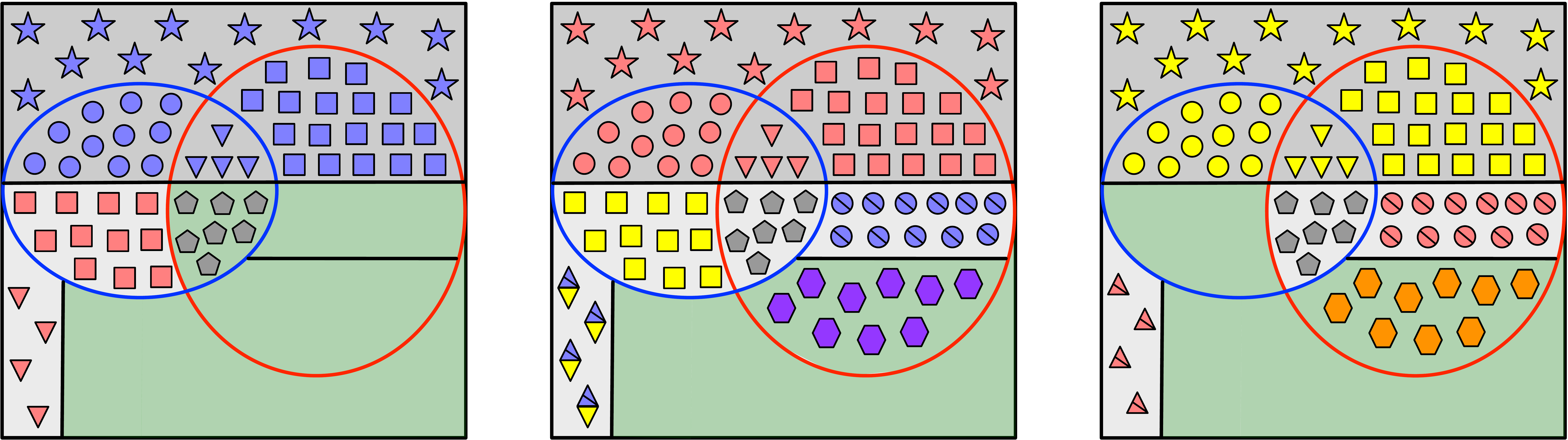}
\put (10.5,-3) {\small Block 1}
\put (45.75,-3) {\small Block 2}
\put (80.75,-3) {\small Block 3}
\end{overpic}
\vspace{0.5cm}
\caption{\setstretch{1.35} Case C when $\smash{ I(V;Z) \leq I(V;Y_{(1)}) \leq I(V;Y_{(2)})}$: inner-layer encoding to achieve the corner point of $\smash{\mathfrak{R}_{\text{MI-WTBC}}^{(2)}}$ that leads to the construction of $\smash{\tilde{A}_{1:L}[ \mathcal{H}_V^{(n)}]}$ when $L=3$. Consider the Block~2: $\smash{\mathcal{R}^{(n)}_{1}}$, $\smash{\mathcal{R}^{(n)}_{2}}$, $\smash{\mathcal{R}^{(n)}_{1,2}}$ and $\smash{\mathcal{R}^{(n)}_{\Lambda}}$ are those areas filled with yellow squares, blue circles, blue and yellow diamonds, and gray pentagons, respectively, and $\smash{\mathcal{I}^{(n)}}$ is the green filled area. At Block~$i \in [1,L]$, $\smash{W_{(2),i}^{(V)}}$ is represented by symbols of the same color (e.g., red symbols at Block~2), and $\smash{\Theta_i^{(V)}}$, $\smash{\Psi_i^{(V)}}$ and $\smash{\Gamma_i^{(V)}}$ are represented by squares, circles and triangles respectively. Also, $\smash{\bar{\Psi}_i^{(V)}}$ and $\smash{\bar{\Gamma}_i^{(V)}}$ are denoted by circles and triangles, respectively, with a line through them. At Block~$i \in [2,L-1]$, the diamonds denote $\smash{\bar{\Gamma}_{1,i-1}^{(V)} \oplus \Gamma_{1,i+1}^{(V)}}$. For $i \in [1,L-1]$, the elements of $\smash{\Theta_{i+1}^{(V)}}$ that do not belong to $\smash{\Theta_{1,i+1}^{(V)}}$ are not repeated in $\smash{\tilde{A}_i[\mathcal{G}^{(n)}]}$, but $\smash{\Delta_{(1),i+1}^{(V)} = \Theta_{3,i+1}^{(V)}}$ will be sent to the outer-layer $\smash{\tilde{T}_{(1),i}}$. For $i \in [1,L]$, confidential information $\smash{S_{(2),i}^{(V)}}$ is stored into those entries belonging to the green area. For $i \in [1,L-1]$, sequence $\smash{\Pi_{(1),i+1}^{(V)}}$ is denoted by hexagons, and it will be send also to $\smash{\tilde{T}_{(1),i}}$. At Block~1, $\smash{\Lambda_1^{(V)}}$ is denoted by gray pentagons and is replicated in all blocks. Finally, $\smash{\Upsilon_{(1)}^{(V)}}$ and $\smash{\Upsilon_{(2)}^{(V)}}$ are those entries inside the red curve at Block~1 and the blue curve at Block~$L$,~respectively.
}\label{fig:EncCasC1x} 
\end{figure}

\vspace{0.15cm}
\noindent \textbf{Case D when} $\bm{I(V;Z) \leq I(V;Y_{(1)}) \leq I(V;Y_{(2)})}$

In Case~D, we have $| \mathcal{G}^{(n)}_1 | < | \mathcal{C}^{(n)}_2 |$, $| \mathcal{G}^{(n)}_2 |  < | \mathcal{C}^{(n)}_1 |$ and $| \mathcal{G}^{(n)}_0 |  > | \mathcal{C}^{(n)}_{1,2} |$.

\begin{enumerate}
\item \emph{Achievability of} $(R_{S_{(1)}}^{\star 1}, R_{S_{(2)}}^{\star 1}, R_{W_{(1)}}^{\star 1}, R_{W_{(2)}}^{\star 1}) \subset \mathfrak{R}_{\text{MI-WTBC}}^{(1)}$. In this case, the construction of $\tilde{A}_{1:L}\big[ \mathcal{G}^{(n)} \big]$ is the same as the one described in \cite{alos2019polar}~(Section~IV.4, Case D). Since $| \mathcal{G}^{(n)}_1 | < | \mathcal{C}^{(n)}_2 |$, now for $i \in [2,L]$ only a part of $\Psi_{i-1}^{(V)}$ can be repeated entirely in $\tilde{A}_i[\mathcal{G}_1^{(n)}]$. Consequently, we define $\mathcal{R}^{(n)}_{2} \triangleq \mathcal{G}_{1}^{(n)}$,
\begin{align} 
\mathcal{R}^{(n)}_{1,2} & \triangleq \text{any subset of } \mathcal{G}^{(n)}_{0} \text{ with size } \big| \mathcal{C}^{(n)}_{1,2} \big|,  \nonumber \\
\mathcal{R}_{1,2}^{\prime (n)}  & \triangleq \text{any subset of } \mathcal{G}_0^{(n)} \setminus \mathcal{R}_{1,2}^{(n)} \text{ with size } \big| \mathcal{C}_{2}^{(n)} \big| - \big| \mathcal{G}_{1}^{(n)} \big|, \nonumber \\
\mathcal{R}^{(n)}_{1} & \triangleq \text{the union of } \mathcal{G}^{(n)}_{2} \text{ with any subset of }  \nonumber \\
& \quad \mathcal{G}^{(n)}_{0} \setminus  \big( \mathcal{R}_{1,2}^{(n)} \cup \mathcal{R}_{1,2}^{\prime (n)}  \big) \text{ with size } \big| \mathcal{C}^{(n)}_{1} \big| - \big| \mathcal{G}^{(n)}_{2} \big| -  \big( \big| \mathcal{C}^{(n)}_{2} \big|  - \big| \mathcal{G}^{(n)}_{1} \big|\big); \nonumber
\end{align}
and $\mathcal{R}^{\prime(n)}_{1} = \mathcal{R}^{\prime(n)}_{2} \triangleq \emptyset$. Hence, according to \eqref{eq:ASetRsx}--\eqref{eq:ASetLambdaxk}, we have $\mathcal{R}^{(n)}_{\text{S}}= \emptyset$ and
\begin{align}
\mathcal{I}^{(n)} & = \mathcal{G}^{(n)}_{0}  \setminus \big( \mathcal{R}^{(n)}_{1,2} \cup   \mathcal{R}^{\prime (n)}_{1,2} \big), \nonumber \\
\mathcal{R}_{\Lambda}^{(n)} & =  \mathcal{G}_{1,2}^{(n)}. \nonumber 
\end{align}
From condition \eqref{eq:assumpRate1Impl2x1}, all previous sets exist. For $i \in [1,L]$, let $\Gamma_{1,i}^{(V)} \triangleq \Gamma_{i}^{(V)}$, $\bar{\Gamma}_{1,i}^{(V)} \triangleq \bar{\Gamma}_{i}^{(V)}$ and $\bar{\Gamma}_{p,i}^{(V)} = \Gamma_{p,i}^{(V)} \triangleq \varnothing$ for $p \in [2,3]$. On the other hand, define $\Psi_{1,i}^{(V)}$ as any part $\Psi_{i}^{(V)}$ with size $\big| \mathcal{G}_{1}^{(n)} \big|$, $\Psi_{2,i}^{(V)}$ as the remaining part with size $\big| \mathcal{C}_{2}^{(n)} \big| - \big| \mathcal{G}_{1}^{(n)} \big|$, and $\Psi_{3,i}^{(V)} \triangleq \varnothing$; and $\bar{\Theta}_{1,i}^{(V)}$ as any part of $\bar{\Theta}_{i}^{(V)}$ with size $\big| \mathcal{C}^{(n)}_{1} \big| -  \big( \big| \mathcal{C}^{(n)}_{2} \big|  - \big| \mathcal{G}^{(n)}_{1} \big| \big)$, $\bar{\Theta}_{2,i}^{(V)}$ as the remaining part with size $\big| \mathcal{C}_{2}^{(n)} \big| - \big| \mathcal{G}_{1}^{(n)} \big|$, and $\bar{\Theta}_{3,i}^{(V)} \triangleq \varnothing$. Thus, from \eqref{eq:pi2x}--\eqref{eq:delta2x}, for $i \in [1,L]$ we have $\Pi_{(2),i}^{(V)} = \tilde{A}_{i}\big[ \mathcal{I}^{(n)} \cap \mathcal{G}_2^{(n)} \big] = \varnothing$, $\Pi_{(1),i}^{(V)} = \tilde{A}_{i}\big[ \mathcal{I}^{(n)} \cap \mathcal{G}_1^{(n)} \big] = \varnothing$, and $\Delta_{(1),i}^{(V)} = \Delta_{(2),i}^{(V)} = \varnothing$. 

According to Algorithm~\ref{alg:formAx}, the inner-layer carries confidential information $S_{(1),1:L}^{(V)}$ intended for Receiver~1. Also, instead of repeating $\Psi_{2,i-1}^{(V)}$ (the part of $\Psi_{i-1}^{(V)}$ that does not fit in $\tilde{A}_{i}^n \big[\mathcal{G}_{1}^{(n)}\big]$) in a specific part of $\tilde{A}_{i} \big[\mathcal{G}_{0}^{(n)} \big]$, the encoder stores $\Psi_{2,i-1}^{(V)} \oplus \bar{\Theta}_{2,i+1}^{(V)}$ into $\tilde{A}_{i} \big[\mathcal{R}_{1,2}^{\prime (n)} \big] \subseteq \tilde{A}_{i}\big[ \mathcal{G}_{0}^{(n)}\big]$, where $\bar{\Theta}_{2,i+1}^{(V)}$ denotes the elements of $\bar{\Theta}_{i+1}^{(V)}$ that do not fit in $\tilde{A}_{i} \big[\mathcal{G}_{2}^{(n)}\big]$. Finally, for $i \in [2,L]$, $\Lambda_{i-1}^{(V)}$ is repeated in $\tilde{A}_{i}[\mathcal{R}_{\Lambda}^{(n)}]$ and, hence, $\Lambda_{1}^{(V)}$ is replicated in all blocks. This particular encoding is graphically represented in~\cite{alos2019polar}~(Figure~5).

\item \emph{Achievability of} $(R_{S_{(1)}}^{\star 2}, R_{S_{(2)}}^{\star 2}, R_{W_{(1)}}^{\star 2}, R_{W_{(2)}}^{\star 2}) \subset \mathfrak{R}_{\text{MI-WTBC}}^{(2)}$. 
Define $\mathcal{R}^{(n)}_{2}$, $\mathcal{R}^{(n)}_{1,2}$, $\mathcal{R}^{\prime (n)}_{1}$, $\mathcal{R}^{\prime (n)}_{2}$ and $\mathcal{R}^{\prime (n)}_{1,2}$ as for the previous corner point, and define $\mathcal{R}^{(n)}_{1} \triangleq \mathcal{G}^{(n)}_{2}$. Thus, according to \eqref{eq:ASetRsx}, \eqref{eq:ASetIxbk} and \eqref{eq:ASetLambdbx}, we have $\mathcal{R}^{(n)}_{\text{S}} = \emptyset$ and
\begin{align}
\mathcal{I}^{(n)} & =  \mathcal{G}^{(n)}_{0} \setminus   \big( \mathcal{R}^{(n)}_{1,2}  \cup \mathcal{R}^{\prime (n)}_{1,2} \big), \nonumber \\
\mathcal{R}_{\Lambda}^{(n)} & =  \mathcal{G}_{1,2}^{(n)}. \nonumber
\end{align}
For $i \in [1,L]$, let $\Gamma_{1,i}^{(V)} \triangleq \Gamma_{i}^{(V)}$, $\bar{\Gamma}_{1,i}^{(V)} \triangleq \bar{\Gamma}_{i}^{(V)}$ and $\bar{\Gamma}_{p,i}^{(V)} = \Gamma_{p,i}^{(V)} \triangleq \varnothing$ for $p \in [2,3]$; and define $\bar{\Psi}_{1,i}^{(V)}$ as any part $\bar{\Psi}_{i}^{(V)}$ with size $\big| \mathcal{G}_{1}^{(n)} \big|$, $\bar{\Psi}_{2,i}^{(V)}$ as the remaining part with size $\big| \mathcal{C}_{2}^{(n)} \big| - \big| \mathcal{G}_{1}^{(n)} \big|$, and $\bar{\Psi}_{3,i}^{(V)} \triangleq \varnothing$. On the other hand, define $\Theta_{1,i}^{(V)}$ as any part of $\Theta_{i}^{(V)}$ with size $\big| \mathcal{G}^{(n)}_{2} \big|$, $\Theta_{2,i}^{(V)}$ as any part of $\Theta_{i}^{(V)}$ that is not included in $\Theta_{1,i}^{(V)}$ with size $\big| \mathcal{C}^{(n)}_{2} \big|  - \big| \mathcal{G}^{(n)}_{1} \big|$, and $\Theta_{3,i}^{(V)}$ as the remaining part of $\Theta_{i}^{(V)}$ with size $\big| \mathcal{C}^{(n)}_{1} \big| - \big| \mathcal{G}^{(n)}_{2} \big| -  \big( \big| \mathcal{C}^{(n)}_{2} \big|  - \big| \mathcal{G}^{(n)}_{1} \big|\big)$. Therefore, according to \eqref{eq:pi2x}--\eqref{eq:delta2x}, for $i \in [1,L]$ we have $\Pi_{(2),i}^{(V)} = \tilde{A}_{i}\big[ \mathcal{I}^{(n)} \cap \mathcal{G}_2^{(n)} \big] = \varnothing$, $\Pi_{(1),i}^{(V)} = \tilde{A}_{i}\big[ \mathcal{I}^{(n)} \cap \mathcal{G}_1^{(n)} \big] = \varnothing$, $\Delta_{(1),i}^{(V)} = \Theta_{3,i}^{(V)}$, and $\Delta_{(2),i}^{(V)} = \varnothing$. Now, the inner-layer carries confidential information $S_{(2),1:L}^{(V)}$ intended for Receiver~2. For $i \in [1,L-1]$, sequence $\Delta_{(1),i+1}^{(V)}$ will be repeated in outer-layer $\tilde{T}_{(1),i}^{n}$. This particular encoding procedure is graphically represented in Figure~\ref{fig:EncCasD1x}.
\end{enumerate}

\begin{figure}[h!]
\centering
\vspace*{-0cm}
\begin{overpic}[width=0.9\linewidth]{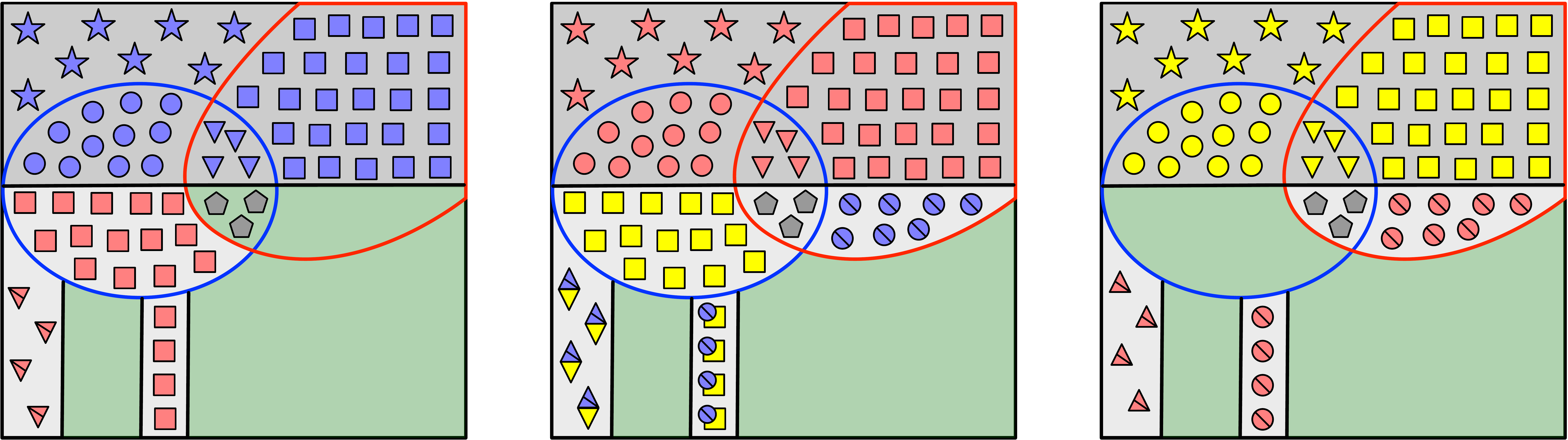}
\put (10.5,-3) {\small Block 1}
\put (45.75,-3) {\small Block 2}
\put (80.75,-3) {\small Block 3}
\end{overpic}
\vspace{0.5cm}
\caption{\setstretch{1.35} Case D when $\smash{ I(V;Z) \leq I(V;Y_{(1)}) \leq I(V;Y_{(2)})}$: inner-layer encoding to achieve the corner point of $\smash{\mathfrak{R}_{\text{MI-WTBC}}^{(2)}}$ that leads to the construction of $\smash{\tilde{A}_{1:L}[ \mathcal{H}_V^{(n)}]}$ when $L=3$. Consider Block~2: $\smash{\mathcal{R}^{(n)}_{1}}$, $\smash{\mathcal{R}^{(n)}_{2}}$, $\smash{\mathcal{R}^{(n)}_{1,2}}$, $\smash{\mathcal{R}^{\prime (n)}_{1,2}}$ and $\smash{\mathcal{R}^{(n)}_{\Lambda}}$ are those areas filled with yellow squares, blue circles, blue and yellow diamonds, yellow squares overlapped by blue circles, and gray pentagons, respectively, and $\smash{\mathcal{I}^{(n)}}$ is the green filled area. At Block~$i \in [1,L]$, $\smash{W_{(2),i}^{(V)}}$ is represented by symbols of the same color, and $\smash{\Theta_i^{(V)}}$, $\smash{\Psi_i^{(V)}}$ and $\smash{\Gamma_i^{(V)}}$ are represented by squares, circles and triangles respectively. Also, $\smash{\bar{\Psi}_i^{(V)}}$ and $\smash{\bar{\Gamma}_i^{(V)}}$ are denoted by circles and triangles, respectively, with a line through them. At Block~$i \in [2,L-1]$, the diamonds denote $\smash{\bar{\Gamma}_{1,i-1}^{(V)} \oplus \Gamma_{1,i+1}^{(V)}}$, while the yellow squares overlapped by blue circles denote $\smash{\bar{\Psi}_{2,i-1}^{(V)} \oplus \Theta_{2,i+1}^{(V)}}$. For $i \in [1,L-1]$, the elements of $\smash{\Theta_{i+1}^{(V)}}$ that are included neither in $\smash{\Theta_{1,i+1}^{(V)}}$ nor $\smash{\Theta_{2,i+1}^{(V)}}$ are not repeated in $\smash{\tilde{A}_i[\mathcal{G}^{(n)}]}$, but $\smash{\Delta_{(1),i+1}^{(V)} = \Theta_{3,i+1}^{(V)}}$ will be sent to outer-layer $\smash{\tilde{T}_{(1),i}}$. For $i \in [1,L]$, $\smash{S_{(2),i}^{(V)}}$ is stored into those entries belonging to the green area. At Block~1, sequence $\smash{\Lambda_1^{(V)}}$ is denoted by gray pentagons and is replicated in all blocks. Finally, $\smash{\Upsilon_{(1)}^{(V)}}$ and $\smash{\Upsilon_{(2)}^{(V)}}$ are those entries inside the red curve at Block~1 and the blue curve at Block~$L$,~respectively.
}\label{fig:EncCasD1x} 
\end{figure}

\vspace{0.15cm}
\noindent \textbf{Case D when} $\bm{I(V;Y_{(1)})  < I(V;Z) \leq I(V;Y_{(2)})}$

\begin{enumerate}
\item \emph{Achievability of} $(R_{S_{(1)}}^{\star 1}, R_{S_{(2)}}^{\star 1}, R_{W_{(1)}}^{\star 1}, R_{W_{(2)}}^{\star 1}) \subset \mathfrak{R}_{\text{MI-WTBC}}^{(1)}$. 
In this situation, according to~\eqref{eq:assumpRate1Impl2x2}, we have $\big| \mathcal{G}^{(n)}_{2} \big| - \big| \mathcal{C}^{(n)}_{1} \big| <  \big| \mathcal{C}^{(n)}_{1,2} \big| - \big| \mathcal{G}^{(n)}_{0} \big|$. Hence, for $i \in [1,L-1]$, $\bar{\Theta}_{i+1}^{(V)}$ cannot be repeated entirely in $\tilde{A}_i\big[ \mathcal{G}^{(n)}_{2} \cup \big( \mathcal{G}^{(n)}_{0} \setminus  \mathcal{R}_{1,2}^{(n)}  \big)  \big]$. Therefore, define $\mathcal{R}^{(n)}_{2} \triangleq \mathcal{G}^{(n)}_{1}$,
\begin{align} 
\mathcal{R}^{(n)}_{1,2} & \triangleq \text{any subset of } \mathcal{G}^{(n)}_{0} \text{ with size } \big| \mathcal{C}^{(n)}_{1,2} \big|, \nonumber \\
\mathcal{R}_{1,2}^{\prime (n)}  & \triangleq \text{any subset of } \mathcal{G}_0^{(n)} \setminus \mathcal{R}_{1,2}^{(n)} \text{ with size } \big| \mathcal{C}_{2}^{(n)} \big| - \big| \mathcal{G}_{1}^{(n)} \big|, \nonumber \\
\mathcal{R}^{(n)}_{1} & \triangleq \mathcal{G}^{(n)}_{2} \cup \big( \mathcal{G}^{(n)}_{0} \setminus \big( \mathcal{R}_{1,2}^{(n)} \cup \mathcal{R}_{1,2}^{\prime (n)}  \big) \big), \nonumber
\end{align}
and $\mathcal{R}^{\prime(n)}_{1}= \mathcal{R}^{\prime(n)}_{2} \triangleq \emptyset$. Then, from \eqref{eq:ASetRsx}--\eqref{eq:ASetLambdaxk}, $\mathcal{R}^{(n)}_{\text{S}} = \mathcal{I}^{(n)} = \emptyset$ and $\mathcal{R}_{\Lambda}^{(n)} =  \mathcal{G}_{1,2}^{(n)}$. For $i \in [1,L]$, let $\Gamma_{1,i}^{(V)} \triangleq \Gamma_{i}^{(V)}$, $\bar{\Gamma}_{1,i}^{(V)} \triangleq \bar{\Gamma}_{i}^{(V)}$ and $\bar{\Gamma}_{p,i}^{(V)} = \Gamma_{p,i}^{(V)} \triangleq \varnothing$ for $p \in [2,3]$; and define $\Psi_{1,i}^{(V)}$ as any part $\Psi_{i}^{(V)}$ with size $\big| \mathcal{G}_{1}^{(n)} \big|$, $\Psi_{2,i}^{(V)}$ as the remaining part with size $\big| \mathcal{C}_{2}^{(n)} \big| - \big| \mathcal{G}_{1}^{(n)} \big|$, and $\Psi_{3,i}^{(V)} \triangleq \varnothing$. On the other hand, we define $\bar{\Theta}_{1,i}^{(V)}$ as any part of $\bar{\Theta}_{i}^{(V)}$ with size $\big| \mathcal{G}^{(n)}_{2}\big| + \big| \mathcal{G}^{(n)}_{0}\big|  -  \big|  \mathcal{C}^{(n)}_{1,2} \big| -  \big( \big| \mathcal{C}^{(n)}_{2} \big| + \big|  \mathcal{G}^{(n)}_{1}\big|  \big)$, $\bar{\Theta}_{2,i}^{(V)}$ as any part of $\bar{\Theta}_{i}^{(V)}$ that is not included in $\bar{\Theta}_{1,i}^{(V)}$ with size $\big| \mathcal{C}^{(n)}_{2} \big| - \big|  \mathcal{G}^{(n)}_{1}\big|$, and $\bar{\Theta}_{3,i}^{(V)}$ as the remaining part with size $\big|\mathcal{C}^{(n)}_{1}\big| - \big( \big| \mathcal{G}^{(n)}_{2} \big| + \big|  \mathcal{G}^{(n)}_{0}\big| - \big|  \mathcal{C}^{(n)}_{1,2}\big| \big)$. According to \eqref{eq:pi2x}--\eqref{eq:delta2x}, for $i \in [1,L]$ we have $\Pi_{(2),i}^{(V)} = \tilde{A}_{i}\big[ \mathcal{I}^{(n)} \cap \mathcal{G}_2^{(n)} \big] = \varnothing$, $\Pi_{(1),i}^{(V)} = \tilde{A}_{i}\big[ \mathcal{I}^{(n)} \cap \mathcal{G}_1^{(n)} \big] = \varnothing$, $\Delta_{(1),i}^{(V)} = \bar{\Theta}_{3,i}^{(V)}$, and $\Delta_{(2),i}^{(V)} = \varnothing$. Since $\mathcal{I}^{(n)} = \emptyset$, $\tilde{A}_{2:L-1}^n \big[ \mathcal{G}^{(n)} \big]$ does not carry confidential information $S_{(1),2:L-1}^{(V)}$. For $i \in [1,L-1]$, sequence $\Delta_{(1),i+1}^{(V)}$ will be repeated in outer-layer $\tilde{T}_{(1),i}^{n}$. 

\item \emph{Achievability of} $(R_{S_{(1)}}^{\star 2}, R_{S_{(2)}}^{\star 2}, R_{W_{(1)}}^{\star 2}, R_{W_{(2)}}^{\star 2}) \subset \mathfrak{R}_{\text{MI-WTBC}}^{(2)}$. 
Construction of $\tilde{A}_{1:L}\big[\mathcal{G}^{(n)}\big]$ is the same as that to achieve this rate tuple when $I(V;Z) \leq I(V;Y_{(1)})\leq I(V;Y_{(2)})$.
\end{enumerate}

\vspace{0.15cm} 
\noindent \textbf{Case D when} $\bm{I(V;Y_{(1)})  \leq I(V;Y_{(2)}) < I(V;Z)}$

\begin{enumerate}
\item \emph{Achievability of} $(R_{S_{(1)}}^{\star 1}, R_{S_{(2)}}^{\star 1}, R_{W_{(1)}}^{\star 1}, R_{W_{(2)}}^{\star 1}) \subset \mathfrak{R}_{\text{MI-WTBC}}^{(1)}$. In this situation, from \eqref{eq:assumpRate1Impl2x3}, we have $\big| \mathcal{G}_0^{(n)} \big| - \big| \mathcal{C}_{1,2}^{(n)} \big| < \big| \mathcal{C}_1^{(n)} \big| - \big| \mathcal{G}_2^{(n)} \big|$ and $\big| \mathcal{G}_0^{(n)} \big| - \big| \mathcal{C}_{1,2}^{(n)} \big| < \big| \mathcal{C}_2^{(n)} \big| - \big| \mathcal{G}_1^{(n)} \big|$. Therefore:
\begin{align} 
\mathcal{R}^{(n)}_{1,2} & \triangleq \text{any subset of } \mathcal{G}^{(n)}_{0} \text{ with size } \big| \mathcal{C}^{(n)}_{1,2} \big|, \nonumber \\
\mathcal{R}_{1,2}^{\prime (n)}  & \triangleq \mathcal{G}_0^{(n)} \setminus \mathcal{R}_{1,2}^{(n)}, \nonumber
\end{align}
$\mathcal{R}^{(n)}_{1} \triangleq \mathcal{G}^{(n)}_{2}$, $\mathcal{R}^{(n)}_{2} \triangleq \mathcal{G}^{(n)}_{1}$ and $\mathcal{R}^{\prime(n)}_{1}= \mathcal{R}^{\prime(n)}_{2} \triangleq \emptyset$. Then, from \eqref{eq:ASetRsx}--\eqref{eq:ASetLambdaxk}, we have $\mathcal{R}^{(n)}_{\text{S}} = \mathcal{I}^{(n)} = \emptyset$ and $\mathcal{R}_{\Lambda}^{(n)} =  \mathcal{G}_{1,2}^{(n)}$. For $i \in [1,L]$, let $\Gamma_{1,i}^{(V)} \triangleq \Gamma_{i}^{(V)}$, $\bar{\Gamma}_{1,i}^{(V)} \triangleq \bar{\Gamma}_{i}^{(V)}$ and $\bar{\Gamma}_{p,i}^{(V)} = \Gamma_{p,i}^{(V)} \triangleq \varnothing$ for $p \in [2,3]$. Also, we define $\bar{\Theta}_{1,i}^{(V)}$ as any part of $\bar{\Theta}_{i}^{(V)}$ with size $\big| \mathcal{G}_2^{(n)} \big|$, $\bar{\Theta}_{2,i}^{(V)}$ as any part of $\bar{\Theta}_{i}^{(V)}$ that is not included in $\bar{\Theta}_{1,i}^{(V)}$ with size $\big| \mathcal{G}_0^{(n)} \big| - \big| \mathcal{C}_{1,2}^{(n)} \big|$, and $\bar{\Theta}_{3,i}^{(V)}$ as the remaining part of $\bar{\Theta}_{i}^{(V)}$ with size $\big| \mathcal{C}_1^{(n)} \big| - \big| \mathcal{G}_2^{(n)} \big| - \big( \big| \mathcal{G}_0^{(n)} \big| - \big| \mathcal{C}_{1,2}^{(n)} \big| \big)$. On the other hand, define $\Psi_{1,i}^{(V)}$ as any part of $\Psi_{i}^{(V)}$ with size $\big| \mathcal{G}_1^{(n)} \big|$, $\Psi_{2,i}^{(V)}$ as any part of $\Psi_{i}^{(V)}$ that is not included in $\Psi_{1,i}^{(V)}$ with size $\big| \mathcal{G}_0^{(n)} \big| - \big| \mathcal{C}_{1,2}^{(n)} \big|$, and $\Psi_{3,i}^{(V)}$ as the remaining part with size $\big| \mathcal{C}_2^{(n)} \big| - \big| \mathcal{G}_1^{(n)} \big| -  \big( \big| \mathcal{G}_0^{(n)} \big| - \big| \mathcal{C}_{1,2}^{(n)} \big| \big)$. According to \eqref{eq:pi2x}--\eqref{eq:delta2x}, for $i \in [1,L]$ we have $\Pi_{(2),i}^{(V)} = \tilde{A}_{i}\big[ \mathcal{I}^{(n)} \cap \mathcal{G}_2^{(n)} \big] = \varnothing$, $\Pi_{(1),i}^{(V)} = \tilde{A}_{i}\big[ \mathcal{I}^{(n)} \cap \mathcal{G}_1^{(n)} \big] = \varnothing$, $\Delta_{(1),i}^{(V)} = \bar{\Theta}_{3,i}^{(V)}$, and $\Delta_{(2),i}^{(V)} = \Psi_{3,i}^{(V)}$. The sequence $\Delta_{(1),i+1}^{(V)}$ ($i \in [1,L-1]$) will be repeated in $\tilde{T}_{(1),i}^n$, while $\Delta_{(2),i-1}^{(V)}$ ($i \in [2,L]$) will be stored in $\tilde{T}_{(2),i}^n$. Since $\mathcal{I}^{(n)} = \emptyset$, then $\tilde{A}_{2:L-1}^n \big[ \mathcal{G}^{(n)} \big]$ does not carry confidential information. This particular encoding is represented in Figure~\ref{fig:EncCasD2x}.

\item \emph{Achievability of} $(R_{S_{(1)}}^{\star 2}, R_{S_{(2)}}^{\star 2}, R_{W_{(1)}}^{\star 2}, R_{W_{(2)}}^{\star 2}) \subset \mathfrak{R}_{\text{MI-WTBC}}^{(2)}$.
Construction of $\tilde{A}_{1:L}\big[\mathcal{G}^{(n)}\big]$ is almost the same as that to achieve the previous corner point of region $\mathfrak{R}_{\text{MI-WTBC}}^{(1)}$: to approach this rate tuple the encoder repeats $\big[\bar{\Psi}_{i-1}^{(V)},\bar{\Gamma}_{i-1}^{(V)} \big]$ and $\big[\Theta_{i+1}^{(V)},\Gamma_{i+1}^{(V)} \big]$ in Block~$i$. 
\end{enumerate}

\begin{figure}[h!]
\centering
\vspace*{-0cm}
\begin{overpic}[width=0.9\linewidth]{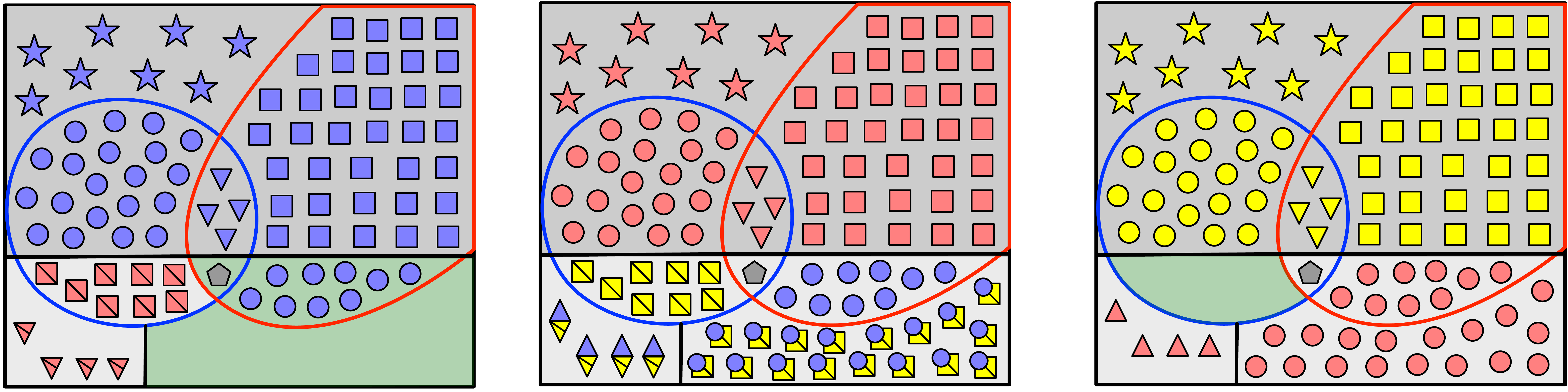}
\put (10.5,-3) {\small Block 1}
\put (45.75,-3) {\small Block 2}
\put (80.75,-3) {\small Block 3}
\end{overpic}
\vspace{0.5cm}
\caption{\setstretch{1.35}Case D when $\smash{ I(V;Y_{(1)}) \leq I(V;Y_{(2)}) < I(V;Z)}$: inner-layer encoding to achieve the corner point of $\smash{\mathfrak{R}_{\text{MI-WTBC}}^{(1)}}$ that leads to the construction of $\smash{\tilde{A}_{1:L}[ \mathcal{H}_V^{(n)}]}$ when $L=3$. Consider Block~2: the sets $\smash{\mathcal{R}^{(n)}_{1}}$, $\smash{\mathcal{R}^{(n)}_{2}}$, $\smash{\mathcal{R}^{(n)}_{1,2}}$, $\smash{\mathcal{R}^{\prime (n)}_{1,2}}$ and $\smash{\mathcal{R}^{(n)}_{\Lambda}}$ are those areas filled with yellow squares, blue circles, blue and yellow diamonds, yellow squares overlapped by blue circles, and gray pentagons, respectively. At Block~$i \in [1,L]$, $\smash{W_{(1),i}^{(V)}}$ is represented by symbols of the same color (red symbols at Block~2), and $\smash{\Theta_i^{(V)}}$, $\smash{\Psi_i^{(V)}}$ and $\smash{\Gamma_i^{(V)}}$ are represented by squares, circles and triangles respectively. Also, $\smash{\bar{\Theta}_i^{(V)}}$ and $\smash{\bar{\Gamma}_i^{(V)}}$ are denoted by squares and triangles, respectively, with a line through them. The diamonds at Block~$i \in [2,L-1]$ represent $\smash{\Gamma_{1,i-1}^{(V)} \oplus \bar{\Gamma}_{1,i+1}^{(V)}}$, while the squares overlapped by circles denote $\smash{\Psi_{2,i-1}^{(V)} \oplus \bar{\Theta}_{2,i+1}^{(V)}}$. For $i \in [2,L-1]$, $\smash{\tilde{A}_i [ \mathcal{G}^{(n)} ]}$ does not carry confidential information $\smash{S_{(1),i}^{(V)}}$, but only $\smash{\tilde{A}_1 [ \mathcal{G}^{(n)} ]}$ and $\smash{\tilde{A}_L [ \mathcal{G}^{(n)} ]}$ does (into the green area). The elements of $\smash{\Psi_{i-1}^{(V)}}$, or $\smash{\bar{\Theta}_{i+1}^{(V)}}$, which do not fit in $\smash{\tilde{A}_{i}[\mathcal{R}^{\prime (n)}_{1,2} \cup \mathcal{R}^{(n)}_2]}$, or $\smash{\tilde{A}_{i}[\mathcal{R}^{\prime (n)}_{1,2} \cup \mathcal{R}^{(n)}_1]}$, will be sent to $\smash{\tilde{T}_{(2),i}^n}$, or $\smash{\tilde{T}_{(1),i}^n}$, respectively. At Block~1, $\smash{\Lambda_1^{(V)}}$ is denoted by gray pentagons and is repeated in all blocks. Finally, sequences $\smash{\Upsilon_{(1)}^{(V)}}$ and $\smash{\Upsilon_{(2)}^{(V)}}$ are those entries inside the red curve at Block~1 and the blue curve at Block~$L$,~respectively.
}\label{fig:EncCasD2x} 
\end{figure}

\vspace{0.15cm}
\noindent \textbf{Case E when} $\bm{I(V;Y_{(1)}) <  I(V;Z) \leq I(V;Y_{(2)})}$

In Case~E, we have $| \mathcal{G}^{(n)}_1 | > | \mathcal{C}^{(n)}_2 |$, $| \mathcal{G}^{(n)}_2 |  < | \mathcal{C}^{(n)}_1 |$ and $| \mathcal{G}^{(n)}_0 |  < | \mathcal{C}^{(n)}_{1,2} |$.

\begin{enumerate}
\item \emph{Achievability of} $(R_{S_{(1)}}^{\star 1}, R_{S_{(2)}}^{\star 1}, R_{W_{(1)}}^{\star 1}, R_{W_{(2)}}^{\star 1}) \subset \mathfrak{R}_{\text{MI-WTBC}}^{(1)}$. 
In this case, we define 
\begin{align} 
\mathcal{R}^{(n)}_{1} & \triangleq \mathcal{G}^{(n)}_{2}, \nonumber \\
\mathcal{R}^{(n)}_{2} & \triangleq \text{any subset of } \mathcal{G}^{(n)}_{1} \text{ with size } \big| \mathcal{C}^{(n)}_{2} \big|, \nonumber \\
\mathcal{R}^{(n)}_{1,2} & \triangleq \mathcal{G}^{(n)}_{0}, \nonumber \\
\mathcal{R}^{\prime (n)}_{2} & \triangleq \text{any subset of } \mathcal{G}^{(n)}_1 \setminus \mathcal{R}^{(n)}_{2} \text{ with size } \big| \mathcal{C}^{(n)}_{1,2} \big| - \big| \mathcal{G}^{(n)}_{0} \big|, \nonumber
\end{align}
and $\mathcal{R}^{\prime (n)}_{1} = \mathcal{R}^{\prime (n)}_{1,2} \triangleq \varnothing$. Then, from \eqref{eq:ASetRsx}--\eqref{eq:ASetLambdaxk}, we have $\mathcal{R}^{(n)}_{\text{S}} = \mathcal{I}^{(n)} = \emptyset$ and
\begin{align}
\mathcal{R}_{\Lambda}^{(n)} =  \mathcal{G}_{1,2}^{(n)} \cup \big(\mathcal{G}_{1}^{(n)} \setminus \mathcal{R}_{2}^{(n)} \big). \nonumber
\end{align}
From condition \eqref{eq:assumpRate1Impl2x2}, all previous sets exist. For $i \in [1,L]$, we define $\smash{\Psi_{1,i}^{(V)} \triangleq \Psi_{i}^{(V)}}$ and $\smash{\Psi_{p,i}^{(V)}  \triangleq  \varnothing}$ for $p \in [2,3]$. Also, we define $\bar{\Theta}_{1,i}^{(V)}$ as any part of $\bar{\Theta}_{i}^{(V)}$ with size $\big| \mathcal{G}^{(n)}_{2} \big|$, $\bar{\Theta}_{2,i}^{(V)} \triangleq \varnothing$, and $\bar{\Theta}_{3,i}^{(V)}$ as the remaining part with size $\big| \mathcal{C}^{(n)}_{1} \big| - \big| \mathcal{G}^{(n)}_{2}\big|$. We define $\Gamma_{1,i}^{(V)}$ as any part of $\Gamma_{i}^{(V)}$ with size $\big| \mathcal{G}^{(n)}_{0} \big|$, $\Gamma_{2,i}^{(V)}$ as the remaining part with size $\big| \mathcal{C}^{(n)}_{1,2} \big| - \big| \mathcal{G}^{(n)}_{0}\big|$, and $\Gamma_{3,i}^{(V)} \triangleq \varnothing$. Finally, $\bar{\Gamma}_{1,i}^{(V)}$ is defined as any part of $\bar{\Gamma}_{i}^{(V)}$ with size $\big| \mathcal{G}^{(n)}_{0} \big|$, $\bar{\Gamma}_{2,i}^{(V)} \triangleq \varnothing$, and $\bar{\Gamma}_{3,i}^{(V)}$ as the remaining part of $\bar{\Gamma}_{i}^{(V)}$ with size $\big| \mathcal{C}^{(n)}_{1,2} \big| - \big| \mathcal{G}^{(n)}_{0}\big|$. According to \eqref{eq:pi2x}--\eqref{eq:delta2x}, for $i \in [1,L]$ we have $\Pi_{(2),i}^{(V)} = \tilde{A}_{i}\big[ \mathcal{I}^{(n)} \cap \mathcal{G}_2^{(n)} \big] = \varnothing$, $\Pi_{(1),i}^{(V)} = \tilde{A}_{i}\big[ \mathcal{I}^{(n)} \cap \mathcal{G}_1^{(n)} \big] = \varnothing$, $\Delta_{(1),i}^{(V)} = \big[ \bar{\Theta}_{3,i}^{(V)}, \bar{\Gamma}_{3,i}^{(V)} \big]$ with size $\big| \mathcal{C}^{(n)}_{1} \big| - \big| \mathcal{G}^{(n)}_{2}\big| + \big| \mathcal{C}^{(n)}_{1,2} \big| - \big| \mathcal{G}^{(n)}_{0}\big|$, and $\Delta_{(2),i}^{(V)} = \varnothing$. Since $\mathcal{I}^{(n)} = \emptyset$, $\tilde{A}_{2:L-1}^n \big[ \mathcal{G}^{(n)} \big]$ does not carry confidential information $S_{(1),2:L-1}^{(V)}$. For $i \in [1,L-1]$, sequence $\Delta_{(1),i+1}^{(V)}$ will be repeated in outer-layer $\tilde{T}_{(1),i}^{n}$. 

\vspace{-0.1cm}
\item \emph{Achievability of} $(R_{S_{(1)}}^{\star 2}, R_{S_{(2)}}^{\star 2}, R_{W_{(1)}}^{\star 2}, R_{W_{(2)}}^{\star 2}) \subset \mathfrak{R}_{\text{MI-WTBC}}^{(2)}$.
Construction of $\tilde{A}_{1:L}\big[\mathcal{G}^{(n)}\big]$ is similar to the previous one to achieve the previous corner point of region $\mathfrak{R}_{\text{MI-WTBC}}^{(1)}$. We define $\mathcal{R}^{(n)}_{1}$, $\mathcal{R}^{(n)}_{2}$, $\mathcal{R}^{(n)}_{1,2}$, $\mathcal{R}^{\prime (n)}_{1}$, $\mathcal{R}^{\prime (n)}_{2}$ and $\mathcal{R}^{\prime (n)}_{1,2}$ as for the previous corner point. Thus, according to \eqref{eq:ASetRsx}, \eqref{eq:ASetIxbk} and \eqref{eq:ASetLambdbx}, we have $\mathcal{R}^{(n)}_{\text{S}} = \emptyset$ and
\vspace*{-0.1cm}
\begin{align}
\mathcal{I}^{(n)} & =  \mathcal{G}^{(n)}_{1} \setminus   \big( \mathcal{R}^{(n)}_{2}  \cup \mathcal{R}^{\prime (n)}_{2} \big), \nonumber \\
\mathcal{R}_{\Lambda}^{(n)} & =  \mathcal{G}_{1,2}^{(n)}. \nonumber
\end{align}
Now, for $i \in [1,L]$ we define $\smash{\bar{\Psi}_{1,i}^{(V)} \triangleq \bar{\Psi}_{i}^{(V)}}$ and $\smash{\bar{\Psi}_{p,i}^{(V)}  \triangleq  \varnothing}$ for $p \in [2,3]$. Also, $\Theta_{1,i}^{(V)}$ is defined as any part of $\Theta_{i}^{(V)}$ with size $\big| \mathcal{G}^{(n)}_{2} \big|$, $\Theta_{2,i}^{(V)} \triangleq \varnothing$, and $\Theta_{3,i}^{(V)}$ as the remaining part with size $\big| \mathcal{C}^{(n)}_{1} \big| - \big| \mathcal{G}^{(n)}_{2}\big|$. We define $\bar{\Gamma}_{1,i}^{(V)}$ as any part of $\bar{\Gamma}_{i}^{(V)}$ with size $\big| \mathcal{G}^{(n)}_{0} \big|$, $\bar{\Gamma}_{2,i}^{(V)}$ as the remaining part with size $\big| \mathcal{C}^{(n)}_{1,2} \big| - \big| \mathcal{G}^{(n)}_{0}\big|$, and $\bar{\Gamma}_{3,i}^{(V)} \triangleq \varnothing$. Finally, we define $\Gamma_{1,i}^{(V)}$ as any part of $\Gamma_{i}^{(V)}$ with size $\big| \mathcal{G}^{(n)}_{0} \big|$, $\bar{\Gamma}_{2,i}^{(V)} \triangleq \varnothing$, and $\Gamma_{3,i}^{(V)}$ as the remaining part with size $\big| \mathcal{C}^{(n)}_{1,2} \big| - \big| \mathcal{G}^{(n)}_{0}\big|$. From \eqref{eq:pi2x}--\eqref{eq:delta2x}, for $i \in [1,L]$ we have \mbox{$\Pi_{(2),i}^{(V)} = \tilde{A}_{i}\big[ \mathcal{I}^{(n)} \cap \mathcal{G}_2^{(n)} \big] = \varnothing$}, \mbox{$\Pi_{(1),i}^{(V)} = \tilde{A}_{i}\big[ \mathcal{I}^{(n)} \cap \mathcal{G}_1^{(n)} \big]$} with size \mbox{$\big| \mathcal{G}^{(n)}_{1} \big| - \big| \mathcal{C}^{(n)}_{2}\big| - \big( \big| \mathcal{C}^{(n)}_{1,2} \big| - \big| \mathcal{G}^{(n)}_{0}\big| \big)$}, $\Delta_{(2),i}^{(V)} = \varnothing$ and the sequence $\Delta_{(1),i}^{(V)} = \big[ \Theta_{3,i}^{(V)}, \Gamma_{3,i}^{(V)} \big]$ with size \mbox{$\big| \mathcal{C}^{(n)}_{1} \big| - \big| \mathcal{G}^{(n)}_{2}\big| + \big| \mathcal{C}^{(n)}_{1,2} \big| - \big| \mathcal{G}^{(n)}_{0}\big|$}. According to Algorithm~\ref{alg:formAx}, now $\tilde{A}_{1:L}^n \big[ \mathcal{G}^{(n)} \big]$ carries confidential information $S_{(2),1:L}^{(V)}$. For $i \in [1,L-1]$, $\Delta_{(1),i+1}^{(V)}$ will be repeated in $\tilde{T}_{(1),i}^{n}$. 
\end{enumerate}

\vspace{0.15cm}
\noindent \textbf{Case E when} $\bm{I(V;Y_{(1)}) \leq I(V;Y_{(2)}) <  I(V;Z)}$

\begin{enumerate}
\item \emph{Achievability of} $(R_{S_{(1)}}^{\star 1}, R_{S_{(2)}}^{\star 1}, R_{W_{(1)}}^{\star 1}, R_{W_{(2)}}^{\star 1}) \subset \mathfrak{R}_{\text{MI-WTBC}}^{(1)}$. 
According to condition \eqref{eq:assumpRate1Impl2x3}, now we have $\big| \mathcal{G}_1^{(n)} \big| - \big| \mathcal{C}_2^{(n)} \big| < \big| \mathcal{C}_{1,2}^{(n)} \big| - \big| \mathcal{G}_0^{(n)} \big|$ and $\big| \mathcal{G}_2^{(n)} \big| - \big| \mathcal{C}_1^{(n)} \big| < \big| \mathcal{C}_{1,2}^{(n)} \big| - \big| \mathcal{G}_0^{(n)} \big|$. Now, all the elements of $\Gamma_{i-1}^{(V)}$, $i \in [2,L]$, cannot be repeated in $\tilde{A}_{i}\big[ \mathcal{G}_{1}^{(n)} \setminus \mathcal{R}_{2}^{(n)} \big]$ as before. Thus:
\begin{align} 
\mathcal{R}^{(n)}_{1} & \triangleq \mathcal{G}^{(n)}_{2}, \nonumber \\
\mathcal{R}^{(n)}_{2} & \triangleq \text{any subset of } \mathcal{G}^{(n)}_{1} \text{ with size } \big| \mathcal{C}^{(n)}_{2} \big|, \nonumber \\
\mathcal{R}^{(n)}_{1,2} & \triangleq \mathcal{G}^{(n)}_{0}, \nonumber \\
\mathcal{R}^{\prime (n)}_{2} & \triangleq \mathcal{G}^{(n)}_1 \setminus \mathcal{R}^{(n)}_{2}, \nonumber
\end{align}
and $\mathcal{R}^{\prime (n)}_{1} = \mathcal{R}^{\prime (n)}_{1,2} \triangleq \varnothing$. Then, from \eqref{eq:ASetRsx}--\eqref{eq:ASetLambdaxk}, we have $\mathcal{R}^{(n)}_{\text{S}} = \mathcal{I}^{(n)} = \emptyset$ and $\mathcal{R}_{\Lambda}^{(n)} =  \mathcal{G}_{1,2}^{(n)}$.
For $i \in [1,L]$ we define $\smash{\Psi_{1,i}^{(V)} \triangleq \Psi_{i}^{(V)}}$ and $\smash{\Psi_{p,i}^{(V)}  \triangleq  \varnothing}$ for $p \in [2,3]$. Also, $\bar{\Theta}_{1,i}^{(V)}$ is defined as any part of $\bar{\Theta}_{i}^{(V)}$ with size $\big| \mathcal{G}^{(n)}_{2} \big|$, $\bar{\Theta}_{2,i}^{(V)} \triangleq \varnothing$, and $\bar{\Theta}_{3,i}^{(V)}$ as the remaining part with size $\big| \mathcal{C}^{(n)}_{1} \big| - \big| \mathcal{G}^{(n)}_{2}\big|$. We define $\bar{\Gamma}_{1,i}^{(V)}$ as any part of $\bar{\Gamma}_{i}^{(V)}$ with size $\big| \mathcal{G}^{(n)}_{0} \big|$, $\bar{\Gamma}_{2,i}^{(V)} \triangleq \varnothing$, and $\bar{\Gamma}_{3,i}^{(V)}$ as the remaining part with size $\big| \mathcal{C}^{(n)}_{1,2} \big| - \big| \mathcal{G}^{(n)}_{0}\big|$. Finally, we define $\Gamma_{1,i}^{(V)}$ as any part of $\Gamma_{i}^{(V)}$ with size $\big| \mathcal{G}^{(n)}_{0} \big|$, $\Gamma_{2,i}^{(V)}$ as any part of $\Gamma_{i}^{(V)}$ that is not included in $\Gamma_{1,i}^{(V)}$ with size $\big| \mathcal{G}^{(n)}_{1} \big| - \big| \mathcal{C}^{(n)}_{2} \big|$, and $\Gamma_{3,i}^{(V)}$ as the remaining part with size $\big|\mathcal{C}^{(n)}_{1,2} \big| - \big| \mathcal{G}^{(n)}_{0} \big| - \big( \big| \mathcal{G}^{(n)}_{1} \big| - \big| \mathcal{C}^{(n)}_{2} \big| \big)$. According to \eqref{eq:pi2x}--\eqref{eq:delta2x}, for $i \in [1,L]$ we have $\Pi_{(2),i}^{(V)} = \tilde{A}_{i}\big[ \mathcal{I}^{(n)} \cap \mathcal{G}_2^{(n)} \big] = \varnothing$, \mbox{$\Pi_{(1),i}^{(V)} \! = \tilde{A}_{i}\big[ \mathcal{I}^{(n)} \cap \mathcal{G}_1^{(n)} \big] \! = \varnothing$}, $\Delta_{(1),i}^{(V)} \! = \big[ \bar{\Theta}_{3,i}^{(V)}, \bar{\Gamma}_{3,i}^{(V)} \big]$ with size \mbox{$\big| \mathcal{C}^{(n)}_{1} \big| - \big| \mathcal{G}^{(n)}_{2}\big| + \big| \mathcal{C}^{(n)}_{1,2} \big| - \big| \mathcal{G}^{(n)}_{0}\big|$}, and $\Delta_{(2),i}^{(V)} \! = \Gamma_{3,i}^{(V)}$ with size $\big|\mathcal{C}^{(n)}_{1,2} \big| - \big| \mathcal{G}^{(n)}_{0} \big| - \big| \mathcal{G}^{(n)}_{1} \big| + \big| \mathcal{C}^{(n)}_{2} \big|$. Since $\mathcal{I}^{(n)}=\emptyset$, $\tilde{A}_{2:L-1}^n$ does not carry confidential information $S_{(1),2:L-1}^{(V)}$. Lastly, $\Delta_{(1),i+1}^{(V)}$ ($i \in [1,L-1]$) will be repeated in $\tilde{T}_{(1),i}^{n}$, and $\Delta_{(2),i-1}^{(V)}$ ($i \in [2,L]$) will be repeated in $\tilde{T}_{(2),i}^{n}$.

\item \emph{Achievability of} $(R_{S_{(1)}}^{\star 2}, R_{S_{(2)}}^{\star 2}, R_{W_{(1)}}^{\star 2}, R_{W_{(2)}}^{\star 2}) \subset \mathfrak{R}_{\text{MI-WTBC}}^{(2)}$.
Construction of $\tilde{A}_{1:L}\big[\mathcal{G}^{(n)}\big]$ is exactly the same as the one to achieve the previous corner point of region $\mathfrak{R}_{\text{MI-WTBC}}^{(1)}$. 
\end{enumerate}

\vspace{0.15cm}
\textbf{Case F when} $\bm{I(V;Y_{(1)} \leq I(V;Y_{(2)})  <  I(V;Z)}$

In Case~F, we have $| \mathcal{G}^{(n)}_1 | < | \mathcal{C}^{(n)}_2 |$, $| \mathcal{G}^{(n)}_2 |  < | \mathcal{C}^{(n)}_1 |$ and $| \mathcal{G}^{(n)}_0 |  < | \mathcal{C}^{(n)}_{1,2} |$.

\begin{enumerate}
\item \emph{Achievability of} $(R_{S_{(1)}}^{\star 1}, R_{S_{(2)}}^{\star 1}, R_{W_{(1)}}^{\star 1}, R_{W_{(2)}}^{\star 1}) \subset \mathfrak{R}_{\text{MI-WTBC}}^{(1)}$. 
In this case, for $i \in [2,L]$ the \gls*{pcs} can repeat entirely neither $\Gamma_{i-1}^{(V)}$ in $\tilde{A}_{i}\big[ \mathcal{G}_{0}^{(n)} \big]$ nor $\Psi_{i-1}^{(V)}$ in $\tilde{A}_{i}\big[ \mathcal{G}_{1}^{(n)} \big]$. Also, for $i \in [1,L-1]$, it can repeat entirely neither $\bar{\Gamma}_{i+1}^{(V)}$ in $\tilde{A}_{i}\big[ \mathcal{G}_{0}^{(n)} \big]$ nor $\bar{\Theta}_{i+1}^{(V)}$ in $\tilde{A}_{i}\big[ \mathcal{G}_{2}^{(n)} \big]$. Thus, we define $\mathcal{R}_{1,2}^{(n)} \triangleq \mathcal{G}_0^{(n)}$, $\mathcal{R}_1^{(n)} \triangleq \mathcal{G}_2^{(n)}$, $\mathcal{R}_{2}^{(n)} \triangleq \mathcal{G}_1^{(n)}$, and \mbox{$\mathcal{R}_1^{\prime (n)} = \mathcal{R}_2^{\prime (n)} = \mathcal{R}_{1,2}^{\prime (n)} \triangleq \varnothing$}. For $i \in [1,L]$, we define $\smash{\Psi_{1,i}^{(V)}}$ as any part of  $\Psi_{i}^{(V)}$ with size $\big| \mathcal{G}_1^{(n)} \big|$, $\Psi_{1,i}^{(V)} \triangleq \varnothing$, and $\Psi_{3,i}^{(V)}$ as the remaining part with size $\big| \mathcal{C}_2^{(n)} \big| - \big| \mathcal{G}_1^{(n)} \big|$. Also, we define $\bar{\Theta}_{1,i}^{(V)}$ as any part of  $\bar{\Theta}_{i}^{(V)}$ with size $\big| \mathcal{G}_2^{(n)} \big|$, $\bar{\Theta}_{2,i}^{(V)} \triangleq \varnothing$, and $\bar{\Theta}_{3,i}^{(V)}$ as the remaining part with size $\big| \mathcal{C}_1^{(n)} \big| - \big| \mathcal{G}_2^{(n)} \big|$. Finally, we define $\Gamma_{1,i}^{(V)}$ and $\bar{\Gamma}_{1,i}^{(V)}$ as any part of $\Gamma_{i}^{(V)}$ and $\bar{\Gamma}_{i}^{(V)}$, respectively, with size $\big| \mathcal{G}^{(n)}_{0} \big|$, $\Gamma_{2,i}^{(V)} = \bar{\Gamma}_{2,i}^{(V)} \triangleq \varnothing$, and $\Gamma_{3,i}^{(V)}$ and $\bar{\Gamma}_{3,i}^{(V)}$ as the remaining parts with size $\big| \mathcal{C}^{(n)}_{1,2} \big| - \big| \mathcal{G}^{(n)}_{0} \big|$. According to \eqref{eq:pi2x}--\eqref{eq:delta2x}, for $i \in [1,L]$ we have $\Pi_{(2),i}^{(V)} = \tilde{A}_{i}\big[ \mathcal{I}^{(n)} \cap \mathcal{G}_2^{(n)} \big] = \varnothing$, \mbox{$\Pi_{(1),i}^{(V)} \! = \tilde{A}_{i}\big[ \mathcal{I}^{(n)} \cap \mathcal{G}_1^{(n)} \big] \! = \varnothing$}, $\Delta_{(1),i}^{(V)} \! = \big[ \bar{\Theta}_{3,i}^{(V)}, \bar{\Gamma}_{3,i}^{(V)} \big]$ with size \mbox{$\big| \mathcal{C}^{(n)}_{1} \big| - \big| \mathcal{G}^{(n)}_{2}\big| + \big| \mathcal{C}^{(n)}_{1,2} \big| - \big| \mathcal{G}^{(n)}_{0}\big|$}, and $\Delta_{(2),i}^{(V)} \! = \big[\Psi_{3,i}^{(V)}, \Gamma_{3,i}^{(V)} \big]$ with size $\big| \mathcal{C}^{(n)}_{2} \big| - \big| \mathcal{G}^{(n)}_{1} \big| + \big|\mathcal{C}^{(n)}_{1,2} \big| - \big| \mathcal{G}^{(n)}_{0} \big|$. Since $\mathcal{I}^{(n)}=\emptyset$, $\tilde{A}_{2:L-1}^n$ does not carry confidential information $S_{(1),2:L-1}^{(V)}$. Lastly, $\Delta_{(1),i+1}^{(V)}$ ($i \in [1,L-1]$) will be repeated in $\tilde{T}_{(1),i}^{n}$, and $\Delta_{(2),i-1}^{(V)}$ ($i \in [2,L]$) will be repeated in $\tilde{T}_{(2),i}^{n}$.

\item \emph{Achievability of} $(R_{S_{(1)}}^{\star 2}, R_{S_{(2)}}^{\star 2}, R_{W_{(1)}}^{\star 2}, R_{W_{(2)}}^{\star 2}) \subset \mathfrak{R}_{\text{MI-WTBC}}^{(2)}$.
Construction of $\tilde{A}_{1:L}\big[\mathcal{G}^{(n)}\big]$ is almost the same as that to achieve the previous corner point of region $\mathfrak{R}_{\text{MI-WTBC}}^{(1)}$: to approach this rate tuple the encoder repeats $\big[\bar{\Psi}_{i-1}^{(V)},\bar{\Gamma}_{i-1}^{(V)} \big]$ and $\big[\Theta_{i+1}^{(V)},\Gamma_{i+1}^{(V)} \big]$ in Block~$i$.
\end{enumerate}

\noindent \textbf{Summary of the construction of} $\bm{\tilde{A}_{1:L}\big[\mathcal{G}^{(n)}\big]}$

From \eqref{eq:ASetRsx}--\eqref{eq:delta2x}; and from the definition of $\mathcal{R}_{1}^{(n)}$, $\mathcal{R}_{2}^{(n)}$, $\mathcal{R}_{1,2}^{(n)}$, $\mathcal{R}_{1}^{\prime (n)}$, $\mathcal{R}_{2}^{\prime (n)}$ and $\mathcal{R}_{1,2}^{\prime (n)}$, and $\Psi_{p,i}^{(V)}$, $\bar{\Psi}_{p,i}^{(V)}$, $\Gamma_{p,i}^{(V)}$, $\bar{\Gamma}_{p,i}^{(V)}$, $\Theta_{p,i}^{(V)}$ and $\bar{\Theta}_{p,i}^{(V)}$ for $p \in [1,3]$ and $i \in [1,L]$ in each case, we have:
\begin{enumerate}
\item when $I(V;Z) \leq I(V;Y_{(1)}) \leq I(V;Y_{(2)})$,
\begin{itemize}
\item if the \gls*{pcs} operates to achieve $(R_{S_{(1)}}^{\star 1}, R_{S_{(2)}}^{\star 1}, R_{W_{(1)}}^{\star 1}, R_{W_{(2)}}^{\star 1}) \subset \mathfrak{R}_{\text{MI-WTBC}}^{(1)}$:
\begin{itemize}
\item the inner-layer carries $S_{(1),1:L}^{(V)}$, and $\big| \mathcal{I}^{(n)} \big| = \big| \mathcal{G}_0^{(n)} \big| + \big| \mathcal{G}_2^{(n)} \big| - \big| \mathcal{C}_1^{(n)} \big| - \big| \mathcal{C}_{1,2}^{(n)} \big|$;
\item we have $\Pi_{(1),i}^{(V)}= \varnothing$ and $\Delta_{(1),i}^{(V)} = \varnothing$;
\item we have $\Delta_{(2),i}^{(V)} = \varnothing$.
\end{itemize}
\item if the \gls*{pcs} operates to achieve $(R_{S_{(1)}}^{\star 2}, R_{S_{(2)}}^{\star 2}, R_{W_{(1)}}^{\star 2}, R_{W_{(2)}}^{\star 2}) \subset \mathfrak{R}_{\text{MI-WTBC}}^{(2)}$:
\begin{itemize}
\item the inner-layer carries $S_{(2),1:L}^{(V)}$, and $\big| \mathcal{I}^{(n)} \big| = \big| \mathcal{G}_0^{(n)} \big| + \big| \mathcal{G}_1^{(n)} \big| - \big| \mathcal{C}_2^{(n)} \big| - \big| \mathcal{C}_{1,2}^{(n)} \big|$;
\item the overall size of $\big[ \Pi_{(1),i}^{(V)},\Delta_{(1),i}^{(V)} \big]$ is $\big| \mathcal{G}_1^{(n)} \big| + \big| \mathcal{C}_1^{(n)} \big| - \big| \mathcal{G}_2^{(n)} \big| - \big| \mathcal{C}_2^{(n)} \big|$;
\item we have $\Delta_{(2),i}^{(V)} = \varnothing$.
\end{itemize}
\end{itemize}
\item when $ I(V;Y_{(1)}) < I(V;Z) \leq I(V;Y_{(2)})$,
\begin{itemize}
\item if the \gls*{pcs} operates to achieve $(R_{S_{(1)}}^{\star 1}, R_{S_{(2)}}^{\star 1}, R_{W_{(1)}}^{\star 1}, R_{W_{(2)}}^{\star 1}) \subset \mathfrak{R}_{\text{MI-WTBC}}^{(1)}$:
\begin{itemize}
\item the inner-layer carries $S_{(1),1:L}^{(V)}$, but $\big| \mathcal{I}^{(n)} \big| = 0$;
\item we have $\Pi_{(1),i}^{(V)}= \varnothing$, and the size of $\Delta_{(1),i}^{(V)}$ is $\big| \mathcal{C}_1^{(n)} \big| + \big| \mathcal{C}_{1,2}^{(n)} \big| - \big| \mathcal{G}_0^{(n)} \big| - \big| \mathcal{G}_2^{(n)} \big|$;
\item we have $\Delta_{(2),i}^{(V)} = \varnothing$.
\end{itemize}
\item if the \gls*{pcs} operates to achieve $(R_{S_{(1)}}^{\star 2}, R_{S_{(2)}}^{\star 2}, R_{W_{(1)}}^{\star 2}, R_{W_{(2)}}^{\star 2}) \subset \mathfrak{R}_{\text{MI-WTBC}}^{(2)}$:
\begin{itemize}
\item the inner-layer carries $S_{(2),1:L}^{(V)}$, and $\big| \mathcal{I}^{(n)} \big| = \big| \mathcal{G}_0^{(n)} \big| + \big| \mathcal{G}_1^{(n)} \big| - \big| \mathcal{C}_2^{(n)} \big| - \big| \mathcal{C}_{1,2}^{(n)} \big|$;
\item the overall size of $\big[ \Pi_{(1),i}^{(V)},\Delta_{(1),i}^{(V)} \big]$ is $\big| \mathcal{G}_1^{(n)} \big| + \big| \mathcal{C}_1^{(n)} \big| - \big| \mathcal{G}_2^{(n)} \big| - \big| \mathcal{C}_2^{(n)} \big|$;
\item we have $\Delta_{(2),i}^{(V)}= \varnothing$.
\end{itemize}
\end{itemize}
\item when $ I(V;Y_{(1)}) \leq I(V;Y_{(2)}) < I(V;Z)$,
\begin{itemize}
\item if the \gls*{pcs} operates to achieve $(R_{S_{(1)}}^{\star 1}, R_{S_{(2)}}^{\star 1}, R_{W_{(1)}}^{\star 1}, R_{W_{(2)}}^{\star 1}) \subset \mathfrak{R}_{\text{MI-WTBC}}^{(1)}$:
\begin{itemize}
\item the inner-layer carries $S_{(1),1:L}^{(V)}$, but $\big| \mathcal{I}^{(n)} \big| = 0$;
\item we have $\Pi_{(1),i}^{(V)}= \varnothing$, and the size of $\Delta_{(1),i}^{(V)}$ is $\big| \mathcal{C}_1^{(n)} \big| + \big| \mathcal{C}_{1,2}^{(n)} \big| - \big| \mathcal{G}_0^{(n)} \big| - \big| \mathcal{G}_2^{(n)} \big|$;
\item the length of $\Delta_{(2),i}^{(V)}$ is $\big| \mathcal{C}_2^{(n)} \big| + \big| \mathcal{C}_{1,2}^{(n)} \big| - \big| \mathcal{G}_0^{(n)} \big| - \big| \mathcal{G}_1^{(n)} \big|$.
\end{itemize}
\item if the \gls*{pcs} operates to achieve $(R_{S_{(1)}}^{\star 2}, R_{S_{(2)}}^{\star 2}, R_{W_{(1)}}^{\star 2}, R_{W_{(2)}}^{\star 2}) \subset \mathfrak{R}_{\text{MI-WTBC}}^{(2)}$:
\begin{itemize}
\item the inner-layer carries $S_{(2),1:L}^{(V)}$, but $\big| \mathcal{I}^{(n)} \big| = 0$;
\item we have $\Pi_{(1),i}^{(V)}= \varnothing$, and the size of $\Delta_{(1),i}^{(V)}$ is $\big| \mathcal{C}_1^{(n)} \big| + \big| \mathcal{C}_{1,2}^{(n)} \big| - \big| \mathcal{G}_0^{(n)} \big| - \big| \mathcal{G}_2^{(n)} \big|$;
\item the length of $\Delta_{(2),i}^{(V)}$ is $\big| \mathcal{C}_2^{(n)} \big| + \big| \mathcal{C}_{1,2}^{(n)} \big| - \big| \mathcal{G}_0^{(n)} \big| - \big| \mathcal{G}_1^{(n)} \big|$.
\end{itemize}
\end{itemize}
\end{enumerate}

\subsection{Construction of the outer-layers}\label{sec:PCSx1ol}

Consider that the \gls*{pcs} must achieve $(R_{S_{(1)}}^{\star k}, R_{S_{(2)}}^{\star k}, R_{W_{(1)}}^{\star k}, R_{W_{(2)}}^{\star k}) \! \subseteq \! \mathfrak{R}_{\text{MI-WTBC}}^{(k)}$, where~\mbox{$k \in [1,2]$}. In order to achieve this corner point, for $i \in [1,L]$ the \gls*{pcs} first constructs $\tilde{T}_{(k),i}^n$ associated to Receiver~$k$, and then forms $\tilde{T}_{(\bar{k}),i}^n$ associated to Receiver $\bar{k}$, where recall that $\bar{k} = [1,2]\setminus k$.

\begin{enumerate}
\item New sets associated to ${T}_{(k),1:L}^n$.
The sets $\mathcal{H}_{U_{(k)}|V}^{(n)}$, $\mathcal{L}_{U_{(k)}|V}^{(n)}$, $\mathcal{H}_{U_{(k)}|VZ}^{(n)}$ and $\mathcal{L}_{U_{(k)}|VY_{(k)}}^{(n)}$ associated to outer-layer $T_{(k)}^n = U_{(k)}^n G_n$ are defined as in \eqref{eq:HUVx}--\eqref{eq:LUVY_kx}. Besides the previous sets, define the following partition of $\mathcal{H}_{U_{(k)}|V}^{(n)}$:
\begin{align}
\mathcal{F}_0^{(n)} & \triangleq \mathcal{H}_{U_{(k)}|VZ}^{(n)} \cap \mathcal{L}_{U_{(k)}|VY_{(k)}}^{(n)}, \label{eq:f0} \\
\mathcal{F}_k^{(n)} & \triangleq \mathcal{H}_{U_{(k)}|VZ}^{(n)} \setminus \mathcal{L}_{U_{(k)}|VY_{(k)}}^{(n)}, \label{eq:fk} \\
\mathcal{J}_0^{(n)} & \triangleq \mathcal{H}_{U_{(k)}|V}^{(n)} \cap \big( \mathcal{H}_{U_{(k)}|VZ}^{(n)} \big)^{\text{C}} \cap \mathcal{L}_{U_{(k)}|VY_{(k)}}^{(n)}, \label{eq:j0} \\
\mathcal{J}_k^{(n)} & \triangleq \mathcal{H}_{U_{(k)}|V}^{(n)} \cap \big( \mathcal{H}_{U_{(k)}|VZ}^{(n)} \big)^{\text{C}} \setminus \mathcal{L}_{U_{(k)}|VY_{(k)}}^{(n)}. \label{eq:jk}
\end{align}
For $i \in [1,L]$, $\tilde{T}_{(k),i}\big[\mathcal{H}_{U_{(k)}|V}^{(n)}\big]$ will be suitable for storing uniformly distributed random sequences that are independent of $\tilde{V}_i^n$, and $\tilde{T}_{(k),i}\big[ \mathcal{F}_0^{(n)} \cup \mathcal{F}_k^{(n)} \big] = \tilde{T}_{(k),i}\big[\mathcal{H}_{U_{(k)}|VZ}^{(n)} \big]$ is suitable for storing information to be secured from the eavesdropper. Moreover, $\tilde{T}_{(k),i}\big[ \mathcal{F}_k^{(n)} \cup \mathcal{J}_k^{(n)} \big] = \tilde{T}_{(k),i}\big[\mathcal{H}_{U_{(k)}|V}^{(n)} \setminus \mathcal{L}_{U_{(k)}|VY_{(k)}}^{(n)} \big]$ is the uniformly distributed part independent of $\tilde{V}_i^n$ that is needed by Receiver~$k$ to reliably reconstruct $\tilde{T}_{(k),i}^n$ from observations~$\tilde{Y}_{(k),i}^n$ and sequence $\tilde{V}_{i}^n$ by performing \gls*{sc} decoding. 

\vspace{-0.1cm}
We consider that $I(U_{(k)};Y_{(k)}|V) \geq I(U_{(k)};Z|V)$ (see Remark~\ref{remark:assumptionU} and Remark~\ref{remark:assumptionU2}). Therefore, besides the partition defined in \eqref{eq:f0}--\eqref{eq:jk}, we define
\begin{align}
\mathcal{D}_{k}^{(n)} & \triangleq \! \text{ any subset of } \mathcal{F}_{0}^{(n)} \text{ with size } \big| \mathcal{J}_{k}^{(n)} \big|, \label{eq:dk} \\
\mathcal{L}_{k}^{(n)} & \triangleq \! \text{ any subset of } \mathcal{F}_{0}^{(n)} \setminus \mathcal{D}_{k}^{(n)} \text{ with size } \big\{ \big| \mathcal{C}_{k}^{(n)} \big| + \big| \mathcal{C}_{1,2}^{(n)} \big| - \big| \mathcal{G}_0^{(n)} \big| - \big| \mathcal{G}_{\bar{k}}^{(n)} \big| \big\}^{\! +}. \label{eq:lk}
\end{align}
The set $\mathcal{D}_{k}^{(n)}$ exists because we have
\begin{align}
\big| \mathcal{F}_{0}^{(n)} \big| - \big| \mathcal{J}_{k}^{(n)} \big| & = \Big| \mathcal{H}_{U_{(k)}|VZ}^{(n)} \cap \mathcal{L}_{U_{(k)}|VY_{(k)}}^{(n)} \Big| - \Big| \mathcal{H}_{U_{(k)}|V}^{(n)} \cap \big( \mathcal{H}_{U_{(k)}|VZ}^{(n)} \big)^{\text{C}} \setminus \mathcal{L}_{U_{(k)}|VY_{(k)}}^{(n)} \Big| \nonumber \\
& \geq \Big| \mathcal{H}_{U_{(k)}|VZ}^{(n)} \cap \mathcal{L}_{U_{(k)}|VY_{(k)}}^{(n)} \Big| - \Big|  \big( \mathcal{H}_{U_{(k)}|VZ}^{(n)} \big)^{\text{C}} \setminus \mathcal{L}_{U_{(k)}|VY_{(k)}}^{(n)} \Big|  \nonumber \\
& = \Big| \mathcal{H}_{U_{(k)}|VZ}^{(n)}  \Big| - \Big|  \big( \mathcal{L}_{U_{(k)}|VY_{(k)}}^{(n)} \big)^{\text{C}} \Big| \geq 0 , \label{eq:xnose}
\end{align} 
where the positivity holds by assumption and from applying source polarization \cite{arikan2010source} because
\begin{align}
\frac{1}{n} \Big( \Big| \mathcal{H}_{U_{(k)}|VZ}^{(n)}  \Big| -  \Big|  \big( \mathcal{L}_{U_{(k)}|VY_{(k)}}^{(n)} \big)^{\text{C}} \Big| \Big) \xrightarrow{n \rightarrow \infty} H(U_{(k)}|VZ) - H(U_{(k)}|VY_{(k)}). \label{eq:Dtheta}
\end{align}
On the other hand, according to \eqref{eq:assumpRate1Impl2x1}--\eqref{eq:assumpRate1Impl2x3}, if $k \! = 1$ and $\bar{k} =2$ then $\mathcal{L}_{k}^{(n)} = \emptyset$ in Situation~1, where $I(V;Z) \leq I (V;Y_{(1)}) \leq I(V;Y_{(2)})$; and $\mathcal{L}_{k}^{(n)} \neq \emptyset$ in Situation~2, where $I (V;Y_{(1)}) \! < \! I(V;Z) \! \leq \! I(V;Y_{(2)})$, and Situation~3, where $I (V;Y_{(1)}) \! \leq I(V;Y_{(2)}) \! < I(V;Z)$. Otherwise, if $k = 2$ and $\bar{k}=1$, then we have $\mathcal{L}_{k}^{(n)} = \emptyset$ in Situation 1 and  Situation 2, while $\mathcal{L}_{k}^{(n)} \neq \emptyset$ in Situation~3.
In situations where $\mathcal{L}_{k}^{(n)} \neq \emptyset$, if we consider only input distributions that imply $(R_{S_{(1)}}^{\star k}, R_{S_{(2)}}^{\star k}, R_{W_{(1)}}^{\star k}, R_{W_{(2)}}^{\star k}) \in \mathbb{R}^{4}_{+}$, set $\mathcal{L}_{k}^{(n)}$ exists because for $i \in [1,L]$ the entries $\tilde{A}_i(j)$ such that $j \in \mathcal{F}_{0}^{(n)} \setminus \big( \mathcal{D}_{k}^{(n)} \cup \mathcal{J}_{k}^{(n)} \big)$ will be intended for storing $S_{(k)}$, and the rate of $S_{(k)}$ carried in the inner-layer is is negligible (see Section~\ref{sec:performance_ratesx}).  

\vspace{-0.1cm}
\item New sets associated to ${T}_{(\bar{k}),1:L}^n$.
Sets $\smash{\mathcal{H}_{U_{(\bar{k})}|VU_{(k)}}^{(n)}}$, $\smash{\mathcal{L}_{U_{(\bar{k})}|VU_{(k)}}^{(n)}}$, $\smash{\mathcal{H}_{U_{(\bar{k})}|V U_{(k)} Z}^{(n)}}$ and $\smash{\mathcal{L}_{U_{(\bar{k})}|V Y_{(k)}}^{(n)}}$ associated to $\smash{T_{(\bar{k})}^n = U_{(\bar{k})}^n G_n}$ are defined in \eqref{eq:HUVUx}--\eqref{eq:LUVUY_kx}. Besides the previous sets, define:
\begin{align}
\mathcal{Q}_0^{(n)} & \triangleq \mathcal{H}_{U_{(\bar{k})}|VU_{(k)}Z}^{(n)} \cap \mathcal{L}_{U_{(\bar{k})}|VY_{(\bar{k})}}^{(n)}, \label{eq:w0} \\
\mathcal{Q}_{\bar{k}}^{(n)} & \triangleq \mathcal{H}_{U_{(\bar{k})}|VU_{(k)}Z}^{(n)} \setminus \mathcal{L}_{U_{(\bar{k})}|VY_{(\bar{k})}}^{(n)}, \label{eq:wk} \\
\mathcal{B}_0^{(n)} & \triangleq \mathcal{H}_{U_{(\bar{k})}|VU_{(k)}}^{(n)} \cap \big( \mathcal{H}_{U_{(\bar{k})}|VU_{(k)}Z}^{(n)} \big)^{\text{C}} \cap \mathcal{L}_{U_{(\bar{k})}|VY_{(\bar{k})}}^{(n)}, \label{eq:b0} \\
\mathcal{B}_{\bar{k}}^{(n)} & \triangleq \mathcal{H}_{U_{(\bar{k})}|VU_{(k)}}^{(n)} \cap \big(\mathcal{H}_{U_{(\bar{k})}|VU_{(k)}Z}^{(n)} \big)^{\text{C}} \setminus \mathcal{L}_{U_{(\bar{k})}|V Y_{(\bar{k})}}^{(n)}. \label{eq:bk}
\end{align}
For $i \in [1,L]$, the entries of $\tilde{T}_{(\bar{k}),i}\big[\mathcal{H}_{U_{(\bar{k})}|VU_{(k)}}^{(n)}\big]$ will be suitable for storing uniformly distributed random sequences that are independent of $(\tilde{V}_i^n,\tilde{U}_{(k),i}^n)$, and $\smash{\tilde{T}_{(\bar{k}),i}\big[\mathcal{Q}_{0}^{(n)} \cup \mathcal{Q}_{\bar{k}}^{(n)}\big]}$ will be suitable for storing information to be secured from the eavesdropper. Also,  the elements of $\tilde{T}_{(\bar{k}),i}\big[ \mathcal{B}_{\bar{k}}^{(n)} \cup \mathcal{Q}_{\bar{k}}^{(n)} \big]$ are required by Receiver~$\bar{k}$ to reliably construct the entire sequence $\tilde{T}_{(\bar{k}),i}^n$ from $(\tilde{V}_{i}^n,\tilde{Y}_{(\bar{k}),i}^n)$ by using \gls*{sc} decoding. Additionally, define
\vspace*{-0.1cm}
\begin{align}
\mathcal{O}_{\bar{k}}^{(n)} & \triangleq \text{ any subset of } \mathcal{Q}_0^{(n)} \text{ with size } \Big| \big( \mathcal{H}_{U_{(\bar{k})}|VU_{(k)}}^{(n)} \big) ^{\text{C}} \cap  \mathcal{H}_{U_{(\bar{k})}|V}^{(n)} \setminus \mathcal{L}_{U_{(\bar{k})}|VY_{(\bar{k})}}^{(n)} \Big|, \label{eq:ok} \\
\mathcal{N}_{\bar{k}}^{(n)} & \triangleq \text{ any subset of } \mathcal{Q}_0^{(n)} \setminus \mathcal{O}_{\bar{k}}^{(n)}  \text{ with size } \big| \mathcal{B}_{\bar{k}}^{(n)} \big|, \label{eq:qk}
\end{align}
\vspace*{-0.1cm}
and $\mathcal{M}_{\bar{k}}^{(n)}$, which is defined as follows. If $k=1$ and $\bar{k}=2$, then
\begin{align}
\mathcal{M}_{2}^{(n)}  & \! \triangleq \Scale[0.97]{\text{any subset of } \mathcal{Q}_0^{(n)} \! \setminus \! (\mathcal{O}_{2}^{(n)} \cup \mathcal{N}_{2}^{(n)}) \text{ with size } \big\{ \big| \mathcal{C}_2^{(n)} \big| + \big| \mathcal{C}_{1,2}^{(n)} \big| - \big| \mathcal{G}_0^{(n)} \big| - \big| \mathcal{G}_1^{(n)} \big| \big\}^{\! +}}. \label{eq:m2}
\end{align} 

\vspace{-0.35cm}
\noindent Consequently, $\mathcal{M}_{2}^{(n)} \neq \emptyset$ only in Situation 3, where $I(V;Y_{(1)}) \leq I(V;Y_{(2)}) < I(V;Z)$.
On the other hand, if $k = 2$ and $\bar{k}=1$, then
\vspace*{-0.16cm}
\begin{align}
\mathcal{M}_{1}^{(n)}  & \triangleq \text{any subset of } \mathcal{Q}_0^{(n)}  \setminus (\mathcal{O}_{1}^{(n)} \cup \mathcal{N}_{1}^{(n)}) \nonumber \\
& \quad \, \, \,  \text{ with size} 
\left\{ 
\begin{array}{ll}
\! \big| \mathcal{G}_1^{(n)} \big| + \big| \mathcal{C}_{1}^{(n)} \big| - \big| \mathcal{G}_2^{(n)} \big| - \big| \mathcal{C}_2^{(n)} \big| & \text{if } I(V;Z) \leq I(V;Y_{(2)}), \\
\! \big| \mathcal{C}_1^{(n)} \big| + \big| \mathcal{C}_{1,2}^{(n)} \big| - \big| \mathcal{G}_0^{(n)} \big| - \big| \mathcal{G}_2^{(n)} \big| & \text{otherwise}.
\end{array} \right. \label{eq:m1}
\end{align}
Recall that $I(V;Z) \leq I(V;Y_{(2)})$ in Situation 1, where $I(V;Z) < I(V;Y_{(1)}) \leq I(V;Y_{(2)})$, and in Situation 2, where $I(V;Y_{(1)}) < I(V;Z) \leq I(V;Y_{(2)})$.

If we consider only distributions implying $(R_{S_{(1)}}^{\star k}, R_{S_{(2)}}^{\star k}, R_{W_{(1)}}^{\star k}, R_{W_{(2)}}^{\star k}) \in \mathbb{R}^{4}_{+}$, then $\mathcal{O}_{\bar{k}}^{(n)}$, $\mathcal{N}_{\bar{k}}^{(n)}$ and $\mathcal{M}_{\bar{k}}^{(n)}$ must exist because, for $i \in [1,L]$, $\tilde{T}_{(\bar{k}),i} \big[ \mathcal{Q}_{0}^{(n)} \setminus \big( \mathcal{O}_{\bar{k}}^{(n)} \cup \mathcal{N}_{\bar{k}}^{(n)} \cup \mathcal{M}_{\bar{k}}^{(n)} \big) \big]$ is the only part that will be intended for storing confidential information~$S_{(\bar{k})}$.

\end{enumerate}

\vspace{0.15cm}
\noindent \textbf{Construction of} $\bm{\tilde{T}_{(1),1:L}^n}$ and $\bm{\tilde{T}_{(2),1:L}^n}$ for $\bm{(R_{S_{(1)}}^{\star 1}, R_{S_{(2)}}^{\star 1}, R_{W_{(1)}}^{\star 1}, R_{W_{(2)}}^{\star 1}) \in \mathfrak{R}_{\text{MI-WTBC}}^{(1)}}$

In this case ($k=1$ and $\bar{k}=2$), given $\tilde{V}_{i}^n$, for $i \in [1,L]$ the encoder first constructs $\tilde{T}_{(1),i}^n$ associated to Receiver~1. Then, given $\tilde{V}_{i}^n$ and $\tilde{T}_{(1),i}^n$, it forms $\tilde{T}_{(2),i}^n$ associated to Receiver~2.   

\begin{enumerate}
\item Construction of $\tilde{T}_{(1),i}^n$. 
Associated to $\tilde{T}_{(1),i}^n$, we have defined the sets $\mathcal{F}_0^{(n)}$, $\mathcal{F}_1^{(n)}$, $\mathcal{J}_0^{(n)}$ and $\mathcal{J}_1^{(n)}$ as in \eqref{eq:f0}--\eqref{eq:jk}, and $\mathcal{D}_1^{(n)}$ and $\mathcal{L}_1^{(n)}$ as in \eqref{eq:dk} and \eqref{eq:lk} respectively. 

For $i \in [1,L]$, let $W_{(1),i}^{(U)}$ be a uniformly distributed vector of length $\big| \mathcal{J}_{0}^{(n)} \cup \mathcal{J}_{1}^{(n)} \big|$ that represents part of the private message intended for Receiver~$1$. The encoder forms $\tilde{T}_{(1),i}\big[ \mathcal{J}_{0}^{(n)} \cup \mathcal{J}_{1}^{(n)} \big]$ by simply storing $W^{(U)}_{(1),i}$. We define $\Theta^{(U)}_{(1),i} \triangleq \tilde{T}_{(1),i} \big[ \mathcal{J}_{1}^{(n)} \big]$, which is required by Receiver~1 to reliably estimate $\tilde{T}_{(1),i}^n$. Hence, for $i \in [1,L-1]$, sequence~$\Theta^{(U)}_{(1),i+1}$ is repeated in $\tilde{T}_{(1),i} \big[ \mathcal{D}_{1}^{(n)} \big] \subseteq \tilde{T}_{(1),i} \big[ \mathcal{F}_{0}^{(n)} \big]$. This sequence is not repeated directly, but the encoder copies instead $\bar{\Theta}_{(1),i+1}^{(U)}$ that is obtained as follows. Let $\kappa_{\Theta}^{(U)}$ be a uniformly distributed key with length $\big| \mathcal{J}_{1}^{(n)} \big|$ that is privately shared between transmitter and Receiver~1. Then, for $i \in [1,L]$, we obtain $\bar{\Theta}^{(U)}_{(1),i} \triangleq \Theta_{(1),i}^{(U)} \oplus \kappa_{\Theta}^{(U)}$. Since $\kappa_{\Theta}^{(U)}$ is reused in all blocks, it is clear that its size becomes negligible in terms of rate for $L$ large enough.

For $i \in [1,L]$, let $S_{(1),i}^{(U)}$ be a uniformly distributed vector that represents part of the confidential message intended for Receiver~$1$. At Block~1, $S_{(1),1}^{(U)}$ has size $\big|  \big( \mathcal{F}_{0}^{(n)} \cup \mathcal{F}_{1}^{(n)} \big) \setminus \big( \mathcal{D}_{1}^{(n)} \cup \mathcal{L}_{1}^{(n)} \big) \big|$ and is stored in $\tilde{T}_{(1),1}\big[  \big( \mathcal{F}_{0}^{(n)} \cup \mathcal{F}_{1}^{(n)} \big) \setminus \big( \mathcal{D}_{1}^{(n)} \cup \mathcal{L}_{1}^{(n)} \big) \big]$; for $i \in [2,L-1]$, $S_{(1),i}^{(U)}$ has size $\big|  \mathcal{F}_{0}^{(n)} \setminus \big( \mathcal{D}_{1}^{(n)} \cup \mathcal{L}_{1}^{(n)} \big) \big|$ and is stored in $\tilde{T}_{(1),i}\big[  \mathcal{F}_{0}^{(n)} \setminus \big( \mathcal{D}_{1}^{(n)} \cup \mathcal{L}_{1}^{(n)} \big) \big]$; and at Block~$L$, $S_{(1),L}^{(U)}$ has size $\big|  \mathcal{F}_{0}^{(n)} \big|$ and is stored into $\tilde{T}_{(1),L}\big[  \mathcal{F}_{0}^{(n)} \big]$. Moreover, for $i \in [1,L]$, we define $\Lambda^{(U)}_{(1),i} \triangleq \tilde{T}_{(1),i}\big[ \mathcal{F}_{1}^{(n)} \big]$. For $i \in [2,L]$, $\Lambda^{(U)}_{(1),i-1}$ is repeated in $\tilde{T}_{(1),i}\big[ \mathcal{F}_{1}^{(n)} \big]$ and, therefore, $\Lambda^{(U)}_{(1),1}$, which contains part of $S_{(1),1}^{(U)}$, is replicated in all blocks. 

Furthermore, for $i \in [1,L-1]$, the encoder repeats\footnote{
From Section~\ref{sec:PCSx1il}, $\hat{\Pi}_{(1),1:L}^{(V)}=\varnothing$ when the \gls*{pcs} operates to achieve the corner point of $\mathfrak{R}_{\text{MI-WTBC}}^{(1)}$.}
$\big[ \Pi_{(1),i+1}^{(V)} \Delta_{(1),i+1}^{(V)}\big]$, which contain part of $\tilde{A}_{i+1}^n$, in $\tilde{T}_{(1),i}\big[ \mathcal{L}_{1}^{(n)} \big]$. According to the summary of the construction of $\tilde{A}_{1:L}\big[ \mathcal{G}^{(n)} \big]$ in the last part of Section~\ref{sec:PCSx1il}, notice that the length of $\Delta_{(1),i+1}^{(V)}$ is $\big| \mathcal{L}_{1}^{(n)} \big|$.

Then, for $i \in [1,L]$, given $\tilde{T}_{(1),i} \big[ \mathcal{H}_{U_{(1)}|V}^{(n)} \big]$ and $\tilde{V}_i^n$ the encoder forms the remaining entries of~$\tilde{T}_{(1),i}^n$ by using \gls*{sc} encoding: deterministic \gls*{sc} encoding for the elements of $\tilde{T}_{(1),i} \big[ \mathcal{L}_{U_{(1)}|V}^{(n)} \big]$ and random \gls*{sc} encoding for the entries of $\tilde{T}_{(1),i} \big[ \big( \mathcal{H}_{U_{(1)}|V}^{(n)} \big)^{\text{C}} \setminus \mathcal{L}_{U_{(1)}|V}^{(n)} \big]$.


For $i \in [1,L]$, the encoder obtains $\Phi_{(1),i}^{(U)} \triangleq \tilde{T}_{(1),i}  \big[ \big( \mathcal{H}_{U_{(1)}|V}^{(n)}  \big)^{\text{C}} \setminus \mathcal{L}_{U_{(1)}|VY_{(1)}}^{(n)} \big]$. Also, it obtains $\Upsilon_{(1)}^{(U)} \triangleq  \tilde{T}_{(1),1}  \big[ \mathcal{H}_{U_{(1)}|V}^{(n)} \setminus \mathcal{L}_{U_{(1)}|VY_{(1)}}^{(n)}  \big]$ from Block~1. The transmitter additionally sends $\big( \Upsilon_{(1)}^{(U)}, \Phi_{(1),1:L}^{(U)} \big) \oplus \kappa_{{\Upsilon \Phi}_{(1)}}^{(U)}$ to Receiver~$1$, where $\kappa_{{\Upsilon \Phi}_{(1)}}^{(U)}$ is a uniformly distributed key with size $L  \big| \big( \mathcal{H}_{U_{(1)}|V}^{(n)}  \big)^{\text{C}} \setminus  \mathcal{L}_{U_{(1)}|VY_{(1)}}^{(n)}  \big| +  \big| \mathcal{H}_{U_{(1)}|V}^{(n)} \setminus \mathcal{L}_{U_{(1)}|VY_{(1)}}^{(n)}  \big|$ that is privately shared between transmitter and Receiver~1.

Figure~\ref{fig:enc_outerR2_1} may graphically represent this construction of $\tilde{T}_{(1),1:L}^n$ if we do the following substitutions: $\mathcal{F}_0^{(n)} \leftarrow \mathcal{Q}_0^{(n)}$, $\mathcal{F}_1^{(n)} \leftarrow \mathcal{Q}_1^{(n)}$, $\mathcal{J}_0^{(n)} \leftarrow \mathcal{B}_0^{(n)}$, $\mathcal{J}_1^{(n)} \leftarrow \mathcal{B}_1^{(n)}$, $\mathcal{D}_1^{(n)} \leftarrow \mathcal{N}_1^{(n)}$, $\mathcal{L}_1^{(n)} \leftarrow \mathcal{M}_1^{(n)}$, $\emptyset \leftarrow \mathcal{O}_1^{(n)}$, $\big(\mathcal{H}_{U_{(1)}|V}^{(n)}\big)^{\text{C}} \leftarrow \big(\mathcal{H}_{U_{(1)}|VU_{(2)}}^{(n)}\big)^{\text{C}}$, $O_{(1),1:L}^{(U)} \leftarrow \varnothing$. Moreover, at Block~$i \in [1,L-1]$, the encoder repeats ${\Theta^{(U)}_{(1),i+1} \oplus \kappa^{(U)}_{\Theta}}$ instead of ${\Theta^{(U)}_{(1),i+1}}$.


\end{enumerate}

\begin{enumerate}
\setcounter{enumi}{1}
\item Construction of $\tilde{T}_{(2),i}^n$.
Associated to $\tilde{T}_{(2),i}^n$, we have defined the sets $\mathcal{Q}_0^{(n)}$, $\mathcal{Q}_2^{(n)}$, $\mathcal{B}_0^{(n)}$ and $\mathcal{B}_2^{(n)}$ as in \eqref{eq:w0}--\eqref{eq:bk}, and $\mathcal{O}_{2}^{(n)}$, $\mathcal{N}_2^{(n)}$ and $\mathcal{M}_2^{(n)}$ as in \eqref{eq:ok}--\eqref{eq:m2} respectively.

The construction of $\tilde{T}_{(2),1:L}^n$ is graphically summarized in Figure~\ref{fig:enc_outerR1_2}. For $i \in [1,L]$, let $W_{(2),i}^{(U)}$ be a uniformly distributed vector of length $\big| \mathcal{B}_{0}^{(n)} \cup \mathcal{B}_{2}^{(n)} \big|$ that represents the entire private message intended for Receiver~$2$. The encoder forms $\tilde{T}_{(2),i} \big[ \mathcal{B}_{0}^{(n)} \cup \mathcal{B}_{2}^{(n)} \big]$ by simply storing $W^{(U)}_{(2),i}$. We define $\Psi^{(U)}_{(2),i} \triangleq \tilde{T}_{(2),i} \big[ \mathcal{B}_{2}^{(n)} \big]$, which is required by Receiver~2 to reliably estimate $\tilde{T}_{(2),i}^n$. Thus, for $i \in [2,L]$, $\Psi^{(U)}_{(2),i-1}$ is repeated in $\tilde{T}_{(2),i} \big[ \mathcal{N}_{2}^{(n)} \big] \subseteq \tilde{T}_{(2),i} \big[ \mathcal{Q}_{0}^{(n)} \big]$. 

For $i \in [1,L]$, let $S_{(2),i}^{(U)}$ be a uniformly distributed vector that represents part of the confidential message intended for legitimate Receiver~$2$. At Block~1, $S_{(2),1}^{(U)}$ has size $\big|  \mathcal{Q}_{0}^{(n)} \cup \mathcal{Q}_{2}^{(n)} \big|$ and is stored in $\tilde{T}_{(2),1}\big[ \mathcal{Q}_{0}^{(n)} \cup \mathcal{Q}_{2}^{(n)} \big]$; and for $i \in [2,L]$, $S_{(2),i}^{(U)}$ has size $\big|  \mathcal{Q}_{0}^{(n)} \setminus \big( \mathcal{O}_{2}^{(n)} \cup \mathcal{N}_{2}^{(n)} \cup \mathcal{M}_{2}^{(n)} \big) \big|$ and is stored in $\tilde{T}_{(2),i}\big[  \mathcal{Q}_{0}^{(n)} \setminus \big( \mathcal{O}_{2}^{(n)} \cup \mathcal{N}_{2}^{(n)} \cup \mathcal{M}_{2}^{(n)} \big) \big]$. Moreover, for $i \in [1,L]$ we define $\Lambda^{(U)}_{(2),i} \triangleq \tilde{T}_{(2),i}\big[ \mathcal{Q}_{2}^{(n)} \big]$. For $i \in [2,L]$, $\Lambda^{(U)}_{(2),i-1}$ is repeated in $\tilde{T}_{(2),i}\big[ \mathcal{Q}_{2}^{(n)} \big]$ and, hence, $\Lambda^{(U)}_{(2),1}$, which contains part of $S_{(2),1}^{(U)}$, is replicated in all blocks. 

Furthermore, for $i \in [2,L]$ the encoder repeats $\Delta_{(2),i-1}^{(V)}$, which recall that contains part of $\tilde{A}_{i-1}^n$, in $\tilde{T}_{(2),i}\big[ \mathcal{M}_{2}^{(n)} \big]$. According to the summary of the construction of $\tilde{A}_{1:L}\big[ \mathcal{G}^{(n)} \big]$ in the last part of Section~\ref{sec:PCSx1il}, the length of $\Delta_{(2),i-1}^{(V)}$ matches with $\big| \mathcal{M}_{2}^{(n)} \big|$.

Then, for $i \in [1,L]$, given $\tilde{T}_{(2),i} \big[ \mathcal{H}_{U_{(2)}|VU_{(1)}}^{(n)} \big]$, $\tilde{V}_i^n$ and $\tilde{T}_{(1),i}^n$, the encoder forms the remaining entries of~$\tilde{T}_{(2),i}^n$ by using \gls*{sc} encoding. Now, notice that $\smash{\tilde{T}_{(2),i} \big[ \big( \mathcal{H}_{U_{(2)}|VU_{(1)}}^{(n)} \big)^{\text{C}} \big]}$ must depend not only on sequence $\tilde{V}_{i}^n$, but also on sequence $\tilde{U}_{(1),i}^n$ that was constructed before. Moreover, $\tilde{T}_{(2),i} \big[ \mathcal{L}_{U_{(2)}|VU_{(1)}}^{(n)} \big]$ is formed by performing deterministic \gls*{sc} encoding, while $\tilde{T}_{(2),i} \big[ \big( \mathcal{H}_{U_{(2)}|VU_{(1)}}^{(n)} \big)^{\text{C}} \setminus \mathcal{L}_{U_{(2)}|VU_{(1)}}^{(n)} \big]$ is drawn randomly.

\begin{figure}[h!]
\vspace{0.1cm}
\hspace{0.4cm}
\centering
\includegraphics[width=0.97\linewidth]{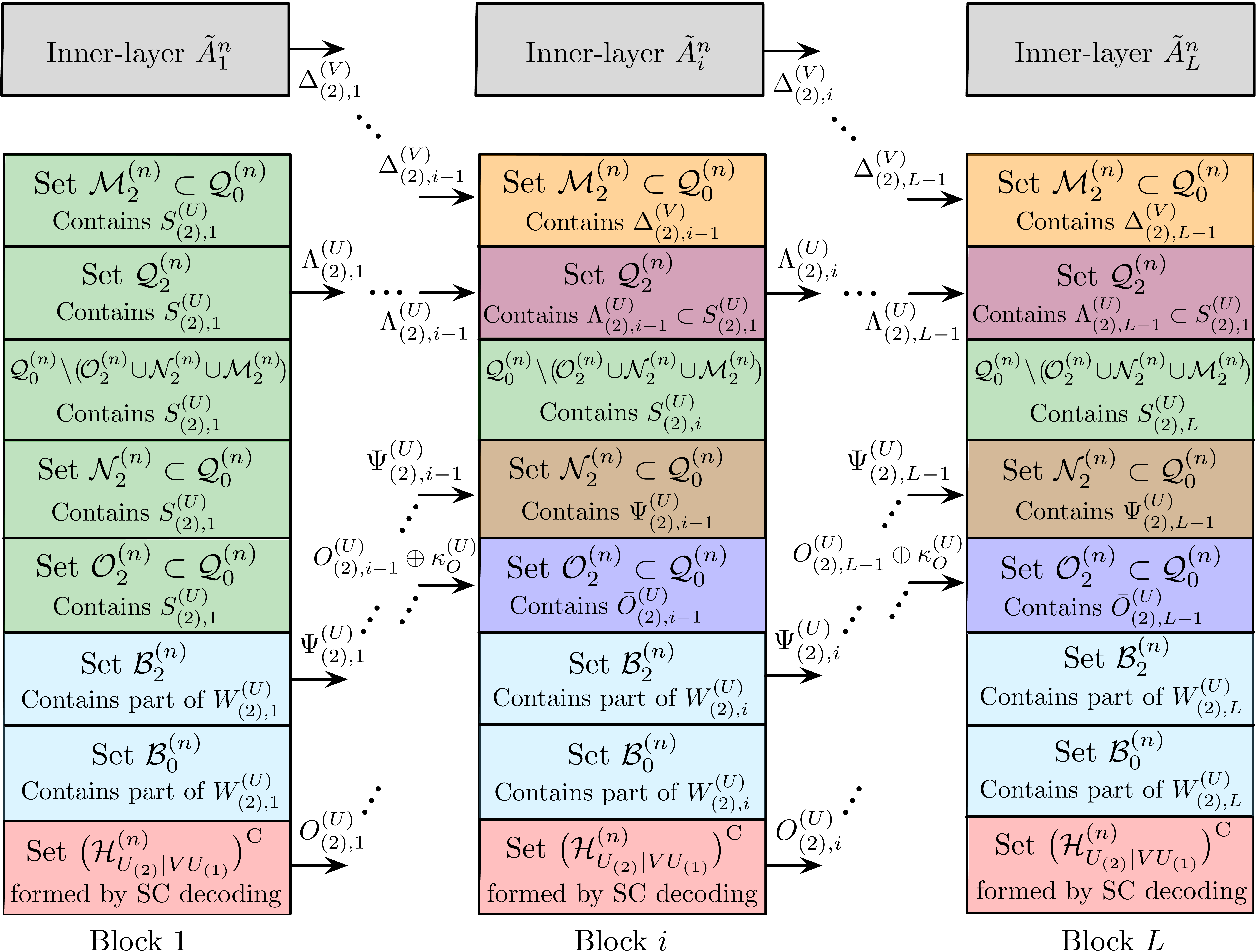}
\caption{\setstretch{1.35} Construction of outer-layer $\smash{\tilde{T}_{(2),1:L}}$ associated to Receiver~2 when the \gls*{pcs} must aproach the corner point $\smash{(R_{S_{(1)}}^{\star 1}, R_{S_{(2)}}^{\star 1}, R_{W_{(1)}}^{\star 1}, R_{W_{(2)}}^{\star 1}) \subseteq \mathfrak{R}_{\text{MI-WTBC}}^{(1)}}$. For any Block~$i \in [1,L]$, blue and green colors are used to represent the elements of $\smash{\tilde{T}_{(2),i}^n}$ that contain independent private and confidential information, respectively. For $i \in [2,L]$, orange, brown, red-purple and blue-purple colors represent those entries that contain information repeated from Block~$i-1$: $\tilde{T}_{(2),i}[\mathcal{M}_{2}^{(n)}]$ (in orange) repeats information from $\tilde{A}_{i-1}^n$, $\tilde{T}_{(2),i}[\mathcal{N}_{2}^{(n)}]$ (in brown) repeats $\Psi_{(2),i-1}^{(U)}$, $\tilde{T}_{(2),i}[\mathcal{O}_{2}^{(n)}]$ (in blue-purple) repeats $\bar{O}_{(2),i-1}^{(U)}$, and $\tilde{T}_{(2),i}[\mathcal{Q}_{2}^{(n)}]$ (in red-purple) repeats $\Lambda_{(2),i-1}^{(U)}$. Finally, for $i \in [1,L]$, $\tilde{T}_{(2),i}[(\mathcal{H}_{U_{(2)}|VU_{(1)}}^{(n)})^{\text{C}}]$ (in red) is drawn by \gls*{sc} encoding.}\label{fig:enc_outerR1_2}
\end{figure}

For $i \in [1,L]$, we define sequences $O_{(2),i}^{(U)} \triangleq \tilde{T}_{(2),i}  \big[ \big( \mathcal{H}_{U_{(2)}|VU_{(1)}}^{(n)}  \big)^{\text{C}} \cap\mathcal{H}_{U_{(2)}|V}^{(n)} \setminus \mathcal{L}_{U_{(2)}|VY_{(2)}}^{(n)} \big]$ and $\Phi_{(2),i}^{(U)} \triangleq \tilde{T}_{(2),i}  \big[ \big( \mathcal{H}_{U_{(2)}|V}^{(n)}  \big)^{\text{C}} \setminus \mathcal{L}_{U_{(2)}|VY_{(2)}}^{(n)} \big]$, where notice that $\big[O_{(2),i}^{(U)},\Phi_{(2),i}^{(U)}\big]$ contains those entries of $\tilde{T}_{(2),i}  \big[ \big( \mathcal{H}_{U_{(2)}|VU_{(1)}}^{(n)}  \big)^{\text{C}}\big]$ that are needed by Receiver~2 to reliably estimate $\tilde{T}_{(2),i}^n$, that is, $\big[O_{(2),i}^{(U)},\Phi_{(2),i}^{(U)}\big] = \tilde{T}_{(2),i}  \big[ \big( \mathcal{H}_{U_{(2)}|VU_{(1)}}^{(n)}  \big)^{\text{C}} \setminus \mathcal{L}_{U_{(2)}|VY_{(2)}}^{(n)} \big]$. Let $\kappa_{O}^{(U)}$ be a uniformly distributed key with size $\big| \big(\mathcal{H}_{U_{(2)}|VU_{(1)}}^{(n)}  \big)^{\text{C}} \cap\mathcal{H}_{U_{(2)}|V}^{(n)} \setminus \mathcal{L}_{U_{(2)}|VY_{(2)}}^{(n)} \big|$ that is privately-shared between transmitter and Receiver~2. For $i \in [1,L]$, we define $\bar{O}_{(2),i}^{(U)} \triangleq O_{(2),i}^{(U)} \oplus \kappa_O^{(U)}$. Since $O_{(2),i}^{(U)}$ is required by Receiver~2 to estimate $\tilde{T}_{(2),i}^n$, for $i \in [2,L]$ the encoder repeats $\bar{O}_{(2),i-1}^{(U)}$ in $\tilde{T}_{(2),i}\big[ \mathcal{O}^{(n)}_{2} \big]$. Notice that $\kappa_O^{(U)}$ is reused in all blocks, so it is clear that its size becomes negligible in terms of rate for $L$ large enough.
Furthermore, the encoder obtains $\Upsilon_{(2)}^{(U)} \triangleq  \tilde{T}_{(2),L}  \big[ \mathcal{H}_{U_{(2)}|V}^{(n)} \setminus \mathcal{L}_{U_{(2)}|VY_{(2)}}^{(n)}  \big]$ from Block~$L$, and the transmitter additionally sends $\big( \Upsilon_{(2)}^{(U)}, \Phi_{(2),1:L}^{(U)} \big) \oplus \kappa_{{\Upsilon \Phi}_{(2)}}^{(U)}$ to Receiver~$2$, $\kappa_{{\Upsilon \Phi}_{(2)}}^{(U)}$ being a uniformly distributed key with size $L  \big| \big( \mathcal{H}_{U_{(2)}|V}^{(n)}  \big)^{\text{C}} \setminus  \mathcal{L}_{U_{(2)}|VY_{(2)}}^{(n)}  \big| +  \big| \mathcal{H}_{U_{(2)}|V}^{(n)} \setminus \mathcal{L}_{U_{(2)}|VY_{(2)}}^{(n)}  \big|$ that is privately shared between transmitter and Receiver~2.

Finally, for $i \in [1,L]$, the encoder obtains $\tilde{X}_i^n \triangleq f\big(\tilde{V}_i^n, \tilde{U}_{(1),i}^n, \tilde{U}_{(2),i}^n \big)$, where recall that $f(\cdot)$ may be any deterministic one-to-one function. The transmitter sends $\tilde{X}_i^n$ over the \gls*{wtbc}, which induces the channel outputs $(\tilde{Y}_{(1),i}^n,\tilde{Y}_{(2),i}^n,\tilde{Z}_i^n)$.  

\end{enumerate}

\begin{remark}
For $i \in [2,L]$, notice that $O_{(2),i-1}^{(U)}$, which is not negligible in terms of rate, is almost uniformly distributed and independent of $\tilde{V}_{i-1}^n$, but is dependent on $(\tilde{V}_{i-1}^n, \tilde{T}_{(1),i-1}^n)$. Since $\tilde{T}_{(2),i}\big[ \mathcal{O}^{(n)}_{2} \big] \subset \tilde{T}_{(2),i}\big[ \mathcal{Q}^{(n)}_{0} \big]$ is suitable for storing sequences that are uniform and independent of $(\tilde{V}_{i}^n, \tilde{T}_{(1),i}^n)$, the secret-key $\kappa_{O}^{(U)}$ is used to ensure that $\bar{O}_{(2),i-1}^{(U)}$ is totally random (see Section~\ref{sec:distributionDMSx}).
\end{remark}

\vspace{0.15cm}
\noindent \textbf{Construction of} $\bm{\tilde{T}_{(1),1:L}^n}$ and $\bm{\tilde{T}_{(2),1:L}^n}$ for $\bm{(R_{S_{(1)}}^{\star 2}, R_{S_{(2)}}^{\star 2}, R_{W_{(1)}}^{\star 2}, R_{W_{(2)}}^{\star 2}) \in \mathfrak{R}_{\text{MI-WTBC}}^{(2)}}$

In this case ($k=2$ and $\bar{k}=1$), given $\tilde{V}_{i}^n$, for $i \in [1,L]$ the encoder first constructs $\tilde{T}_{(2),i}^n$ associated to Receiver~2. Then, given $\tilde{V}_{i}^n$ and $\tilde{T}_{(2),i}^n$, it forms $\tilde{T}_{(1),i}^n$ associated to Receiver~1.

\begin{enumerate}[leftmargin=0.6cm]
\item Construction of $\tilde{T}_{(2),i}^n$.
Associated to $\tilde{T}_{(2),i}^n$, we have defined the sets $\mathcal{F}_0^{(n)}$, $\mathcal{F}_2^{(n)}$, $\mathcal{J}_0^{(n)}$ and $\mathcal{J}_2^{(n)}$ as in \eqref{eq:f0}--\eqref{eq:jk}, and $\mathcal{D}_2^{(n)}$ and $\mathcal{L}_2^{(n)}$ as in \eqref{eq:dk} and \eqref{eq:lk} respectively.

For $i \in [1,L]$, let $W_{(2),i}^{(U)}$ be a uniformly distributed vector of length $\big| \mathcal{J}_{0}^{(n)} \cup \mathcal{J}_{2}^{(n)} \big|$ that represents part of the private message intended for Receiver~$2$. The encoder forms $\tilde{T}_{(2),i}\big[ \mathcal{J}_{0}^{(n)} \cup \mathcal{J}_{2}^{(n)} \big]$ by simply storing $W^{(U)}_{(2),i}$. Then, now we define $\Psi^{(U)}_{(2),i} \triangleq \tilde{T}_{(2),i} \big[ \mathcal{J}_{2}^{(n)} \big]$, which is required by Receiver~2 to reliably estimate $\tilde{T}_{(2),i}^n$. Thus, for $i \in [2,L]$, $\Psi^{(U)}_{(2),i-1}$ is repeated in $\tilde{T}_{(2),i} \big[ \mathcal{D}_{2}^{(n)} \big] \subseteq \tilde{T}_{(2),i} \big[ \mathcal{F}_{0}^{(n)} \big]$. This sequence is not repeated directly, but the encoder copies instead $\bar{\Psi}_{(2),i-1}^{(U)}$ that is obtained as follows. Let $\kappa_{\Psi}^{(U)}$ be a uniformly distributed key with length $\big| \mathcal{J}_{2}^{(n)} \big|$. Then, for $i \in [1,L]$, we obtain $\bar{\Psi}^{(U)}_{(2),i} \triangleq \Psi_{(2),i}^{(U)} \oplus \kappa_{\Psi}^{(U)}$. Since $\kappa_{\Psi}^{(U)}$ is reused in all blocks, its size is negligible in terms of rate for $L$ large enough.

For $i \in [1,L]$, let $S_{(2),i}^{(U)}$ be a uniformly distributed vector that represents the confidential message intended for Receiver~$2$. At Block~1, $S_{(2),1}^{(U)}$ has size $\big|  \mathcal{F}_{0}^{(n)} \cup \mathcal{F}_{2}^{(n)} \big|$ and is stored in $\tilde{T}_{(2),1}\big[ \mathcal{F}_{0}^{(n)} \cup \mathcal{F}_{2}^{(n)} \big]$; and for $i \in [2,L]$, $S_{(2),i}^{(U)}$ has size $\big|  \mathcal{F}_{0}^{(n)} \setminus \big( \mathcal{D}_2 \cup \mathcal{L}_{2}^{(n)} \big) \big|$ and is stored into $\tilde{T}_{(2),i}\big[\mathcal{F}_{0}^{(n)} \setminus \big( \mathcal{D}_2 \cup \mathcal{L}_{2}^{(n)} \big) \big]$. Moreover, for $i \in [1,L]$ we define $\Lambda^{(U)}_{(2),i} \triangleq \tilde{T}_{(2),i}\big[ \mathcal{F}_{2}^{(n)} \big]$. For $i \in [2,L]$, $\Lambda^{(U)}_{(2),i-1}$ is repeated in $\tilde{F}_{(2),i}\big[ \mathcal{F}_{2}^{(n)} \big]$ and, therefore, $\Lambda^{(U)}_{(2),1}$, which contains part of the confidential message $S_{(2),1}^{(U)}$, is replicated in all blocks. 

Furthermore, for $i \in [2,L]$, the encoder repeats $\Delta_{(2),i-1}^{(V)}$, which recall that contains part of $\tilde{A}_{i-1}^n$, in $\tilde{T}_{(2),i}\big[ \mathcal{L}_{2}^{(n)} \big]$. According to the summary of the construction of $\tilde{A}_{1:L}\big[ \mathcal{G}^{(n)} \big]$ in the last part of Section~\ref{sec:PCSx1il}, the length of sequence $\Delta_{(2),i-1}^{(V)}$ is $\big| \mathcal{L}_{2}^{(n)} \big|$.

For $i \in [1,L]$, given $\tilde{T}_{(2),i} \big[ \mathcal{H}_{U_{(2)}|V}^{(n)} \big]$ and $\tilde{V}_i^n$ the encoder forms $\tilde{T}_{(2),i} \big[ \big( \mathcal{H}_{U_{(2)}|V}^{(n)} \big)^{\text{C}} \big]$ by using \gls*{sc} encoding: deterministic for $\tilde{T}_{(2),i} \big[ \mathcal{L}_{U_{(2)}|V}^{(n)} \big]$, and random for $\tilde{T}_{(2),i} \big[ \big( \mathcal{H}_{U_{(2)}|V}^{(n)} \big)^{\text{C}} \setminus \mathcal{L}_{U_{(2)}|V}^{(n)} \big]$.

For $i \in [1,L]$, the encoder obtains $\Phi_{(2),i}^{(U)} \triangleq \tilde{T}_{(2),i}  \big[ \big( \mathcal{H}_{U_{(2)}|V}^{(n)}  \big)^{\text{C}} \setminus \mathcal{L}_{U_{(2)}|VY_{(1)}}^{(n)} \big]$. Also, it obtains $\Upsilon_{(2)}^{(U)} \triangleq  \tilde{T}_{(2),L}  \big[ \mathcal{H}_{U_{(2)}|V}^{(n)} \setminus \mathcal{L}_{U_{(2)}|VY_{(2)}}^{(n)}  \big]$ from Block~$L$. The transmitter additionally sends $\big( \Upsilon_{(2)}^{(U)}, \Phi_{(2),1:L}^{(U)} \big) \oplus \kappa_{{\Upsilon \Phi}_{(2)}}^{(U)}$ to Receiver~$2$, where $\kappa_{{\Upsilon \Phi}_{(2)}}^{(U)}$ now is a uniformly distributed key with size $L  \big| \big( \mathcal{H}_{U_{(2)}|V}^{(n)}  \big)^{\text{C}} \setminus  \mathcal{L}_{U_{(2)}|VY_{(2)}}^{(n)}  \big| +  \big| \mathcal{H}_{U_{(2)}|V}^{(n)} \setminus \mathcal{L}_{U_{(2)}|VY_{(2)}}^{(n)}  \big|$ that is privately shared between transmitter and Receiver~2.

Figure~\ref{fig:enc_outerR1_2} may graphically represent this construction of $\tilde{T}_{(2),1:L}^n$ if we do the following substitutions: $\mathcal{F}_0^{(n)} \leftarrow \mathcal{Q}_0^{(n)}$, $\mathcal{F}_2^{(n)} \leftarrow \mathcal{Q}_2^{(n)}$, $\mathcal{J}_0^{(n)} \leftarrow \mathcal{B}_0^{(n)}$, $\mathcal{J}_2^{(n)} \leftarrow \mathcal{B}_2^{(n)}$, $\mathcal{D}_2^{(n)} \leftarrow \mathcal{N}_2^{(n)}$, $\mathcal{L}_2^{(n)} \leftarrow \mathcal{M}_2^{(n)}$, $\emptyset \leftarrow \mathcal{O}_2^{(n)}$, $\big(\mathcal{H}_{U_{(2)}|V}^{(n)}\big)^{\text{C}} \leftarrow \big(\mathcal{H}_{U_{(2)}|VU_{(1)}}^{(n)}\big)^{\text{C}}$, $O_{(2),1:L}^{(U)} \leftarrow \varnothing$. Moreover, at Block~$i \in [2,L]$, the encoder repeats $\Psi^{(U)}_{(2),i-1} \oplus \kappa^{(U)}_{\Psi}$ instead of $\Psi^{(U)}_{(2),i-1}$.

\item Construction of $\tilde{T}_{(1),i}^n$.
Associated to $\tilde{T}_{(1),i}^n$, now we have defined $\mathcal{Q}_0^{(n)}$, $\mathcal{Q}_1^{(n)}$, $\mathcal{B}_0^{(n)}$ and $\mathcal{B}_1^{(n)}$ as in \eqref{eq:w0}--\eqref{eq:bk}, and $\mathcal{O}_1^{(n)}$, $\mathcal{N}_1^{(n)}$ and $\mathcal{M}_1^{(n)}$ as in \eqref{eq:ok}, \eqref{eq:qk} and \eqref{eq:m1} respectively.

The construction of $\tilde{T}_{(2),1:L}^n$ is graphically summarized in Figure~\ref{fig:enc_outerR2_1}. For $i \in [1,L]$, let $W_{(1),i}^{(U)}$ be a uniformly distributed vector of length $\big| \mathcal{B}_{0}^{(n)} \cup \mathcal{B}_{1}^{(n)} \big|$ that represents the entire private message intended for legitimate Receiver~$1$. The encoder forms $\tilde{T}_{(1),i} \big[ \mathcal{B}_{0}^{(n)} \cup \mathcal{B}_{1}^{(n)} \big]$ by simply storing $W^{(U)}_{(1),i}$. Then, we define $\Theta^{(U)}_{(1),i} \triangleq \tilde{T}_{(1),i} \big[ \mathcal{B}_{1}^{(n)} \big]$, which is required by Receiver~1 to reliably estimate $\tilde{T}_{(1),i}^n$. Hence, for $i \in [1,L-1]$, sequence $\Theta^{(U)}_{(1),i+1}$ is repeated in $\tilde{T}_{(1),i} \big[ \mathcal{N}_{1}^{(n)} \big] \subseteq \tilde{T}_{(1),i} \big[ \mathcal{Q}_{0}^{(n)} \big]$. 

For $i \in [1,L]$, let $S_{(1),i}^{(U)}$ be a uniform vector that represents the confidential message intended for Receiver~$1$. At Block~1, $S_{(1),1}^{(U)}$ has size $\big|  \big( \mathcal{Q}_{0}^{(n)} \cup \mathcal{Q}_{1}^{(n)} \big) \setminus \big( \mathcal{O}_{1}^{(n)} \cup \mathcal{N}_{1}^{(n)} \cup \mathcal{M}_{1}^{(n)} \big) \big|$ and is stored in $\tilde{T}_{(1),1}\big[  \big( \mathcal{Q}_{0}^{(n)} \cup \mathcal{Q}_{1}^{(n)} \big) \setminus \big( \mathcal{O}_{1}^{(n)} \cup \mathcal{N}_{1}^{(n)} \cup \mathcal{M}_{1}^{(n)} \big) \big]$; for $i \in [2,L-1]$, $S_{(1),i}^{(U)}$ has size $\big|  \mathcal{Q}_{0}^{(n)} \setminus \big( \mathcal{O}_{1}^{(n)} \cup \mathcal{N}_{1}^{(n)} \cup \mathcal{M}_{1}^{(n)} \big) \big|$ and is stored in $\tilde{T}_{(1),i}\big[  \mathcal{Q}_{0}^{(n)} \setminus \big( \mathcal{O}_{1}^{(n)} \cup \mathcal{N}_{1}^{(n)} \cup \mathcal{M}_{1}^{(n)} \big) \big]$; and at Block~$L$, $S_{(1),L}^{(U)}$ has size $\big|  \mathcal{Q}_{0}^{(n)} \big|$ and is stored into $\tilde{T}_{(1),L}\big[  \mathcal{Q}_{0}^{(n)} \big]$. Moreover, for $i \in [1,L]$ we define $\Lambda^{(U)}_{(1),i} \triangleq \tilde{T}_{(1),i}\big[ \mathcal{Q}_{1}^{(n)} \big]$. For $i \in [2,L]$, $\Lambda^{(U)}_{(1),i-1}$ is repeated in $\tilde{T}_{(1),i}\big[ \mathcal{Q}_{1}^{(n)} \big]$. Hence, $\Lambda^{(U)}_{(1),1}$, which contains part of $S_{(1),1}^{(U)}$, is replicated in all blocks. 

Furthermore, for $i \in [1,L-1]$, the encoder repeats $\big[ \Pi_{(1),i+1}^{(V)} \Delta_{(1),i+1}^{(V)}\big]$, which contains part of $\tilde{A}_{i+1}^n$, in $\tilde{T}_{(1),i}\big[ \mathcal{M}_{1}^{(n)} \big]$. According to the summary of the construction of $\tilde{A}_{1:L}\big[ \mathcal{G}^{(n)} \big]$ in the last part of Section~\ref{sec:PCSx1il}, the overall length of $\big[ \Pi_{(1),i+1}^{(V)}, \Delta_{(1),i+1}^{(V)}\big]$ is $\big| \mathcal{M}_{1}^{(n)} \big|$.

Then, for $i \in [1,L]$, given $\tilde{T}_{(1),i} \big[ \mathcal{H}_{U_{(1)}|VU_{(2)}}^{(n)} \big]$, $\tilde{V}_i^n$ and $\tilde{U}_{(2),i}^n$, the encoder forms the remaining entries of~$\tilde{T}_{(1),i}^n$ by using \gls*{sc} encoding. Now, $\tilde{T}_{(1),i} \big[ \big( \mathcal{H}_{U_{(1)}|VU_{(2)}}^{(n)} \big)^{\text{C}} \big]$ must depend not only on $\tilde{V}_{i}^n$, but also on $\tilde{U}_{(2),i}^n$. Moreover, $\tilde{T}_{(1),i} \big[ \mathcal{L}_{U_{(1)}|VU_{(2)}}^{(n)} \big]$ is formed by performing deterministic \gls*{sc} encoding, while $\tilde{T}_{(1),i} \big[ \big( \mathcal{H}_{U_{(1)}|VU_{(2)}}^{(n)} \big)^{\text{C}} \setminus \mathcal{L}_{U_{(1)}|VU_{(2)}}^{(n)} \big]$ is drawn randomly.

\begin{figure}[h!]
\vspace{0.1cm}
\hspace{0.4cm}
\centering
\includegraphics[width=0.97\linewidth]{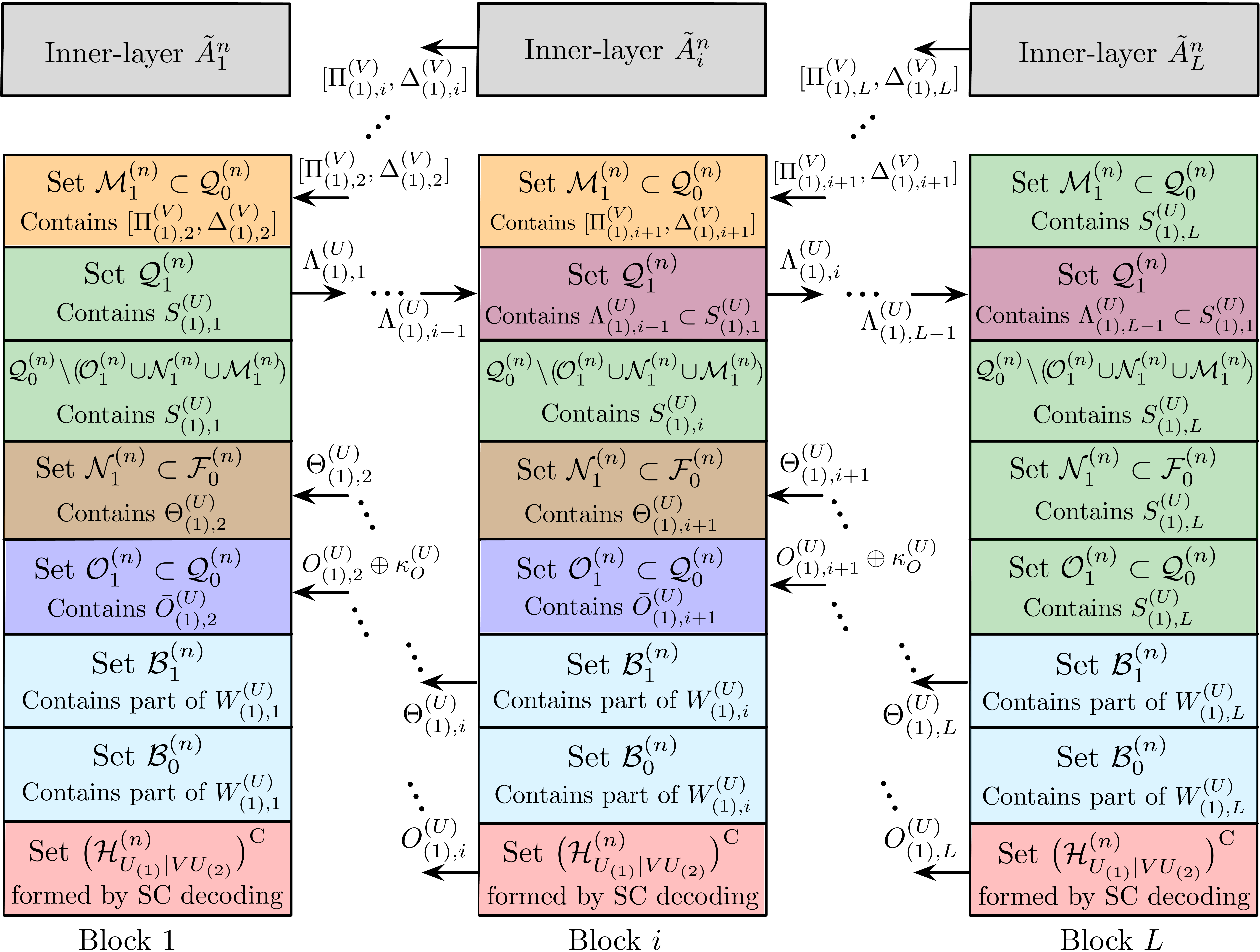}
\caption{\setstretch{1.35} Construction of outer-layer $\smash{\tilde{T}_{(1),1:L}}$ associated to Receiver~1 when the \gls*{pcs} must aproach the corner point $\smash{(R_{S_{(1)}}^{\star 2}, R_{S_{(2)}}^{\star 2}, R_{W_{(1)}}^{\star 2}, R_{W_{(2)}}^{\star 2}) \subseteq \mathfrak{R}_{\text{MI-WTBC}}^{(2)}}$. For any Block~$i \in [1,L]$, blue and green colors are used to represent the elements of $\smash{\tilde{T}_{(1),i}^n}$ that contain independent private and confidential information, respectively. For $i \in [1,L-1]$, orange, brown and blue-purple colors represent those entries that contain information repeated from Block~$i+1$: $\tilde{T}_{(1),i}[\mathcal{M}_{1}^{(n)}]$ (in orange) repeats information from $\tilde{A}_{i+1}^n$, $\tilde{T}_{(1),i}[\mathcal{N}_{1}^{(n)}]$ (in brown) repeats $\Theta_{(1),i+1}^{(U)}$, and $\tilde{T}_{(1),i}[\mathcal{O}_{1}^{(n)}]$ (in blue-purple) repeats $\bar{O}_{(1),i+1}^{(U)}$. Recall that $\Lambda_{(1),1}^{(U)}$, which contain part of the confidential information of Block~1, is replicated in $\tilde{T}_{(1),2:L}[\mathcal{Q}_{1}^{(n)}]$ (in red-purple). Finally, for $i \in [1,L]$, the elements of $\tilde{T}_{(1),i}[(\mathcal{H}_{U_{(1)}|VU_{(2)}}^{(n)})^{\text{C}}]$ (in red) are drawn by \gls*{sc} encoding.}\label{fig:enc_outerR2_1}
\end{figure}

For $i \in [1,L]$, we define sequences $O_{(1),i}^{(U)} \triangleq \tilde{T}_{(1),i}  \big[ \big( \mathcal{H}_{U_{(1)}|VU_{(2)}}^{(n)}  \big)^{\text{C}} \cap \mathcal{H}_{U_{(1)}|V}^{(n)} \setminus \mathcal{L}_{U_{(1)}|VY_{(1)}}^{(n)} \big]$ and $\Phi_{(1),i}^{(U)} \triangleq \tilde{T}_{(1),i}  \big[ \big( \mathcal{H}_{U_{(1)}|V}^{(n)}  \big)^{\text{C}} \setminus \mathcal{L}_{U_{(1)}|VY_{(1)}}^{(n)} \big]$, where notice that $\big[O_{(1),i}^{(U)},\Phi_{(1),i}^{(U)}\big]$ contains those entries of $\tilde{T}_{(1),i}  \big[ \big( \mathcal{H}_{U_{(1)}|VU_{(2)}}^{(n)}  \big)^{\text{C}}\big]$ that are needed by Receiver~1 to reliably estimate $\tilde{T}_{(1),i}^n$, that is, $\big[O_{(1),i}^{(U)},\Phi_{(1),i}^{(U)}\big] = \tilde{T}_{(1),i}  \big[ \big( \mathcal{H}_{U_{(1)}|VU_{(2)}}^{(n)}  \big)^{\text{C}} \setminus \mathcal{L}_{U_{(1)}|VY_{(1)}}^{(n)} \big]$. Let $\kappa_{O}^{(U)}$ be a uniformly distributed key with size $\big| \big(\mathcal{H}_{U_{(1)}|VU_{(2)}}^{(n)}  \big)^{\text{C}} \cap\mathcal{H}_{U_{(1)}|V}^{(n)} \setminus \mathcal{L}_{U_{(1)}|VY_{(1)}}^{(n)} \big|$ that is privately-shared between transmitter and Receiver~1. For $i \in [1,L]$, we define $\bar{O}_{(1),i}^{(U)} \triangleq O_{(1),i}^{(U)} \oplus \kappa_O^{(U)}$. Since $O_{(1),i}^{(U)}$ is required by Receiver~1 to estimate $\tilde{T}_{(1),i}^n$, for $i \in [1,L-1]$ the encoder repeats $\bar{O}_{(1),i+1}^{(U)}$ in $\tilde{T}_{(1),i}\big[ \mathcal{O}^{(n)}_{1} \big]$. Notice that $\kappa_O^{(U)}$ is reused in all blocks, so it is clear that its size becomes negligible in terms of rate for $L$ large enough.
Furthermore, the encoder obtains $\Upsilon_{(1)}^{(U)} \triangleq  \tilde{T}_{(1),1}  \big[ \mathcal{H}_{U_{(1)}|V}^{(n)} \setminus \mathcal{L}_{U_{(1)}|VY_{(1)}}^{(n)}  \big]$ from Block~$1$. The transmitter additionally sends $\big( \Upsilon_{(1)}^{(U)}, \Phi_{(1),1:L}^{(U)} \big) \oplus \kappa_{{\Upsilon \Phi}_{(1)}}^{(U)}$ to Receiver~$1$, where $\kappa_{{\Upsilon \Phi}_{(1)}}^{(U)}$ is a uniformly distributed key with size $L  \big| \big( \mathcal{H}_{U_{(1)}|V}^{(n)}  \big)^{\text{C}} \setminus  \mathcal{L}_{U_{(1)}|VY_{(1)}}^{(n)}  \big| +  \big| \mathcal{H}_{U_{(1)}|V}^{(n)} \setminus \mathcal{L}_{U_{(1)}|VY_{(1)}}^{(n)}  \big|$ that is privately shared between transmitter and Receiver~1.


Finally, for $i \in [1,L]$, the encoder obtains $\tilde{X}_i^n \triangleq f\big(\tilde{V}_i^n, \tilde{U}_{(1),i}^n, \tilde{U}_{(2),i}^n \big)$, where recall that $f(\cdot)$ may be any deterministic one-to-one function. The transmitter sends $\tilde{X}_i^n$ over the \gls*{wtbc}, which induces the channel outputs $(\tilde{Y}_{(1),i}^n,\tilde{Y}_{(2),i}^n,\tilde{Z}_i^n)$.

\end{enumerate}

\begin{remark}\label{remark:assumptionU}
Consider the construction of $\tilde{T}_{(k),1:L}^n$ to achieve $(R_{S_{(1)}}^{\star k}, R_{S_{(2)}}^{\star k}, R_{W_{(1)}}^{\star k}, R_{W_{(2)}}^{\star k}) \in \mathfrak{R}_{\text{MI-WTBC}}^{(k)}$. If we consider that $I(U_{(k)};Y_{(k)}|V) < I(U_{(k)};Z|V)$, according to \eqref{eq:xnose} and \eqref{eq:Dtheta}, notice that $\big| \mathcal{F}_{0}^{(n)} \big| - \big| \mathcal{J}_{k}^{(n)} \big| < 0$. Consequently, if $k=1$, for $i \in [1,L-1]$ the encoder cannot repeat entirely the sequence $\bar{\Theta}_{(1),i+1}^{(U)}$ of length $\big| \mathcal{J}_{1}^{(n)} \big|$ in some elements of $\tilde{T}_{(1),i}\big[ \mathcal{F}_{0}^{(n)} \big]$. Similarly, if $k=2$, for $i \in [2,L]$ the encoder cannot repeat the sequence $\bar{\Psi}_{(2),i-1}^{(U)}$ of length $\big| \mathcal{J}_{2}^{(n)} \big|$ in $\tilde{T}_{(2),i}\big[ \mathcal{F}_{0}^{(n)} \big]$. 

Therefore, under this assumption, the encoding strategy will be as follows. If $k=1$, for $i \in [1,L]$ we will define $\Delta_{(1),i}^{(U)}$ as any part of $\bar{\Theta}_{(1),i}^{(U)}$ with size $\big| \mathcal{J}_{1}^{(n)} \big| - \big| \mathcal{F}_{0}^{(n)} \big|$. For $i \in [1,L-1]$, the sequence $\Delta_{(1),i+1}^{(U)}$ will be repeated in some part of the inner-layer $\tilde{A}_{i}\big[ \mathcal{I}^{(n)} \big]$, whereas the remaining elements of $\bar{\Theta}_{(1),i+1}^{(U)}$ will be stored in $\tilde{T}_{(1),i}\big[ \mathcal{F}_{0}^{(n)} \big]$. Similarly, if $k=2$, for $i \in [1,L]$ we will define $\Delta_{(2),i}^{(U)}$ as any part of $\bar{\Psi}_{(2),i}^{(U)}$ with size $\big| \mathcal{J}_{2}^{(n)} \big| - \big| \mathcal{F}_{0}^{(n)} \big|$. For $i \in [2,L]$, the sequence $\Delta_{(2),i-1}^{(U)}$ will be repeated in some part of the inner-layer $\tilde{A}_{i}\big[ \mathcal{I}^{(n)} \big]$, whereas the remaining elements of $\bar{\Psi}_{(2),i-1}^{(U)}$ will be repeated in $\tilde{T}_{(2),i}\big[ \mathcal{F}_{0}^{(n)} \big]$.
%
%
\end{remark}

\begin{remark}\label{remark:assumptionU2}
If we consider input distributions that imply $(R_{S_{(1)}}^{\star k}, R_{S_{(2)}}^{\star k}, R_{W_{(1)}}^{\star k}, R_{W_{(2)}}^{\star k}) \in \mathbb{R}^{4}_{+}$ for some $k \in [1,2]$, it is clear that if $I(U_{(k)};Y_{(k)}|V) < I(U_{(k)};Z|V)$ then we must only consider those Situations (among 1 to 3) in the inner-layer where $I(V;Y_{(k)}) \geq I(V;Z)$. In this case, part of the elements of the inner-layer that previously carried confidential information may be used now to carry $\Delta_{(1),2:L}^{(U)}$ or $\Delta_{(2),1:L-1}^{(U)}$.

We will not go into the details of the encoding/decoding when $I(U_{(k)};Y_{(k)}|V) \! < \! I(U_{(k)};Z|V)$ because both the construction and the performance analysis will be very similar to those of the the contemplated cases such that the outer-layer must repeat some elements of the inner-layer.
\end{remark}

\subsection{Decoding}\label{sec:decodingx}
By using $\kappa_{\Upsilon\Phi_{(k)}}^{(V)}$ and $\kappa_{\Upsilon\Phi_{(k)}}^{(U)}$, consider that $\big( \Phi_{(k),1:L}^{(V)}, \Upsilon_{(k)}^{(V)} \big)$ and $\big( \Phi_{(k),1:L}^{(U)}, \Upsilon_{(k)}^{(U)} \big)$ have been reliably obtained by Receiver~$k \in [1,2]$ before starting the decoding process.


\newpage
\noindent \textbf{Decoding at Receiver~1}

This receiver forms the estimates $\hat{A}_{1:L}^n$ and $\hat{T}_{(1),1:L}^n$ of $\tilde{A}_{1:L}^n$ and $\tilde{T}_{(1),1:L}^n$ respectively by going forward, i.e., from $(\hat{A}_{1}^n, \hat{T}_{(1),1}^n)$ to $(\hat{A}_{L}^n, \hat{T}_{(1),L}^n)$. For~$i \in [1,L]$, it forms $\hat{A}_{i}^n$ first, and then $\hat{T}_{(1),i}^n$.

The decoding process at Receiver~1 when the \gls*{pcs} must achieve the corner point of $\mathfrak{R}_{\text{MI-WTBC}}^{(1)}$ or $\mathfrak{R}_{\text{MI-WTBC}}^{(2)}$ is summarized in Algorithm~\ref{alg:decoding1x}. Despite $\tilde{A}_{1:L}^n$ does not carry information intended for Receiver~$1$ when the \gls*{pcs} operates to achieve the corner point of $\mathfrak{R}_{\text{MI-WTBC}}^{(2)}$, recall that this receiver needs $\tilde{A}_{1:L}^n$ to reliably reconstruct $\tilde{T}_{(1),1:L}^n$.

In both cases, Receiver~1 constructs $\hat{A}^n_{1}$ and $\hat{T}^n_{(1),1}$ as follows. Given $\smash{\big(\Upsilon_{(1)}^{(V)},\Phi_{(1),1}^{(V)}\big)}$ and $\big(\Upsilon_{(1)}^{(U)}, \Phi_{(1),1}^{(U)} \big)$, it knows $\smash{\tilde{A}_{1}\big[ \big(\mathcal{L}_{V|Y_{(1)}}^{(n)} \big)^{\text{C}}\big]}$ and $\smash{\tilde{T}_{(1),1}\big[ \big(\mathcal{L}_{U_{(1)}|VY_{(1)}}^{(n)} \big)^{\text{C}}\big]}$, respectively. Therefore, from observations $\tilde{Y}_{(1),1}^n$ and $\smash{\tilde{A}_{1}\big[ \big(\mathcal{L}_{V|Y_{(1)}}^{(n)} \big)^{\text{C}}\big]}$, it entirely constructs $\hat{A}^n_{1}$ by performing \gls*{sc} decoding. Then, from sequences $\tilde{Y}_{(1),1}^n$, $\smash{\tilde{T}_{(1),1}\big[ \big(\mathcal{L}_{U_{(1)}|VY_{(1)}}^{(n)} \big)^{\text{C}}\big]}$ and $\hat{V}_{1}^n = \hat{A}_1^n G_n$, this receiver form $\hat{T}^n_{(1),1}$. 

Recall that $\Lambda_1^{(V)}$ and $\Lambda_{(1),1}^{(U)}$ have been replicated in all blocks. Thus, Receiver~1 obtains $\smash{\hat{\Lambda}_{1:L}^{(V)}} = \hat{A}_{1}\big[\mathcal{R}_{\Lambda}^{(n)}\big]$, while it obtains $\smash{\hat{\Lambda}_{(1),1:L}^{(U)}} = \hat{T}_{(1),1}\big[\mathcal{F}_{1}^{(n)}\big]$ or $\smash{\hat{\Lambda}_{(1),1:L}^{(U)}} = \hat{T}_{(1),1}\big[\mathcal{Q}_{1}^{(n)}\big]$ depending on whether the \gls*{pcs} must achieve the corner point of $\mathfrak{R}_{\text{MI-WTBC}}^{(1)}$ or $\mathfrak{R}_{\text{MI-WTBC}}^{(2)}$ respectively.

For $i \in [1,L-1]$, consider that $\hat{A}_{1:i}^n$ and $\hat{T}_{(1),1:i}^n$ have already been constructed. Then, the construction of $\hat{A}_{i+1}^n$ and $\hat{T}_{(1),i+1}^n$ is slightly different depending on whether the \gls*{pcs} must achieve the corner point of $\mathfrak{R}_{\text{MI-WTBC}}^{(1)}$ ($k=1$) or $\mathfrak{R}_{\text{MI-WTBC}}^{(2)}$ ($k=2$):
\begin{enumerate}
\item[] $k=1)$ From $\hat{A}_i^n$ and $\hat{T}_{(1),i}^n$, it obtains $\Upsilon_{(1),i+1}^{\prime (V)} \triangleq \big(\hat{{\Psi}}_{1,i}^{(V)} , \hat{{\Gamma}}_{2,i}^{(V)} , \hat{\Pi}_{(2),i}^{(V)}, \hat{\Theta}_{i+1}^{(V)} ,\hat{\Gamma}_{i+1}^{(V)} , \hat{\Lambda}_{i+1}^{(V)} \big)$. Notice in Algorithm~\ref{alg:decoding1x} that secret-keys $\kappa_{\Theta}^{(V)}$ and $\kappa_{\Gamma}^{(V)}$ are needed to obtain $\hat{\Theta}_{i+1}^{(V)}$ and $\hat{\Gamma}_{i+1}^{(V)}$, respectively. Moreover, $\Psi_{i-1}^{(V)}$ and $\Gamma_{i-1}^{(V)}$ may also be necessary for this purpose, and notice that they are available because $\big(\tilde{A}_{i-1}^n, \hat{T}_{(1),i-1}^n\big)$ has already been constructed. Recall that part of $\hat{\Theta}_{i+1}^{(V)}$ and $\hat{\Gamma}_{i+1}^{(V)}$ are obtained from $\hat{T}_{(1),i}^n$. On the other hand, from $\hat{T}_{(1),i}^n$, it obtains $\Upsilon_{(1),i+1}^{\prime (U)} \triangleq \big( \hat{{\Theta}}_{(1),i+1}^{(U)}, \hat{\Lambda}_{(1),i+1}^{(U)} \big)$, and $\kappa_{\Theta}^{(U)}$ is needed to get $\hat{{\Theta}}_{(1),i+1}^{(U)}$. 

\item[] $k=2$) From $\hat{A}_i^n$ and $\hat{T}_{(1),i}^n$, it obtains $\hat{\Upsilon}_{(1),i+1}^{\prime (V)} \triangleq  \big(\hat{\bar{\Psi}}_{1,i}^{(V)} , \hat{\bar{\Gamma}}_{2,i}^{(V)} , \hat{\Pi}_{(2),i}^{(V)}, \hat{\Theta}_{i+1}^{(V)} ,\hat{\Gamma}_{i+1}^{(V)} , \hat{\Lambda}_{i+1}^{(V)} \big)$ and $\Upsilon_{(1),i+1}^{\prime (U)} \triangleq \big( \hat{{\Theta}}_{(1),i+1}^{(U)}, \hat{\Lambda}_{(1),i+1}^{(U)}, \hat{{O}}_{(1),i+1}^{(U)} \big)$. Now, for $i \in [1,L-1]$ the encoder have repeated an \emph{encrypted} version of ${\Psi}_{i}^{(V)}$ and ${\Gamma}_{i}^{(V)}$ at Block~$i+1$, whereas ${\Theta}_{i+1}^{(V)}$, ${\Gamma}_{i+1}^{(V)}$ and ${\Theta}_{(1),i+1}^{(U)}$ have been repeated in Block~$i$ directly. Hence, notice in Algorithm~\ref{alg:decoding1x} that secret-keys $\kappa_{\Psi}^{(V)}$ and $\kappa_{\Gamma}^{(V)}$ are needed, whereas neither $\kappa_{\Theta}^{(V)}$ nor $\kappa_{\Theta}^{(U)}$ are used in this case. Moreover, notice that Receiver~1 now obtains $\hat{{O}}_{(1),i+1}^{(U)} = \hat{T}_{(1),i}[\mathcal{O}_{1}^{(n)}] \oplus \kappa_{O}^{(U)}$, which recall that contains part of the elements of $\hat{T}_{(1),i+1}^n$ that have been drawn by performing \gls*{sc} encoding and are needed by Receiver~1 to reliably reconstruct $\hat{T}_{(1),i+1}^n$. 
\end{enumerate}

Finally, given $\smash{\big(\Upsilon_{(1),i+1}^{\prime (V)},\Phi_{(1),i+1}^{(V)}\big)} \supseteq \hat{A}_{i+1}\big[ \big(  \mathcal{L}_{V|Y_{(1)}}^{(n)}\big)^{\text{C}}\big]$ and $\tilde{Y}_{(1),i+1}^n$, it performs \gls*{sc} decoding to construct $\hat{A}_{i+1}^n$. Then, given $\smash{\big(\Upsilon_{(1),i+1}^{\prime (U)},\Phi_{(1),i+1}^{(U)}\big)} \supseteq \hat{T}_{(1),i+1}\big[ \big(  \mathcal{L}_{U_{(1)}|VY_{(1)}}^{(n)}\big)^{\text{C}}\big]$, $\hat{V}_{i+1}^n = \hat{A}_{i+1}^n G_n$ and $\tilde{Y}_{(1),i+1}^n$, it performs \gls*{sc} decoding to construct $\hat{T}_{(1),i+1}^n$.

\begin{algorithm}[h!]
\caption{\small{Decoding at Receiver~1 when \gls*{pcs} operates to achieve the corner point of $\mathfrak{R}_{\text{MI-WTBC}}^{(k)}$}}\label{alg:decoding1x}
\begin{algorithmic}[1]
\Require $\Upsilon_{(1)}^{(V)}$, $\Phi_{(1),1:L}^{(V)}$, $\Upsilon_{(1)}^{(U)}$, $\Phi_{(1),1:L}^{(U)}$, and $\tilde{Y}_{(1),1:L}^n$.
\Require \textbf{if} $k=1$ \textbf{then} $\{\kappa_{\Gamma}^{(V)},\kappa_{\Theta}^{(V)},\kappa_{\Theta}^{(U)}\}$ \textbf{else} $\{\kappa_{\Gamma}^{(V)},\kappa_{\Psi}^{(V)},\kappa_{O}^{(U)}\}$
\State $\hat{A}_1^n \leftarrow \big( \Upsilon_{(1)}^{(V)}, \Phi_{(1),1}^{(V)}, \tilde{Y}_{(1),1}^n \big)$
\Comment by using \gls*{sc} decoding
\State $\hat{T}_{(1),1}^n \leftarrow \big( \Upsilon_{(1)}^{(U)}, \Phi_{(1),1}^{(U)}, \hat{A}^n_{1}G_n, \tilde{Y}_{(1),1}^n \big)$
\Comment by using \gls*{sc} decoding
\State $\hat{\Lambda}_{2:L}^{(V)} \leftarrow \hat{A}_1\big[\mathcal{R}_{\Lambda}^{(n)} \big]$
\State \textbf{if} $k=1$ \textbf{then} $\hat{\Lambda}_{2:L}^{(U)} \leftarrow \hat{T}_{(1),1}\big[\mathcal{F}_{1}^{(n)} \big]$ \textbf{else} $\hat{\Lambda}_{2:L}^{(U)} \leftarrow \hat{T}_{(1),1}\big[\mathcal{Q}_{1}^{(n)} \big]$
\For{$i = 1$ \text{to} $L-1$} 
\State $\hat{\Pi}_{(2),i}^{(V)} \leftarrow \hat{A}_{i}[ \mathcal{I}^{(n)} \cap \mathcal{G}_{2}^{(n)}]$
\If{$k=1$}
\State $\hat{\Psi}_{i}^{(V)} \leftarrow \hat{A}_i [\mathcal{C}_{2}^{(n)}]$ \textbf{and} $\hat{\Gamma}_{i}^{(V)} \leftarrow \hat{A}_i[\mathcal{C}_{1,2}^{(n)}]$ 
\State $\hat{\bar{\Theta}}_{1,i+1}^{(V)} \leftarrow \hat{A}_i[\mathcal{R}_{1}^{(n)}]$ \textbf{and} 
$\hat{\bar{\Theta}}_{2,i+1}^{(V)} \leftarrow \hat{A}_i[\mathcal{R}_{1,2}^{\prime (n)}] \oplus \hat{\Psi}_{2,i-1}^{(V)}$ 
\State $\hat{\bar{\Gamma}}_{1,i+1}^{(V)} \leftarrow \hat{A}_i[\mathcal{R}_{1,2}^{(n)}]  \oplus \hat{\Gamma}_{1,i-1}^{(V)}$ \textbf{and} $\hat{\bar{\Gamma}}_{2,i+1}^{(V)} \leftarrow \hat{A}_i[\mathcal{R}_{1}^{\prime (n)}]$
\State $\Delta_{(1),i+1}^{(U)} \leftarrow \hat{T}_{(1),i}[\mathcal{L}_{1}^{(n)}]$
\Comment $\Delta_{(1),i+1}^{(U)}=\big[\hat{\bar{\Theta}}_{3,i+1}^{(V)},\hat{\bar{\Gamma}}_{3,i+1}^{(V)}\big]$
\State $\hat{\Theta}_{i+1}^{(V)} \leftarrow \big[\hat{\bar{\Theta}}_{1,i+1}^{(V)}, \hat{\bar{\Theta}}_{2,i+1}^{(V)}, \hat{\bar{\Theta}}_{3,i+1}^{(V)} \big] \oplus \kappa_{\Theta}^{(V)}$
\State $\hat{\Gamma}_{i+1}^{(V)} \leftarrow \big[\hat{\bar{\Gamma}}_{1,i+1}^{(V)},\hat{\bar{\Gamma}}_{2,i+1}^{(V)},\hat{\bar{\Gamma}}_{3,i+1}^{(V)}\big] \oplus \kappa_{\Gamma}^{(V)}$
\State $\hat{\Upsilon}_{(1),i+1}^{\prime (V)} \leftarrow  \big(\hat{\Psi}_{1,i}^{(V)} , \hat{\Gamma}_{2,i}^{(V)} ,\hat{\Pi}_{(2),i}^{(V)} , \hat{\Theta}_{i+1}^{(V)} ,\hat{\Gamma}_{i+1}^{(V)} , \hat{\Lambda}_{i+1}^{(V)} \big)$
\State $\hat{{\Theta}}_{(1),i+1}^{(U)} \leftarrow \hat{T}_{(1),i}[\mathcal{D}_{1}^{(n)}] \oplus \kappa_{\Theta}^{(U)}$
\State $\hat{\Upsilon}_{(1),i+1}^{\prime (U)} \leftarrow  \big( \hat{{\Theta}}_{(1),i+1}^{(U)}, \hat{\Lambda}_{(1),i+1}^{(U)} \big)$
\Else
\State $\hat{\Psi}_{i}^{(V)} \leftarrow \hat{A}_i [\mathcal{C}_{2}^{(n)}]$ \textbf{and} $\hat{\Gamma}_{i}^{(V)} \leftarrow \hat{A}_i[\mathcal{C}_{1,2}^{(n)}]$
\State $\hat{\bar{\Psi}}_{i}^{(V)} \leftarrow \hat{\Psi}_{i}^{(V)} \oplus \kappa_{\Psi}^{(V)}$ \textbf{and} $\hat{\bar{\Gamma}}_{i}^{(V)} \leftarrow \hat{\Gamma}_{i}^{(V)} \oplus \kappa_{\Gamma}^{(V)}$
\State $\hat{{\Theta}}_{1,i+1}^{(V)} \leftarrow \hat{A}_i[\mathcal{R}_{1}^{(n)}]$ \textbf{and} 
$\hat{{\Theta}}_{2,i+1}^{(V)} \leftarrow \hat{A}_i[\mathcal{R}_{1,2}^{\prime (n)}] \oplus \hat{\bar{\Psi}}_{2,i-1}^{(V)}$ 
\State $\hat{{\Gamma}}_{1,i+1}^{(V)} \leftarrow \hat{A}_i[\mathcal{R}_{1,2}^{(n)}]  \oplus \hat{\bar{\Gamma}}_{1,i-1}^{(V)}$ \textbf{and} $\hat{{\Gamma}}_{2,i+1}^{(V)} \leftarrow \hat{A}_i[\mathcal{R}_{1}^{\prime (n)}]$
\State $\big(\hat{\Pi}_{(1),i+1}^{(V)}, \Delta_{(1),i+1}^{(U)} \big) \leftarrow \hat{T}_{(1),i}[\mathcal{M}_{1}^{(n)}]$
\Comment $\Delta_{(1),i+1}^{(U)}=\big[\hat{{\Theta}}_{3,i+1}^{(V)},\hat{{\Gamma}}_{3,i+1}^{(V)}\big]$
\State $\hat{\Upsilon}_{(1),i+1}^{\prime (V)} \leftarrow  \big(\hat{\bar{\Psi}}_{1,i}^{(V)} , \hat{\bar{\Gamma}}_{2,i}^{(V)} , \hat{\Pi}_{(2),i}^{(V)}, \hat{\Pi}_{(1),i+1}^{(V)}, \hat{\Theta}_{i+1}^{(V)} ,\hat{\Gamma}_{i+1}^{(V)} , \hat{\Lambda}_{i+1}^{(V)} \big)$
\State $\hat{{\Theta}}_{(1),i+1}^{(U)} \leftarrow \hat{T}_{(1),i}[\mathcal{N}_{1}^{(n)}]$
\State $\hat{{O}}_{(1),i+1}^{(U)} \leftarrow \hat{T}_{(1),i}[\mathcal{O}_{1}^{(n)}] \oplus \kappa_{O}^{(U)}$
\State $\hat{\Upsilon}_{(1),i+1}^{\prime (U)} \leftarrow  \big( \hat{{\Theta}}_{(1),i+1}^{(U)}, \hat{\Lambda}_{(1),i+1}^{(U)}, \hat{{O}}_{(1),i+1}^{(U)} \big)$
\EndIf
\State $\hat{A}_{i+1}^n \leftarrow \big( \hat{\Upsilon}_{(1),i+1}^{\prime (V)}, \Phi_{(1),i+1}^{(V)}, \tilde{Y}_{(1),i+1}^n \big)$
\State $\hat{T}_{(1),i+1}^n \leftarrow \big( \hat{\Upsilon}_{(1),i+1}^{\prime (U)}, \Phi_{(1),i+1}^{(U)}, \hat{A}^n_{i+1}G_n, \tilde{Y}_{(1),i+1}^n \big)$
\EndFor \\
\textbf{Return} $\big(\hat{W}_{(1),1:L},\hat{S}_{(1),1:L} \big) \leftarrow \big(\hat{A}_{i+1}^n \hat{T}_{(1),i+1}^n\big)$
\end{algorithmic}
\end{algorithm}

\newpage
\noindent \textbf{Decoding at Receiver~2}

This receiver forms the estimates $\hat{A}_{1:L}^n$ and $\hat{T}_{(2),1:L}^n$ of $\tilde{A}_{1:L}^n$ and $\tilde{T}_{(2),1:L}^n$ respectively by going backward, that is, from $(\hat{A}_{L}^n, \hat{T}_{(2),L}^n)$ to $(\hat{A}_{1}^n, \hat{T}_{(2),1}^n)$. 

The decoding process at Receiver~2 when the \gls*{pcs} must achieve the corner point of $\mathfrak{R}_{\text{MI-WTBC}}^{(1)}$ or $\mathfrak{R}_{\text{MI-WTBC}}^{(2)}$ is summarized in Algorithm~\ref{alg:decoding2x}. Despite $\tilde{A}_{1:L}^n$ does not carry information intended for Receiver~$2$ when the \gls*{pcs} operates to achieve the corner point of $\mathfrak{R}_{\text{MI-WTBC}}^{(1)}$, recall that this receiver needs $\tilde{A}_{1:L}^n$ to reliably reconstruct $\tilde{T}_{(2),1:L}^n$.

In both cases, Receiver~2 constructs $\hat{A}^n_{L}$ and $\hat{T}^n_{(2),L}$ as follows. Given $\smash{\big(\Upsilon_{(2)}^{(V)},\Phi_{(2),L}^{(V)}\big)}$ and $\big(\Upsilon_{(2)}^{(U)}, \Phi_{(2),L}^{(U)} \big)$, it knows $\smash{\tilde{A}_{L}\big[ \big(\mathcal{L}_{V|Y_{(2)}}^{(n)} \big)^{\text{C}}\big]}$ and $\smash{\tilde{T}_{(2),L}\big[ \big(\mathcal{L}_{U_{(2)}|VY_{(2)}}^{(n)} \big)^{\text{C}}\big]}$, respectively. Hence, from observations $\tilde{Y}_{(2),L}^n$ and $\smash{\tilde{A}_{L}\big[ \big(\mathcal{L}_{V|Y_{(2)}}^{(n)} \big)^{\text{C}}\big]}$, it entirely constructs $\hat{A}^n_{L}$ by performing \gls*{sc} decoding. Then, from sequences $\tilde{Y}_{(2),L}^n$, $\smash{\tilde{T}_{(2),L}\big[ \big(\mathcal{L}_{U_{(2)}|VY_{(2)}}^{(n)} \big)^{\text{C}}\big]}$ and $\hat{V}_{L}^n = \hat{A}_L^n G_n$, this receiver form $\hat{T}^n_{(2),L}$. 

Recall that $\Lambda_1^{(V)}$ and $\Lambda_{(2),1}^{(U)}$ have been replicated in all blocks. Thus, Receiver~2 obtains $\smash{\hat{\Lambda}_{1:L}^{(V)}} = \hat{A}_{L}\big[\mathcal{R}_{\Lambda}^{(n)}\big]$, while it gets $\smash{\hat{\Lambda}_{(2),1:L}^{(U)}} = \hat{T}_{(2),L}\big[\mathcal{F}_{2}^{(n)}\big]$ or $\smash{\hat{\Lambda}_{(2),1:L}^{(U)}} = \hat{T}_{(2),L}\big[\mathcal{Q}_{2}^{(n)}\big]$ depending on whether the \gls*{pcs} must achieve the corner point of $\mathfrak{R}_{\text{MI-WTBC}}^{(1)}$ or $\mathfrak{R}_{\text{MI-WTBC}}^{(2)}$ respectively.

For $i \in [2,L]$, consider that $\hat{A}_{i:L}^n$ and $\hat{T}_{(2),i:L}^n$ have already been formed. Then, the construction of $\hat{A}_{i-1}^n$ and $\hat{T}_{(2),i-1}^n$ is slightly different depending on whether the \gls*{pcs} must achieve the corner point of $\mathfrak{R}_{\text{MI-WTBC}}^{(1)}$ ($k=1$) or $\mathfrak{R}_{\text{MI-WTBC}}^{(2)}$ ($k=2$):
\begin{enumerate}
\item[] $k=1)$ From $\hat{A}_i^n$ and $\hat{T}_{(2),i}^n$, it obtains $\hat{\Upsilon}_{(2),i-1}^{\prime (V)} \triangleq  \big(\hat{\bar{\Theta}}_{1,i}^{(V)} , \hat{\bar{\Gamma}}_{2,i}^{(V)} , \hat{\Pi}_{(2),i-1}^{(V)}, \hat{\Psi}_{i-1}^{(V)} ,\hat{\Gamma}_{i-1}^{(V)} , \hat{\Lambda}_{i-1}^{(V)} \big)$ and $\hat{\Upsilon}_{(2),i-1}^{\prime (U)} \triangleq  \big( \hat{{\Psi}}_{(2),i-1}^{(U)}, \hat{\Lambda}_{(2),i-1}^{(U)}, \hat{{O}}_{(2),i-1}^{(U)} \big)$. Notice in Algorithm~\ref{alg:decoding2x} that $\kappa_{\Theta}^{(V)}$ and $\kappa_{\Gamma}^{(V)}$ are needed because the encoder have repeated an \emph{encrypted} version of $\hat{\Theta}_{i}^{(V)}$ and $\hat{\Gamma}_{i}^{(V)}$ at Block~$i-1$. In order to obtain $\hat{\Psi}_{i-1}^{(V)}$ and $\hat{\Gamma}_{i-1}^{(V)}$, recall that part of these sequences may have been repeated in $\tilde{T}_{(2),i}^n$. Moreover, notice that part of the encrypted versions of $\hat{\Theta}_{i+1}^{(V)}$ and $\hat{\Gamma}_{i+1}^{(V)}$ may also be needed to obtain $\hat{\Psi}_{i-1}^{(V)}$ and $\hat{\Gamma}_{i-1}^{(V)}$, which are available because $\big(\tilde{A}_{i+1}^n, \hat{T}_{(2),i+1}^n\big)$ has already been constructed. Now Receiver~2 obtains $\Psi_{(2),i-1}^{(U)}$ and $\hat{{O}}_{(2),i-1}^{(U)} = \hat{T}_{(2),i}[\mathcal{O}_{2}^{(n)}] \oplus \kappa_{O}^{(U)}$ from $\tilde{T}_{(2),i}^n$, where recall that $\hat{{O}}_{(2),i-1}^{(U)}$ contains part of the elements of $\hat{T}_{(2),i-1}^n$ that have been drawn by performing \gls*{sc} encoding. 

\item[] $k=2$) From $\hat{A}_i^n$ and $\hat{T}_{(2),i}^n$, it obtains $\hat{\Upsilon}_{(2),i-1}^{\prime (V)} \triangleq  \big(\hat{\Theta}_{1,i}^{(V)} , \hat{\Gamma}_{2,i}^{(V)} ,\hat{\Pi}_{(2),i-1}^{(V)} , \hat{\Psi}_{i-1}^{(V)} ,\hat{\Gamma}_{i-1}^{(V)} , \hat{\Lambda}_{i-1}^{(V)} \big)$ and $\hat{\Upsilon}_{(2),i-1}^{\prime (U)} \triangleq  \big( \hat{{\Psi}}_{(2),i-1}^{(U)}, \hat{\Lambda}_{(2),i-1}^{(U)} \big)$. Now, the Receiver~2 uses $\kappa_{\Psi}^{(V)}$, $\kappa_{\Gamma}^{(V)}$ and $\kappa_{\Psi}^{(U)}$ because the encoder have repeated an encrypted version of $\hat{\Psi}_{i-1}^{(V)}$, $\hat{\Gamma}_{i-1}^{(V)}$ and $\hat{\Psi}_{i-1}^{(U)}$ in Block~$i$.
\end{enumerate}

Finally, given $\smash{\big(\Upsilon_{(2),i-1}^{\prime (V)},\Phi_{(2),i-1}^{(V)}\big)} \supseteq \hat{A}_{i-1}\big[ \big(  \mathcal{L}_{V|Y_{(2)}}^{(n)}\big)^{\text{C}}\big]$ and $\tilde{Y}_{(2),i-1}^n$, it performs \gls*{sc} decoding to construct $\hat{A}_{i-1}^n$. Then, given $\smash{\big(\Upsilon_{(2),i-1}^{\prime (U)},\Phi_{(2),i-1}^{(U)}\big)} \supseteq \hat{T}_{(2),i-1}\big[ \big(  \mathcal{L}_{U_{(2)}|VY_{(2)}}^{(n)}\big)^{\text{C}}\big]$, $\hat{V}_{i-1}^n = \hat{A}_{i-1}^n G_n$ and $\tilde{Y}_{(2),i-1}^n$, it performs \gls*{sc} decoding to construct $\hat{T}_{(2),i-1}^n$. 

\newpage

\begin{algorithm}[h!]
\caption{\small{Decoding at Receiver~2 when \gls*{pcs} operates to achieve the corner point of $\mathfrak{R}_{\text{MI-WTBC}}^{(k)}$}}\label{alg:decoding2x}
\begin{algorithmic}[1]
\Require $\Upsilon_{(2)}^{(V)}$, $\Phi_{(2),1:L}^{(V)}$, $\Upsilon_{(2)}^{(U)}$, $\Phi_{(2),1:L}^{(U)}$, and $\tilde{Y}_{(2),1:L}^n$.
\Require \textbf{if} $k=1$ \textbf{then} $\{\kappa_{\Gamma}^{(V)},\kappa_{\Theta}^{(V)}, \kappa_{O}^{(U)} \}$ \textbf{else} $\{\kappa_{\Gamma}^{(V)},\kappa_{\Psi}^{(V)}, \kappa_{\Psi}^{(U)} \}$
\State $\hat{A}_{L}^n \leftarrow \big( \Upsilon_{(2)}^{(V)}, \Phi_{(2),L}^{(V)}, \tilde{Y}_{(2),L}^n \big)$
\Comment by using \gls*{sc} decoding
\State $\hat{T}_{(2),L}^n \leftarrow \big( \Upsilon_{(2)}^{(U)}, \Phi_{(2),L}^{(U)}, \hat{A}^n_{L}G_n, \tilde{Y}_{(2),L}^n \big)$
\Comment by using \gls*{sc} decoding
\State $\hat{\Lambda}_{1:L-1}^{(V)} \leftarrow \hat{A}_L\big[\mathcal{R}_{\Lambda}^{(n)} \big]$
\State \textbf{if} $k=1$ \textbf{then} $\hat{\Lambda}_{1:L-1}^{(U)} \leftarrow \hat{T}_{(2),L}\big[\mathcal{F}_{2}^{(n)} \big]$ \textbf{else} $\hat{\Lambda}_{1:L-1}^{(U)} \leftarrow \hat{T}_{(2),L}\big[\mathcal{Q}_{2}^{(n)} \big]$
\For{$i = L$ \text{to} $2$} 
\State $\hat{\Pi}_{(2),i-1}^{(V)} \leftarrow \hat{A}_{i}[ \mathcal{R}_{\text{S}}^{(n)}]$
\If{$k=1$}
\State $\hat{\Theta}_{i}^{(V)} \leftarrow \hat{A}_i [\mathcal{C}_{1}^{(n)}]$ \textbf{and} $\hat{\Gamma}_{i}^{(V)} \leftarrow \hat{A}_i[\mathcal{C}_{1,2}^{(n)}]$
\State $\hat{\bar{\Theta}}_{i}^{(V)} \leftarrow \hat{\Theta}_{i}^{(V)} \oplus \kappa_{\Theta}^{(V)}$ \textbf{and} $\hat{\bar{\Gamma}}_{i}^{(V)} \leftarrow \hat{\Gamma}_{i}^{(V)} \oplus \kappa_{\Gamma}^{(V)}$
\State $\hat{{\Psi}}_{1,i-1}^{(V)} \leftarrow \hat{A}_i[\mathcal{R}_{2}^{(n)}]$ \textbf{and} 
$\hat{{\Psi}}_{2,i-1}^{(V)} \leftarrow \hat{A}_i[\mathcal{R}_{1,2}^{\prime (n)}] \oplus \hat{\bar{\Theta}}_{2,i+1}^{(V)}$ 
\State $\hat{{\Gamma}}_{1,i-1}^{(V)} \leftarrow \hat{A}_i[\mathcal{R}_{1,2}^{(n)}]  \oplus \hat{\bar{\Gamma}}_{1,i+1}^{(V)}$ \textbf{and} $\hat{{\Gamma}}_{2,i-1}^{(V)} \leftarrow \hat{A}_i[\mathcal{R}_{2}^{\prime (n)}]$
\State $\Delta_{(2),i-1}^{(U)} \leftarrow \hat{T}_{(2),i}[\mathcal{M}_{2}^{(n)}]$
\Comment $\Delta_{(2),i-1}^{(U)}=\big[\hat{{\Psi}}_{3,i-1}^{(V)},\hat{{\Gamma}}_{3,i-1}^{(V)}\big]$
\State $\hat{\Upsilon}_{(1),i-1}^{\prime (V)} \leftarrow  \big(\hat{\bar{\Theta}}_{1,i}^{(V)} , \hat{\bar{\Gamma}}_{2,i}^{(V)} , \hat{\Pi}_{(2),i-1}^{(V)}, \hat{\Psi}_{i-1}^{(V)} ,\hat{\Gamma}_{i-1}^{(V)} , \hat{\Lambda}_{i-1}^{(V)} \big)$
\State $\hat{{\Psi}}_{(2),i-1}^{(U)} \leftarrow \hat{T}_{(2),i}[\mathcal{N}_{2}^{(n)}]$
\State $\hat{{O}}_{(2),i-1}^{(U)} \leftarrow \hat{T}_{(2),i}[\mathcal{O}_{2}^{(n)}] \oplus \kappa_{O}^{(U)}$
\State $\hat{\Upsilon}_{(2),i-1}^{\prime (U)} \leftarrow  \big( \hat{{\Psi}}_{(2),i-1}^{(U)}, \hat{\Lambda}_{(2),i-1}^{(U)}, \hat{{O}}_{(2),i-1}^{(U)} \big)$
\Else
\State $\hat{\Theta}_{i}^{(V)} \leftarrow \hat{A}_i [\mathcal{C}_{1}^{(n)}]$ \textbf{and} $\hat{\Gamma}_{i}^{(V)} \leftarrow \hat{A}_i[\mathcal{C}_{1,2}^{(n)}]$ 
\State $\hat{\bar{\Psi}}_{1,i-1}^{(V)} \leftarrow \hat{A}_i[\mathcal{R}_{2}^{(n)}]$ \textbf{and} 
$\hat{\bar{\Psi}}_{2,i-1}^{(V)} \leftarrow \hat{A}_i[\mathcal{R}_{1,2}^{\prime (n)}] \oplus \hat{\Theta}_{2,i+1}^{(V)}$ 
\State $\hat{\bar{\Gamma}}_{1,i-1}^{(V)} \leftarrow \hat{A}_i[\mathcal{R}_{1,2}^{(n)}]  \oplus \hat{\Gamma}_{1,i+1}^{(V)}$ \textbf{and} $\hat{\bar{\Gamma}}_{2,i-1}^{(V)} \leftarrow \hat{A}_i[\mathcal{R}_{2}^{\prime (n)}]$
\State $\Delta_{(2),i-1}^{(U)} \leftarrow \hat{T}_{(2),i}[\mathcal{L}_{2}^{(n)}]$
\Comment $\Delta_{(2),i-1}^{(U)}=\big[\hat{\bar{\Psi}}_{3,i-1}^{(V)},\hat{\bar{\Gamma}}_{3,i-1}^{(V)}\big]$
\State $\hat{\Psi}_{i-1}^{(V)} \leftarrow \big[\hat{\bar{\Psi}}_{1,i-1}^{(V)}, \hat{\bar{\Psi}}_{2,i-1}^{(V)}, \hat{\bar{\Psi}}_{3,i-1}^{(V)} \big] \oplus \kappa_{\Psi}^{(V)}$
\State $\hat{\Gamma}_{i-1}^{(V)} \leftarrow \big[\hat{\bar{\Gamma}}_{1,i-1}^{(V)},\hat{\bar{\Gamma}}_{2,i-1}^{(V)},\hat{\bar{\Gamma}}_{3,i-1}^{(V)}\big] \oplus \kappa_{\Gamma}^{(V)}$
\State $\hat{\Upsilon}_{(2),i-1}^{\prime (V)} \leftarrow  \big(\hat{\Theta}_{1,i}^{(V)} , \hat{\Gamma}_{2,i}^{(V)} ,\hat{\Pi}_{(2),i-1}^{(V)} , \hat{\Psi}_{i-1}^{(V)} ,\hat{\Gamma}_{i-1}^{(V)} , \hat{\Lambda}_{i-1}^{(V)} \big)$
\State $\hat{{\Psi}}_{(2),i-1}^{(U)} \leftarrow \hat{T}_{(2),i}[\mathcal{D}_{2}^{(n)}] \oplus \kappa_{\Psi}^{(U)}$
\State $\hat{\Upsilon}_{(2),i-1}^{\prime (U)} \leftarrow  \big( \hat{{\Psi}}_{(2),i-1}^{(U)}, \hat{\Lambda}_{(2),i-1}^{(U)} \big)$
\EndIf
\State $\hat{A}_{i-1}^n \leftarrow \big( \hat{\Upsilon}_{(2),i-1}^{\prime (V)}, \Phi_{(2),i-1}^{(V)}, \tilde{Y}_{(2),i-1}^n \big)$
\State $\hat{T}_{(2),i-1}^n \leftarrow \big( \hat{\Upsilon}_{(2),i-1}^{\prime (U)}, \Phi_{(2),i-1}^{(U)}, \hat{A}^n_{i-1}G_n, \tilde{Y}_{(2),i-1}^n \big)$
\EndFor \\
\textbf{Return} $\big(\hat{W}_{(2),1:L},\hat{S}_{(2),1:L} \big) \leftarrow \big(\hat{A}_{i-1}^n \hat{T}_{(2),i-1}^n\big)$
\end{algorithmic}
\end{algorithm}
\newpage

\begin{remark}
According to the previous decoding algorithms, Receiver~$k \in [1,2]$ decodes both $\tilde{A}_{i}^n$ and $\tilde{T}_{(k),i}^n$ from Block~$i\in [1,L]$ before moving to adjacent blocks (polar-based jointly decoding). Indeed, Receiver~1 needs to obtain first $\hat{T}_{(1),i}^n$ before decoding $\tilde{A}_{i+1}^n$ because $\big[\Pi_{(1),i+1}^{(V)}, \Delta_{(1),i+1}^{(V)} \big]$, which is required by this receiver to reliably estimate $\tilde{A}_{i+1}^n$, is repeated in $\tilde{T}_{(1),i}^n$. Similarly, Receiver~2 needs $\Delta_{(2),i-1}^{(V)}$ to reliably decode $\tilde{A}_{i-1}^n$, but it is repeated in $\hat{T}_{(2),i}^n$. 


Consider another decoding strategy for Receiver~$k\in [1,2]$ that obtains first $\hat{A}_{1:L}^n$, and then decodes the outer-layer $\tilde{T}_{(k),1:L}^n$. We refer to this decoding strategy as polar-based successive decoding. Clearly, $(R_{S_{(1)}}^{\star 2}, R_{S_{(2)}}^{\star 2}, R_{W_{(1)}}^{\star 2}, R_{W_{(2)}}^{\star 2}) \subset \mathfrak{R}_{\text{{MI-WTBC}}}^{(2)}$ is not achievable by using successive decoding because, according to the summary of the construction of $\tilde{A}_{1:L}\big[ \mathcal{G}^{(n)} \big]$ in the last part of Section~\ref{sec:PCSx1il}, in all cases $\tilde{T}_{(k),i}^n$ contains elements required by Receiver~1 to reliably decode $\tilde{A}_{i+1}^n$. Furthermore, for the same reason, all situations where $I(V;Y_{(k)}) < I(V;Z)$ for some $k \in [1,2]$ are not possible by using this strategy. Consequently, it is clear that joint decoding enlarges the inner-bound on the achievable region for a particular distribution\footnote{Although for a particular distribution the inner-bound is strictly larger with joint decoding, we cannot affirm that this decoding strategy enlarges $\mathfrak{R}_{\text{{MI-WTBC}}}$: rate points that are not achievable with successive decoding for this particular distribution may be achievable under another distribution.}.
\end{remark}

\section{Performance of the polar coding scheme}\label{sec:performancex}
The analysis of the polar coding scheme of Section~\ref{sec:PCSx1} leads to the following theorem.

\begin{theorem}\label{th:th1x}
Let $(\mathcal{X}, p_{Y_{(1)}Y_{(2)} Z|X}, \mathcal{Y}_{(1)} \times \mathcal{Y}_{(2)} \times \mathcal{Z})$  be an arbitrary \gls*{wtbc} where $\mathcal{X}~\in~\{0,1\}$. The \gls*{pcs} in Section~\ref{sec:PCSx1} achieves any corner point $(R_{S_{(1)}}^{\star k}, R_{S_{(2)}}^{\star k}, R_{W_{(1)}}^{\star k}, R_{W_{(2)}}^{\star k}) \subset \mathfrak{R}_{\text{{MI-WTBC}}}^{(k)}$ given in \eqref{eq:rsk}--\eqref{eq:rwbk} for any $k \in [1,2]$. 
%
\end{theorem}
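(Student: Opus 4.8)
The plan is to verify, in turn, that the code of Section~\ref{sec:PCSx1} induces a joint distribution close to the target DMS distribution, is reliable, and is strongly secret, and then to compute the asymptotic rates; throughout one lets $n\to\infty$ first and $L\to\infty$ afterwards, so that the secret keys and the one-time-padded side information become rate-negligible. As in \cite{alos2019polar} and \cite{6949646}, the analysis factors through a single total-variation bound between the code and the target.

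\textbf{Distribution approximation.} First I would show that the encoder-induced distribution over $(\tilde V_{1:L}^n,\tilde U_{(1),1:L}^n,\tilde U_{(2),1:L}^n)$ is $O(Ln\delta_n)$-close in total variation to the $L$-fold product of the single-block target $p_{V^nU_{(1)}^nU_{(2)}^n}$. This is done layer by layer and block by block: on the low-entropy sets $\mathcal{L}_V^{(n)}$, $\mathcal{L}_{U_{(k)}|V}^{(n)}$, $\mathcal{L}_{U_{(\bar k)}|VU_{(k)}}^{(n)}$ the encoder uses deterministic SC encoding and on the remaining non-high-entropy coordinates randomized SC encoding, which together contribute $O(\sqrt{n\delta_n})$ per layer-block by the standard argument; the high-entropy coordinates carry uniform bits and match the target by construction. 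The delicate point is the chaining: every repeated entry must sit in a coordinate the target makes nearly uniform and independent of the conditioning variables, and for the coordinates that carry copies of $\Theta$, $\Psi$, $\Gamma$ and of the outer-layer sequences this independence is enforced precisely by XOR-ing with the reused keys $\kappa_\Theta^{(V)},\kappa_\Psi^{(V)},\kappa_\Gamma^{(V)}$, $\kappa_\Theta^{(U)}/\kappa_\Psi^{(U)}$ and $\kappa_O^{(U)}$. One checks, case by case (A--F, Situations~1--3), that the cardinalities recorded in Section~\ref{sec:PCSx1il} and Section~\ref{sec:PCSx1ol} make every such placement well defined.

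\textbf{Reliability and secrecy.} For reliability I would invoke the SC-decoding guarantee \cite[Theorem~2]{arikan2010source} block by block: each receiver, holding $\tilde Y_{(k),i}^n$, the padded frozen sets $\Phi_{(k),1:L}^{(V)},\Upsilon_{(k)}^{(V)},\Phi_{(k),1:L}^{(U)},\Upsilon_{(k)}^{(U)}$, and the chained sequences recovered from the block it has just decoded, reconstructs $\tilde A_i^n$ and then $\tilde T_{(k),i}^n$ with error $O(n\delta_n)$. Here one checks that the decoding order is compatible with the chaining — Receiver~1 goes forward and obtains $[\Pi_{(1),i+1}^{(V)},\Delta_{(1),i+1}^{(V)}]$, needed for $\tilde A_{i+1}^n$, from $\tilde T_{(1),i}^n$, while Receiver~2 goes backward and obtains $\Delta_{(2),i-1}^{(V)}$ from $\tilde T_{(2),i}^n$ — and a union bound over the $2L$ layer-blocks (plus the vanishing event that a pad was itself mis-decoded) gives $O(Ln\delta_n)\to 0$. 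For strong secrecy I would bound $I(S_{(1)}S_{(2)};\tilde Z_{1:L}^n)$. The confidential bits occupy only the $\mathcal{G}^{(n)}$-type set (inner layer) and the $\mathcal{H}_{U_{(k)}|VZ}^{(n)}$- and $\mathcal{H}_{U_{(\bar k)}|VU_{(k)}Z}^{(n)}$-type sets (outer layers), all of which carry $\approx 1$ bit of entropy given $(\text{past},Z^n)$; hence the leakage would vanish under the true conditionals, and one transfers this to the code via the Step-1 total-variation bound and the standard lemma relating $\mathbb{V}$ to differences of conditional entropies. The chaining is handled by a telescoping argument over blocks: a confidential sub-block repeated in an adjacent block is either placed in an inner/outer slot that is itself secure, or it is masked by one of the reused keys, so the eavesdropper's view factorizes enough that $I(S_{(1)}S_{(2)};\tilde Z_{1:L}^n)\le\sum_{i}(\text{per-block leakage})+(\text{corrections})=O(Ln\delta_n)$; the one-time pads $\kappa_{\Upsilon\Phi_{(k)}}^{(V)},\kappa_{\Upsilon\Phi_{(k)}}^{(U)}$ make the public side information independent of everything and contribute nothing.

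\textbf{Rates, and the main obstacle.} Using the set-size limits of \cite[Theorem~1]{arikan2010source} — $\tfrac1n|\mathcal{H}_V^{(n)}|\to H(V)$, $\tfrac1n|\mathcal{H}_{V|Z}^{(n)}|\to H(V|Z)$, and the analogues for $U_{(k)}$ and $U_{(\bar k)}$ — together with the explicit cardinalities of $\mathcal{I}^{(n)}$, $\Delta_{(k),i}^{(V)}$, $\Pi_{(k),i}^{(V)}$, $\mathcal{F}_0^{(n)}\setminus(\mathcal{D}_k^{(n)}\cup\mathcal{L}_k^{(n)})$ and $\mathcal{Q}_0^{(n)}\setminus(\mathcal{O}_{\bar k}^{(n)}\cup\mathcal{N}_{\bar k}^{(n)}\cup\mathcal{M}_{\bar k}^{(n)})$ collected at the end of Sections~\ref{sec:PCSx1il} and~\ref{sec:PCSx1ol}, I would sum the confidential and private bits over all $L$ blocks, divide by $Ln$, and pass to the limit. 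The bits lost to chaining are $O(n)$ at the two boundary blocks plus $O(1)\cdot L$ for the replicated $\Lambda$-sequences and all keys, i.e.\ $O(1/L)+o(1)$ after normalization, hence they vanish; matching the limiting expressions to \eqref{eq:rsk}--\eqref{eq:rwbk} — e.g.\ $R_{S_{(k)}}^{\star k}$ collects $H(V|Z)-H(V|Y_{(k)})$ from the inner secure set and $H(U_{(k)}|VZ)-H(U_{(k)}|VY_{(k)})$ from the outer secure set, $R_{S_{(\bar k)}}^{\star k}$ collects the outer term $H(U_{(\bar k)}|VU_{(k)}Z)-H(U_{(\bar k)}|VY_{(\bar k)})$ corrected by $H(V|Y_{(\bar k)})-\min\{H(V|Y_{(1)}),H(V|Z)\}$ for the inner chaining overhead, and the private rates $R_{W_{(k)}}^{\star k},R_{W_{(\bar k)}}^{\star k}$ come from the $\mathcal{C}^{(n)}$/$\mathcal{J}$/$\mathcal{B}$ sets — completes the proof. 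The main obstacle is the secrecy analysis under the bidirectional, inter-layer chaining: one must track exactly which coordinate of which layer every repeated or key-masked confidential sub-block lands in, across all six cases and three situations, to be sure no secure bit is ever exposed in a non-secure slot; the accompanying rate bookkeeping is tedious but, once the size accounting of Section~\ref{sec:PCSx1} is in hand, mechanical.
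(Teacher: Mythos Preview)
Your plan tracks the paper's proof closely: the four-step structure (distribution approximation, reliability, secrecy, rates) is exactly what the paper does in Section~\ref{sec:performancex}, and your rate computations and reliability argument are essentially the paper's Sections~\ref{sec:performance_ratesx} and~\ref{sec:performance_reliabilityx}. Two quantitative remarks: the paper works block by block for the distribution bound (Lemma~\ref{lemma:distDMSx}) rather than for the $L$-fold product, and the reliability error propagates through the chain, so the final bound is $\tfrac{L(L+1)}{2}(2n\delta_n+\delta_n^{(*)})$, quadratic in $L$, not $O(Ln\delta_n)$ as you wrote.

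Where your sketch is too loose to count as a proof is exactly the part you flag as the main obstacle. Saying ``the eavesdropper's view factorizes enough'' does not discharge the bidirectional dependency: in block $i$ the secure slots simultaneously hold (i) confidential bits of block $i$, (ii) elements chained \emph{from} block $i-1$, and (iii) keyed copies of elements \emph{of} block $i+1$, so a naive telescope in either direction leaves a cross term. The paper's device is to \emph{reformulate} the encoding so that all dependencies point one way---forward for the corner point of $\mathfrak{R}_{\text{MI-WTBC}}^{(1)}$, backward for $\mathfrak{R}_{\text{MI-WTBC}}^{(2)}$---by treating the keyed sequence $\bar\Omega_{(k),i}=W'_{(k),i}\oplus\kappa_\Omega$ as fresh randomness generated in the \emph{adjacent} block. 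This yields a causal Bayesian graph (the paper draws it explicitly), and the per-block leakage $I(S_{1:L}\tilde Z_{1:i-1};\tilde Z_i)$ is then bounded by (a) Lemma~\ref{lemma:secrecy1x} applied to the secure slots of block $i$, plus (b) a d-separation argument on the graph showing the residual term reduces to $I(\cdot\,;\,\bar\Omega_{(k),i}\oplus\kappa_\Omega\,|\,\cdot)$, which vanishes by the crypto-lemma. Your proposal has the right ingredients (keys, secure placement, telescoping) but not this reformulation-plus-d-separation mechanism, and without it the ``factorizes enough'' step is a gap.
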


\begin{corollary}\label{coro:th1x}
The \gls*{pcs} achieves any rate tuple of $\mathfrak{R}_{\text{{MI-WTBC}}}^{(k)}$ defined in Proposition~\ref{prop:MIB2x}.
\end{corollary}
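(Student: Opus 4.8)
\begin{hproof}
The plan is to bootstrap from Theorem~\ref{th:th1x}, which already carries the hard part: for every input distribution $p_{VU_{(1)}U_{(2)}XY_{(1)}Y_{(2)}Z}\in\mathcal{P}$ and every $k\in[1,2]$, the single rate tuple $(R_{S_{(1)}}^{\star k},R_{S_{(2)}}^{\star k},R_{W_{(1)}}^{\star k},R_{W_{(2)}}^{\star k})$ of \eqref{eq:rsk}--\eqref{eq:rwbk} is achievable by the \gls*{pcs} of Section~\ref{sec:PCSx1}. Fix such a $p$ and let $\mathfrak{R}_{\text{MI-WTBC}}^{(k)}(p)$ be the polytope in $\mathbb{R}_+^4$ cut out by the four inequalities of Proposition~\ref{prop:MIB2x} for that $p$ (the term of the union corresponding to $p$), so that $\mathfrak{R}_{\text{MI-WTBC}}^{(k)}=\bigcup_{p\in\mathcal{P}}\mathfrak{R}_{\text{MI-WTBC}}^{(k)}(p)$. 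The first step is to verify that the corner point is precisely the componentwise dominant vertex of $\mathfrak{R}_{\text{MI-WTBC}}^{(k)}(p)$: rewriting the mutual informations in \eqref{eq:rsk}--\eqref{eq:rwbk} in entropy form and using the Markov chain $(VU_{(1)}U_{(2)})-X-(Y_{(1)},Y_{(2)},Z)$, the working hypothesis $I(V;Y_{(1)})\le I(V;Y_{(2)})$, and the Situation~1--3 simplification of the $\min$/$\max$ terms already exploited in Section~\ref{sec:PCSx1il}, one checks that $R_{S_{(k)}}^{\star k}$ meets the bound on $R_{S_{(k)}}$ with equality, $R_{S_{(k)}}^{\star k}+R_{W_{(k)}}^{\star k}=I(VU_{(k)};Y_{(k)})$ meets the bound on $R_{S_{(k)}}+R_{W_{(k)}}$ with equality, and the analogous two equalities hold for the index $\bar{k}$.

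The second step uses that the four inequalities split into two decoupled pairs, one constraining $(R_{S_{(k)}},R_{W_{(k)}})$ and one constraining $(R_{S_{(\bar{k})}},R_{W_{(\bar{k})}})$, so $\mathfrak{R}_{\text{MI-WTBC}}^{(k)}(p)$ is a product of two quadrilaterals. From the dominant vertex I would reach every point of each factor by two modifications of the scheme that leave the induced distribution, the reliability \eqref{eq:reliabilitycondx}, and the strong-secrecy analysis \eqref{eq:secrecycondx} of Theorem~\ref{th:th1x} intact: (i) re-designate some of the uniform confidential symbols $S_{(k)}$ (resp. $S_{(\bar{k})}$) that the encoder places in the secrecy sets ($\mathcal{I}^{(n)}$ and the $\mathcal{G}$-type sets in the inner layer, $\mathcal{F}_0^{(n)}\cup\mathcal{F}_k^{(n)}$ in the outer layer of Receiver~$k$, $\mathcal{Q}_0^{(n)}\cup\mathcal{Q}_{\bar{k}}^{(n)}$ for Receiver~$\bar{k}$) as private symbols $W_{(k)}$ (resp. $W_{(\bar{k})}$); those indices are still reliably decoded by the intended receiver, and since the leakage bound of Theorem~\ref{th:th1x} depends only on which indices hold confidential symbols, $I(S_{(1)}S_{(2)};Z^n)$ only decreases, so the rate point moves one-for-one along the face $R_{S_{(k)}}+R_{W_{(k)}}=I(VU_{(k)};Y_{(k)})$ (resp. for $\bar{k}$); and (ii) fill any message index with fresh uniform symbols that carry no message, which leaves the induced distribution unchanged and lowers the corresponding rate. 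Composing (i) and (ii) sweeps the entire quadrilateral $\{0\le R_{S_{(k)}}\le R_{S_{(k)}}^{\star k},\ 0\le R_{W_{(k)}}\le I(VU_{(k)};Y_{(k)})-R_{S_{(k)}}\}$, and likewise for $\bar{k}$; because the manipulated index sets for the two receivers are disjoint, the two sweeps are independent, giving all of $\mathfrak{R}_{\text{MI-WTBC}}^{(k)}(p)$. Taking the union over $p\in\mathcal{P}$ yields $\mathfrak{R}_{\text{MI-WTBC}}^{(k)}$, which is the claim. (Achievability of $\mathfrak{R}_{\text{MI-WTBC}}=Conv\big(\mathfrak{R}_{\text{MI-WTBC}}^{(1)}\cup\mathfrak{R}_{\text{MI-WTBC}}^{(2)}\big)$ then follows by one further time-sharing argument between the two corner points, which is not needed for this statement.)

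The step I expect to be the main obstacle is making (i) watertight in the presence of the chaining construction: several of the indices involved carry content that is replicated across blocks or between layers --- the sequences $\Lambda_1^{(V)}$ and $\Lambda_{(k),1}^{(U)}$ reappear in every block, the $\Pi^{(V)}$ and $\Delta^{(V)}$ sequences are shuttled between the inner and outer layers of adjacent blocks, and some confidential symbols sit in the $\mathcal{R}$-type sets --- so ``re-designating'' a symbol must be done consistently wherever it resurfaces, and one must check that this preserves both the total-variation bound of the distribution-approximation step and the telescoping leakage bound of the secrecy step (neither of which uses anything about a symbol beyond its being a uniform bit placed in a high-conditional-entropy position). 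The vertex identification of the first step is elementary but bookkeeping-heavy because of the Situation~1--3 and Case~A--F case split.
\end{hproof}
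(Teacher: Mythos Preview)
The paper actually gives no proof of this corollary at all: it is stated immediately after Theorem~\ref{th:th1x} and the text then moves directly to the four-step proof of the theorem itself (rates, distribution, reliability, secrecy). The promise in Section~\ref{sec:SMx} that ``we discuss how to achieve any rate tuple of the entire region $\mathfrak{R}_{\text{MI-WTBC}}$'' is never cashed out explicitly; the corollary is evidently meant to be read as a standard consequence of achieving the corner points.

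Your proposal supplies exactly that standard argument, and it is correct. The two ingredients you use---(i) relabelling some confidential symbols as private to slide along the sum-rate faces, and (ii) padding with uniform dummies to decrease any coordinate---are the right ones, and your observation that the four inequalities of $\mathfrak{R}_{\text{MI-WTBC}}^{(k)}(p)$ split into two decoupled pairs (so the region is a product of two quadrilaterals, each with the corresponding corner-point pair as its dominant vertex) is the cleanest way to see that (i)+(ii) suffice. The concern you raise about the chaining is legitimate but benign: Lemma~\ref{lemma:distDMSx} and the secrecy Lemmas~\ref{lemma:secrecy1x}--\ref{lemma:secrecy3x} only use that the bits placed in the high-entropy index sets are uniform and independent of the right conditioning variables; they never use whether those bits are tagged ``$S$'' or ``$W$'', and relabelling is consistent across blocks because the same bit reappears at the replicated positions. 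So the distribution-approximation and leakage bounds carry over unchanged, and the information leakage about $(S_{(1)},S_{(2)})$ can only decrease when part of it is reclassified as private.
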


The proof of Theorem~\ref{th:th1x} follows in four steps and is provided in the following subsections. In Section~\ref{sec:performance_ratesx} we show that the \gls*{pcs} approaches $(R_{S_{(1)}}^{\star k}, R_{S_{(2)}}^{\star k}, R_{W_{(1)}}^{\star k}, R_{W_{(2)}}^{\star k}) \subset \mathfrak{R}_{\text{{MI-WTBC}}}^{(k)}$ for any $k \in [1,2]$. In Section~\ref{sec:distributionDMSx} we prove that, for all $i \in [1,L]$, the joint distribution of $(\tilde{V}_i^n, \tilde{U}_{(1),i}^n, \tilde{U}_{(2),i}^n, \tilde{X}_i^n,\tilde{Y}_{(1),i}^n,\tilde{Y}_{(2),i}^n,\tilde{Z}_{i}^n)$ is asymptotically indistinguishable of the one of the original \gls*{dms} that is used for the polar code construction. Finally, in Section~\ref{sec:performance_reliabilityx} and Section~\ref{sec:performance_secrecyx} we show that the polar coding scheme satisfies the reliability and the secrecy conditions given in \eqref{eq:reliabilitycondx} and \eqref{eq:secrecycondx} respectively.

\subsection{Transmission rates}\label{sec:performance_ratesx}
We prove that the \gls*{pcs} described in Section~\ref{sec:PCSx1} approaches $(R_{S_{(1)}}^{\star k}, R_{S_{(2)}}^{\star k}, R_{W_{(1)}}^{\star k}, R_{W_{(2)}}^{\star k}) \subset \mathfrak{R}_{\text{{MI-WTBC}}}^{(k)}$ defined in \eqref{eq:rsk}--\eqref{eq:rwbk} for any $k \in [1,2]$. Also, we show that the overall length of secret keys $\kappa_{\Gamma}^{(V)}$, $\kappa_{\Upsilon\Phi_{(1)}}^{(V)}$, $\kappa_{\Upsilon\Phi_{(2)}}^{(V)}$, $\kappa_{O}^{(U)}$, $\kappa_{\Upsilon\Phi_{(1)}}^{(U)}$, $\kappa_{\Upsilon\Phi_{(2)}}^{(U)}$ is asymptotically negligible in terms of rate, and so is the overall length of $\kappa_{\Theta}^{(V)}$ and $\kappa_{\Theta}^{(U)}$ if the \gls*{pcs} operates to achieve the corner point of $\mathfrak{R}_{\text{{MI-WTBC}}}^{(1)}$, or that of $\kappa_{\Psi}^{(V)}$ and $\kappa_{\Psi}^{(U)}$ if the \gls*{pcs} operates to achieve the corner point of $\mathfrak{R}_{\text{{MI-WTBC}}}^{(2)}$. Moreover, we show that so is the amount of randomness required by the encoding (number of entries drawn by random \gls*{sc} encoding). 

\vspace{0.15cm}
\noindent \textbf{Private message rate}

\begin{enumerate}
\item Rate of $W_{(k)}$. For $i \in [1,L]$, we have $W_{(k),i} = \big[ W_{(k),i}^{(V)}, W_{(k),i}^{(U)} \big]$, where $W_{(k),i}^{(V)} = \tilde{A}_{i}\big[ \mathcal{C}^{(n)} \big]$ and $W_{(k),i}^{(U)} = \tilde{T}_{(k),i}\big[ \mathcal{J}_0^{(n)} \cup \mathcal{J}_k^{(n)} \big]$. Therefore, we obtain
\begin{align*}
\frac{1}{nL} \sum_{i=1}^L \big( \big| W_{(k),i}^{(V)} \big| + \big| W_{(k),i}^{(U)} \big| \big) & \stackrel{(a)}{=} \frac{1}{n} \Big( \Big| \mathcal{H}_{V}^{(n)} \setminus \mathcal{H}_{V|Z}^{(n)} \Big| + \Big| \mathcal{H}_{U_{(k)}|V}^{(n)} \setminus \mathcal{H}_{U_{(k)}|VZ}^{(n)} \Big| \Big) \\
& \stackrel{(b)}{=} \frac{1}{n} \big( \big| \mathcal{H}_{V}^{(n)} \big| - \big| \mathcal{H}_{V|Z}^{(n)} \big| + \big| \mathcal{H}_{U_{(k)}|V}^{(n)} \big| - \big| \mathcal{H}_{U_{(k)}|VZ}^{(n)} \big| \big) \\
& \xrightarrow{n \rightarrow \infty} H(V) - H(V|Z) +  H(U_{(k)}|V) - H(U_{(k)}|VZ),
\end{align*}
where $(a)$ holds because $\mathcal{C}^{(n)} = \mathcal{H}_{V|Z}^{(n)}$ and by the definition of $\mathcal{J}_0^{(n)}$ and $\mathcal{J}_k^{(n)}$ in \eqref{eq:j0} and \eqref{eq:jk} respectively; $(b)$ follows from the fact that $\mathcal{H}_{V}^{(n)} \supseteq \mathcal{H}_{V|Z}^{(n)}$ and $\mathcal{H}_{U_{(k)}|V}^{(n)} \supseteq \mathcal{H}_{U_{(k)}|VZ}^{(n)}$; and the limit holds by the source polarization theorem \cite{arikan2010source}.

\item Rate of $W_{(\bar{k})}$. For $i \in [1,L]$, all private information $W_{(\bar{k}),i}$ is carried in layer $\tilde{T}_{(\bar{k}),i}^n$. Specifically, we have $W_{(\bar{k}),i}^{(U)} = \tilde{T}_{(\bar{k}),i}\big[ \mathcal{B}_0^{(n)} \cup \mathcal{B}_k^{(n)} \big]$. Hence, we obtain
\begin{align*}
\frac{1}{nL} \sum_{i=1}^L  \big| W_{(\bar{k}),i}^{(U)} \big| & \stackrel{(a)}{=} \frac{1}{n} \Big| \mathcal{H}_{U_{(\bar{k})}|VU_{(k)}}^{(n)} \setminus \mathcal{H}_{U_{(\bar{k})}|VU_{(k)}Z}^{(n)} \Big|  \\
& \stackrel{(b)}{=} \frac{1}{n} \big| \mathcal{H}_{U_{(\bar{k})}|VU_{(k)}}^{(n)} \big| - \big| \mathcal{H}_{U_{(\bar{k})}|VU_{(k)}Z}^{(n)} \big| \\
& \xrightarrow{n \rightarrow \infty} H(U_{(\bar{k})}| V U_{(k)}) - H(U_{(\bar{k})}| V U_{(k)} Z),
\end{align*}
where $(a)$ holds by definition of $\mathcal{B}_0^{(n)}$ and $\mathcal{B}_{\bar{k}}^{(n)}$ in \eqref{eq:b0} and \eqref{eq:bk} respectively; $(b)$ holds because $\mathcal{H}_{U_{(\bar{k})}|VU_{(k)}}^{(n)} \supseteq \mathcal{H}_{U_{(\bar{k})}|VU_{(k)}Z}^{(n)}$; and the limit holds by the source polarization theorem~\cite{arikan2010source}.
\end{enumerate}

Therefore, the \gls*{pcs} attains $R_{W_{(k)}}^{\star k}$ and $R_{W_{(\bar{k})}}^{\star k}$ defined in \eqref{eq:rwk} and \eqref{eq:rwbk} respectively.

\vspace{0.15cm}
\noindent \textbf{Confidential message rate}

According to the \gls*{pcs} described in Section~\ref{sec:PCSx1}, to approach $(R_{S_{(1)}}^{\star k}, R_{S_{(2)}}^{\star k}, R_{W_{(1)}}^{\star k}, R_{W_{(2)}}^{\star k}) \subset \mathfrak{R}_{\text{{MI-WTBC}}}^{(k)}$, the inner-layer $\tilde{A}_{1:L}^n$ and the outer-layer $\tilde{T}_{(k),1:L}^n$ carry confidential information $S_{(k)}$ intended for Receiver~$k$, while the outer-layer $\tilde{T}_{(\bar{k}),1:L}^n$ carries confidential information $S_{(\bar{k})}$ intended for Receiver~$\bar{k}$. 

\begin{enumerate}

\item Rate of $S_{(k)}$. 
First, consider the confidential information $S_{(k)}^{(V)}$ that is carried in the inner-layer $\tilde{A}_{1:L}^n$. From Section~\ref{sec:PCSx1il}, in all cases we have $S_{(k),1}^{(V)} = \tilde{A}_1 \big[ \mathcal{I}^{(n)} \cup  \mathcal{G}^{(n)}_{1} \cup \mathcal{G}^{(n)}_{1,2} \big]$; for $i \in [2,L-1]$, we have $S_{(k),i}^{(V)} = \tilde{A}_{i} \big[ \mathcal{I}^{(n)} \big]$; and $S_{(k),L}^{(V)}=\tilde{A}_L \big[\mathcal{I}^{(n)} \cup  \mathcal{G}^{(n)}_2 \big]$. Recall that the definition of the set $\mathcal{I}^{(n)}$ depends on whether the \gls*{pcs} must achieve the corner point of regions $\mathfrak{R}_{\text{MI-WTBC}}^{(1)}$ or $\mathfrak{R}_{\text{MI-WTBC}}^{(2)}$. 

If the \gls*{pcs} operates to achieve the corner point of $\mathfrak{R}_{\text{MI-WTBC}}^{(1)}$, we have
\begin{align}
\big| \mathcal{I}^{(n)} \big| & \stackrel{(a)}{=}  \big| \mathcal{G}^{(n)}_0 \big|   +   \big| \mathcal{G}^{(n)}_{2} \big|  -  \big| \mathcal{R}^{(n)}_{1,2} \big| -  \big| \mathcal{R}^{\prime (n)}_{1,2} \big|  - \big| \mathcal{R}^{(n)}_{1} \big| -  \big| \mathcal{R}^{\prime (n)}_{1} \big|  \nonumber \\
& \stackrel{(b)}{=} \Big\{ \big| \mathcal{G}^{(n)}_0 \big|  +  \big| \mathcal{G}^{(n)}_{2} \big| -   \big| \mathcal{C}^{(n)}_{1} \big|  -  \big| \mathcal{C}^{(n)}_{1,2} \big| \Big\}^{+} \nonumber \\
& \stackrel{(c)}{=} \Big\{ \Big| \mathcal{H}^{(n)}_{V|Z} \cap \mathcal{L}^{(n)}_{V|Y_{(1)}} \Big| - \Big| \big( \mathcal{H}^{(n)}_{V|Z} \big)^{\text{C}} \cap \big( \mathcal{L}^{(n)}_{V|Y_{(1)}} \big)^{\text{C}} \Big| \Big\}^{+} \nonumber \\
& = \Big\{ \big| \mathcal{H}^{(n)}_{V|Z}  \big| - \big| \big( \mathcal{L}^{(n)}_{V|Y_{(1)}} \big)^{\text{C}} \big| \Big\}^{+},  \label{eq:setia} 
\end{align}
where $(a)$ holds by the definition of $\mathcal{I}^{(n)}$ in \eqref{eq:ASetIxk}; $(b)$ holds because, in all cases when $I(V;Y_{(1)}) < I(V;Z)$, we have $\big| \mathcal{G}^{(n)}_0 \big|   +   \big| \mathcal{G}^{(n)}_{2} \big|  =  \big| \mathcal{R}^{(n)}_{1,2} \big| +  \big| \mathcal{R}^{\prime (n)}_{1,2} \big| + \big| \mathcal{R}^{(n)}_{1} \big| +  \big| \mathcal{R}^{\prime (n)}_{1} \big|$ and $\big| \mathcal{G}^{(n)}_0 \big|  +  \big| \mathcal{G}^{(n)}_{2} \big| < \big| \mathcal{C}^{(n)}_{1} \big|  -  \big| \mathcal{C}^{(n)}_{1,2} \big|$ (see conditions in \eqref{eq:assumpRate1Impl2x2} and \eqref{eq:assumpRate1Impl2x3}) and, otherwise, we have $\big| \mathcal{R}^{(n)}_{1,2} \big| + \big| \mathcal{R}^{\prime (n)}_{1} \big| = \big|\mathcal{C}_{1,2}^{(n)}\big|$, $\big|\mathcal{R}^{(n)}_{1}\big| + \big| \mathcal{R}^{\prime (n)}_{1,2} \big| = \big|\mathcal{C}_{1}^{(n)}\big|$ and $\big| \mathcal{G}^{(n)}_0 \big|  +  \big| \mathcal{G}^{(n)}_{2} \big| \geq \big| \mathcal{C}^{(n)}_{1} \big|  -  \big| \mathcal{C}^{(n)}_{1,2} \big|$ (see condition in \eqref{eq:assumpRate1Impl2x1}); and $(c)$ follows from the partition of $\mathcal{H}_{V}^{(n)}$ defined in \eqref{eq:sG0x}--\eqref{eq:sC12x}.

Similarly, if the \gls*{pcs} operates to achieve the corner point of $\mathfrak{R}_{\text{MI-WTBC}}^{(2)}$, we have
\begin{align}
\big| \mathcal{I}^{(n)} \big| & \stackrel{(a)}{=}  \big| \mathcal{G}^{(n)}_0 \big|   +   \big| \mathcal{G}^{(n)}_{1} \big|  +  \big| \mathcal{G}^{(n)}_{2} \big|  -  \big| \mathcal{R}^{(n)}_{1,2} \big| -  \big| \mathcal{R}^{\prime (n)}_{1,2} \big| \nonumber  \\
& \quad - \big| \mathcal{R}^{(n)}_{1} \big| -  \big| \mathcal{R}^{\prime (n)}_{1} \big| - \big| \mathcal{R}^{(n)}_{2} \big| -  \big| \mathcal{R}^{\prime (n)}_{2} \big| - \big| \mathcal{R}^{(n)}_{\text{S}} \big| \nonumber \\
& \stackrel{(b)}{=}  \big| \mathcal{G}^{(n)}_0 \big|   +   \big| \mathcal{G}^{(n)}_{1} \big|   -  \big| \mathcal{R}^{(n)}_{1,2} \big| -  \big| \mathcal{R}^{\prime (n)}_{1,2} \big| - \big| \mathcal{R}^{(n)}_{2} \big| -  \big| \mathcal{R}^{\prime (n)}_{2} \big| \nonumber \\
& \stackrel{(c)}{=} \Big\{ \big| \mathcal{G}^{(n)}_0 \big|  +  \big| \mathcal{G}^{(n)}_{1} \big| -   \big| \mathcal{C}^{(n)}_{2} \big|  -  \big| \mathcal{C}^{(n)}_{1,2} \big| \Big\}^{+} \nonumber \\
& \stackrel{(d)}{=} \Big\{ \Big| \mathcal{H}^{(n)}_{V|Z} \cap \mathcal{L}^{(n)}_{V|Y_{(2)}} \Big| - \Big| \big( \mathcal{H}^{(n)}_{V|Z} \big)^{\text{C}} \cap \big( \mathcal{L}^{(n)}_{V|Y_{(2)}} \big)^{\text{C}} \Big| \Big\}^{+} \nonumber \\
& = \Big\{ \big| \mathcal{H}^{(n)}_{V|Z}  \big| - \big| \big( \mathcal{L}^{(n)}_{V|Y_{(2)}} \big)^{\text{C}} \big| \Big\}^{+}, \label{eq:setib}
\end{align}
where $(a)$ holds by the definition of $\mathcal{I}^{(n)}$ in \eqref{eq:ASetIxbk}; $(b)$ follows from \eqref{eq:ASetRsx} because the set $\mathcal{R}^{(n)}_{\text{S}}$ has size $\big| \mathcal{G}^{(n)}_2 \big| - \big| \mathcal{R}^{(n)}_1 \big| - \big| \mathcal{R}^{\prime (n)}_1 \big|$; $(c)$ holds because, in all cases contemplated in Section~\ref{sec:PCSx1il} when $I(V;Y_{(2)}) < I(V;Z)$, we have $\big| \mathcal{G}^{(n)}_0 \big|   +   \big| \mathcal{G}^{(n)}_{1} \big|   = \big| \mathcal{R}^{(n)}_{1,2} \big| + \big| \mathcal{R}^{\prime (n)}_{1,2} \big| + \big| \mathcal{R}^{(n)}_{2} \big| +  \big| \mathcal{R}^{\prime (n)}_{2} \big|$ and $\big| \mathcal{G}^{(n)}_0 \big|  +  \big| \mathcal{G}^{(n)}_{1} \big| < \big| \mathcal{C}^{(n)}_{2} \big|  -  \big| \mathcal{C}^{(n)}_{1,2} \big|$ (see condition in \eqref{eq:assumpRate1Impl2x3}) and, otherwise, we have $\big|\mathcal{R}^{(n)}_{2}\big| + \big| \mathcal{R}^{\prime (n)}_{1,2} \big| = \big|\mathcal{C}_{2}^{(n)}\big|$ and $\big| \mathcal{R}^{(n)}_{1,2} \big| + \big| \mathcal{R}^{\prime (n)}_{2} \big| = \big|\mathcal{C}_{1,2}^{(n)}\big|$ (see conditions in \eqref{eq:assumpRate1Impl2x1} and \eqref{eq:assumpRate1Impl2x2}); and $(d)$ follows from the partition of $\mathcal{H}_{V}^{(n)}$ defined in \eqref{eq:sG0x}--\eqref{eq:sC12x}.

Consequently, the rate of $S_{(k)}^{(V)}$, which is carried by the inner-layer $\tilde{A}_{1:L}^n$, is
\begin{align}
\frac{1}{nL} \sum_{i=1}^L  \big| S_{(k),i}^{(V)} \big| &  = \frac{( L  -  2  )}{nL} \big| \mathcal{I}^{(n)} \big|  +  \frac{1}{nL} \Big( \big| \mathcal{I}^{(n)} \cup \mathcal{G}^{(n)}_{1} \cup \mathcal{G}^{(n)}_{1,2} \big| +  \big| \mathcal{I}^{(n)} \cup \mathcal{G}^{(n)}_{2} \big| \Big) \nonumber \\
&  =  \frac{1}{n} \big| \mathcal{I}^{(n)} \big| + \frac{1}{nL} \Big( \big| \mathcal{G}^{(n)}_{1} \big| + \big| \mathcal{G}^{(n)}_{2} \big| + \big| \mathcal{G}^{(n)}_{1,2} \big| \Big) \nonumber \\
&  \stackrel{(a)}{=} \frac{1}{n} \big| \mathcal{I}^{(n)} \big| + \frac{1}{nL} \Big| \mathcal{H}^{(n)}_{V|Z} \cap \Big( \mathcal{L}^{(n)}_{V|Y_{(1)}}\cap   \mathcal{L}^{(n)}_{V|Y_{(2)}}\Big)^{\text{C}}\Big| \nonumber \\ 
&  \stackrel{(b)}{\geq} \frac{1}{n} \big| \mathcal{I}^{(n)} \big| + \frac{1}{nL}  \Big( \big| \mathcal{H}^{(n)}_{V|Z} \big| - \Big| \big( \mathcal{L}^{(n)}_{V|Y_{(1)}} \big)^{\text{C}} \Big| \Big) \nonumber \\
&   \stackrel{(c)}{=} \frac{1}{n} \Big\{ \big| \mathcal{H}^{(n)}_{V|Z} \big| -  \big|  \big( \mathcal{L}^{(n)}_{V|Y_{(k)}} \big)^{\text{C}} \big| \Big\}^{+} + \frac{1}{nL} \Big( \big| \mathcal{H}^{(n)}_{V|Z} \big| -  \big| \big( \mathcal{L}^{(n)}_{V|Y_{(1)}} \big)^{\text{C}} \big| \Big) \nonumber \\
&  \xrightarrow{n \rightarrow \infty} \big\{ H(V|Z) -  H(V|Y_{(k)}) \big\}^{+} +  \frac{1}{L} \big( H(V|Z)  - H(V|Y_{(1)}) \big) \nonumber \\
&   \xrightarrow{L \rightarrow \infty} \big\{ H(V|Z) -  H(V|Y_{(k)}) \big\}^{+}, \label{eq:ratesv}
\end{align}
where $(a)$ holds by the partition of $\mathcal{H}_{V}^{(n)}$ in \eqref{eq:sG0x}--\eqref{eq:sC12x}; $(b)$ follows from applying elementary set operations and because $\big| \mathcal{L}^{(n)}_{V|Y_{(1)}}\cap   \mathcal{L}^{(n)}_{V|Y_{(2)}}\big| \leq \big| \mathcal{L}^{(n)}_{V|Y_{(\ell)}} \big|$ for any $\ell \in [1,2]$; $(c)$ follows from \eqref{eq:setia} and \eqref{eq:setib}; and the limit when $n$ goes to infinity follows from applying source polarization~\cite{arikan2010source}. 

Now, consider the confidential information $S_{(k)}^{(U)}$ that is carried in the outer-layer $\tilde{T}_{(k),1:L}^n$. According to Section~\ref{sec:PCSx1ol}, if $i \in [2,L-1]$, we have $S_{(k),i}^{(U)} = \tilde{T}_{(k),i} \big[ \mathcal{F}_0^{(n)} \setminus  \big(\mathcal{D}_k^{(n)} \cup \mathcal{L}_k^{(n)} \big) \big]$. At Block~1, if $k=1$ then we have $S_{(1),1}^{(U)} = \tilde{T}_{(1),1} \big[ \big( \mathcal{F}_0^{(n)} \cup \mathcal{F}_1^{(n)} \big) \setminus  \big(\mathcal{D}_1^{(n)} \cup \mathcal{L}_1^{(n)} \big) \big]$ and, otherwise, we have $S_{(2),1}^{(U)} = \tilde{T}_{(2),1} \big[ \mathcal{F}_0^{(n)} \cup \mathcal{F}_2^{(n)} \big]$. Finally, at Block~$L$, if $k=1$ then we have $S_{(1),L}^{(U)} = \tilde{T}_{(1),L} \big[  \mathcal{F}_0^{(n)} \big]$, while if $k=2$ then $S_{(2),L}^{(U)} = \tilde{T}_{(2),L} \big[ \mathcal{F}_0^{(n)} \setminus  \big(\mathcal{D}_2^{(n)} \cup \mathcal{L}_2^{(n)} \big) \big]$. Consequently, the rate of $S_{(k)}^{(U)}$, which is carried by the outer-layer $\tilde{T}_{(k),1:L}^n$, is
\begin{align}
\frac{1}{nL} \sum_{i=1}^L  \big| S_{(k),i}^{(U)} \big|  & = \frac{1}{n} \big| \mathcal{F}_0^{(n)} \setminus  \big(\mathcal{D}_k^{(n)} \cup \mathcal{L}_k^{(n)} \big) \big|  +  \frac{1}{nL} \big| \mathcal{D}_{k}^{(n)} \cup \mathcal{L}^{(n)}_{k} \cup \mathcal{F}^{(n)}_{k} \big| \nonumber  \\
&  \stackrel{(a)}{\geq} \frac{1}{n} \Big( \big| \mathcal{H}_{U_{(k)}|VZ}^{(n)}  \big| - \big|  \big( \mathcal{L}_{U_{(k)}|VY_{(k)}}^{(n)} \big)^{\text{C}} \big| - \big\{ \big| \mathcal{C}_{k}^{(n)} \big| + \big| \mathcal{C}_{1,2}^{(n)} \big| - \big| \mathcal{G}_0^{(n)} \big| - \big| \mathcal{G}_{\bar{k}}^{(n)} \big| \big\}^{+} \Big) \nonumber \\
& \quad +  \frac{1}{nL} \Big( \big|  \mathcal{H}_{U_{(k)}|VY_{(k)}}^{(n)} \big| + \Big\{ \big| \mathcal{C}_{k}^{(n)} \big| + \big| \mathcal{C}_{1,2}^{(n)} \big| - \big| \mathcal{G}_0^{(n)} \big| - \big| \mathcal{G}_{\bar{k}}^{(n)} \big| \Big\}^{+} \Big)  \nonumber \\
&  \xrightarrow{n \rightarrow \infty} H(U_{(k)}|VZ) -  H(U_{(k)}|VY_{(k)})  - \big\{ H(V|Y_{(k)}) - H(V|Z) \big\}^{+} \nonumber \\
& \quad +  \frac{1}{L} \big( H(U_{(k)}|VY_{(k)}) +  \big\{ H(V|Y_{(k)}) - H(V|Z) \big\}^{+} \big) \nonumber \\
&   \xrightarrow{L \rightarrow \infty} H(U_{(k)}|VZ) -  H(U_{(k)}|VY_{(k)})  - \big\{ H(V|Y_{(k)}) - H(V|Z) \big\}^{+}, \label{eq:ratesu}
\end{align}
where $(a)$ holds by \eqref{eq:f0}--\eqref{eq:xnose} and recall that $\big| \mathcal{D}_k^{(n)} \big| + \big| \mathcal{F}_k^{(n)} \big| = \big| \mathcal{H}_{U_{(k)}|V}^{(n)} \setminus \mathcal{L}_{U_{(k)}|VY_{(k)}}^{(n)}\big|$, which is greater or equal to $\big| \mathcal{H}_{U_{(k)}|VY_{(k)}}^{(n)} \big|$; and the limit when $n$ goes to infinity follows from applying the source polarization theorem~\cite{arikan2010source}, where we have used similar reasoning as in \eqref{eq:setia} and \eqref{eq:setib} to obtain $\big\{ \big| \mathcal{C}_{k}^{(n)} \big| + \big| \mathcal{C}_{1,2}^{(n)} \big| - \big| \mathcal{G}_0^{(n)} \big| - \big| \mathcal{G}_{\bar{k}}^{(n)} \big| \big\}^{+} \xrightarrow{n \rightarrow \infty} \big\{H(V|Y_{(k)}) - H(V|Z)  \big\}^{+}$.

Finally, by combining \eqref{eq:ratesv} and \eqref{eq:ratesu}, we obtain that the rate of $S_{(k)}$ is
 \begin{align*}
\frac{1}{nL} \sum_{i=1}^L  \big| S_{(k),i} \big| = \frac{1}{nL} \sum_{i=1}^L  \big( \big| S_{(k),i}^{(V)} \big| + \big| S_{(k),i}^{(U)} \big| \big)  \xrightarrow{n \rightarrow \infty } H(VU_{(k)}|Z) - H(VU_{(k)}|Y_{(k)}),
\end{align*}
which is equal to the rate $R_{S_{(k)}}^{\star k}$ defined in \eqref{eq:rsk}.

\item Rate of $S_{(\bar{k})}$. 
The confidential message $S_{(\bar{k})}$ is carried entirely in $\tilde{T}_{(\bar{k}),1:L}^n$. According to Section~\ref{sec:PCSx1ol}, if $i \in [2,L-1]$, we have $S_{(\bar{k}),i}^{(U)} = \tilde{T}_{(\bar{k}),i} \big[ \mathcal{Q}_0^{(n)} \setminus  \big(\mathcal{O}_{\bar{k}}^{(n)} \cup \mathcal{N}_{\bar{k}}^{(n)} \cup \mathcal{M}_{\bar{k}}^{(n)} \big) \big]$. At Block~1, if $\bar{k}=1$ then $S_{(1),1}^{(U)} = \tilde{T}_{(1),1} \big[ \big( \mathcal{Q}_0^{(n)} \cup \mathcal{Q}_1^{(n)} \big) \setminus  \big(\mathcal{O}_1^{(n)} \cup\mathcal{N}_1^{(n)} \cup \mathcal{M}_1^{(n)} \big) \big]$ and, otherwise, $S_{(2),1}^{(U)} = \tilde{T}_{(2),1} \big[ \mathcal{Q}_0^{(n)} \cup \mathcal{Q}_2^{(n)} \big]$. At Block~$L$, if $\bar{k}=1$ then we have $S_{(1),L}^{(U)} = \tilde{T}_{(1),L} \big[  \mathcal{Q}_0^{(n)} \big]$, while if $\bar{k}=2$ then we have that $S_{(2),L}^{(U)} = \tilde{T}_{(2),L} \big[ \mathcal{Q}_0^{(n)} \setminus  \big(\mathcal{O}_2^{(n)} \cup \mathcal{N}_2^{(n)} \cup \mathcal{M}_2^{(n)} \big) \big]$. Consequently, we obtain
\begin{align}
& \frac{1}{nL} \sum_{i=1}^L  \big| S_{(\bar{k}),i}^{(U)} \big|  \nonumber \\
& \quad = \frac{1}{n} \big| \mathcal{Q}_0^{(n)} \setminus  \big(\mathcal{O}_{\bar{k}}^{(n)}  \cup \mathcal{N}_{\bar{k}}^{(n)} \cup \mathcal{M}_{\bar{k}}^{(n)} \big) \big|  +  \frac{1}{nL} \big| \mathcal{O}_{\bar{k}}^{(n)} \cup \mathcal{N}_{\bar{k}}^{(n)} \cup \mathcal{M}^{(n)}_{\bar{k}} \cup \mathcal{Q}^{(n)}_{\bar{k}} \big| \nonumber  \\
& \quad \stackrel{(a)}{=} \frac{1}{n} \Big( \Big| \mathcal{H}_{U_{(\bar{k})}|VU_{(k)}Z}^{(n)} \cap \mathcal{L}_{U_{(\bar{k})}|VY_{(\bar{k})}}^{(n)}  \Big| - \Big| \big( \mathcal{H}_{U_{(\bar{k})}|VU_{(k)}}^{(n)} \big)^{\text{C}} \cap \mathcal{H}_{U_{(\bar{k})}|V}^{(n)} \setminus \mathcal{L}_{U_{(\bar{k})}|VY_{(\bar{k})}}^{(n)}  \Big| \nonumber \\
& \qquad - \Big|  \mathcal{H}_{U_{(\bar{k})}|VU_{(k)}}^{(n)} \cap \big(\mathcal{H}_{U_{(\bar{k})}|VU_{(k)}Z}^{(n)} \big)^{\text{C}} \setminus \mathcal{L}_{U_{(\bar{k})}|V Y_{(\bar{k})}}^{(n)} \Big| - \big| \mathcal{M}_{\bar{k}}^{(n)} \big| \Big) \nonumber \\
& \qquad +  \frac{1}{nL} \big|\mathcal{O}_{\bar{k}}^{(n)}  \cup  \mathcal{N}_{\bar{k}}^{(n)} \cup \mathcal{M}^{(n)}_{\bar{k}} \cup \mathcal{Q}^{(n)}_{\bar{k}} \big|  \nonumber \\
& \quad \stackrel{(b)}{\geq} \frac{1}{n} \Big( \Big| \mathcal{H}_{U_{(\bar{k})}|VU_{(k)}Z}^{(n)} \cap \mathcal{L}_{U_{(\bar{k})}|VY_{(\bar{k})}}^{(n)}  \Big| - \Big|  \big(\mathcal{H}_{U_{(\bar{k})}|VU_{(k)}Z}^{(n)} \big)^{\text{C}} \setminus \mathcal{L}_{U_{(\bar{k})}|V Y_{(\bar{k})}}^{(n)} \Big| - \big| \mathcal{M}_{\bar{k}}^{(n)} \big| \Big) \nonumber \\
& \qquad +  \frac{1}{nL} \Big( \Big| \mathcal{H}_{U_{(\bar{k})}|V Y_{(\bar{k})}}^{(n)}  \Big| + \big| \mathcal{M}_{\bar{k}}^{(n)} \big| \Big)  \nonumber \\
& \quad = \frac{1}{n} \Big( \Big| \mathcal{H}_{U_{(\bar{k})}|VU_{(k)}Z}^{(n)} \Big| - \Big|  \big( \mathcal{L}_{U_{(\bar{k})}|V Y_{(\bar{k})}}^{(n)} \big)^{\text{C}} \Big| - \big| \mathcal{M}_{\bar{k}}^{(n)} \big| \Big)  + \frac{1}{nL} \Big( \Big|  \mathcal{H}_{U_{(\bar{k})}|V Y_{(\bar{k})}}^{(n)} \Big| + \big| \mathcal{M}_{\bar{k}}^{(n)} \big| \Big)  \nonumber \\
&  \xrightarrow{n \rightarrow \infty} H(U_{(\bar{k})}|VU_{(k)}Z) -  H(U_{(\bar{k})}|VY_{(\bar{k})})  -  \big( H(V|Y_{(\bar{k})}) - \min\{ H(V|Y_{(2)}), H(V|Z) \} \big) \nonumber \\
& \qquad \qquad + \frac{1}{L} \big( H(U_{(\bar{k})}|VY_{(\bar{k})}) - (H(V|Y_{(\bar{k})}) - \min\{ H(V|Y_{(2)}), H(V|Z) \}) \big) \nonumber \\
&   \xrightarrow{L \rightarrow \infty} H(U_{(\bar{k})}|VU_{(k)}Z) -  H(U_{(\bar{k})}|VY_{(\bar{k})})  -  \big( H(V|Y_{(\bar{k})}) - \min\{H(V|Y_{(2)}), H(V|Z) \} \big), \nonumber
\end{align}
where $(a)$ holds by the definition of sets in \eqref{eq:w0}--\eqref{eq:qk}; $(b)$ follows from the fact that sets $\big( \mathcal{H}_{U_{(\bar{k})}|VU_{(k)}}^{(n)} \big)^{\text{C}} \cap \mathcal{H}_{U_{(\bar{k})}|V}^{(n)}$ and $\mathcal{H}_{U_{(\bar{k})}|VU_{(k)}}^{(n)} \cap \big(\mathcal{H}_{U_{(\bar{k})}|VU_{(k)}Z}^{(n)} \big)^{\text{C}}$ are disjoint and subsets of $\big( \mathcal{H}_{U_{(\bar{k})}|VU_{(k)}Z}^{(n)} \big)^{\text{C}}$, and because $\big| \mathcal{O}_{\bar{k}}^{(n)} \big| + \big| \mathcal{N}_{\bar{k}}^{(n)} \big| + \big| \mathcal{Q}_{\bar{k}}^{(n)} \big| = \big| \mathcal{H}_{U_{(\bar{k})}|V}^{(n)} \setminus \mathcal{L}_{U_{(\bar{k})}|VY_{(\bar{k})}}^{(n)}\big|$, which is greater or equal to $\big| \mathcal{H}_{U_{(\bar{k})}|VY_{(\bar{k})}}^{(n)} \big|$; and the limit when $n$ goes to infinity follows from applying source polarization~\cite{arikan2010source} and the definition of $\mathcal{M}_{\bar{k}}^{(n)}$. If $\bar{k}=1$, according to \eqref{eq:m1} we have $\big| \mathcal{M}_{1}^{(n)} \big| = \big| \mathcal{G}_{1}^{(n)} \big| + \big| \mathcal{C}_{1}^{(n)} \big| - \big| \mathcal{G}_2^{(n)} \big| - \big| \mathcal{C}_{2}^{(n)} \big|$ if $I(V;Z) \leq I(V;Y_{(2)})$, whereas $\big| \mathcal{M}_{1}^{(n)} \big| = \big| \mathcal{C}_{1}^{(n)} \big| + \big| \mathcal{C}_{1,2}^{(n)} \big| - \big| \mathcal{G}_0^{(n)} \big| - \big| \mathcal{G}_{2}^{(n)} \big|$ if $I(V;Z) >  I(V;Y_{(2)})$, and we have
\begin{align*}
\big| \mathcal{C}_{1}^{(n)} \big| + \big| \mathcal{C}_{1,2}^{(n)} \big| - \big| \mathcal{G}_0^{(n)} \big| - \big| \mathcal{G}_{2}^{(n)} \big| & \xrightarrow{n \rightarrow \infty}  H(V|Y_{(1)}) - H(V|Z)),
\end{align*}
which follows from \eqref{eq:setia}, while from the partition of $\mathcal{H}_{V}^{(n)}$ in \eqref{eq:sG0x}--\eqref{eq:sC12x} we obtain
\begin{align}
\big| \mathcal{G}_{1}^{(n)} \big| + \big| \mathcal{C}_{1}^{(n)} \big| - \big| \mathcal{G}_2^{(n)} \big| - \big| \mathcal{C}_{2}^{(n)} \big| & = \big| \big( \mathcal{L}^{(n)}_{V|Y_{(1)}} \big)^{\text{C}} \cap \mathcal{L}^{(n)}_{V|Y_{(2)}} \big| -  \big| \big( \mathcal{L}^{(n)}_{V|Y_{(2)}} \big)^{\text{C}} \cap \mathcal{L}^{(n)}_{V|Y_{(1)}}  \big|  \nonumber \\
& = \big| \big( \mathcal{L}^{(n)}_{V|Y_{(1)}} \big)^{\text{C}} \big| - \big| \big( \mathcal{L}^{(n)}_{V|Y_{(2)}} \big)^{\text{C}} \big|, \nonumber \\
& \xrightarrow{n \rightarrow \infty} H(V|Y_{(1)}) - H(V|Y_{(2)}).  \nonumber
\end{align}
Otherwise, if $\bar{k}=2$, from \eqref{eq:m2} we have $\big| \mathcal{M}_{2}^{(n)} \big| \! =  \! \big\{ \big| \mathcal{C}_{2}^{(n)} \big| + \big| \mathcal{C}_{1,2}^{(n)} \big| - \big| \mathcal{G}_0^{(n)} \big| - \big| \mathcal{G}_{1}^{(n)} \big| \}^{\! +}$ and
\begin{align*}
\big\{ \big| \mathcal{C}_{2}^{(n)} \big| + \big| \mathcal{C}_{1,2}^{(n)} \big| - \big| \mathcal{G}_0^{(n)} \big| - \big| \mathcal{G}_{1}^{(n)} \big| \big\}^{+} & \xrightarrow{n \rightarrow \infty} \big\{H(V|Y_{(2)}) - H(V|Z)  \big\}^{+} \\
&  = H(V|Y_{(2)}) - \min\{H(V|Y_{(2)}), H(V|Z) \}.
\end{align*}
Therefore, the \gls*{pcs} attains $R_{S_{(\bar{k})}}^{\star k}$ defined in \eqref{eq:rsbk}.

\end{enumerate}

\vspace{0.15cm}
\noindent \textbf{Private-shared sequence rate}

First, from \cite{arikan2010source} (Section~V.A.4), we have that
\begin{align*}
\frac{1}{nL} \Big( \big|\kappa_{\Theta}^{(V)}\big| + \big|\kappa_{\Gamma}^{(V)}\big| + \sum_{k=1}^2 \big| \kappa_{\Upsilon\Phi_{(k)}}^{(V)}\big| \Big) \xrightarrow{n,L \rightarrow \infty} 0.
\end{align*}
If we substitute $\big|\kappa_{\Theta}^{(V)}\big|$ by $\big|\kappa_{\Psi}^{(V)}\big|$, it is very easy to prove, by applying similar reasoning, that the overall length is negligible in terms of rate as well.

If $k=1$, we have $\big| \kappa_{\Theta}^{(U)} \big| = \big| \mathcal{J}_{1}^{(n)} \big|$ and $\big| \kappa_{\Psi}^{(U)} \big| = 0$, whereas $\big| \kappa_{\Theta}^{(U)} \big| = 0$ and $\big| \kappa_{\Psi}^{(U)} \big| = \big| \mathcal{B}_{2}^{(n)} \big|$ if $k=2$. From the definition of $\mathcal{J}_{1}^{(n)}$ and $\mathcal{B}_{2}^{(n)}$ in \eqref{eq:jk} and \eqref{eq:bk} respectively, we obtain
\begin{align*}
\frac{1}{nL} \big| \kappa_{\Theta}^{(U)} \big| & \leq 
\big| \big( \mathcal{L}_{U_{(1)}|VY_{(1)}}^{(n)} \big)^{\text{C}} \big| \xrightarrow{n \rightarrow \infty}  \frac{1}{L} H(U_{(1)}|VY_{(1)}) \xrightarrow{L \rightarrow \infty}  0, \\
\frac{1}{nL} \big| \kappa_{\Psi}^{(U)} \big| & \leq 
\big| \big( \mathcal{L}_{U_{(2)}|VY_{(2)}}^{(n)} \big)^{\text{C}} \big| \xrightarrow{n \rightarrow \infty}  \frac{1}{L} H(U_{(2)}|VY_{(2)}) \xrightarrow{L \rightarrow \infty}  0.
\end{align*}
where the limit when $n$ goes to infinity follows from applying the source polarization theorem~\cite{arikan2010source}.

Finally, we have
\begin{align*}
& \frac{1}{nL} \sum_{\ell=1}^2 \big| \kappa_{\Upsilon\Phi_{(\ell)}}^{(U)}\big| + \frac{1}{nL} \big| \kappa_{O}^{(U)} \big| \nonumber \\
& \quad = \frac{1}{nL}  \Big(
L  \big| \big( \mathcal{H}_{U_{(k)}|V}^{(n)}  \big)^{\text{C}} \setminus  \mathcal{L}_{U_{(k)}|VY_{(k)}}^{(n)}  \big| +  \big| \mathcal{H}_{U_{(k)}|V}^{(n)} \setminus \mathcal{L}_{U_{(k)}|VY_{(k)}}^{(n)}  \big| \Big) \\
& \qquad + \frac{1}{nL} \Big( L  \big| \big( \mathcal{H}_{U_{(\bar{k})}|V}^{(n)}  \big)^{\text{C}} \setminus  \mathcal{L}_{U_{(\bar{k})}|VY_{(\bar{k})}}^{(n)}  \big| +  \big| \mathcal{H}_{U_{(\bar{k})}|V}^{(n)} \setminus \mathcal{L}_{U_{(\bar{k})}|VY_{(\bar{k})}}^{(n)}  \big| \Big) \\
& \qquad + \frac{1}{nL} \Big( \big| \big(\mathcal{H}_{U_{(\bar{k})}|VU_{(k)}}^{(n)}  \big)^{\text{C}} \cap\mathcal{H}_{U_{(\bar{k})}|V}^{(n)} \setminus \mathcal{L}_{U_{(\bar{k})}|VY_{(\bar{k})}}^{(n)} \big| \Big) \\
& \quad \leq \frac{1}{n}  \sum_{\ell=1}^2 \big| \big( \mathcal{H}_{U_{(\ell)}| V }^{(n)}  \big)^{\text{C}} \setminus  \mathcal{L}_{U_{(\ell)}|VY_{(\ell)}}^{(n)}  \big| + \frac{1}{nL} \big| \big( \mathcal{L}_{U_{(k)}|VY_{(k)}}^{(n)} \big)^{\text{C}} \big| + \frac{2}{nL} \big| \big( \mathcal{L}_{U_{(\bar{k})}|VY_{(\bar{k})}}^{(n)} \big)^{\text{C}} \big| \\
& \quad \stackrel{(a)}{\leq} \frac{1}{n} \sum_{\ell=1}^2 \big| \big( \mathcal{H}_{U_{(\ell)}| V Y_{(\ell)}}^{(n)}  \big)^{\text{C}} \setminus  \mathcal{L}_{U_{(\ell)}|VY_{(\ell)}}^{(n)}  \big| + \frac{1}{nL} \big| \big( \mathcal{L}_{U_{(k)}|VY_{(k)}}^{(n)} \big)^{\text{C}} \big| + \frac{2}{nL} \big| \big( \mathcal{L}_{U_{(\bar{k})}|VY_{(\bar{k})}}^{(n)} \big)^{\text{C}} \big| \\
& \quad \xrightarrow{n \rightarrow \infty}  \frac{1}{L} ( H(U_{(k)}| V Y_{(k)}) + 2 H(U_{(\bar{k})}| V Y_{(\bar{k})}) ) \xrightarrow{L \rightarrow \infty}  0,
\end{align*}
where $(a)$ holds because $\mathcal{H}_{U_{(\ell)}| V Y_{(\ell)}}^{(n)} \supset \mathcal{H}_{U_{(\ell)}| V}^{(n)}$ for any $\ell \in [1,2]$; and the limit when $n$ goes to infinity follows from applying the source polarization theorem~\cite{arikan2010source}.

Therefore, the amount of private-shared information between transmitter and legitimate receivers is negligible in terms of rate, and so is the rate of the additional transmissions.

\vspace{0.15cm}
\noindent \textbf{Rate of the additional randomness}

For $i \in [1,L]$, the encoder randomly draws (by \gls*{sc} encoding) the elements $\tilde{A}_i\big[\big(\mathcal{H}_V^{(n)} \big)^{\text{C}}\setminus \mathcal{L}_V^{(n)}\big]$, $\tilde{T}_{(k),i}\big[ \big( \mathcal{H}_{U_{(k)}|V}^{(n)} \big)^{\text{C}} \setminus \mathcal{L}_{U_{(k)}|V}^{(n)} \big]$ and $\tilde{T}_{(\bar{k}),i}\big[ \big( \mathcal{H}_{U_{(\bar{k})}|VU_{(k)}}^{(n)} \big)^{\text{C}} \setminus \mathcal{L}_{U_{(\bar{k})}|VU_{(k)}}^{(n)}  \big]$. Nevertheless, we have  
\begin{align*}
& \frac{1}{n} \Big( \big| \big(\mathcal{H}_V^{(n)} \big)^{\text{C}}\setminus \mathcal{L}_V^{(n)}\big| +  \Big| \big( \mathcal{H}_{U_{(k)}|V}^{(n)} \big)^{\text{C}} \setminus \mathcal{L}_{U_{(k)}|V}^{(n)} \Big|
+ \Big| \big( \mathcal{H}_{U_{(\bar{k})}|VU_{(k)}}^{(n)} \big)^{\text{C}} \setminus \mathcal{L}_{U_{(\bar{k})}|VU_{(k)}}^{(n)}  \Big| \Big)  \xrightarrow{n \rightarrow \infty} 0,
\end{align*}
where the limit as $n$ approaches infinity follows from applying source polarization~\cite{arikan2010source}.

\subsection{Distribution of the DMS after the polar encoding}\label{sec:distributionDMSx}

For $i \in [1,L]$, let $\tilde{q}_{A_i^n T_{(1),i}^n T_{(2),i}^n}$ denote the distribution of $(\tilde{A}_i^n,\tilde{T}_{(1),i}^n, \tilde{T}_{(2),i}^n)$ after the encoding. The following lemma, which is based on the results in \cite{7447169}, proves that $\tilde{q}_{A_i^n T_{(1),i}^n T_{(2),i}^n}$ and the marginal distribution $p_{A^n T_{(1)}^n T_{(2)}^n}$ of the original \gls*{dms} are nearly statistically indistinguishable for sufficiently large $n$ and, hence, so are $\tilde{q}_{V_i^n  T_{(1),i}^n T_{(2),i}^n X_i^n Y_{(1),i}^n Y_{(2),i}^n Z_i^n }$ and $p_{V^n T_{(1)}^n T_{(2)}^n X^n Y_{(1)}^n Y_{(2)}^n Z^n }$. This result is crucial for the reliability and secrecy performance of the polar coding scheme.

\begin{lemma}
\label{lemma:distDMSx}
For any $i \in  [1,L]$, we obtain 
\begin{IEEEeqnarray}{rCl}
\mathbb{V}(\tilde{q}_{A_i^n T_{(1),i}^n T_{(2),i}^n},p_{A^n T_{(1)}^n T_{(2)}^n}) & \leq & \delta^{(*)}_n, \nonumber \\
\mathbb{V}(\tilde{q}_{V_i^n  T_{(1),i}^n T_{(2),i}^n X_i^n Y_{(1),i}^n Y_{(2),i}^n Z_i^n }, p_{V^n T_{(1)}^n T_{(2)}^n X^n Y_{(1)}^n Y_{(2)}^n Z^n }) & \leq & \delta^{(*)}_n, \nonumber 
\end{IEEEeqnarray}
where $\delta^{(*)}_n \triangleq n \displaystyle 3 \sqrt{ 2  \sqrt{\ell n \delta_n 2 \ln 2} \big( \ell n - \log  \sqrt{\ell n \delta_n  2 \ln 2} \big) + \delta_n} + \sqrt{3} \sqrt{n \delta_n  2 \ln 2}$.
\end{lemma}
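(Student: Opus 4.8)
\textbf{Proof plan for Lemma~\ref{lemma:distDMSx}.}

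The plan is to bound the first total variation distance by a telescoping (hybrid) argument over the encoding layers and over the block index, and then to obtain the second inequality as an immediate corollary via the data-processing inequality for $\mathbb{V}$ (since $(\tilde{V}_i^n,\tilde{X}_i^n,\tilde{Y}_{(1),i}^n,\tilde{Y}_{(2),i}^n,\tilde{Z}_i^n)$ is a stochastic — indeed partly deterministic — function of $(\tilde{A}_i^n,\tilde{T}_{(1),i}^n,\tilde{T}_{(2),i}^n)$ composed with the true channel $p_{XY_{(1)}Y_{(2)}Z|X}$, and the target tuple is the image of $p_{A^nT_{(1)}^nT_{(2)}^n}$ under the very same map). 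So the real work is entirely in the first bound. First I would introduce an ``idealized'' distribution $\bar q$ in which each polarized index is filled according to the true conditional $p_{A(j)|A^{1:j-1}}$ (and similarly for the two outer layers, conditioned on $V^n$, resp.\ on $(V^n,U_{(k)}^n)$), rather than by the deterministic argmax rule used for the $\mathcal{L}$-indices and rather than by the repetition/chaining rule used for the $\mathcal{C}$, $\mathcal{R}$, $\mathcal{D}$, $\mathcal{N}$, $\mathcal{O}$, $\mathcal{M}$, $\mathcal{L}_k$ indices. By construction $\bar q_{A_i^nT_{(1),i}^nT_{(2),i}^n}=p_{A^nT_{(1)}^nT_{(2)}^n}$ exactly, so it suffices to bound $\mathbb{V}(\tilde q,\bar q)$.

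Next I would decompose $\mathbb{V}(\tilde q,\bar q)$ through three successive hybrids, matching the three-layer encoding order: (i) replace the inner layer's deterministic $\mathcal{L}_V^{(n)}$ coordinates by their true conditionals; (ii) then replace the outer layer $T_{(k)}$'s deterministic $\mathcal{L}_{U_{(k)}|V}^{(n)}$ coordinates; (iii) then replace $T_{(\bar k)}$'s deterministic $\mathcal{L}_{U_{(\bar k)}|VU_{(k)}}^{(n)}$ coordinates. For each hybrid step I would use the standard source-polarization estimate from \cite{arikan2010source} / \cite{7447169}: the KL divergence between the SC-encoded distribution and the true one over a set of ``low-entropy'' indices is bounded, over all such indices, by a quantity of order $\ell n\delta_n\ln 2$ (the $\ell$ absorbing the three layers' index counts into a single $\ell n$), and then Pinsker gives a $\sqrt{\,\cdot\,}$ bound on $\mathbb{V}$; I would also fold in the $\delta_n$-type additive slack coming from the ``almost-polarized'' indices that are neither in $\mathcal{H}$ nor $\mathcal{L}$. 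The precise shape $\sqrt{2\sqrt{\ell n\delta_n 2\ln2}\,(\ell n-\log\sqrt{\ell n\delta_n 2\ln2})+\delta_n}$ is exactly the bound that appears in \cite{7447169} for one such hybrid, and the factor $3$ in front collects the three layers; the final term $\sqrt{3}\sqrt{n\delta_n 2\ln2}$ is the separate contribution of the deterministic-vs-conditional argmax substitution analyzed directly (not through the hybrid), again via Pinsker. The overall factor $n$ multiplying the first summand accounts for the $L$ blocks being re-expressed — more precisely, it accounts for the fact that each of the repeated/chained sequences was \emph{generated} in some block by the true conditional and merely \emph{copied} elsewhere, so the only genuine approximation error per block is the SC-encoding error, and summing the per-block errors (with the chaining dependencies handled by the triangle inequality along $i=1,\dots,L$) yields the stated multiplicative constant.

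The key point that makes the chaining harmless is that every sequence the encoder ``repeats'' (the $\Psi$, $\Theta$, $\Gamma$, $\Lambda$, $\Pi$, $\Delta$, $O$ families, possibly XOR-ed with a uniform key) is, in the idealized distribution $\bar q$, a deterministic function of coordinates that were themselves drawn from the correct conditionals; XOR-ing with an independent uniform key of the right length only makes a marginal exactly uniform and cannot increase $\mathbb{V}$, and copying a coordinate verbatim into a slot that $\bar q$ would have filled with that same conditional value contributes zero additional variation. Hence the hybrid argument never ``pays'' for the chaining edges, only for the SC-encoding steps, of which there are a bounded number of layers per block. \textbf{The main obstacle} I anticipate is the bookkeeping needed to verify, case by case (Cases~A--F, Situations~1--3, both corner points), that the lengths of the repeated sequences match the sizes of their destination index sets — this is what guarantees the idealized $\bar q$ is well-defined and equals $p_{A^nT_{(1)}^nT_{(2)}^n}$ — but all of those equalities were already assembled in the ``Summary of the construction of $\tilde A_{1:L}[\mathcal{G}^{(n)}]$'' and in the outer-layer set definitions \eqref{eq:f0}--\eqref{eq:m1}, so the difficulty is organizational rather than mathematical; the analytic content is a single invocation of Pinsker plus the polarization rate estimates of \cite{arikan2010source,7447169}.
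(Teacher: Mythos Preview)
Your approach is essentially the same as the paper's. The paper does not actually prove the first inequality in place; it simply invokes \cite{alos2019polar} (Lemma~5) with $M=3$ layers, and that cited lemma is precisely the hybrid/telescoping Pinsker argument over the three SC-encoded layers that you sketch (itself based on the estimates of \cite{7447169}). Your treatment of the second inequality — that $(\tilde V_i^n,\tilde X_i^n,\tilde Y_{(1),i}^n,\tilde Y_{(2),i}^n,\tilde Z_i^n)$ is obtained from $(\tilde A_i^n,\tilde T_{(1),i}^n,\tilde T_{(2),i}^n)$ by a deterministic map followed by the \emph{same} channel law $p_{Y_{(1)}Y_{(2)}Z|X}$, so total variation cannot increase — is exactly what the paper writes.

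One small point to tighten: the lemma is a \emph{per-block} statement, and $\delta_n^{(*)}$ does not depend on $L$. The leading factor $n$ in $\delta_n^{(*)}$ is the blocklength (it comes from summing the per-coordinate KL contributions inside a single block), not from ``summing the per-block errors along $i=1,\dots,L$'' as you wrote. The chaining is harmless here for the reason you correctly identify — every sequence copied into the $\mathcal H$-indexed slots of block~$i$ is marginally uniform (fresh messages, or XOR with a fresh key as in $\bar O_{(\bar k),\cdot}^{(U)}$; see the Remark right after the lemma) — but this is a statement about the \emph{marginal} at block~$i$, not a triangle-inequality accumulation over blocks. If you keep that distinction clear, your sketch is a faithful unpacking of the cited proof.
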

\begin{proof}
For the first claim, see \cite{alos2019polar} (Lemma~5) taking $M \triangleq 3$. The second holds because, for all $i \in [1,L]$, sequence $\tilde{X}_i^n$ and $X_i^n$ are deterministic functions of $(\tilde{A}_i^n,\tilde{T}_{(1),i}^n, \tilde{T}_{(2),i}^n)$ and $(A^n,T_{(1)}^n, T_{(2)}^n)$ respectively, and $\tilde{q}_{V_i^n  T_{(1),i}^n T_{(2),i}^n X_i^n Y_{(1),i}^n Y_{(2),i}^n Z_i^n } \equiv \tilde{q}_{X_i^n}p_{Y_{(1)}^n Y_{(2)}^n Z^n | X^n}$.
\end{proof}

\begin{remark}
Consider the \gls*{pcs} operating to achieve the corner point of $\mathfrak{R}_{\text{{MI-WTBC}}}^{(k)}$ for some $k \in [1,2]$. In this case, $\smash{O_{(\bar{k}),i}^{(U)} \subset \tilde{T}_{(\bar{k}),i}\big[ \big( \mathcal{H}_{U_{(\bar{k})}|VU_{(k)}}^{(n)} \big)^{\emph{\text{C}}} \cap \mathcal{H}_{U_{(\bar{k})}|V}^{(n)} \big]}$ is repeated, for $i \in [2,L]$, in $\tilde{T}_{(\bar{k}),i-1}^n$ (if $\bar{k}=1$) or, for $i \in [1,L-1]$, in $\tilde{T}_{(\bar{k}),i+1}^n$ (if $\bar{k}=2$). In both situations, $O_{(\bar{k}),i}^{(U)}$ is drawn by performing \gls*{sc} encoding and is repeated in some of the elements of the corresponding adjacent block whose indices correspond to $\mathcal{H}_{U_{(\bar{k})}|VU_{(k)}Z}^{(n)}$. Nevertheless, $O_{(\bar{k}),i}^{(U)}$ is not repeated directly, but the encoder copies $\bar{O}_{(\bar{k}),i}^{(U)} = O_{(\bar{k}),i}^{(U)} \oplus \kappa_{O}^{(U)}$ (see Figure~\ref{fig:enc_outerR1_2} and Figure~\ref{fig:enc_outerR2_1}). Hence, notice that $\kappa_{O}^{(U)}$ ensures that $\bar{O}_{(\bar{k}),i}^{(U)}$ is uniformly distributed.
\end{remark}

\subsection{Reliability analysis}\label{sec:performance_reliabilityx}
Consider that the \gls*{pcs} must achieve $(R_{S_{(1)}}^{\star k}, R_{S_{(2)}}^{\star k}, R_{W_{(1)}}^{\star k}, R_{W_{(2)}}^{\star k}) \subset \mathfrak{R}_{\text{{MI-WTBC}}}^{(k)}$. In this section we prove that Receiver~$k$ is able to reconstruct $(W_{(k)},S_{(k)})$ with arbitrary small error probability, while Receiver~$\bar{k}$ is able to reconstruct $(W_{(\bar{k})},S_{(\bar{k})})$. Recall that the inner-layer $\tilde{A}_{1:L}^n$ and the outer-layer $\tilde{T}_{(k),1:L}^n$ carry $(W_{(k)},S_{(k)})$, and the outer-layer $\tilde{T}_{(\bar{k}),1:L}^n$ carries $(W_{(\bar{k})}, S_{(\bar{k})})$. Although $\tilde{A}_{1:L}^n$ only contains information intended for Receiver~$k$, the other receiver must reliably reconstruct them in order to be able to decode $\tilde{T}_{(\bar{k}),1:L}^n$.

Consider the probability of incorrectly decoding $\big(\tilde{A}_{1:L}^n, \tilde{T}_{(1),1:L}^n \big)$ at Receiver~$\ell \in [1,2]$. For $i \in [1,L]$, let $\tilde{q}_{V_i^n T_{(\ell),i}^n Y_{(\ell),i}^n}$ and $p_{V^n T_{(\ell)}^n Y_{(\ell)}^n}$ be marginals of $\tilde{q}_{V_i^n T_{(1),i}^n T_{(2),i}^n X_i^n Y_{(1),i}^n Y_{(2),i}^n Z_i^n }$ and $p_{V^n T_{(1)}^n T_{(2)}^n X^n Y_{(1)}^n Y_{(2)}^n Z^n }$ respectively, and define an optimal coupling \cite{levin2009markov} (Proposition~4.7) \mbox{between} $\tilde{q}_{V_i^n T_{(\ell),i}^n Y_{(\ell),i}^n}$ and $p_{V^n T_{(\ell)}^n Y_{(\ell)}^n}$ such that $\mathbb{P} \big[ \mathcal{E}_{V_i^n T_{(\ell),i}^n Y_{(\ell),i}^n} \big] = \mathbb{V} \big( \tilde{q}_{V_i^n T_{(\ell),i}^n Y_{(\ell),i}^n} , p_{V^n T_{(\ell)}^n Y_{(\ell)}^n} \big)$,
where $\mathcal{E}_{V_i^n T_{(\ell),i}^n Y_{(\ell),i}^n} \triangleq \big\{ \big( \tilde{V}_i^n, T_{(\ell)}^n, \tilde{Y}_{(\ell),i}^n \big) \neq \big( V^n, T_{(\ell)}^n ,Y_{(\ell)}^n \big) \big\}$. Additionally, define
\begin{IEEEeqnarray}{c}
\Scale[0.95]{\mathcal{E}_{(\ell),i} \! \triangleq \! \Big\{ \Big( \hat{A}_{(\ell),i} \big[ \big( \mathcal{L}_{V|Y_{(\ell)}}^{(n)} \big)^{\text{C}} \big], \hat{T}_{(\ell),i} \big[ \big( \mathcal{L}_{U_{(\ell)}|VY_{(\ell)}}^{(n)} \big)^{\text{C}} \big] \Big) \! \neq \!  \Big( \tilde{A}_{i} \big[ \big( \mathcal{L}_{V|Y_{(\ell)}}^{(n)} \big)^{\text{C}} \big], \tilde{T}_{(\ell),i} \big[ \big( \mathcal{L}_{U_{(\ell)}|VY_{(\ell)}}^{(n)} \big)^{\text{C}} \big] \Big)  \Big\}.} \nonumber 
\end{IEEEeqnarray}
Recall that $(\Upsilon_{(\ell)}^{(V)},\Phi_{(\ell),1:L}^{(V)})$ and $(\Upsilon_{(\ell)}^{(U)},\Phi_{(\ell),1:L}^{(U)})$ is available to legitimate Receiver~$\ell$. Thus, $\mathbb{P} [ \mathcal{E}_{(1),1} ] = \mathbb{P} [ \mathcal{E}_{(2),L} ]  = 0$
because given $(\Upsilon_{(1)}^{(V)},\Phi_{(1),1:L}^{(V)})$ and $(\Upsilon_{(1)}^{(U)},\Phi_{(1),1:L}^{(U)})$ Receiver~1 knows $\tilde{A}_{1} \big[\big( \mathcal{L}_{V|Y_{(1)}}^{(n)} \big)^{\text{C}} \big]$ and $\tilde{T}_{(1),1} \big[\big( \mathcal{L}_{U_{(1)}|VY_{(1)}}^{(n)} \big)^{\text{C}} \big]$, while given $(\Upsilon_{(2)}^{(V)},\Phi_{(2),1:L}^{(V)})$ and $(\Upsilon_{(2)}^{(U)},\Phi_{(2),1:L}^{(U)})$ legitimate Receiver~2 knows $\tilde{A}_{L} \big[ \big( \mathcal{L}_{V|Y_{(2)}}^{(n)} \big)^{\text{C}} \big]$ and $\tilde{T}_{(2),L} \big[\big( \mathcal{L}_{U_{(2)}|VY_{(2)}}^{(n)} \big)^{\text{C}}  \big]$. Furthermore, due to the chaining structure, recall that we have $\big( \tilde{A}_{i} \big[ \big( \mathcal{L}_{V|Y_{(1)}}^{(n)} \big)^{\text{C}} \big], \tilde{T}_{(1),i} \big[ \big( \mathcal{L}_{U_{(1)}|VY_{(1)}}^{(n)} \big)^{\text{C}} \big] \big) \subset \big( \tilde{A}_{i-1}^n, \tilde{T}_{(1),i-1}^n \big)$ for $i \in [2,L]$. Therefore, at legitimate Receiver~1, for $i \in [2,L]$ we have
\begin{IEEEeqnarray}{c}
\mathbb{P} [ \mathcal{E}_{(1),i} ] \leq \mathbb{P} \big[ \big( \tilde{A}_{i-1}^n, \tilde{T}_{(1),i-1}^n \big) \big].  \label{eq:e1ix} 
\end{IEEEeqnarray}
Similarly, we have seen that $\big( \tilde{A}_{i} \big[ \big( \mathcal{L}_{V|Y_{(2)}}^{(n)} \big)^{\text{C}} \big], \tilde{T}_{(2),i} \big[ \big( \mathcal{L}_{U_{(2)}|VY_{(2)}}^{(n)} \big)^{\text{C}} \big] \big) \subset \big( \tilde{A}_{i+1}^n, \tilde{T}_{(2),i+1}^n \big)$ for $i \in [1,L-1]$. Thus, at legitimate Receiver~2, for $i \in [1,L-1]$ we obtain
\begin{IEEEeqnarray}{c}
\mathbb{P} [ \mathcal{E}_{(2),i} ] \leq \mathbb{P} \big[ \big( \tilde{A}_{i+1}^n, \tilde{T}_{(2),i+1}^n \big) \big]. \IEEEeqnarraynumspace \label{eq:e2ix} 
\end{IEEEeqnarray}
Hence, the probability of incorrectly decoding $\big( \tilde{A}_{i}^n, \tilde{T}_{(\ell),i}^n \big)$ at the Receiver~$\ell \in[1,2]$ is
\begin{align*}
& \mathbb{P} \big[ \big(\hat{A}_{(\ell),i}^n, \hat{T}_{(\ell),i}^n \big) \neq \big( \tilde{A}_i^n, \tilde{T}_{(\ell),i}^n \big) \big]  \\
& \quad = \mathbb{P} \Big[ \big(\hat{A}_{(\ell),i}^n, \hat{T}_{(\ell),i}^n \big) \neq \big( \tilde{A}_i^n, \tilde{T}_{(\ell),i}^n \big) \big| \mathcal{E}_{V_i^n U_{(\ell)} Y_{(\ell),i}^n}^{\text{C}} \cap  \mathcal{E}_{(\ell),i}^{\text{C}} \Big] \mathbb{P} \Big[ \mathcal{E}_{V_i^n U_{(\ell)} Y_{(\ell),i}^n}^{\text{C}} \cap  \mathcal{E}_{(\ell),i}^{\text{C}} \Big] \\
& \qquad + \mathbb{P} \Big[ \big(\hat{A}_{(\ell),i}^n, \hat{T}_{(\ell),i}^n \big) \neq \big( \tilde{A}_i^n, \tilde{T}_{(\ell),i}^n \big) \big| \mathcal{E}_{V_i^n U_{(\ell)} Y_{(\ell),i}^n} \cup  \mathcal{E}_{(\ell),i} \Big] \mathbb{P} \Big[\mathcal{E}_{V_i^n U_{(\ell)} Y_{(\ell),i}^n} \cup  \mathcal{E}_{(\ell),i} \Big] \\
& \quad \leq \mathbb{P} \Big[ \big(\hat{A}_{(\ell),i}^n, \hat{T}_{(\ell),i}^n \big) \neq \big( \tilde{A}_i^n, \tilde{T}_{(\ell),i}^n \big) \big| \mathcal{E}_{V_i^n U_{(\ell)} Y_{(\ell),i}^n}^{\text{C}} \cap  \mathcal{E}_{(\ell),i}^{\text{C}} \Big] +  \mathbb{P} \Big[\mathcal{E}_{V_i^n U_{(\ell)} Y_{(\ell),i}^n} \Big] +  \mathbb{P} \Big[ \mathcal{E}_{(\ell),i} \Big] \\
& \quad \stackrel{(a)}{\leq} 2 \delta_n +  \mathbb{P} \big[\mathcal{E}_{V_i^n U_{(\ell)} Y_{(\ell),i}^n} \big] +  \mathbb{P} \big[ \mathcal{E}_{(\ell),i} \big] \\
& \quad \stackrel{(b)}{\leq} 2 \delta_n +  \delta_n^{(*)} + \mathbb{P} \big[ \mathcal{E}_{(\ell),i} \big] \\
& \quad \stackrel{(c)}{\leq}  i \big( 2 \delta_n +  \delta_n^{(*)} \big)
\end{align*}
where $(a)$ holds by \cite{arikan2010source} (Theorem~2); $(b)$ follows from the optimal coupling and Lemma~\ref{lemma:distDMSx}; and $(c)$ holds by induction and \eqref{eq:e1ix}--\eqref{eq:e2ix}. Therefore, by the union bound, we obtain 
\begin{align*}
\mathbb{P} \big[ (W_{(\ell),1:L}, S_{(\ell),1:L}) \neq (\hat{W}_{(\ell),1:L} , \hat{S}_{(\ell),1:L}) \big] & \leq \sum_{i=1}^L \mathbb{P} \big[ \big(\hat{A}_{(\ell),i}^n, \hat{T}_{(\ell),i}^n \big) \neq \big( \tilde{A}_i^n, \tilde{T}_{(\ell),i}^n \big) \big]  \\
& \leq \frac{L (L+1)}{2} \big( 2 n \delta_n + \delta_{n}^{(*)} \big),
\end{align*}
and, consequently, for sufficiently large $n$ the \gls*{pcs} satisfies the reliability condition in~\eqref{eq:reliabilitycondx}.

\subsection{Secrecy analysis}\label{sec:performance_secrecyx}
Since the encoding of Section~\ref{sec:PCSx1} takes place over $L$ blocks of size $n$, we need to prove that
\begin{IEEEeqnarray}{c}
\lim_{n \rightarrow \infty} I \left( S_{(1),1:L} S_{(2),1:L} ; \tilde{Z}_{1:L}^n \right) = 0. \nonumber
\end{IEEEeqnarray}

Consider that the \gls*{pcs} operates to achieve $(R_{S_{(1)}}^{\star k}, R_{S_{(2)}}^{\star k}, R_{W_{(1)}}^{\star k}, R_{W_{(2)}}^{\star k}) \subset \mathfrak{R}_{\text{{MI-WTBC}}}^{(k)}$, where $k \in [1,2]$. For any $i \in [1,L]$, the confidential message $S_{(k),i}$ is stored in $\tilde{A}_i\big[ \mathcal{H}_{V|Z}^{(n)} \big]$ and $\tilde{T}_{(k),i}\big[  \mathcal{H}_{U_{(k)}|VZ}^{(n)} \big]$, and $S_{(\bar{k}),i}$ is stored in $\tilde{T}_{(\bar{k}),i}\big[  \mathcal{H}_{U_{(\bar{k})}|VU_{(\bar{k})}Z}^{(n)} \big]$. Hence, the following lemma shows that strong secrecy holds for any Block~$i \in [1,L]$.

\begin{lemma}
\label{lemma:secrecy1x}
For any $i \in [1,L]$ and sufficiently large $n$, we have
\begin{IEEEeqnarray}{c}
I \big(\tilde{A}_i\big[ \mathcal{H}_{V|Z}^{(n)} \big] \tilde{T}_{(k),i}\big[  \mathcal{H}_{U_{(k)}|VZ}^{(n)} \big] \tilde{T}_{(\bar{k}),i}\big[  \mathcal{H}_{U_{(\bar{k})}|VU_{(k)}Z}^{(n)} \big]; \tilde{Z}_i^n \big)  \leq \delta_n^{(\text{\emph{S}})}, \nonumber
\end{IEEEeqnarray}
where $\delta_n^{(\text{\emph{S}})} \triangleq 3 n \delta_n  + 2 \delta_n^{(*)} \big(3 n  -  \log \delta_n^{(*)} \big)$ and $\delta_n^{(*)}$ is defined as in Lemma~\ref{lemma:distDMSx}.
\end{lemma}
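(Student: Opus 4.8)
The plan is to prove the inequality first under the target distribution $p_{A^nT_{(1)}^nT_{(2)}^nZ^n}$ of the original DMS used for the code construction, and then transfer it to the distribution $\tilde q$ induced by the encoder via Lemma~\ref{lemma:distDMSx} together with a continuity-of-entropy argument.

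\emph{Step 1 (bound under $p$).} Write $N \triangleq \big| \mathcal{H}_{V|Z}^{(n)} \big| + \big| \mathcal{H}_{U_{(k)}|VZ}^{(n)} \big| + \big| \mathcal{H}_{U_{(\bar{k})}|VU_{(k)}Z}^{(n)} \big| \le 3n$ for the number of binary coordinates involved, and let $\Sigma$ denote the tuple $A\big[\mathcal{H}_{V|Z}^{(n)}\big]\,T_{(k)}\big[\mathcal{H}_{U_{(k)}|VZ}^{(n)}\big]\,T_{(\bar{k})}\big[\mathcal{H}_{U_{(\bar{k})}|VU_{(k)}Z}^{(n)}\big]$ under $p$. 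Since $H_p(\Sigma) \le N$ trivially, it suffices to show $H_p(\Sigma \mid Z^n) \ge N(1-\delta_n)$. I would expand $H_p(\Sigma \mid Z^n)$ by the chain rule in the order inner-layer coordinates, then outer-layer-$k$ coordinates, then outer-layer-$\bar k$ coordinates, and at each stage enlarge the conditioning from the already-processed subvector to the full corresponding vector (legitimate, as it only decreases each conditional term). Using the bijections $A^n = V^nG_n$ and $T_{(k)}^n = U_{(k)}^nG_n$, the conditioning becomes $Z^n$, then $(V^n,Z^n)$, then $(V^n,U_{(k)}^n,Z^n)$, so each single-coordinate term is of exactly the form appearing in \eqref{eq:HUZx}, \eqref{eq:HUVZx}, \eqref{eq:HUVUZx} and is therefore $\ge 1-\delta_n$. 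Summing gives $H_p(\Sigma\mid Z^n)\ge N(1-\delta_n)$, hence $I_p(\Sigma;Z^n) \le N\delta_n \le 3n\delta_n$.

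\emph{Step 2 (transfer to $\tilde q$).} By Lemma~\ref{lemma:distDMSx} and the invertibility of $G_n$, $\mathbb{V}\big(\tilde q_{A_i^nT_{(1),i}^nT_{(2),i}^nZ_i^n}, p_{A^nT_{(1)}^nT_{(2)}^nZ^n}\big) \le \delta_n^{(*)}$, and marginalizing onto the coordinates of $\Sigma$ together with $\tilde Z_i^n$ can only decrease the total variation, so those marginals are within $\delta_n^{(*)}$ as well. Then I would write $I_{\tilde q}(\Sigma;\tilde Z_i^n) - I_p(\Sigma;Z^n) = \big(H_{\tilde q}(\Sigma) - H_p(\Sigma)\big) - \big(H_{\tilde q}(\Sigma\mid \tilde Z_i^n) - H_p(\Sigma\mid Z^n)\big)$ and apply the standard continuity-of-entropy estimate $|H_p(\cdot)-H_q(\cdot)| \le \mathbb{V}\log\tfrac{|\mathcal{X}|}{\mathbb{V}}$ to each of the two differences, with $|\mathcal{X}| \le 2^{3n}$ the alphabet size of $\Sigma$; each is at most $\delta_n^{(*)}(3n - \log\delta_n^{(*)})$. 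Adding the two contributions and combining with Step~1 yields $I_{\tilde q}(\Sigma;\tilde Z_i^n) \le 3n\delta_n + 2\delta_n^{(*)}(3n-\log\delta_n^{(*)}) = \delta_n^{(\text{S})}$, which is the claim; one also notes $\delta_n^{(*)}\to 0$ so the continuity bound applies for $n$ large enough.

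\emph{Expected main obstacle.} The computation is largely mechanical, and the two points requiring care are: (i) aligning the chain-rule ordering and the successive enlargements of the conditioning in Step~1 so that every single-letter term matches one of the three defining sets exactly; and (ii) invoking the version of the conditional-entropy continuity bound that depends only on the alphabet size of $\Sigma$ (at most $2^{3n}$) rather than on $|\mathcal{Z}|^n$, since $\tilde Z_i^n$ lives in a large alphabet. Neither is deep, but mishandling (ii) would yield a vacuous estimate.
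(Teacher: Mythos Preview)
Your proposal is correct and is essentially the paper's own argument. The paper likewise lower-bounds $H_p\big(A[\mathcal{H}_{V|Z}^{(n)}]\,T_{(k)}[\mathcal{H}_{U_{(k)}|VZ}^{(n)}]\,T_{(\bar{k})}[\mathcal{H}_{U_{(\bar{k})}|VU_{(k)}Z}^{(n)}]\mid Z^n\big)$ via the same chain rule and successive enlargement of conditioning to $(Z^n)$, $(V^n,Z^n)$, $(V^n,U_{(k)}^n,Z^n)$, and then transfers to $\tilde q$ via Lemma~\ref{lemma:distDMSx} together with a continuity-of-entropy bound (quoted there as \cite{alos2019polar}, Lemma~6, with $M=3$, $O=1$), arriving at exactly $3n\delta_n+2\delta_n^{(*)}(3n-\log\delta_n^{(*)})$. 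The only cosmetic difference is that the paper bounds $H_{\tilde q}(\Sigma)\le N$ directly and applies continuity once to the conditional entropy, whereas you first bound $I_p$ and then apply continuity to both entropy terms; the arithmetic and the caveat you flag in~(ii) about the conditional-entropy continuity bound depending only on the $\Sigma$-alphabet are precisely what the cited external lemma handles.
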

\begin{proof}
See Appendix~\ref{app:secrecy1x}.
\end{proof}

The following step is to prove asymptotically statistically independence between eavesdropper's observations from Blocks 1 to $L$. We address this part slightly differently depending on whether the \gls*{pcs} must achieve the corner point of $\mathfrak{R}_{\text{{MI-WTBC}}}^{(1)}$ or $\mathfrak{R}_{\text{{MI-WTBC}}}^{(2)}$. 

\vspace{0.15cm}
\noindent \textbf{Secrecy analysis when polar code operates to achieve the corner point of} $\bm{\mathfrak{R}_{\text{{MI-WTBC}}}^{(1)}}$

For convenience and with slight abuse of notation, for $i \in [1,L]$ let $\tilde{R}_{(1),i}^n \triangleq \big(\tilde{A}_i^n, \tilde{T}_{(1),i}^n \big)$, which carries $W_{(1),i} \triangleq \big[ W^{(V)}_{(1),i}, W^{(U)}_{(1),i}\big]$ and $S_{(1),i} \triangleq \big[ S^{(V)}_{(1),i}, S^{(U)}_{(1),i}\big]$. According to the previous encoding, we have $W^{(V)}_{(1),i} = \tilde{A}_i \big[ \mathcal{C}^{(n)} \big]$ and $W^{(U)}_{(1),i} = \tilde{T}_{(1),i} \big[ \mathcal{J}^{(n)}_{0} \cup \mathcal{J}^{(n)}_{1} \big]$. Therefore, we define $W_{(1),i} \triangleq \big[ W^{\prime}_{(1),i}, W^{\prime \prime}_{(1),i}\big]$, where $W^{\prime}_{(1),i} \triangleq \big[ W^{\prime (V)}_{(1),i}, W^{\prime (U)}_{(1),i}\big]$, being $W^{\prime (V)}_{(1),i} = \tilde{A}_i\big[\mathcal{C}^{(n)}_{1} \cup \mathcal{C}^{(n)}_{1,2} \big]$ and $W^{\prime (U)}_{(1),i} = \tilde{T}_{(1),i}\big[\mathcal{J}^{(n)}_{1} \big]$. Then, $W_{(1),i}^{\prime\prime} \triangleq \big[ W^{\prime \prime (V)}_{(1),i}, W^{\prime \prime (U)}_{(1),i}\big]$, being $W^{\prime \prime (V)}_{(1),i} = \tilde{A}_i\big[\mathcal{C}^{(n)}_{0} \cup \mathcal{C}^{(n)}_{2} \big]$ and $W^{\prime \prime (U)}_{(1),i} = \tilde{T}_{(1),i}\big[\mathcal{J}^{(n)}_{0} \big]$. Recall that, for $i \in [1,L-1]$, an \emph{encrypted} version of $W^{\prime}_{(1),i+1}$, namely $\bar{\Omega}_{(1),i+1} \triangleq \big[ \bar{\Theta}_{i+1}^{(V)}, \bar{\Gamma}_{i+1}^{(V)}, \bar{\Theta}_{(1),{i+1}}^{(U)} \big]$, is repeated in $\tilde{R}_{(1),i}^n$. In fact, from $\tilde{A}_{i+1}^n$, recall that sequence $\Delta_{(1),i+1}^{(V)} = \big[\bar{\Theta}_{3,i+1}^{(V)},\bar{\Gamma}_{3,i+1}^{(V)}\big] \subseteq \bar{\Omega}_{(1),i+1}$ is repeated in $\tilde{T}_{(1),i}^n$ but now this dependency appears implicitly. Furthermore, for $i \in [1,L]$ define $\Xi_{(1),i} \triangleq \big[ \Psi_i^{(V)}, \Gamma_i^{(V)}, \Pi_{(2),i}^{(V)}, \Lambda_{i}, \Lambda_{(1),i}^{(U)} \big]$, which denotes the entire sequence depending on $\tilde{R}_{(1),i}^n$ that is repeated in $\tilde{R}_{(1),i+1}^n$ if $i \in [1,L-1]$. 

\setstretch{1.20} 
Finally, for $i \in [1,L]$ we have $\tilde{T}_{(2),i}^n$ that carries $W_{(2),i}$ and $S_{(2),i}$. For convenience, define  $\Xi_{(2),i} \triangleq \big[ \Psi_{(2),i}^{(U)}, \Lambda_{(2),i}^{(U)} \big]$, which, together with $\bar{O}^{(U)}_{(2),i}$, will be repeated in $\tilde{T}_{(2),i+1}^n$ if $i \in [1,L-1]$.
 
According to these previous definitions and setting $\kappa_{\Omega} \triangleq [\kappa_{\Theta}^{(V)}, \kappa_{\Gamma}^{(V)}, \kappa_{\Theta}^{(U)}]$, notice that Figure~\ref{fig:dependenciesR1x} represents a Bayesian graph that describes the dependencies between the variables involved in the \gls*{pcs} of Section~\ref{sec:PCSx1} when it operates to achieve the corner point of $\mathfrak{R}_{\text{{MI-WTBC}}}^{(1)}$.

Although we have seen that the \gls*{pcs} introduces bidirectional dependencies, we have reformulated the encoding to obtain that they take place forward only. To do so, additionally we regard $\bar{\Omega}_{(1),i}$ as an independent random sequence generated at Block~$i-1$ and properly stored in $\tilde{R}_{i-1}^n$. Then, by using $\kappa_{\Omega}$, the encoder obtains $W^{\prime}_{(1),i}$ that is repeated in Block~$i$. 

\begin{figure}[h]
\centering
\includegraphics[width=0.97\linewidth]{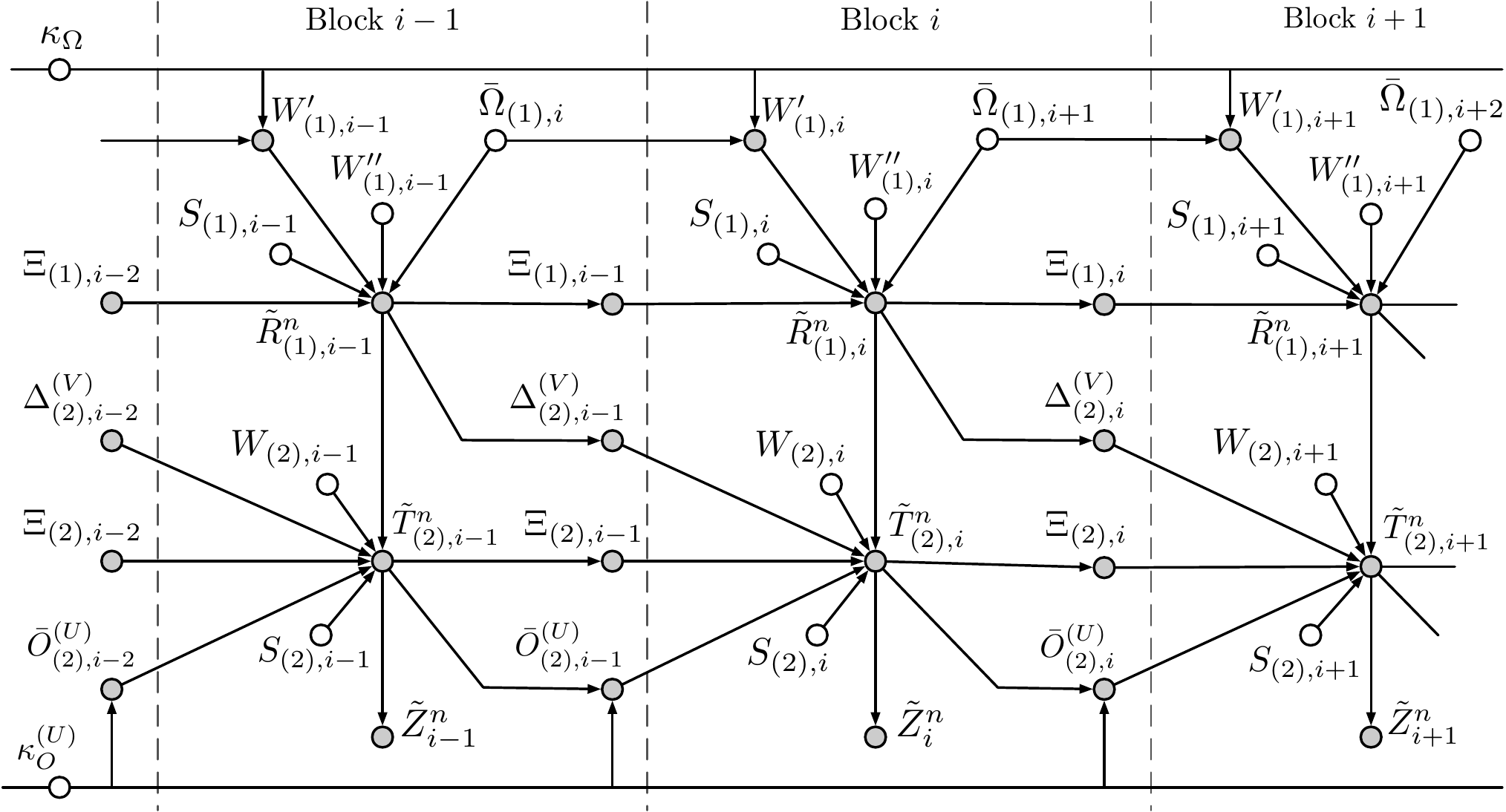}
\caption{\setstretch{1.2}
Graphical representation (Bayesian graph) of the dependencies between random variables involved in the \gls*{pcs} when it operates to achieve the corner point of $\smash{\mathfrak{R}_{\text{{MI-WTBC}}}^{(1)}}$. Independent random variables are indicated by white nodes, whereas those that are dependent are indicated by gray nodes.
}\label{fig:dependenciesR1x} 
\end{figure}

The following lemma shows that eavesdropper observations $\tilde{Z}_i^n$ are asymptotically statistically independent of observations $\tilde{Z}_{1:i-1}^n$ from previous blocks.
\begin{lemma}
\label{lemma:secrecy2x}
For any~$i \in [2,L]$ and sufficiently large $n$, we have
\vspace*{-0.14cm}
\begin{align*}
I \big(S_{(1),1:L} S_{(2),1:L} \tilde{Z}_{1:i-1}^n; \tilde{Z}_{i}^n \big) \leq \delta_n^{(\text{\emph{S}})}, 
\end{align*}
where $\delta_n^{(\text{\emph{S}})}$ is defined as in Lemma~\ref{lemma:secrecy1x}.
\end{lemma}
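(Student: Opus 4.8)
The plan is to prove Lemma~\ref{lemma:secrecy2x} by combining the single-block secrecy result of Lemma~\ref{lemma:secrecy1x} with the structural fact, encoded in the Bayesian graph of Figure~\ref{fig:dependenciesR1x}, that the only sequences shared between Block~$i$ and the previous blocks are the ``chaining'' sequences $\bar\Omega_{(1),i}$, $\Xi_{(1),i-1}$, $\Xi_{(2),i-1}$ and $\bar O_{(2),i-1}^{(U)}$, all of which are constructed (via the one-time pads $\kappa_\Omega$ and $\kappa_O^{(U)}$) to be nearly uniformly distributed on subsets of $\tilde A_i[\mathcal{H}_{V|Z}^{(n)}]$, $\tilde T_{(k),i}[\mathcal{H}_{U_{(k)}|VZ}^{(n)}]$ or $\tilde T_{(\bar k),i}[\mathcal{H}_{U_{(\bar k)}|VU_{(k)}Z}^{(n)}]$ — precisely the indices whose entries Lemma~\ref{lemma:secrecy1x} certifies to be close to independent of $\tilde Z_i^n$. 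So the first step is to write $\tilde Z_{1:i-1}^n$ as a deterministic function of the message variables $S_{(1),1:L},S_{(2),1:L}$, the private messages $W$, the chaining sequences fed into Blocks $1,\dots,i-1$, and the fresh randomness used in those blocks; crucially $\tilde Z_{1:i-1}^n$ depends on Block~$i$'s internal variables \emph{only} through the sequences that Block~$i$ passes backward, which (by the reformulation in the text that makes all dependencies go forward) are $\bar\Omega_{(1),i}$ — regarded as generated at Block~$i-1$ — and through the sequences $\Xi_{(1),i-1},\Xi_{(2),i-1},\bar O_{(2),i-1}^{(U)}$ that are generated at Block~$i-1$ and copied into Block~$i$. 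Denote the collection of these interface sequences by $\Sigma_i$.

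Next I would bound $I(S_{(1),1:L}S_{(2),1:L}\tilde Z_{1:i-1}^n;\tilde Z_i^n)$ by $I(S_{(1),i}S_{(2),i}\,\Sigma_i\,M_i;\tilde Z_i^n)$ where $M_i$ denotes everything in $S_{(1),1:L}S_{(2),1:L}\tilde Z_{1:i-1}^n$ other than what is already captured; the point is that, conditionally on $\Sigma_i$ and the confidential messages carried in Block~$i$, the pair $(S_{(1),1:L}S_{(2),1:L}\tilde Z_{1:i-1}^n)$ and $\tilde Z_i^n$ become independent by the Markov structure of the graph, because $\tilde Z_i^n$ is a function of Block~$i$'s variables and those feed back to earlier blocks solely through $\Sigma_i$. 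Hence $I(S_{(1),1:L}S_{(2),1:L}\tilde Z_{1:i-1}^n;\tilde Z_i^n)\le I(S_{(1),i}S_{(2),i}\Sigma_i;\tilde Z_i^n)$. Now $\Sigma_i$ is, up to the negligible non-uniformity corrected by the secret keys, exactly a subvector of $\big(\tilde A_i[\mathcal{H}_{V|Z}^{(n)}],\tilde T_{(k),i}[\mathcal{H}_{U_{(k)}|VZ}^{(n)}],\tilde T_{(\bar k),i}[\mathcal{H}_{U_{(\bar k)}|VU_{(k)}Z}^{(n)}]\big)$ — this is the content of the two preceding remarks about $\kappa_O^{(U)}$ and $\kappa_\Omega$ rendering $\bar O_{(2),i}^{(U)}$ and $\bar\Omega_{(1),i}$ uniform — and likewise $S_{(1),i}S_{(2),i}$ is stored in those same high-entropy index sets. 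Therefore $I(S_{(1),i}S_{(2),i}\Sigma_i;\tilde Z_i^n)\le I\big(\tilde A_i[\mathcal{H}_{V|Z}^{(n)}]\tilde T_{(k),i}[\mathcal{H}_{U_{(k)}|VZ}^{(n)}]\tilde T_{(\bar k),i}[\mathcal{H}_{U_{(\bar k)}|VU_{(k)}Z}^{(n)}];\tilde Z_i^n\big)\le\delta_n^{(\text{S})}$ by Lemma~\ref{lemma:secrecy1x}, which is exactly the claimed bound.

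I would also need a short lemma-level argument that the one-time-pad encryption indeed makes $\bar\Omega_{(1),i}$ and $\bar O_{(2),i}^{(U)}$ (hence $\Sigma_i$) sufficiently close to uniform and independent of the rest of Block~$i$'s data so that the inclusion ``$\Sigma_i$ is a subvector of the high-entropy sets'' incurs at most a total-variation error absorbed into $\delta_n^{(\text{S})}$; this uses Lemma~\ref{lemma:distDMSx} to control the gap between $\tilde q_{A_i^nT_{(1),i}^nT_{(2),i}^n}$ and the target product-like distribution, plus the standard fact that XOR with an independent uniform key produces a uniform output. The main obstacle I anticipate is bookkeeping the Markov/graph-separation step cleanly: one must verify from Figure~\ref{fig:dependenciesR1x} that $\Sigma_i$ really is a separating set — i.e.\ that no variable in $S_{(1),1:L}S_{(2),1:L}\tilde Z_{1:i-1}^n$ has a path to $\tilde Z_i^n$ avoiding $\Sigma_i$ — which requires carefully tracking that the ``forward-only reformulation'' genuinely eliminates the backward edges and that the keys $\kappa_\Omega,\kappa_O^{(U)}$ are treated as independent of the messages and of $\tilde Z_i^n$. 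Once that separation is established, the information-inequality chain above is routine, and the $\mathfrak{R}_{\text{MI-WTBC}}^{(2)}$ case is handled by the symmetric argument with the roles of forward/backward chaining and of $k,\bar k$ interchanged, using the analogous Bayesian graph.
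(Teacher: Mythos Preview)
Your overall strategy is right — reduce to the interface variables between Block~$i$ and Blocks~$1{:}i{-}1$, then invoke Lemma~\ref{lemma:secrecy1x} — but there is a real gap in how you treat $\bar\Omega_{(1),i}$.

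You claim that every element of your separating set $\Sigma_i$ is ``a subvector of $\big(\tilde A_i[\mathcal{H}_{V|Z}^{(n)}],\tilde T_{(k),i}[\mathcal{H}_{U_{(k)}|VZ}^{(n)}],\tilde T_{(\bar k),i}[\mathcal{H}_{U_{(\bar k)}|VU_{(k)}Z}^{(n)}]\big)$''. This is true for $\Xi_{(1),i-1}$, $\Xi_{(2),i-1}$, $\Delta_{(2),i-1}^{(V)}$ and $\bar O_{(2),i-1}^{(U)}$, which are copied into those high-entropy-given-$Z$ positions of Block~$i$. It is \emph{false} for $\bar\Omega_{(1),i}$: that sequence is stored in Block~$i{-}1$, and what actually sits in Block~$i$ is $W'_{(1),i}=\bar\Omega_{(1),i}\oplus\kappa_\Omega$, which lives in $\tilde A_i[\mathcal{C}_1^{(n)}\cup\mathcal{C}_{1,2}^{(n)}]$ and $\tilde T_{(1),i}[\mathcal{J}_1^{(n)}]$ — precisely the positions \emph{not} covered by Lemma~\ref{lemma:secrecy1x}. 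So your final inequality $I(S_{(1),i}S_{(2),i}\Sigma_i;\tilde Z_i^n)\le\delta_n^{(\text S)}$ does not follow from Lemma~\ref{lemma:secrecy1x} as stated. Relatedly, because the \emph{same} key $\kappa_\Omega$ is reused in every block, conditioning on $\bar\Omega_{(1),i}$ does not $d$-separate $\tilde Z_i^n$ from $\tilde Z_{1:i-1}^n$ in Figure~\ref{fig:dependenciesR1x}: the path $\tilde Z_i^n\leftarrow\tilde R_{(1),i}^n\leftarrow W'_{(1),i}\leftarrow\kappa_\Omega\rightarrow W'_{(1),j}\rightarrow\tilde R_{(1),j}^n\rightarrow\tilde Z_j^n$ remains open.

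The paper's proof resolves both issues by splitting the roles you have merged. First it bounds $I(\mathsf B_{i-1};\tilde Z_i^n)\le\delta_n^{(\text S)}$ via Lemma~\ref{lemma:secrecy1x}, where $\mathsf B_{i-1}$ contains $S_{(1),i},S_{(2),i},\Xi_{(1),i-1},\Delta_{(2),i-1}^{(V)},\Xi_{(2),i-1},\bar O_{(2),i-1}^{(U)}$ but \emph{not} $\bar\Omega_{(1),i}$ (these really are the Block-$i$ high-entropy entries). Then, to get graph separation, it conditions on $W'_{(1),i}$ itself — this node does block the path through $\kappa_\Omega$ — and finally applies the crypto-lemma to show that $W'_{(1),i}=\bar\Omega_{(1),i}\oplus\kappa_\Omega$ is independent of $(\tilde R_{(1),1:i-1},S_{(2),1:i-1},\mathsf B_{i-1})$, killing the residual term. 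The missing idea in your sketch is that the separating variable for $d$-separation must be $W'_{(1),i}$ (the Block-$i$ copy), and its contribution is then removed by a separate one-time-pad argument rather than by Lemma~\ref{lemma:secrecy1x}.
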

\begin{proof}
See Appendix~\ref{app:secrecy2x}.
\end{proof}

Therefore, we obtain
\vspace*{-0.1cm}
\begin{align*}
I \big(S_{(1),1:L} S_{(2),1:L} ; \tilde{Z}_{1:L}^n \big)  & =  I \big(S_{(1),1:L} S_{(2),1:L} ; \tilde{Z}_{1}^n \big) + \sum_{i=2}^L I \big(S_{(1),1:L} S_{(2),1:L} ; \tilde{Z}_{i}^n \big| \tilde{Z}_{1:i-1}^n \big) \\[-0.5em]
& \stackrel{(a)}{\leq} I \big(S_{(1),1:L} S_{(2),1:L} ; \tilde{Z}_{1}^n \big) + (L-1) \delta_n^{(\text{S})} \\
& \stackrel{(b)}{\leq}  L \delta_n^{(\text{S})},
\end{align*}
where $(a)$ holds by Lemma~\ref{lemma:secrecy2x}; and $(b)$ holds by independence between $S_{2:L}$ and any random variable from Block~$1$, and from applying Lemma~\ref{lemma:secrecy1x} to bound $I(S_{(1),1}S_{(2),1};\tilde{Z}_1^n)$.

Thus, for sufficiently large $n$, the \gls*{pcs} satisfies the strong secrecy condition in~\eqref{eq:secrecycondx}.


\vspace{0.15cm}
\noindent \textbf{Secrecy analysis when polar code operates to achieve the corner point of} $\bm{\mathfrak{R}_{\text{{MI-WTBC}}}^{(2)}}$
\vspace{0.1cm}

Now, for convenience and with slight abuse of notation, for $i \in [1,L]$ let $\tilde{R}_{(2),i}^n \triangleq \big(\tilde{A}_i^n, \tilde{T}_{(2),i}^n \big)$, which carries $W_{(2),i} \triangleq \big[ W^{(V)}_{(2),i}, W^{(U)}_{(2),i}\big]$ and $S_{(2),i} \triangleq \big[ S^{(V)}_{(2),i}, S^{(U)}_{(2),i}\big]$. According to the previous encoding, we have $W^{(V)}_{(2),i} = \tilde{A}_i \big[ \mathcal{C}^{(n)} \big]$ and $W^{(U)}_{(2),i} = \tilde{T}_{(2),i} \big[ \mathcal{J}^{(n)}_{0} \cup \mathcal{J}^{(n)}_{2} \big]$. Therefore, we define $W_{(2),i} \triangleq \big[ W^{\prime}_{(2),i}, W^{\prime \prime}_{(2),i}\big]$, where $W^{\prime}_{(2),i} \triangleq \big[ W^{\prime (V)}_{(2),i}, W^{\prime (U)}_{(2),i}\big]$, being $W^{\prime (V)}_{(2),i} = \tilde{A}_i\big[\mathcal{C}^{(n)}_{2} \cup \mathcal{C}^{(n)}_{1,2} \big]$ and $W^{\prime (U)}_{(2),i} = \tilde{T}_{(2),i}\big[\mathcal{J}^{(n)}_{2} \big]$. Then, $W_{(2),i}^{\prime\prime} \triangleq \big[ W^{\prime \prime (V)}_{(2),i}, W^{\prime \prime (U)}_{(2),i}\big]$, being $W^{\prime \prime (V)}_{(2),i} = \tilde{A}_i\big[\mathcal{C}^{(n)}_{0} \cup \mathcal{C}^{(n)}_{2} \big]$ and $W^{\prime \prime (U)}_{(2),i} = \tilde{T}_{(2),i}\big[\mathcal{J}^{(n)}_{0} \big]$. Recall that, for $i \in [2,L]$, now an \emph{encrypted} version of $W^{\prime}_{(2),i-1}$, namely $\bar{\Omega}_{(2),i-1} \triangleq \big[ \bar{\Psi}_{i-1}^{(V)}, \bar{\Gamma}_{i-1}^{(V)}, \bar{\Psi}_{(2),{i-1}}^{(U)} \big]$, is repeated in $\tilde{R}_{(2),i}^n$. Indeed, $\Delta_{(2),i-1}^{(V)}  =  \big[\bar{\Psi}_{3,i-1}^{(V)},\bar{\Gamma}_{3,i-1}^{(V)}\big] \subseteq   \bar{\Omega}_{(2),i-1}$ is repeated in $\tilde{T}_{(2),i}^n$, and now this dependency appears implicitly.

\setstretch{1.20} 
Recall that, for $i \in [1,L-1]$, $\Pi_{(2),i}^{(V)} = \tilde{A}_{i} \big[ \mathcal{I}^{(n)} \cap \mathcal{G}_2^{(n)} \big]$, which contains part of $S_{(2),i}^{(V)}$, is repeated in $\tilde{A}_{i+1} \big[ \mathcal{R}_{\text{S}}^{(n)} \big]$, while $\Lambda_{1}^{(V)} = \tilde{A}_{1} \big[ \mathcal{R}_{\Lambda}^{(n)} \big]$, which contains part of $S_{(2),1}^{(V)}$, is replicated in $\tilde{A}_{2:L} \big[ \mathcal{R}_{\Lambda}^{(n)} \big]$. For convenience, now we would like to have backward dependencies only. Therefore, we can consider that $S_{(2),1}^{\prime (V)} = \tilde{A}_1 \big[ \big(\mathcal{I}^{(n)} \cup \mathcal{G}_1^{(n)} \big)  \setminus \big( \mathcal{R}_{\Lambda}^{(n)} \cup \mathcal{R}_{\text{S}}^{(n)} \big) \big]$, for $i \in [2,L-1]$ then $S_{(2),i}^{\prime (V)} = \tilde{A}_i \big[ \big(\mathcal{I}^{(n)} \setminus \mathcal{G}_2^{(n)} \big) \cup \mathcal{R}_{\text{S}}^{(n)} \big]$, and $S_{(2),L}^{\prime (V)}= \tilde{A}_i \big[ \mathcal{I}^{(n)} \cup \mathcal{R}_{\text{S}}^{(n)} \cup \mathcal{R}_{\Lambda}^{(n)} \big]$. Then, for $i \in [2,L]$, we have that $\Pi_{(2),i}^{\prime (V)} = \tilde{A}_{i} \big[ \mathcal{I}^{(n)} \cap \mathcal{R}_{\text{S}}^{(n)} \big]$ and $\Lambda_{i}^{\prime (V)} = \tilde{A}_{i} \big[ \mathcal{R}_{\Lambda}^{(n)} \big]$ are repeated in $\tilde{A}_{i-1} \big[ \big( \mathcal{I}^{(n)} \cap \mathcal{G}_2^{(n)}\big) \cup \mathcal{R}_{\Lambda}^{(n)} \big]$. Similarly, we can regard $S_{(2),1}^{\prime (U)} = \tilde{T}_{(2),1} \big[ \mathcal{F}_0^{(n)} \big]$, for $i \in [2,L-1]$ then $S_{(2),i}^{\prime (U)} = S_{(2),i}^{(U)} =\tilde{T}_{(2),i}\big[ \mathcal{F}_0^{(n)}  \setminus \big( \mathcal{D}_2^{(n)} \cup \mathcal{L}_2^{(n)} \cup \mathcal{O}_2^{(n)} \big) \big]$, and $S_{(2),L}^{\prime (U)} = \tilde{T}_{(2),L} \big[ \mathcal{F}_0^{(n)} \cup \mathcal{F}_2^{(n)} \big]$. Then, for $i \in [2,L]$, we can consider that $\Lambda_{(2),i}^{\prime (U)} = \tilde{T}_{(2),i} \big[ \mathcal{F}_2^{(n)} \big]$ is repeated in $\tilde{T}_{(2),i-1} \big[ \mathcal{F}_2^{(n)} \big]$. Therefore, for $i \in [1,L-1]$ we define $\Xi_{(2),i+1} \triangleq \big[ \Theta_{i+1}^{(V)}, \Gamma_{i+1}^{(V)}, \Pi_{(2),i+1}^{\prime (V)}, \Lambda_{i+1}^{\prime (V)}, \Lambda_{(2),i+1}^{\prime (U)} \big]$, which denotes the entire sequence depending on $\tilde{R}_{(2),i+1}^n$ that is repeated in $\tilde{R}_{(2),i}^n$.

Finally, we have $\tilde{T}_{(1),i}^n$ that carries $W_{(1),i}$ and $S_{(1),i}$. Now, for $i \in [1,L-1]$, we regard $S_{(1),i}^{\prime (U)} = \tilde{T}_{(1),i} \big[ \mathcal{Q}_0^{(n)}  \setminus \big( \mathcal{N}_1^{(n)} \cup \mathcal{M}_1^{(n)} \cup \mathcal{O}_1^{(n)} \big) \big]$, and $S_{(1),L}^{\prime (U)} = \tilde{T}_{(1),L} \big[ \mathcal{Q}_0^{(n)} \cup \mathcal{Q}_1^{(n)} \big]$. Then, for $i \in [2,L]$ we consider that $\Lambda_{i}^{\prime (U)} = \tilde{T}_{(1),i} \big[ \mathcal{Q}_1^{(n)} \big]$ is repeated in $\tilde{T}_{(1),i-1} \big[ \mathcal{Q}_1^{(n)} \big]$. Thus, we define $\Xi_{(1),i} \triangleq \big[ \Theta_{(1),i}^{(U)}, \Lambda_{(1),i}^{\prime (U)} \big]$, which, together with $\bar{O}^{(U)}_{(1),i}$, will be repeated in $\tilde{T}_{(1),i-1}^n$ if $i \in [2,L]$. Also, define ${\ominus}_{(1),i}^{(V)} \triangleq \big[\Delta_{(1),i}^{(V)}, \Pi_{(1),i}^{(V)} \big]$, which denotes the part of $\tilde{A}_i^n$ that is repeated in $\tilde{T}_{(1),i-1}^n$.

According to these previous definitions and setting $\kappa_{\Omega} \triangleq [\kappa_{\Psi}^{(V)}, \kappa_{\Gamma}^{(V)}, \kappa_{\Psi}^{(U)}]$, notice that Figure~\ref{fig:dependenciesR2x} represents a Bayesian graph that describes the dependencies between the variables involved in the \gls*{pcs} of Section~\ref{sec:PCSx1} when it operate to achieve the corner point of $\mathfrak{R}_{\text{{MI-WTBC}}}^{(2)}$.

In order to obtain that dependencies take place backward only, we regard $\bar{\Omega}_{(2),i}$, for $i \in [2,L]$, as an independent random sequence that is generated at Block~$i-1$ and is properly stored in $\tilde{R}_{(2),i-1}^n$. Then, by using $\kappa_{\Omega}$, the encoder obtains $W^{\prime}_{(2),i}$ that is repeated in Block~$i$.

\begin{figure}[h]
\centering
\includegraphics[width=0.97\linewidth]{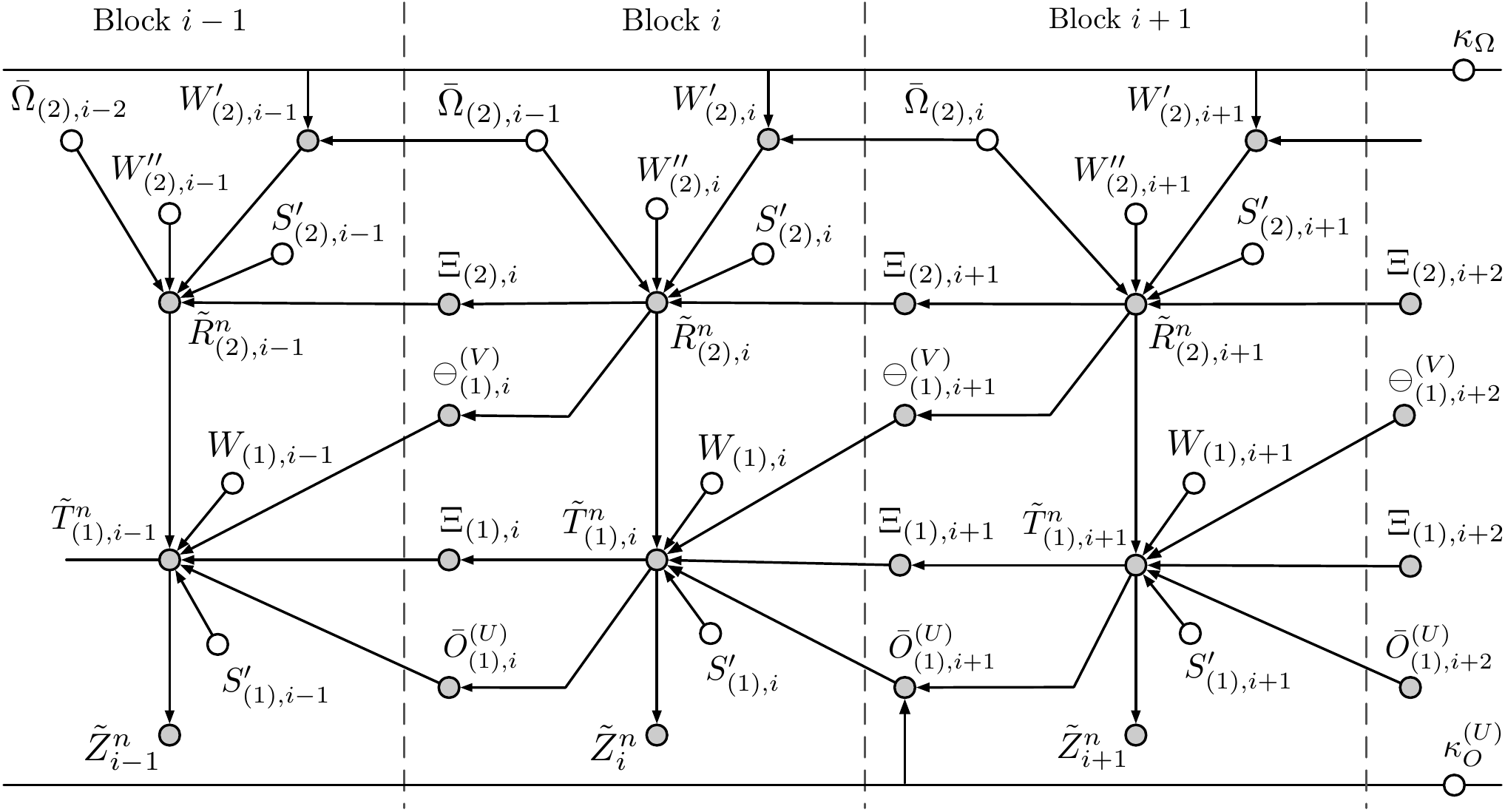}
\caption{\setstretch{1.2}
Graphical representation (Bayesian graph) of the dependencies between random variables involved in the \gls*{pcs} when it operates to achieve the corner point of $\smash{\mathfrak{R}_{\text{{MI-WTBC}}}^{(2)}}$. Independent random variables are indicated by white nodes, whereas those that are dependent are indicated by gray nodes.
}\label{fig:dependenciesR2x} 
\end{figure}

The following lemma shows that eavesdropper observations $\tilde{Z}_i^n$ are asymptotically statistically independent of observations $\tilde{Z}_{i+1:L}^n$ from previous blocks.
\begin{lemma}
\label{lemma:secrecy3x}
For any~$i \in [1,L-1]$ and sufficiently large $n$, we have
\begin{align*}
I \big(S_{(1),1:L} S_{(2),1:L} \tilde{Z}_{i+1:L}^n ; \tilde{Z}_{i}^n \big) \leq \delta_n^{(\text{\emph{S}})}, \\[-2.4em]
\end{align*}
where $\delta_n^{(\text{\emph{S}})}$ is defined as in Lemma~\ref{lemma:secrecy1x}.
\end{lemma}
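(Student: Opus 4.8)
The plan is to prove Lemma~\ref{lemma:secrecy3x} by mirroring the argument used for Lemma~\ref{lemma:secrecy2x} (the $\mathfrak{R}_{\text{MI-WTBC}}^{(1)}$ case), but with the chaining dependencies reversed so that all dependencies now flow \emph{backward}, from Block~$i+1$ towards Block~$1$. The key structural fact, already set up in the excerpt and depicted in Figure~\ref{fig:dependenciesR2x}, is that after the reformulation the only quantities shared between Block~$i$ and the blocks $i+1,\dots,L$ are the sequence $\bar{\Omega}_{(2),i+1}$ (an encrypted copy of $W'_{(2),i+1}$), the carried-over sequences $\Xi_{(2),i+1}$, $\Xi_{(1),i+1}$, $\bar{O}^{(U)}_{(1),i+1}$, and ${\ominus}_{(1),i+1}^{(V)}$, together with the reused keys $\kappa_\Omega$ and $\kappa_O^{(U)}$. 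Since dependencies only reach one block, once we condition on the appropriate Block-$(i+1)$ interface variables, $\tilde{Z}_i^n$ becomes conditionally independent of $\tilde{Z}_{i+1:L}^n$.

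First I would expand $I\big(S_{(1),1:L}S_{(2),1:L}\tilde{Z}_{i+1:L}^n;\tilde{Z}_i^n\big)$ and introduce the ``interface'' random vector at the boundary between Block~$i$ and Block~$i+1$ — call it $\Xi_{i+1}$, collecting $\big(\bar{\Omega}_{(2),i+1},\Xi_{(2),i+1},\Xi_{(1),i+1},\bar{O}^{(U)}_{(1),i+1},{\ominus}_{(1),i+1}^{(V)}\big)$ — and use the Bayesian-graph structure to argue that $\tilde{Z}_i^n - (\text{variables of Block }i, \Xi_{i+1}) - \tilde{Z}_{i+1:L}^n$ forms a Markov chain, hence $I(S_{(1),1:L}S_{(2),1:L}\tilde{Z}_{i+1:L}^n;\tilde{Z}_i^n) \le I(S_{(1),i}S_{(2),i}\,\Xi_{i+1};\tilde{Z}_i^n)$ up to identifying which secret components are actually embedded in Block~$i$. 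Then I would show that the embedded confidential bits of Block~$i$ lie in the high-entropy-from-$Z$ sets $\mathcal{H}_{V|Z}^{(n)}$, $\mathcal{H}_{U_{(2)}|VZ}^{(n)}$, $\mathcal{H}_{U_{(1)}|VU_{(2)}Z}^{(n)}$, so that Lemma~\ref{lemma:secrecy1x} bounds the mutual information between those bits and $\tilde{Z}_i^n$; the content of $\Xi_{i+1}$ that is \emph{also} embedded in Block~$i$ (the repeated copies $\Pi_{(2),i+1}^{\prime(V)}$, $\Lambda_{i+1}^{\prime(V)}$, $\Lambda_{(2),i+1}^{\prime(U)}$, $\Lambda_{(1),i+1}^{\prime(U)}$, and the carried copies via $\mathcal{M}$, $\mathcal{N}$, $\mathcal{O}$ sets) is stored exactly in those same high-entropy index sets of Block~$i$, so it is covered by the same bound. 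The remaining piece of $\Xi_{i+1}$ — namely $\bar{\Omega}_{(2),i+1}$ and $\bar{O}^{(U)}_{(1),i+1}$ — is, by construction, one-time-padded with $\kappa_\Omega$ and $\kappa_O^{(U)}$ respectively, hence uniform and (conditionally) independent of everything in Block~$i$ including $\tilde{Z}_i^n$; this is where the crucial role of the secret keys enters, exactly as in the $\mathfrak{R}_{\text{MI-WTBC}}^{(1)}$ analysis.

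The concrete chain of inequalities I would write is: $I(S_{(1),1:L}S_{(2),1:L}\tilde{Z}_{i+1:L}^n;\tilde{Z}_i^n) \le I(S_{(1),1:L}S_{(2),1:L}\tilde{Z}_{i+1:L}^n \Xi_{i+1};\tilde{Z}_i^n)$, then use the Markov property from Figure~\ref{fig:dependenciesR2x} to drop $\tilde{Z}_{i+1:L}^n$ and the Block-$(>i+1)$ secrets given $\Xi_{i+1}$, then split $\Xi_{i+1}$ into its key-encrypted part (independent of Block~$i$, hence removable) and its part stored in the secure index sets of Block~$i$. What remains is bounded by $I\big(\tilde{A}_i[\mathcal{H}_{V|Z}^{(n)}]\,\tilde{T}_{(2),i}[\mathcal{H}_{U_{(2)}|VZ}^{(n)}]\,\tilde{T}_{(1),i}[\mathcal{H}_{U_{(1)}|VU_{(1)}Z}^{(n)}];\tilde{Z}_i^n\big) \le \delta_n^{(\text{S})}$ by Lemma~\ref{lemma:secrecy1x}, which gives the claim. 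As in the companion lemmas, the mutual-information manipulations that ``move'' variables across the conditioning bar should be justified using the near-target distribution guarantee of Lemma~\ref{lemma:distDMSx} (to replace $\tilde q$ by $p$ at a cost of $O(\delta_n^{(*)}\log\delta_n^{(*)})$), which is already absorbed into the definition of $\delta_n^{(\text{S})}$.

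The main obstacle I anticipate is bookkeeping: correctly identifying, case by case across Situations~1--3 and Cases~A--F, precisely which components of $\Xi_{i+1}$ are repeated into Block~$i$ versus which live only in Block~$i+1$, and verifying that every repeated component indeed lands in one of the three ``secure'' index sets of Block~$i$ (so that Lemma~\ref{lemma:secrecy1x} applies) or is one-time-padded (so that it is independent). The reversed direction of the chaining in the $\mathfrak{R}_{\text{MI-WTBC}}^{(2)}$ construction means the roles of $\Delta_{(1),\cdot}^{(V)}$, $\Pi_{(1),\cdot}^{(V)}$ and the $\mathcal{M}_1^{(n)},\mathcal{N}_1^{(n)},\mathcal{O}_1^{(n)}$ sets in the outer layer $\tilde{T}_{(1)}^n$ are not symmetric to the previous case, and I would need to be careful that ${\ominus}_{(1),i+1}^{(V)}$ (the inner-layer content repeated into $\tilde{T}_{(1),i}^n$) does not leak — but this is fine because it is part of $\tilde{A}_{i+1}^n$ stored in high-$Z$-entropy positions and is covered when we condition on $\Xi_{i+1}$ together with the Block-$(i+1)$ secrets. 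Once this accounting is done carefully, the rest is the same template as Lemma~\ref{lemma:secrecy2x}, with the index $i-1$ replaced by $i+1$ throughout, so the detailed write-up would go in an appendix (``See Appendix~\ref{app:secrecy3x}'').
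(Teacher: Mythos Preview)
Your overall template is exactly the paper's: identify the Block-$(i\!\leftrightarrow\! i{+}1)$ interface, show that together with $S'_{(1),i},S'_{(2),i}$ it coincides with the secure index sets of Block~$i$ so that Lemma~\ref{lemma:secrecy1x} applies, then use d-separation on the Bayesian graph of Figure~\ref{fig:dependenciesR2x} plus the crypto-lemma to kill the remaining link. Two concrete points in your bookkeeping are off, though, and they matter for the argument to close.

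First, $\bar{O}^{(U)}_{(1),i+1}$ is \emph{not} independent of Block~$i$. It is literally stored in $\tilde{T}_{(1),i}\big[\mathcal{O}_1^{(n)}\big]\subset\tilde{T}_{(1),i}\big[\mathcal{H}_{U_{(1)}|VU_{(2)}Z}^{(n)}\big]$, i.e.\ it is part of Block~$i$. The role of $\kappa_O^{(U)}$ is only to make $\bar{O}^{(U)}_{(1),i+1}$ uniform so that it can legitimately occupy those high-entropy positions (this is what Lemma~\ref{lemma:distDMSx} needs); it does not decouple it from $\tilde{Z}_i^n$. In the paper's proof $\bar{O}^{(U)}_{(1),i+1}$ is therefore placed inside $\mathsf{B}_{i+1}$ together with $\Xi_{(2),i+1}$, $\ominus_{(1),i+1}$, $\Xi_{(1),i+1}$, and the bound comes entirely from Lemma~\ref{lemma:secrecy1x}, not from an independence claim.

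Second, the key-encrypted link across the boundary is $W'_{(2),i}=\bar{\Omega}_{(2),i}\oplus\kappa_\Omega$, not $\bar{\Omega}_{(2),i+1}$. The encrypted copy $\bar{\Omega}_{(2),i}$ sits in $\tilde{R}_{(2),i+1}^n$, while $W'_{(2),i}$ sits in the non-secure part of Block~$i$; the d-separation step requires conditioning on $\big(\mathsf{B}_{i+1},W'_{(2),i}\big)$ to sever $\tilde{Z}_i^n$ from the future, and the crypto-lemma is then used in the direction ``$W'_{(2),i}$ is independent of $\tilde{R}_{(2),i+1:L}^n$ given $\mathsf{B}_{i+1}$'' (because $\bar{\Omega}_{(2),i}$ is, in the reformulated graph, an independent uniform source). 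Your formulation puts $\bar{\Omega}_{(2),i+1}$ in the interface and argues independence from Block~$i$, which is the wrong index and the wrong direction; with that choice the d-separation does not go through because the path $\tilde{Z}_i^n\leftarrow W'_{(2),i}\leftarrow\bar{\Omega}_{(2),i}\to\tilde{R}_{(2),i+1}^n$ is left unblocked. Once you fix these two points, your chain of inequalities collapses exactly onto the paper's Appendix~\ref{app:secrecy3x}.
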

\begin{proof}
See Appendix~\ref{app:secrecy3x}.
\end{proof}

Therefore, we obtain
\begin{align*}
& I \big(S_{(1),1:L} S_{(2),1:L} ; \tilde{Z}_{1:L}^n \big)  \\[-0.7em]
& =  I \big(S_{(1),1:L} S_{(2),1:L} ; \tilde{Z}_{L}^n \big) + \sum_{i^{\prime}=1}^{L-1} I \big(S_{(1),1:L} S_{(2),1:L} ; \tilde{Z}_{L- i^{\prime}}^n \big| \tilde{Z}_{L-i^{\prime}+1:L}^n \big) \\[-0.7em]
& \stackrel{(a)}{\leq} I \big(S_{(1),1:L} S_{(2),1:L} ; \tilde{Z}_{L}^n \big) + (L-1) \delta_n^{(\text{S})} \\
& \stackrel{(b)}{\leq}  L \delta_n^{(\text{S})}  \\[-2em]
\end{align*}
where $(a)$ holds by Lemma~\ref{lemma:secrecy3x}; and $(b)$ holds by independence between $S_{1:L-1}$ and any random variable from Block~$L$, and from applying Lemma~\ref{lemma:secrecy1x} to bound $I(S_{(1),L}S_{(2),L};\tilde{Z}_L^n)$. 

Thus, for sufficiently large $n$, the \gls*{pcs} satisfies the strong secrecy condition in~\eqref{eq:secrecycondx}.

\begin{remark}
For $i \in [1,L]$, recall that $O_{(1),i}^{(U)}$ is not generated independently, but drawn by using \gls*{sc} encoding. Hence, if the \gls*{pcs} operates to achieve the corner point of $\mathfrak{R}_{\text{{MI-WTBC}}}^{(2)}$, we only can obtain a causal Bayesian graph by reformulating the encoding so that dependencies between blocks take place place backward only.
\end{remark}

\begin{remark}
We conjecture that the use $\kappa_{\Omega}^{(V)}$ is not needed for the \gls*{pcs} to satisfy the strong secrecy condition when operates to achieve any of the corner points. However, the key is required in order to prove this condition by means of analyzing a causal Bayesian graph.
\end{remark}

\section{Concluding remarks}\label{sec:conclusionx}
A strongly secure \gls*{pcs} has been proposed for the \gls*{wtbc} with two legitimate receivers and one eavesdropper. We have compared two inner-bounds on the achievable region of the \gls*{miwtbc} model, where a transmitter wants to send different information (private and confidential) intended for each receiver. Then, we have provided a polar code that achieves the inner-bound that is strictly larger for a particular input distribution. The only difference between the random coding techniques used to characterize the two bounds is the decoding strategy: \emph{joint decoding} in the stronger inner-bound, and \emph{successive decoding} in the other.

Our scheme uses polar-based Marton's coding, which requires three encoding layers: one inner-layer that must be reliably decoded by both receivers, and two outer-layers associated to each legitimate receiver. Due to the non-degradedness assumption of the channel, the encoder builds a chaining construction that induces bidirectional dependencies between adjacent blocks, which need to be taken carefully into account in the secrecy analysis.

In order to achieve the larger inner-bound for a particular distribution, the chaining construction must repeat some elements from the inner-layer to the outer-layers of adjacent blocks, and turns out that this cross-dependency between encoding layers makes the use of \emph{polar-based joint decoding} crucial. As in \cite{alos2019polar}, the use of a negligible secret-key is required to prove that eavesdropper's observations for different blocks are statistically independent of one another, which is necessary to show that the polar code satisfies the strong secrecy condition. Furthermore, now the \gls*{pcs} needs to use another secret-key that also becomes negligible in terms of rate as the number of blocks grows indefinitely. This key is required to randomize a non-negligible set of elements of one outer-layer that are drawn by means of \gls*{sc} encoding and are needed by the corresponding receiver. In this way, the chaining construction can repeat these elements in adjacent blocks without causing a significant distortion on the input distribution.


\begin{appendices}

\section{Proof of Lemma \ref{lemma:secrecy1x}}\label{app:secrecy1x}
For $n$ sufficiently large, we have
\begin{align*}
& I \big(\tilde{A}_i\big[ \mathcal{H}_{V|Z}^{(n)} \big] \tilde{T}_{(k),i}\big[  \mathcal{H}_{U_{(k)}|VZ}^{(n)} \big] \tilde{T}_{(\bar{k}),i}\big[  \mathcal{H}_{U_{(\bar{k})}|VU_{(k)}Z}^{(n)} \big]; \tilde{Z}_i^n \big)   \\
& \quad \leq \big| \mathcal{H}_{V|Z}^{(n)} \big|  + \big|  \mathcal{H}_{U_{(k)}|VZ}^{(n)} \big| +  \big|  \mathcal{H}_{U_{(\bar{k})}|VU_{(k)}Z}^{(n)} \big| \\
& \qquad - H \big( \tilde{A}_i\big[ \mathcal{H}_{V|Z}^{(n)} \big] \tilde{T}_{(k),i}\big[  \mathcal{H}_{U_{(k)}|VZ}^{(n)} \big] \tilde{T}_{(\bar{k}),i}\big[  \mathcal{H}_{U_{(\bar{k})}|VU_{(k)}Z}^{(n)} \big] \big| \tilde{Z}^n_i \big) \\
& \quad \stackrel{(a)}{\leq} \big| \mathcal{H}_{V|Z}^{(n)} \big|  + \big|  \mathcal{H}_{U_{(k)}|VZ}^{(n)} \big| + \big|  \mathcal{H}_{U_{(\bar{k})}|VU_{(k)}Z}^{(n)} \big| \\
& \qquad - H \big( A\big[ \mathcal{H}_{V|Z}^{(n)} \big] T_{(k)}\big[\mathcal{H}_{U_{(k)}|VZ}^{(n)} \big] T_{(\bar{k})}\big[  \mathcal{H}_{U_{(\bar{k})}|VU_{(k)}Z}^{(n)} \big] \big| Z^n \big) + 6 n \delta_n^{(*)}  -  2 \delta_n^{(*)} \log \delta_n^{(*)} \\
& \quad \stackrel{(b)}{\leq} 3 n \delta_n  + 6 n \delta_n^{(*)}  -  2 \delta_n^{(*)} \log \delta_n^{(*)}
\end{align*}
where $(a)$ holds by \cite{alos2019polar} (Lemma~6) (where $M \triangleq 3$ and $O \triangleq 1$) and Lemma~\ref{lemma:distDMSx}; and $(b)$ because
\begin{align*}
& H \big( A\big[ \mathcal{H}_{V|Z}^{(n)} \big] T_{(k)}\big[\mathcal{H}_{U_{(k)}|VZ}^{(n)} \big] T_{(\bar{k})}\big[  \mathcal{H}_{U_{(\bar{k})}|VU_{(k)}Z}^{(n)} \big] \big| Z^n \big) \\ 
&  \quad \geq  H \big( A \big[ \mathcal{H}_{V|Z}^{(n)} \big] \big| Z^n \big) + H \big( T_{(k)}\big[\mathcal{H}_{U_{(k)}|VZ}^{(n)} \big] \big| V^n Z^n \big)  + H \big( T_{(\bar{k})}\big[  \mathcal{H}_{U_{(\bar{k})}|VU_{(k)}Z}^{(n)} \big] \big| V^n U_{(k)}^n Z^n \big) \\
&  \quad \geq  \sum_{j \in \mathcal{H}_{V|Z}^{(n)}} H \big(A (j) \big| A^{1:j-1} Z^n \big) 
+  \sum_{j \in \mathcal{H}_{T_{(k)}|VZ}^{(n)}} H \big(T_{(k)} (j) \big| T_{(k)}^{1:j-1} V^n Z^n \big) \\
& \qquad +  \sum_{j \in \mathcal{H}_{T_{(\bar{k})}|VU_{(k)}Z}^{(n)}} H \big(T_{(\bar{k})} (j) \big| T_{(\bar{k})}^{1:j-1} V^n T_{(k)}^n Z^n \big) \nonumber \\
&  \quad \geq \big| \mathcal{H}_{V|Z}^{(n)} \big| (1 - \delta_n) + \big| \mathcal{H}_{T_{(k)}|VZ}^{(n)} \big| (1 - \delta_n) + \big| \mathcal{H}_{T_{(\bar{k})}|VT_{(\bar{k})}Z}^{(n)} \big| (1 - \delta_n)
\end{align*}
where we have used the fact that conditioning does not increase entropy, the invertibility of $G_n$, and the definition of $\mathcal{H}_{V|Z}^{(n)}$, $\mathcal{H}_{T_{(k)}|VZ}^{(n)}$ and $\mathcal{H}_{T_{(\bar{k})}|VT_{(\bar{k})}Z}^{(n)}$ in \eqref{eq:HUZx}, \eqref{eq:HUVZx} and \eqref{eq:HUVUZx} respectively.

\section{Proof of Lemma~\ref{lemma:secrecy2x}}\label{app:secrecy2x}
For any $i \in [2,L]$ and sufficiently large $n$, we have
\begin{align*}
& I \big(S_{(1),1:L} S_{(2),1:L} \tilde{Z}_{1:i-1}^n; \tilde{Z}_{i}^n \big) \\
& \quad =  I \big(S_{(1),1:i} S_{(2),1:i} \tilde{Z}_{1:i-1}^n ;\tilde{Z}_{i}^n \big) + I \big(S_{(1),i+1:L} S_{(2),i+i:L} ; \tilde{Z}_{i}^n \big|  S_{(1),1:i-1} S_{(2),1:i-1} \tilde{Z}_{1:i-1}^n \big)  \\
& \quad \stackrel{(a)}{=} I \big(S_{(1),1:i} S_{(2),1:i} \tilde{Z}_{1:i-1}^n ;\tilde{Z}_{i}^n \big)  \\
& \quad \leq  I \big(S_{(1),1:i} S_{(2),1:i} \tilde{Z}_{1:i-1}^n \Xi_{(1),i-1} \Delta^{(V)}_{(2),i-1}  \Xi_{(2),i-1}  \bar{O}_{(2),i-1}^{(U)} ; \tilde{Z}_{i}^n  \big)  \\
& \quad =  I \big(S_{(1),i} S_{(2),i} \Xi_{(1),i-1} \Delta^{(V)}_{(2),i-1}  \Xi_{(2),i-1}  \bar{O}_{(2),i-1}^{(U)}; \tilde{Z}_{i}^n  \big)  \\
& \qquad +  I \big(S_{(1),1:i-1} S_{(2),1:i-1}\tilde{Z}_{1:i-1}^n ; \tilde{Z}_{i}^n \big| S_{(1),i} S_{(2),i} \tilde{Z}_{1:i-1}^n \Xi_{(1),i-1} \Delta^{(V)}_{(2),i-1}  \Xi_{(2),i-1}  \bar{O}_{(2),i-1}^{(U)} \big)  \\
& \quad \stackrel{(b)}{\leq} \delta_n^{(\text{S})}  + I \big(S_{(1),1:i-1} S_{(2),1:i-1}\tilde{Z}_{1:i-1}^n ; \tilde{Z}_{i}^n \big| S_{(1),i} S_{(2),i} \tilde{Z}_{1:i-1}^n \Xi_{(1),i-1} \Delta^{(V)}_{(2),i-1}  \Xi_{(2),i-1}  \bar{O}_{(2),i-1}^{(U)} \big)  \\
& \quad \stackrel{(c)}{=} \delta_n^{(\text{S})} + I \big(S_{(1),1:i-1} S_{(2),1:i-1}\tilde{Z}_{1:i-1}^n ; \tilde{Z}_{i}^n \big| \mathsf{B}_{i-1} \big)  \\
& \quad \leq \delta_n^{(\text{S})} + I \big(S_{(1),1:i-1} S_{(2),1:i-1}\tilde{Z}_{1:i-1}^n ; \tilde{Z}_{i}^n W^{\prime}_{(1),i} \big| \mathsf{B}_{i-1} \big)  \\
& \quad =  \delta_n^{(\text{S})}  +  I \big(S_{(1),1:i-1} S_{(2),1:i-1} \tilde{Z}_{1:i-1}^n ;  W^{\prime}_{(1),i} \big| \mathsf{B}_{i-1} \big) \\
& \qquad + I \big(S_{(1),1:i-1} S_{(2),1:i-1}\tilde{Z}_{1:i-1}^n ; \tilde{Z}_{i}^n  \big| \mathsf{B}_{i-1} W^{\prime}_{(1),i} \big)  \\
& \quad \stackrel{(d)}{=} \delta_n^{(\text{S})}  +  I \big(S_{(1),1:i-1} S_{(2),1:i-1} \tilde{Z}_{1:i-1}^n ;  W^{\prime}_{(1),i} \big| \mathsf{B}_{i-1} \big) \\
& \quad \leq  \delta_n^{(\text{S})}  +  I \big(\tilde{R}_{(1),1:i-1} S_{(2),1:i-1} ;  \tilde{Z}_{1:i-1}^n W^{\prime}_{(1),i} \big| \mathsf{B}_{i-1} \big) \\
& \quad =  \delta_n^{(\text{S})}  +  I \big(\tilde{R}_{(1),1:i-1} S_{(2),1:i-1} ;  W^{\prime}_{(1),i} \big| \mathsf{B}_{i-1} \big) +  I \big(\tilde{Z}_{1:i-1}^n ;  W^{\prime}_{(1),i} \big| \mathsf{B}_{i-1} \tilde{R}_{(1),1:i-1} S_{(2),1:i-1} \big) \\
& \quad \stackrel{(e)}{=}  \delta_n^{(\text{S})}  +  I \big(\tilde{R}_{(1),1:i-1} S_{(2),1:i-1} ;  W^{\prime}_{(1),i} \big| \mathsf{B}_{i-1} \big) \\
& \quad =  \delta_n^{(\text{S})}  +  I \big(\tilde{R}_{(1),1:i-1} S_{(2),1:i-1} ;  \bar{\Omega}_{(1),i} \oplus \kappa_{\Omega}^{(V)} \big| \mathsf{B}_{i-1} \big) \\
& \quad \stackrel{(f)}{=}  \delta_n^{(\text{S})} 
\end{align*}
where $(a)$ holds by independence between $(S_{(1),i+1:L}, S_{(2),i+1:L})$ and any random variable from Blocks~$1$~to~$i$; $(b)$ holds by Lemma~\ref{lemma:secrecy1x} because, according to Section~\ref{sec:PCSx1}, we have
\begin{align*}
& \big[S_{(1),i} S_{(2),i} \Xi_{(1),i-1} \Delta^{(V)}_{(2),i-1}  \Xi_{(2),i-1}  \bar{O}_{(2),i-1}^{(U)} \big]  \\
& \quad = \big[ \tilde{A}_i\big[ \mathcal{H}_{V|Z}^{(n)} \big] \tilde{T}_{(1),i}\big[  \mathcal{H}_{U_{(k)}|VZ}^{(n)} \big] \tilde{T}_{(2),i}\big[  \mathcal{H}_{U_{(\bar{k})}|VU_{(k)}Z}^{(n)} \big] \big];
\end{align*}
in $(c)$ we have defined $\mathsf{B}_{i-1} \triangleq \big[S_{(1),i} S_{(2),i} \Xi_{(1),i-1} \Delta^{(V)}_{(2),i-1}  \Xi_{(2),i-1}  \bar{O}_{(2),i-1}^{(U)} \big]$; $(d)$ follows from applying d-separation \cite{pearl2009causality} over the Bayesian graph in Figure~\ref{fig:dependenciesR1x} to obtain that $\tilde{Z}_{i}^n$ and $(S_{(1),1:i-1}S_{(2),1:i-1}\tilde{Z}_{1:i-1}^n)$ are conditionally independent given $\big(\mathsf{B}_{i-1},W^{\prime}_{(1),i}\big)$; $(e)$ also follows from applying \emph{d-separation} to obtain that $W^{\prime}_{(1),i}$ and $\tilde{Z}_{1:i-1}^n$ are conditionally independent given $(\mathsf{B}_{i-1},\tilde{R}_{(1),1:i-1}, S_{(2),1:i-1})$; and $(f)$ holds because $\bar{\Omega}_{(1),i}^{(V)}$ is independent of $\mathsf{B}_{i-1}$, $S_{(2),1:i-1}$ and any random variable from Block~$1$~to~$(i-2)$, and because from applying crypto-lemma \cite{crypto1241} we obtain that $\bar{\Omega}_{(1),i}^{(V)} \oplus \kappa_{\Omega}^{(V)}$ is independent of $\tilde{R}_{(1),i-1}^n$.

\section{Proof of Lemma~\ref{lemma:secrecy3x}}\label{app:secrecy3x}
For any $i \in [2,L]$ and sufficiently large $n$, we have 
\begin{align*}
& I \big(S_{(1),1:L} S_{(2),1:L} \tilde{Z}_{i+1:L}^n ; \tilde{Z}_{i}^n \big) \\
& \quad = I \big(S^{\prime}_{(1),1:L} S^{\prime}_{(2),1:L} \tilde{Z}_{i+1:L}^n ; \tilde{Z}_{i}^n \big) \\
& \quad =  I \big(S^{\prime}_{(1),i:L} S^{\prime}_{(2),i:L} \tilde{Z}_{i+1:L}^n ;\tilde{Z}_{i}^n \big) + I \big(S^{\prime}_{(1),1:i-1} S^{\prime}_{(2),1:i-1} ; \tilde{Z}_{i}^n \big|  S^{\prime}_{(1),i:L} S^{\prime}_{(2),i:L} \tilde{Z}_{i+1:L}^n \big)  \\
& \quad \stackrel{(a)}{=} I \big(S^{\prime}_{(1),i:L} S^{\prime}_{(2),i:L} \tilde{Z}_{i+1:L}^n ;\tilde{Z}_{i}^n \big) \\
& \quad \leq  I \big( S^{\prime}_{(1),i:L} S^{\prime}_{(2),i:L} \tilde{Z}_{i+1:L}^n \Xi_{(2),i+1} \ominus_{(1),i+1}, \Xi_{(1),i+1} \bar{O}_{(1),i+1}^{(U)}  ;\tilde{Z}_{i}^n  \big)  \\
& \quad = I \big( S^{\prime}_{(1),i} S^{\prime}_{(2),i} \Xi_{(2),i+1} \ominus_{(1),i+1}, \Xi_{(1),i+1} \bar{O}_{(1),i+1}^{(U)}  ;\tilde{Z}_{i}^n  \big)  \\
& \qquad +  I \big(S^{\prime}_{(1),i+1:L} S^{\prime}_{(2),i+1:L} \tilde{Z}_{i+1:L}^n  ;\tilde{Z}_{i}^n \big| S^{\prime}_{(1),i} S^{\prime}_{(2),i} \Xi_{(2),i+1} \ominus_{(1),i+1}, \Xi_{(1),i+1} \bar{O}_{(1),i+1}^{(U)} \big)  \\
& \quad \stackrel{(b)}{\leq} \delta_n^{(\text{S})}  + I \big(S^{\prime}_{(1),i+1:L} S^{\prime}_{(2),i+1:L} \tilde{Z}_{i+1:L}^n  ;\tilde{Z}_{i}^n \big| S^{\prime}_{(1),i} S^{\prime}_{(2),i} \Xi_{(2),i+1} \ominus_{(1),i+1}, \Xi_{(1),i+1} \bar{O}_{(1),i+1}^{(U)} \big)  \\
& \quad \stackrel{(c)}{=} \delta_n^{(\text{S})}  + I \big(S^{\prime}_{(1),i+1:L} S^{\prime}_{(2),i+1:L} \tilde{Z}_{i+1:L}^n  ;\tilde{Z}_{i}^n \big| \mathsf{B}_{i+1} \big)  \\
& \quad \leq \delta_n^{(\text{S})}  + I \big(S^{\prime}_{(1),i+1:L} S^{\prime}_{(2),i+1:L} \tilde{Z}_{i+1:L}^n  ; \tilde{Z}_{i}^n W^{\prime}_{(2),i}  \big| \mathsf{B}_{i+1} \big)  \\
& \quad =  \delta_n^{(\text{S})}  + I \big(S^{\prime}_{(1),i+1:L} S^{\prime}_{(2),i+1:L} \tilde{Z}_{i+1:L}^n  ; W^{\prime}_{(2),i}  \big| \mathsf{B}_{i+1} \big) \\
& \qquad + I \big(S^{\prime}_{(1),i+1:L} S^{\prime}_{(2),i+1:L} \tilde{Z}_{i+1:L}^n  ; \tilde{Z}_{i}^n  \big| \mathsf{B}_{i+1} W^{\prime}_{(2),i}  \big) \\
& \quad \stackrel{(d)}{=} \delta_n^{(\text{S})}  + I \big(S^{\prime}_{(1),i+1:L} S^{\prime}_{(2),i+1:L} \tilde{Z}_{i+1:L}^n  ; W^{\prime}_{(2),i}  \big| \mathsf{B}_{i+1} \big) \\
& \quad \leq  \delta_n^{(\text{S})}  +  I \big(\tilde{R}_{(2),i+1:L} S^{\prime}_{(1),i+1:L} \tilde{Z}_{i+1:L}^n  ; W^{\prime}_{(2),i}  \big| \mathsf{B}_{i+1} \big) \\
& \quad =  \delta_n^{(\text{S})}  +  I \big(\tilde{R}_{(2),i+1:L} S^{\prime}_{(1),i+1:L} ; W^{\prime}_{(2),i}  \big| \mathsf{B}_{i+1} \big) +  I \big( \tilde{Z}_{i+1:L}^n  ; W^{\prime}_{(2),i}  \big| \mathsf{B}_{i+1} \tilde{R}_{(2),i+1:L} S^{\prime}_{(1),i+1:L}  \big)  \\
& \quad \stackrel{(e)}{=}  \delta_n^{(\text{S})}  +  I \big(\tilde{R}_{(2),i+1:L} S^{\prime}_{(1),i+1:L} ; W^{\prime}_{(2),i}  \big| \mathsf{B}_{i+1} \big) \\
& \quad = \delta_n^{(\text{S})}  +  I \big(\tilde{R}_{(2),i+1:L} S^{\prime}_{(1),i+1:L} ; \bar{\Omega}_{(2),i} \oplus \kappa_{\Omega} \big| \mathsf{B}_{i+1} \big) \\
& \quad \stackrel{(f)}{=}  \delta_n^{(\text{S})} 
\end{align*}
where $(a)$ holds by independence between $(S^{\prime}_{(1),1:i-1}, S^{\prime}_{(2),1:i-1})$ and any random variable from Blocks~$i+1$~to~$L$; $(b)$ holds by Lemma~\ref{lemma:secrecy1x} because
\begin{align*}
& \big[S^{\prime}_{(1),i} S^{\prime}_{(2),i} \Xi_{(2),i+1} \ominus_{(1),i+1}, \Xi_{(1),i+1} \bar{O}_{(1),i+1}^{(U)} \big]  \\
& \quad = \big[ \tilde{A}_i\big[ \mathcal{H}_{V|Z}^{(n)} \big] \tilde{T}_{(2),i}\big[  \mathcal{H}_{U_{(k)}|VZ}^{(n)} \big] \tilde{T}_{(1),i}\big[  \mathcal{H}_{U_{(\bar{k})}|VU_{(k)}Z}^{(n)} \big] \big];
\end{align*}
in $(c)$ we have defined $\mathsf{B}_{i+1} \triangleq \big[S^{\prime}_{(1),i} S^{\prime}_{(2),i} \Xi_{(2),i+1} \ominus_{(1),i+1}, \Xi_{(1),i+1} \bar{O}_{(1),i+1}^{(U)} \big]$; $(d)$ follows from applying d-separation \cite{pearl2009causality} over the Bayesian graph in Figure~\ref{fig:dependenciesR2x} to obtain that $\tilde{Z}_{i}^n$ and $(S^{\prime}_{(1),i+1:L}S^{\prime}_{(2),i+1:L}\tilde{Z}_{i+1:L}^n)$ are conditionally independent given $\big(\mathsf{B}_{i+1},W^{\prime}_{(2),i} \big)$; $(e)$ also follows from applying \emph{d-separation} to obtain that $(W^{\prime}_{(2),i}$ and $\tilde{Z}_{i+1:L}^n$ are conditionally independent given $(\mathsf{B}_{i+1},\tilde{R}_{(2),i+1:L}, S^{\prime}_{(1),i+1:L})$; and $(f)$ holds because $\bar{\Omega}_{(2),i}^{(V)}$ is independent of $\mathsf{B}_{i+1}$, $S_{(1),1:i-1}$ and any random variable from Block~$i+1$~to~$L$, and because from applying crypto-lemma \cite{crypto1241} we obtain that $\bar{\Omega}_{(2),i}^{(V)} \oplus \kappa_{\Omega}^{(V)}$ is independent of $\tilde{R}_{(2),i+1}^n$.

\end{appendices}

\bibliographystyle{ieeetr}
\bibliography{bibliography}

\end{document}